\documentclass[a4paper,twocolumn,11pt,unpublished]{quantumarticle}
\pdfoutput=1
\usepackage[utf8]{inputenc}
\usepackage[english]{babel}
\usepackage[T1]{fontenc}
\usepackage{graphicx,amsmath,amssymb,amsthm,bm,booktabs,array,adjustbox}
\usepackage[square,numbers,sort&compress]{natbib}
\usepackage[colorlinks=true,linkcolor=blue,citecolor=blue,urlcolor=blue]{hyperref}
\makeatletter\let\switch@array\relax\makeatother
\graphicspath{{figures/}}
\newlength{\apscolumnwidth}
\newtheorem{theorem}{Theorem}
\newtheorem{proposition}{Proposition}
\newtheorem{corollary}{Corollary}
\newtheorem{lemma}{Lemma}
\newcommand{\im}{\operatorname{im}}
\newcommand{\rank}{\operatorname{rank}}

\newcommand{\Var}{\operatorname{Var}}
\newcommand{\ket}[1]{\lvert #1\rangle}
\newcommand{\bra}[1]{\langle #1\rvert}
\newcommand{\K}{\mathcal K}
\newcommand{\T}{\mathcal T}
\newcommand{\extra}{\mathcal E}

\begin{document}\title{Beyond Bond Gauge: Exact Tensor-Network Tangent Spaces at Weighted Graph States}
\author{Song Cheng}
\email{chengsong@bimsa.cn}
\affiliation{Beijing Institute of Mathematical Sciences and Applications, Beijing 101408, China}
\begin{abstract}
For normal PEPS, the fundamental theorem settles at the finite level
whether virtual-bond gauge exhausts a tensor network's representation
freedom. We ask the first-order question at a fixed representation,
from both sides of the differential: which tensor variations leave the
state unchanged without being bond gauges, and which physical
directions can no variation reach? We answer both exactly for weighted
graph state (WGS) tensor networks with equal physical and bond
dimension $q\ge2$, which include the graph-state resources of
measurement-based quantum computation. Dividing a single-tensor
variation by the nonvanishing amplitude turns it into an arbitrary
function on the closed neighborhood of its vertex. Overlapping
neighborhoods give the complete kernel. Its non-bond part is generated
on periodic regular-polygon tilings by triangles, chordless squares,
and edge-sharing diamonds. Under stated period conditions, generic
gauge completeness on all thirty-nine families studied identifies
this nullity as a WGS rank drop: for thirty-three families at every
$q\ge2$, and for six at $q=2$. Supports
outside every neighborhood give the missing directions: at qubit graph
states on the eleven Archimedean tilings every nearest-neighbor $XX$
and $YY$ direction is missing, including on honeycomb, where the
non-bond kernel is empty. On large honeycomb and square tori, small
nonzero coherent $XX$ errors leave the bond-two PEPS set on the same
graph. Removing
redundant parameters and supplying missing directions are therefore
distinct operations.
\end{abstract}
\maketitle

\section{Introduction}
Tensor networks describe many-body states by contracting local tensors
along a graph. Different tensors can represent the same state, because
invertible transformations on a contracted virtual leg cancel between
its two endpoints. Whether this bond gauge accounts for all
representation freedom is a basic structural problem for matrix product
states (MPS) and projected entangled-pair states (PEPS)
\cite{perezgarcia2007mps,cirac2021matrix}.

The fundamental theorem for normal PEPS extends the injective
case~\cite{perezgarcia2010symmetries} and characterizes finite equivalence:
under its injectivity and geometric hypotheses, two networks representing
the same state are related by bond gauges~\cite{molnar2018normal}.
Here we ask a first-order question at a specified representation: if
independent variations of the local tensors give zero wavefunction
derivative, must they be an infinitesimal bond gauge? For a graph with
$N$ vertices, physical and bond dimension $q$, site-tensor space
$\mathcal P_G$, bond action $\Phi_A$, and contraction differential
$J_A$ (Sec.~\ref{sec:bond}), the relevant maps are
\begin{equation}
\begin{gathered}
\bigoplus_e\mathfrak{gl}(q,\mathbb C)
\xrightarrow{\ \Phi_A\ }\mathcal P_G
\xrightarrow{\ J_A\ }(\mathbb C^q)^{\otimes N},\\
J_A\Phi_A=0,
\end{gathered}
\label{eq:complex}
\end{equation}
and $\ker J_A/\im\Phi_A$ measures the non-bond infinitesimal
redundancy (non-bond: not an infinitesimal bond gauge; see the Scope
paragraph below). Connecting a finite-equivalence theorem to this
differential quotient requires regularity of the parametrization as
well as the theorem's injectivity hypotheses.

The same differential answers a second question from the other side.
Its image, projected to rays, is the physical tangent space $\T_A$: the
first-order state changes that the representation can produce. The two
questions are logically independent. Removing redundant parameters
cannot restore a missing physical direction, and a redundant parameter
does not signal one. We therefore treat the kernel and the image of
$J_A$ together, as two halves of one exact computation.

Weighted graph states (WGS), generated by commuting two-site phase gates
\cite{verstraete2004valence,anders2006ground,anders2007weighted},
allow both halves to be computed exactly. At edge phase $\pi$ they
include graph-state resources for measurement-based quantum computation
(MBQC)~\cite{raussendorf2001oneway,raussendorf2003cluster,hein2006graph,vandennest2006universal}.
Nine Archimedean lattices support universal symmetry-protected cluster
phases~\cite{daniel2020archimedean}. The WGS PEPS
representations have bond dimension $D$ equal to the physical
dimension $d=2$
\cite{verstraete2004valence,gross2007measurement,gross2007novel}.
WGS points form a measure-zero subset of parameter space. On all
thirty-nine families studied here, generic tensors carry no non-bond
kernel under the period and dimension conditions of
Theorem~\ref{thm:generic}: every $q\ge2$ for thirty-three families
and $q=2$ for the other six.
For a tangent-space integrator initialized at a graph or cluster state,
the initial representation's exact zero modes and missing directions
enter the projected evolution. Nor is the effect confined to the point itself:
the zero modes open only quadratically along nearby paths, and not at
all along WGS phase evolution (Sec.~\ref{sec:projectiongap}).

We evaluate at nondegenerate WGS tensors with equal physical and bond
dimension $d=D=q$, while allowing arbitrary independent variations of
every tensor entry. Dividing a single-site variation by the nonzero WGS
amplitude identifies it with an arbitrary function on the closed
neighborhood $N[v]$. Consequently,
\begin{equation}
\im J_A=\psi_A\cdot\sum_{v\in V}\operatorname{Fun}(N[v]),
\label{eq:introimage}
\end{equation}
so the accessible first-order variations are exactly the diagonal
deformations $f\psi$ with $f(s)=\sum_v f_v(s_{N[v]})$.
In the language of correlator product states~\cite{changlani2009cps},
a WGS network is a
Jastrow state written with copy tensors and invertible edge
matrices~\cite{clark2018unifying}. Equation~\eqref{eq:introimage}
then says that an arbitrary PEPS tensor variation at such a point acts
as a correlator on one closed neighborhood $N[v]$: the PEPS tangent
space coincides with that of a correlator product state whose
correlators are these closed neighborhoods.
Figure~\ref{fig:concept} shows how both halves of the question follow.
A function on a set covered by two closed neighborhoods can be produced
by either tensor, which gives a kernel relation; a function on a set
covered by none cannot be produced at all, which gives a missing
direction.

\begin{figure*}[!tbp]
\centering
\includegraphics[width=\textwidth]{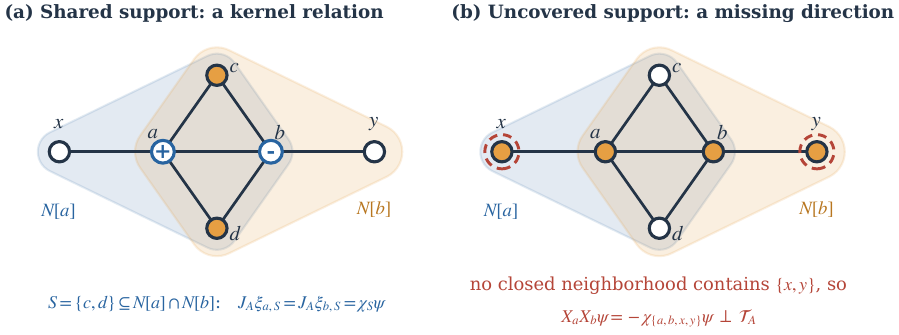}
\caption{Kernel and image from overlapping closed neighborhoods, on the
six-vertex graph $V=\{a,b,c,d,x,y\}$,
$E_G=\{ab,ac,ad,bc,bd,ax,by\}$ of
Corollary~\ref{cor:observableresponse} at the
qubit graph-state point (all edge phases $\pi$).
Shaded regions are $N[a]$ (blue) and $N[b]$ (orange). (a) The support
$S=\{c,d\}$ lies in both, so the site variations $\xi_{a,S}$ and
$\xi_{b,S}$ produce the same physical vector $\chi_S\psi$; their
difference is a non-bond kernel direction, the diamond motif of
Fig.~\ref{fig:motifs}. (b) The leaves $x,y$ lie in no common closed
neighborhood, so no tensor variation produces a character whose
support contains both. The edge error $X_aX_b\psi$ is such a character
and is missing from $\T_A$.}
\label{fig:concept}
\end{figure*}

Our main results are as follows.
\begin{enumerate}
\item[(i)] On any finite graph, the local isomorphism gives the exact
image, rank, and complete kernel of the differential, and identifies
ordinary bond gauge sector by sector, leaving an explicit basis of the
non-bond quotient (Theorem~\ref{thm:star},
Propositions~\ref{prop:bond} and~\ref{prop:quotient}).
\item[(ii)] A minimal-region polynomial classifies the local generators
on any graph (Theorem~\ref{thm:minimalgeneral}). On periodic
regular-polygon tilings exactly three motifs occur, triangles,
chordless squares, and shared-edge diamonds, with a count that depends
on the cyclic order of faces (Theorem~\ref{thm:regular}); a finite-cell
formula covers general periodic graphs, tied or untied
(Theorem~\ref{thm:cell}).
\item[(iii)] An exact criterion decides injective regions. The
non-bond kernel occurs at non-normal (square, triangular, kagome) and
normal (star) representations alike, where normality here concerns
infinite lattices and partitions into finite blocks
(Corollary~\ref{cor:normal}). Relations with nonadjacent centers
integrate to finite equivalences, while parallel pairs on adjacent
squares are obstructed at second order
(Propositions~\ref{prop:finite} and~\ref{prop:obstruction}).
\item[(iv)] On all thirty-nine periodic families
studied, generic tensors are gauge complete under the stated period
conditions: thirty-three families for every $q\ge2$, and six at
$q=2$. In this scope the WGS non-bond nullity is a rank drop of the
parametrization, not a dimension defect of the variety
(Theorem~\ref{thm:generic}).
\item[(v)] The exact image gives the tangent projector and TDVP
residual. Transverse-field Ising dynamics is tangent at every
nondegenerate WGS point, whereas at graph states on the eleven
Archimedean tilings every nearest-neighbor $XX$ and $YY$ direction is
missing, including on honeycomb, where the non-bond kernel is empty
(Corollaries~\ref{cor:tdvp} and~\ref{cor:support}). On honeycomb and
square tori, Schmidt bounds exclude all bond-two representations of
the perturbed state (Corollaries~\ref{cor:honeycombstate}
and~\ref{cor:squarestate}).
\item[(vi)] Kernel removal regularizes the reduced linear system
without changing any velocity, whereas a state-preserving bond
expansion supplies a missing direction. Off the WGS point the zero
modes open quadratically under an explicit condition on the path and
persist along WGS phase evolution (Proposition~\ref{prop:nearbygram}).
\end{enumerate}

These results complement the dimension theory of tensor-network
varieties
\cite{bernardi2023dimension,bernardi2026triangular,christandl2020geometry},
which our generic comparison extends to the subcritical regime $d=D$,
and work on gauges and tangent-space geometry
\cite{haegeman2014geometry,vanderstraeten2015peps,evenbly2018gauge}. Known singularities of
matrix-product and tensor-train parametrizations arise from rank drops
on a bond~\cite{holtz2012manifolds,haegeman2014geometry}, and Riemannian
fundamental theorems hold on open dense subsets for the specific
network families studied in Ref.~\cite{paezvelasco2026manifolds},
including MPS and isometric PEPS; WGS points instead have full bond
rank and are special points of general PEPS, whose state sets need not
be closed~\cite{barthel2022closedness}. Semi-injective, $G$-, and
MPO-injective PEPS treat finite equivalences and virtual
symmetries~\cite{molnar2018semiinjective,schuch2010pepsdegeneracy,sahinoglu2021mpo,bultinck2017anyons}.
The non-bond count quantifies the zero modes left by the gauge
projection of Wu and Nys~\cite{wu2025peps}, related metric zero modes
aid bond truncation in loopy networks~\cite{sokolov2025zeromode,dziarmaga2026loopy}, and
bond expansion links the exact image to subspace-expansion methods
\cite{hubig2015subspace,yang2020tdvp,li2024cbe,vanderstraeten2019tangent}.

\paragraph*{Scope.}
All statements concern the differential and the fiber of the
contraction map at a specified representation. ``Non-bond'' abbreviates
``not an infinitesimal bond gauge''; such directions are still
redundancies of the parametrization, and some integrate to finite
state-preserving transformations. Although ``gauge'' conventionally
means bond gauge~\cite{molnar2018normal}, we avoid the shorter
``non-gauge'' because Eq.~\eqref{eq:complex} excludes only the bond
action, not redundancy in general. Tangent and residual statements are pointwise:
a finite step of a tangent flow, such as an Ising step, can leave the
WGS family. Lattice statements concern finite-support algebraic
relations on locally faithful quotients; smaller quotients are covered
by the finite-graph theorem, and no infinite-volume Hilbert norm is
involved. For MBQC, the tangent and state-level bounds constrain the
tracking of coherent preparation errors at bond dimension two; they do
not describe closure under general measurements. Finite numerical scans
illustrate exact statements and their dependence on the chosen lift;
they do not measure solver performance.

Sections~\ref{sec:bond}--\ref{sec:relations} establish the gauge
baseline, differential theorem, and local generators.
Sections~\ref{sec:lattice}--\ref{sec:periodic} give the lattice and
periodic-cell counts; Sections~\ref{sec:fiber}--\ref{sec:generic} analyze
the fiber and generic rank. Section~\ref{sec:tdvp} develops the
physical projection, bond expansion, and reduced linear system.
Table~\ref{tab:notation} collects the notation. The appendices
contain the extended graph census,
generic-completeness proofs, finite-equivalence checks, further
physical examples, and reproducibility details.

\begin{figure*}[!tbp]
\centering
\includegraphics[width=\textwidth]{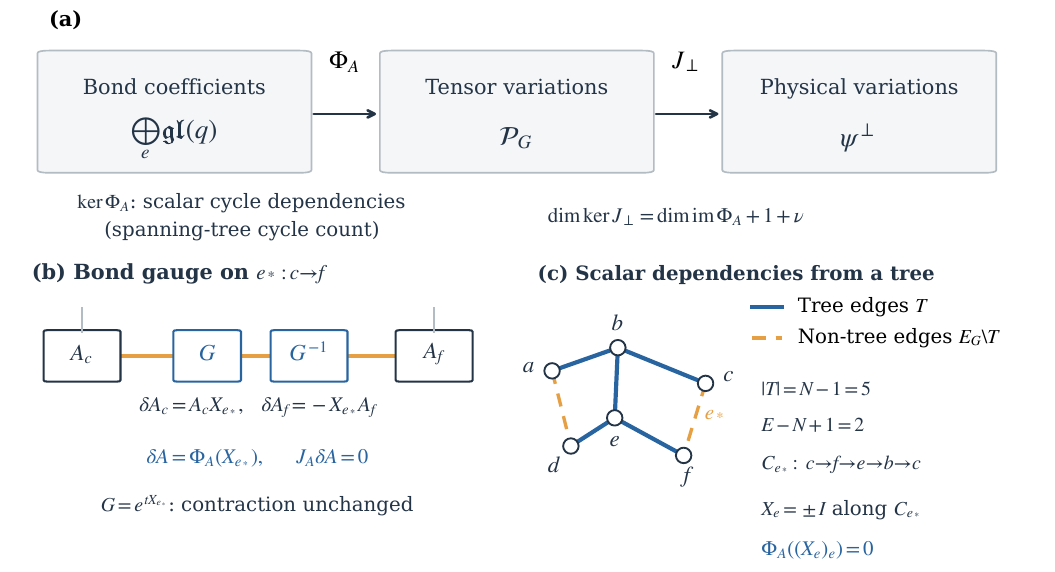}
\caption{Bond gauge and its generator dependencies at $q=2$.
(a) The bond action $\Phi_A$ maps generator coefficients to tensor
variations, and $J_\perp$ maps those to projective physical variations.
(b) Inserting $G=e^{tX_{e_*}}$ and $G^{-1}$ on the oriented edge
$e_*:c\to f$ of panel (c) changes the tensors,
$\delta A_c=A_cX_{e_*}$, $\delta A_f=-X_{e_*}A_f$, but not the state:
$J_A\Phi_A(X_{e_*})=0$.
(c) A spanning tree $T$ (blue) and its complement (orange, dashed).
Each non-tree edge closes a fundamental cycle; scalar generators
$X_e=\pm I$ along that cycle cancel at every vertex, giving the
$E-N+1=2$ scalar dependencies of Eq.~\eqref{eq:cycle}. Non-bond
kernel relations are illustrated in Fig.~\ref{fig:motifs}.}
\label{fig:maps}
\end{figure*}

\section{Tensor representations and the bond-gauge baseline}
\label{sec:bond}
\subsection{Parameter and physical maps}
Let $G=(V,E_G)$ be a finite simple connected graph, with $N=|V|$ vertices
and $E=|E_G|$ edges. Fix equal bond and physical dimensions
$D=d=q\ge2$.
At the WGS points introduced below, whose edge factors are invertible,
this equality makes two maps isomorphisms: the map from a single site
tensor to functions on its closed neighborhood (Sec.~\ref{sec:star})
and the map from a bond generator to functions on its edge
(Sec.~\ref{sec:relations}). The exact image and kernel classification
rests on both.

Every vertex carries one physical index
$s_v\in\{0,\ldots,q-1\}$ and one virtual index of dimension $q$
for each incident edge. The tensors
are independent, without translation tying or a symmetry mask. Their
parameter space is
\begin{equation}
\begin{gathered}
\mathcal P_G=\bigoplus_{v\in V}\mathbb C^{q^{\deg(v)+1}},\\
P=\dim\mathcal P_G=\sum_vq^{\deg(v)+1}.
\end{gathered}
\label{eq:parameters}
\end{equation}
Contraction defines a polynomial map $\Psi:\mathcal P_G\to(\mathbb C^q)^{\otimes N}$,
as in the algebraic description of tensor-network varieties
\cite{bernardi2023dimension}.
At a point with $\Psi(A)\ne0$, write $\psi=\Psi(A)/\|\Psi(A)\|$ and define
\begin{equation}
J_A=d\Psi_A,\qquad
J_{\perp,A}=(I-\ket\psi\bra\psi)J_A/\|\Psi(A)\|.
\label{eq:j}
\end{equation}
The latter is the horizontal representative of the differential to
projective Hilbert space~\cite{provost1980riemannian,hackl2020geometry}.
For a fixed base representation we write $A_0$ and
$\psi_0=\Psi(A_0)/\|\Psi(A_0)\|$.
All ranks and kernel dimensions below are
complex dimensions. A complex kernel direction represents two real
directions if the tensors are split into real and imaginary parts.

Orient the edges arbitrarily. A matrix $X_e\in\mathfrak{gl}(q,\mathbb C)$
acts on its tail tensor from the right and on its head tensor with the
negative left action, with the corresponding virtual leg held fixed.
This is the infinitesimal form of the standard virtual-bond gauge
\cite{perezgarcia2007mps,bernardi2023dimension}. Summing these actions defines
\begin{equation}
\Phi_A:\bigoplus_{e\in E_G}\mathfrak{gl}(q,\mathbb C)\longrightarrow\mathcal P_G.
\label{eq:phi}
\end{equation}
The two endpoint contributions cancel under contraction, as shown in
Fig.~\ref{fig:maps}(b), so $J_A\Phi_A=0$.
In particular $\im\Phi_A\subseteq\ker J_A\subseteq\ker J_{\perp,A}$.
The second inclusion has codimension one, supplied by the ray
(Sec.~\ref{sec:nullity}); whether the first is an equality is the
question of this paper.

In this language, $\ker\Phi_A$ collects coefficient dependencies among bond
generators, $\im\Phi_A$ the ordinary gauge variations, $\ker J_A$ and
$\ker J_{\perp,A}$ the variations invisible to the wavefunction and to
the projective state, and $\ker J_A/\im\Phi_A$ the redundancy beyond
ordinary gauge.

Let $\partial$ be the oriented vertex-edge incidence matrix, with $+1$ at a tail
and $-1$ at a head. Scalar generators $X_e=a_eI$ give
$\delta A_v=(\partial a)_v A_v$. Therefore
\begin{equation}
\begin{gathered}
\mathcal R_{\rm cyc}:=\{(a_eI)_e:\partial a=0\}\subseteq\ker\Phi_A,\\
\dim\mathcal R_{\rm cyc}=E-N+1.
\end{gathered}
\label{eq:cycle}
\end{equation}
The dimension and fundamental-cycle basis are standard incidence-matrix
facts~\cite{diestel2017graph}; their role as scalar PEPS gauge-generator
dependencies is explicit in Ref.~\cite{wu2025peps}.
This scalar cycle space exists at every tensor point. A spanning tree,
illustrated in Fig.~\ref{fig:maps}(c),
selects $N-1$ independent scalar directions; keeping these and the $q^2-1$
traceless directions on every edge preserves the full bond-gauge image.
The remaining issue is whether further coefficient dependencies survive.

\subsection{WGS representations and the complete bond-gauge rank}
Weighted graph states (WGS) are an exactly solvable family at which
the bond-gauge rank saturates the upper bound of Eq.~\eqref{eq:cycle};
genericity then transports this to almost every tensor assignment.
Following the controlled-phase construction of WGS
\cite{anders2007weighted}, the normalized amplitude is
\begin{equation}
\begin{gathered}
\psi(s)=q^{-N/2}\prod_{uv\in E_G}W_{uv}(s_u,s_v),\\
W_{uv}(s,t)=e^{i\phi_{uv}st},
\end{gathered}
\label{eq:wgs}
\end{equation}
where $\phi_{uv}\in\mathbb R$ and the Vandermonde nodes
$1,e^{i\phi_{uv}},\ldots,e^{i(q-1)\phi_{uv}}$ are distinct on every edge.
Equivalently, $e^{ik\phi_{uv}}\ne1$ for $1\le k<q$. Each $q\times q$
weight matrix is then invertible and has no zero entries; invertibility
follows from the Vandermonde determinant~\cite{golub2013matrix}. In particular,
$\phi_{uv}=2\pi/q$ is admissible for every integer $q\ge2$; no prime-dimension
assumption is needed. For $\phi_{uv}=2\pi w/q$, admissibility requires
$\gcd(w,q)=1$. The results also hold for arbitrary invertible $q\times q$
phase-weight matrices with unit-modulus entries.
Choose an invertible factorization
\begin{equation}
W_{uv}(s,t)=\sum_{a=0}^{q-1} F_{u,e}(s,a)F_{v,e}(t,a).
\label{eq:factor}
\end{equation}
The tensors
\begin{equation}
A_v^{s_v}(\{a_e\})=q^{-1/2}\prod_{e\ni v}F_{v,e}(s_v,a_e)
\label{eq:site}
\end{equation}
contract to Eq.~\eqref{eq:wgs}. Factorization choices are related by
invertible virtual gauges and do not affect the physical tangent image.

\begin{proposition}[Complete coefficient dependencies]
At the WGS representation~\eqref{eq:site},
\begin{equation}
\begin{aligned}
\ker\Phi_A&=\mathcal R_{\rm cyc},\\
\rank\Phi_A&=q^2E-(E-N+1)\\
&=E(q^2-1)+N-1.
\end{aligned}
\label{eq:phirank}
\end{equation}
The same conclusion holds on a Zariski-open dense subset of the full
parameter space $\mathcal P_G$ for each finite simple connected graph.
\label{prop:bond}
\end{proposition}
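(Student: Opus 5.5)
The plan is to compute $\ker\Phi_A$ at the WGS point directly, one site at a time, using that each site tensor is a product of invertible factors. Fix a vertex $v$. Since $F_{v,e}$ is an invertible $q\times q$ matrix, the map $(s_v,\{a_e\})\mapsto$ tensor entries can be ``unfolded'' by applying $F_{v,e}^{-1}$ on each virtual leg. After this change of basis on the virtual legs, which is an invertible linear map on $\mathbb C^{q^{\deg(v)+1}}$, $A_v$ becomes, up to the factor $q^{-1/2}$, the copy (GHZ) tensor $\delta_{s_v,b_{e_1},\dots,b_{e_k}}$ in the new virtual indices $b_e$. The bond action of $X_e$ on leg $e$ of $A_v$ becomes the action of the conjugated matrix $\tilde X_{e,v}=F_{v,e}^{-1}X_eF_{v,e}$ (or its transpose-type analogue for the head), composed on the copy tensor. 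Because conjugation by invertible matrices is injective, it suffices to understand when a sum of leg actions on a copy tensor vanishes.

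The key local computation is the following. For a copy tensor $\delta$ with legs $e_1,\dots,e_k$ and matrices $Y_1,\dots,Y_k$, the entry of $\sum_j (Y_j)_{\text{leg }j}\delta$ at $(s;b_1,\dots,b_k)$ is $\sum_j [b_i=s\ \forall i\ne j]\,(Y_j)_{s,b_j}$, reading the conventions appropriately. At an index where all $b_i=s$ the entry is $\sum_j (Y_j)_{ss}$. At an index where exactly one $b_j\ne s$ it is $(Y_j)_{s b_j}$, and elsewhere it vanishes. Hence, for a vertex of degree $k\ge1$, vanishing forces every off-diagonal entry of every $Y_j$ to be zero, together with $\sum_j (Y_j)_{ss}=0$ for each $s$. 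This alone does not yet force scalars. The diagonal freedom is what one must kill, and this is the main obstacle.

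To handle the obstacle, I will pass back from the conjugated matrices to $X_e$ itself. On an edge $e=uv$, $X_e$ must simultaneously be diagonal after conjugation by $F_{u,e}$ (right action) and by $F_{v,e}$ (left action). Using the factorization $W_{uv}=F_{u,e}F_{v,e}^{T}$, the two conditions say that $X_e$ corresponds to a pair of diagonal matrices $D_u,D_v$ with $D_u W_{uv}=W_{uv} D_v$ (up to transposes), i.e.\ $(d_u(s)-d_v(t))W_{uv}(s,t)=0$ for all $s,t$. Since every entry of $W_{uv}$ is nonzero, $d_u(s)=d_v(t)$ for all $s,t$, so $d_u=d_v$ is a constant and $X_e=a_eI$. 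This is exactly where ``no zero entries'' in the Vandermonde weight matrix is used. For $q$ where only invertibility is available, a chain argument would be needed, but here the entries are nonzero, so the step closes. With all $X_e=a_eI$, the remaining diagonal condition at each vertex is $(\partial a)_v=0$, which gives $\ker\Phi_A=\mathcal R_{\rm cyc}$. The rank formula then follows from rank--nullity on $\bigoplus_e\mathfrak{gl}(q)$ of dimension $q^2E$, using the dimension $E-N+1$ from Eq.~\eqref{eq:cycle}. For isolated edge cases such as a leaf vertex, the argument is unchanged because the copy-tensor analysis holds for $k=1$.

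For the generic statement, note that $\rank\Phi_A$ is the rank of a matrix whose entries are linear in $A$. The set where the rank is at least $r=E(q^2-1)+N-1$ is Zariski open, defined by the nonvanishing of some $r\times r$ minor. It is nonempty because it contains the WGS point. Since $\mathcal R_{\rm cyc}\subseteq\ker\Phi_A$ everywhere, the rank is at most $r$ everywhere. So on this open set, which is dense by irreducibility of $\mathcal P_G$, equality holds and $\ker\Phi_A=\mathcal R_{\rm cyc}$.
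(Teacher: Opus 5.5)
Your proof is correct and follows essentially the paper's argument. Unfolding each site to a copy tensor via $F_{v,e}^{-1}$ and reading off its entries is an explicit version of the paper's step that every acted-on leg must keep each factor row $f_s$ (resp.\ column $g_t$) as an eigenvector; after that, the pairing $(\alpha_s-b_t)W_e(s,t)=0$ with nonzero weights, the residual vertex condition $\partial a=0$ (with leaves handled automatically), and the maximal-minor genericity argument coincide with the paper's.
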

\begin{proof}
Fix a physical slice of a site tensor. It is a nonzero product of virtual
vectors. If a sum of single-leg actions on this product vanishes, then
each acted-on leg must remain parallel to its original vector. This
follows by projecting that leg modulo its original span and contracting
the other legs with nonvanishing linear functionals. For an oriented edge,
write its factors as rows $f_s$ and columns $g_t$. A coefficient null vector
must therefore satisfy
\begin{equation}
f_sX_e=\alpha_s f_s,\qquad X_eg_t=b_tg_t.
\end{equation}
Pairing these equations gives $(\alpha_s-b_t)W_e(s,t)=0$. All $q^2$ weights
are nonzero, so every $\alpha_s$ and $b_t$ equals one scalar $a_e$.
Invertibility of the factors then forces $X_e=a_eI$. The remaining site
equations are exactly $\partial a=0$.

The entries of $\Phi_A$ are linear polynomials in the tensor entries.
Equation~\eqref{eq:cycle} bounds its rank from above, and the WGS construction
attains that bound on every graph. An attaining minor is thus a nonzero
polynomial. Its nonvanishing defines the required open dense set, by the
standard maximal-minor criterion for generic rank
\cite{hartshorne1977algebraic,bernardi2023dimension}.
\end{proof}

Proposition~\ref{prop:bond} fixes the bond-gauge rank even at WGS points
with a non-bond physical kernel. It concerns compatible actions at
both endpoints of every edge; individual WGS tensors may still have
local multi-leg stabilizers. Appendix~\ref{SM-sec:stabilizerdetails}
gives their explicit construction and explains the role of nonzero
edge weights.

\subsection{The ray and the non-bond nullity}
\label{sec:nullity}
One further ingredient is needed before the two baselines --- affine and
projective --- can be compared on equal footing: an overall rescaling of
the wavefunction is itself a first-order direction, and it does not come
from any bond generator. Choose a ray representative $r_A$ with $J_A
r_A=\Psi(A)$, for example a variation equal to $A_v$ at one site and
zero elsewhere. Since $J_A\Phi_A=0$,
the ray is independent of the bond-gauge image. Define
\begin{equation}
\mathcal G_A=\im\Phi_A+\mathbb C r_A,\qquad
\extra_A=\ker J_{\perp,A}/\mathcal G_A.
\label{eq:extra}
\end{equation}
At the points of Proposition~\ref{prop:bond},
$\dim\mathcal G_A=E(q^2-1)+N$.
Changing the choice of $r_A$ among site rescalings leaves this sum
unchanged: their differences are scalar bond gauges on a connected graph.
Indeed, the difference of rescalings at sites $u,v$ corresponds to
the vertex vector $\boldsymbol e_u-\boldsymbol e_v\in\im\partial$.

We write $g_0=E(q^2-1)+N-1$ for this affine gauge rank and
$g=g_0+1$ for the projective baseline. The non-bond nullity is
$\nu_A=\dim\extra_A$; we write $\nu$ when the point is clear, and
$\nu_{\rm WGS}$, $\nu_{\rm generic}$ when comparing points. Its
elements are null directions of the physical differential that lie
outside $\im\Phi_A$; as stated in the Scope paragraph, they remain
redundancies of the parametrization, and
Proposition~\ref{prop:finite} integrates some of them to finite
state-preserving transformations. There is a natural isomorphism
\begin{equation}
\ker J_A/\im\Phi_A\simeq\extra_A.
\label{eq:affineprojective}
\end{equation}
Indeed, a projective null variation has $J_A\delta A=c\Psi(A)$;
subtracting $c r_A$ produces an affine null variation, and adding an
element of $\im\Phi_A$ does not change its class. The map is injective:
if $\delta A\in\ker J_A\cap\mathcal G_A$, then
$\delta A=\Phi_A(X)+cr_A$ and $0=J_A\delta A=c\Psi(A)$ force $c=0$.
Thus both quotients measure the same non-bond redundancy, which we
now compute.

Five objects recur throughout: the physical tangent space $\T_A$, the
support family $\K(G)$ with its center sets $C_S$, the gauge-and-ray
space $\mathcal G_A$, the non-bond quotient $\extra_A$ of dimension
$\nu$, and the minimal regions that generate it.
Table~\ref{tab:notation} lists these with the remaining notation and
where each is defined.

\begin{table*}[!tbp]
\centering
\caption{Notation. Ranks and dimensions are complex; $q=d=D$
throughout.}
\label{tab:notation}
\footnotesize
\setlength{\tabcolsep}{3pt}
\centering
\begin{adjustbox}{max width=\textwidth}
\begin{tabular}{>{$}l<{$}>{\raggedright\arraybackslash}p{8.07cm}l}\toprule
\text{Symbol} & Meaning & Def.\\
\midrule
N,\ E & numbers of vertices and edges of $G=(V,E_G)$ & Sec.~\ref{sec:bond}\\
\mathcal P_G,\ P & tensor parameter space and its dimension & \eqref{eq:parameters}\\
J_A,\ J_{\perp,A} & differential and its projective part & \eqref{eq:j}\\
\Phi_A,\ \partial & bond action; vertex-edge incidence matrix & \eqref{eq:phi}, \eqref{eq:cycle}\\
\mathcal R_{\rm cyc} & scalar cycle dependencies, $\dim=E-N+1$ & \eqref{eq:cycle}\\
\mathcal G_A,\ g_0,\ g & bond gauge plus ray; affine and projective gauge ranks & \eqref{eq:extra}\\
\extra_A,\ \nu & non-bond quotient and non-bond nullity & \eqref{eq:extra}\\
N(v),\ N[v] & open and closed neighborhoods & \eqref{eq:localmap}\\
\chi_{S,\boldsymbol k},\ \mathcal F_S & product character; nonzero frequencies on $S$ & \eqref{eq:walsh}\\
\xi_{v,S}^{\boldsymbol k} & one-site parameter vector producing $\chi_{S,\boldsymbol k}\psi$ & \eqref{eq:q}\\
\K(G),\ C_S,\ m_S & supported sets; their centers and multiplicity & \eqref{eq:family}\\
K_q(G),\ r & weighted support count; $r=q-1$ & \eqref{eq:weightedcount}\\
\T_A,\ e_{S,\boldsymbol k} & physical tangent space; basis vectors $\chi_{S,\boldsymbol k}\psi$ & \eqref{eq:tangent}\\
z_{S;u,v}^{\boldsymbol k} & center-difference kernel relation & \eqref{eq:z}\\
Q_R,\ u_R,\ a_R & minimal-region quotient; universal vertices; pinned pairs & Sec.~\ref{sec:minimalgeneral}\\
F_3,\ F_4,\ E_{33} & triangles, squares, edges shared by two triangles & \eqref{eq:archmotifs}\\
n_3,\ n_4,\ n_{33} & per-vertex triangle, square, consecutive-triangle counts & \eqref{eq:multiorbit}\\
b_k,\ \kappa(q),\ M & periodic-cell support counts; number of cells & \eqref{eq:cellbk}\\
\mathcal M,\ \Xi & projective QGT; explicit local lift & Sec.~\ref{sec:liftcoords}\\
P_{\T_A},\ R_{\rm TDVP} & tangent projector; TDVP residual & \eqref{eq:projector}, \eqref{eq:residual}\\
\widetilde J(\varepsilon),\ L_\Delta & reduced Jacobian near $A_0$; opening map & \eqref{eq:openingmap}\\
\bottomrule
\end{tabular}
\end{adjustbox}
\end{table*}

\section{Exact tangent spaces at weighted graph states}
\label{sec:star}
Identifying every single-site variation with a function on its closed
neighborhood turns the full differential into a sum over Fourier
sectors, from which both its image and its complete kernel can be read
off directly.

\subsection{A local tensor is a closed-neighborhood function}
The open neighborhood of $v$ is $N(v)$ and its closed neighborhood is
$N[v]=N(v)\cup\{v\}$. Let $\delta A_v$ be an arbitrary variation of the
single tensor at $v$. Dividing its contracted variation by the nonzero
amplitude~\eqref{eq:wgs} gives
\begin{equation}
\begin{split}
\frac{\delta\psi_v(s)}{\psi(s)}
={}&\frac{\sqrt q}{\prod_{e=vw}W_e(s_v,s_w)}\\
&\times\sum_{\{a_e\}}\delta A_v^{s_v}(\{a_e\})
       \prod_{e=vw}F_{w,e}(s_w,a_e).
\end{split}
\label{eq:localmap}
\end{equation}
Only physical variables in $N[v]$ occur. For each fixed $s_v$, the numerator is a tensor
product of invertible $q$-dimensional maps, and the denominator has no
zero entries. Hence Eq.~\eqref{eq:localmap} is an isomorphism between the
single-site parameter block and all functions of $N[v]$. Both spaces
have dimension $q^{\deg(v)+1}$.

This is stronger than the locality of a Hamiltonian term. It identifies
every possible single-tensor variation and establishes that there are
no missing functions within the closed neighborhood. Conversely, it
excludes functions depending on variables outside that neighborhood. It also
shows why local virtual degeneracies of a physical slice do not by
themselves imply that a single-site Jacobian has a kernel.

For $S\subseteq V$, let $\mathcal F_S=\{1,\ldots,q-1\}^{S}$ denote
the nonzero frequency assignments on $S$. The empty support has one
empty assignment. With $\omega=e^{2\pi i/q}$, define the character
\begin{equation}
\chi_{S,\boldsymbol k}(s)=\prod_{v\in S}\omega^{k_vs_v},
\quad \boldsymbol k\in\mathcal F_S,\quad \chi_\varnothing=1.
\label{eq:walsh}
\end{equation}
These are the product characters of the finite Abelian group
$\mathbb Z_q^N$~\cite{terras1999fourier}; for $q=2$ they are the
Fourier--Walsh basis~\cite{odonnell2014boolean}.
There is a unique one-site parameter vector $\xi_{v,S}^{\boldsymbol k}$
whenever $S\subseteq N[v]$, with
\begin{equation}
J_A \xi_{v,S}^{\boldsymbol k}=\chi_{S,\boldsymbol k}\psi.
\label{eq:q}
\end{equation}
Here and below the WGS representation is normalized. The representative
has explicit local entries. Extend $k_w$ by zero outside $S$, and set
\begin{align}
b_{v,e}^{(k_w)}(s_v,\cdot)
 &=F_{w,e}^{-1}[W_e(s_v,t)\omega^{k_wt}]_{t=0}^{q-1},\nonumber\\
(\xi_{v,S}^{\boldsymbol k})^{s_v}(\{a_e\})
 &=q^{-1/2}\omega^{k_vs_v}\prod_{e=vw}b_{v,e}^{(k_w)}(s_v,a_e).
\label{eq:qexplicit}
\end{align}
Here $W_e(s_v,t)$ is read with $v$ as its first argument, and
$b_{v,e}^{(k)}$ is the unique column with
$\sum_a F_{w,e}(t,a)\,b_{v,e}^{(k)}(s_v,a)=W_e(s_v,t)\omega^{kt}$;
all contractions are bilinear. Contracting each factor with $F_{w,e}$
proves Eq.~\eqref{eq:q}. No global Jacobian is required to construct
these representatives. At $q=2$ each support has one assignment,
recovering the Walsh character $\chi_S$ and representative $\xi_{v,S}$.

\subsection{Rank and all relations}
Different vertices produce the same function when their closed
neighborhoods overlap. Define the downward-closed family of allowed supports
\begin{equation}
\begin{gathered}
\K(G)=\bigcup_{v\in V}2^{N[v]},\\
C_S=\{v:S\subseteq N[v]\},\quad m_S=|C_S|.
\end{gathered}
\label{eq:family}
\end{equation}
A support is counted once in $\K$, irrespective of how many closed
neighborhoods contain it, but carries $(q-1)^{|S|}$ distinct characters.
Define the weighted count
\begin{equation}
K_q(G)=\sum_{S\in\K(G)}(q-1)^{|S|}.
\label{eq:weightedcount}
\end{equation}
In particular, $K_2(G)=|\K(G)|$.

\begin{theorem}[Exact physical tangent image]
For the WGS representation~\eqref{eq:site}, the projective tangent image is
\begin{equation}
\begin{gathered}
\T_A:=\im J_{\perp,A}
=\operatorname{span}\{\chi_{S,\boldsymbol k}\psi\},\\
S\in\K(G)\setminus\{\varnothing\},\quad
\boldsymbol k\in\mathcal F_S.
\end{gathered}
\label{eq:tangent}
\end{equation}
In particular,
\begin{align}
\rank J_{\perp,A}&=K_q(G)-1,\label{eq:rank}\\
\dim\extra_A&=P-K_q(G)+1-[E(q^2-1)+N].\label{eq:extrarank}
\end{align}
All unprojected relations are generated by
\begin{equation}
z_{S;u,v}^{\boldsymbol k}
=\xi_{u,S}^{\boldsymbol k}-\xi_{v,S}^{\boldsymbol k},
\quad u,v\in C_S,\quad \boldsymbol k\in\mathcal F_S.
\label{eq:z}
\end{equation}
Projective normalization adds one ray direction.
\label{thm:star}
\end{theorem}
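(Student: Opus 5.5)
The proof decomposes the differential site by site and then works one Fourier sector at a time. Since $\mathcal P_G=\bigoplus_v\mathbb C^{q^{\deg(v)+1}}$, we have $J_A=\sum_v J_v$, where $J_v$ is the restriction of $J_A$ to the block at $v$. By the local isomorphism~\eqref{eq:localmap}, $J_v$ maps its block bijectively onto $\psi\cdot\operatorname{Fun}(N[v])$. Expanding $\operatorname{Fun}(N[v])$ in the product characters restricted to $N[v]$, the characters $\chi_{S,\boldsymbol k}$ with $S\subseteq N[v]$ and $\boldsymbol k\in\mathcal F_S$ (with $S=\varnothing$ allowed) form a basis, and there are $q^{|N[v]|}$ of them. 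Hence the vectors $\xi_{v,S}^{\boldsymbol k}$ of Eq.~\eqref{eq:q} form a basis of the block at $v$. Pushing this basis forward gives $\im J_A=\operatorname{span}\{\chi_{S,\boldsymbol k}\psi: S\in\K(G),\ \boldsymbol k\in\mathcal F_S\}$.

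Next comes independence. The characters $\chi_{S,\boldsymbol k}$ of $\mathbb Z_q^N$ are orthogonal, and multiplication by $\psi$ is an injective diagonal map because $\psi$ has no zeros. So the vectors $\chi_{S,\boldsymbol k}\psi$ over distinct pairs $(S,\boldsymbol k)$ are linearly independent, and $\rank J_A=K_q(G)$. The pair $(\varnothing,\varnothing)$ gives $\psi$ itself. Applying $I-\ket\psi\bra\psi$ annihilates $\psi$. For the remaining pairs, $\langle\psi,\chi_{S,\boldsymbol k}\psi\rangle=q^{-N}\sum_s\chi_{S,\boldsymbol k}(s)=0$ because $|\psi|^2=q^{-N}$ is constant, so these vectors are untouched by the projection and stay independent. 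This proves Eq.~\eqref{eq:tangent} and $\rank J_{\perp,A}=K_q(G)-1$.

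For the kernel, write an arbitrary $\delta A$ in the basis $\{\xi_{v,S}^{\boldsymbol k}\}$ with coefficients $c_{v,S,\boldsymbol k}$. Then
\[
J_A\delta A=\sum_{(S,\boldsymbol k)}\Bigl(\sum_{v\in C_S}c_{v,S,\boldsymbol k}\Bigr)\chi_{S,\boldsymbol k}\psi .
\]
By the independence just shown, $\delta A\in\ker J_A$ exactly when, for every pair $(S,\boldsymbol k)$, the coefficient vector $(c_{v,S,\boldsymbol k})_{v\in C_S}$ sums to zero. In each sector this sum-zero subspace has dimension $m_S-1$ and is spanned by the differences $e_u-e_v$ with $u,v\in C_S$, which are precisely the relations $z_{S;u,v}^{\boldsymbol k}$ of Eq.~\eqref{eq:z}. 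Projective normalization adds only the preimage of $\psi$, i.e.\ the ray $r_A$.

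The nullity formula~\eqref{eq:extrarank} then follows from rank–nullity, $\dim\ker J_{\perp,A}=P-K_q(G)+1$. By Proposition~\ref{prop:bond}, $\dim\mathcal G_A=E(q^2-1)+N$, and $\mathcal G_A\subseteq\ker J_{\perp,A}$, so the difference of the two dimensions is $\dim\extra_A$.

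The only step needing real care is the independence of the projected vectors. It hinges on $|\psi|$ being constant, which holds because the edge weights $W_{uv}$ have unit modulus. If this failed, the projection would mix sectors, the rank count would still work, but the clean span description in Eq.~\eqref{eq:tangent} would need modification. Everything else is bookkeeping, given the local isomorphism~\eqref{eq:localmap}.
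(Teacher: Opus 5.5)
Your proposal is correct and follows the paper's proof essentially step by step. Your basis $\xi_{v,S}^{\boldsymbol k}$ is the paper's invertible sector coordinates $x_{v,S}^{\boldsymbol k}$, and the remaining steps coincide with the paper's: the constant Born probability $q^{-N}$ makes the $\chi_{S,\boldsymbol k}\psi$ orthonormal, so projection removes only the empty sector; the kernel is the sum-zero hyperplane in each sector; and $\dim\extra_A$ follows by subtracting $\dim\mathcal G_A$.
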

\begin{proof}
In the invertible coordinates of Eq.~\eqref{eq:localmap}, write a variation
as $\{x_{v,S}^{\boldsymbol k}:S\subseteq N[v],
\boldsymbol k\in\mathcal F_S\}$. The full differential becomes
\begin{equation}
J_A\delta A=\sum_{S\in\K(G)}\sum_{\boldsymbol k\in\mathcal F_S}
       \left(\sum_{v\in C_S}x_{v,S}^{\boldsymbol k}\right)
       \chi_{S,\boldsymbol k}\psi.
\label{eq:sector}
\end{equation}
For fixed $(S,\boldsymbol k)$ we call the coordinates
$\{x_{v,S}^{\boldsymbol k}\}_{v\in C_S}$ the $(S,\boldsymbol k)$ sector.
The WGS Born probability is $q^{-N}$, so the vectors
$\chi_{S,\boldsymbol k}\psi$ are orthonormal. Each distinct character
contributes one image dimension.
Projection removes just $S=\varnothing$, proving Eq.~\eqref{eq:rank}.
For each frequency assignment in a nonempty support sector the kernel
is the hyperplane $\sum_{v\in C_S}x_{v,S}^{\boldsymbol k}=0$,
generated by differences between centers.
The unprojected empty sector has the same form; the projected empty sector
is all of $\mathbb C^N$. Finally subtract
$\dim\mathcal G_A=E(q^2-1)+N$
from $P-\rank J_\perp$ to obtain Eq.~\eqref{eq:extrarank}.
\end{proof}

The unprojected rank is $K_q(G)$. The extra $+1$ in
Eq.~\eqref{eq:extrarank} thus comes specifically from projection to rays.
This distinction enters every nullity formula below. Equivalently, a nonempty support with multiplicity $m_S$
has $(m_S-1)(q-1)^{|S|}$ relations, while the empty projective sector has $N$.

For each support and frequency assignment, connect its centers by any tree.
Its $m_S-1$ edges provide independent differences~\eqref{eq:z}. Applying this
construction to every character gives a complete kernel basis on any finite
graph. The choice of tree concerns centers that generate the same
function; it is conceptually separate from the spanning tree used to
resolve scalar bond cycles. Figure~\ref{fig:motifs}(b) illustrates
the shared-support relation for the two opposite centers of a square.

The physical basis in Eq.~\eqref{eq:tangent} is orthonormal, but the
parameter vectors $\xi_{v,S}^{\boldsymbol k}$ generally are not, since
their construction uses nonunitary inverse edge factors; the physical
projector does not require a choice of Euclidean parameter metric.

At the Fourier point, balanced edge factors also yield an exact Euclidean
QGT positive spectrum in terms of the same center multiplicities.
Appendix~\ref{SM-sec:positive_spectrum} derives this spectrum and a
system-size-independent condition bound at fixed nondegenerate weights
and bounded degree, separating positive-spectrum conditioning from
exact nullity.

\section{Relations modulo bond gauge and their local generators}
\label{sec:relations}
Theorem~\ref{thm:star} located every kernel relation inside a support
sector. This section separates the relations already accounted for by
ordinary bond gauge from the genuinely new ones, first algebraically
and then geometrically, by the smallest region on which each new
relation can be certified.

\subsection{Locating ordinary gauge within the function sectors}
For an edge $e=uv$, the map
\begin{equation}
X\longmapsto f_X(s,t)
=\frac{[F_{u,e}X F_{v,e}^{\mathsf T}]_{st}}{W_e(s,t)}
\label{eq:edgefunction}
\end{equation}
is an isomorphism from $q\times q$ matrices to functions of $(s,t)$.
The bond action generates $\xi_{u,f_X}-\xi_{v,f_X}$.
Thus its image can be expressed in the four support sectors
$\varnothing$, $\{u\}$, $\{v\}$, and $\{u,v\}$.

Across the graph, scalar gauges span the $N-1$ constant differences.
For a singleton $S=\{v\}$, its centers are $N[v]$; the edges incident to
$v$ connect them in a star and generate all $\deg(v)$ differences.
Each singleton has $q-1$ frequency assignments, so these supports
contribute $\sum_v\deg(v)(q-1)=2E(q-1)$ independent directions.
Each edge support $S=\{u,v\}$ has one ordinary relation per assignment
connecting its endpoints, giving another $E(q-1)^2$. With the ray this yields
\begin{equation}
N+2E(q-1)+E(q-1)^2=N+E(q^2-1).
\label{eq:ordinarysectors}
\end{equation}
This sector description identifies precisely what must be removed before
calling a relation non-bond. A shared neighborhood does not automatically
imply a new direction: constants, singletons, and the endpoint relation
of each edge have already been counted.

\begin{proposition}[Sector description of the non-bond kernel]
For each nonempty $S\in\K(G)$ and $\boldsymbol k\in\mathcal F_S$,
the non-bond quotient has dimension zero when $|S|=1$, dimension
$m_S-2$ when $S$ is an edge, and dimension $m_S-1$ otherwise.
The empty projective sector is exhausted by scalar gauges and the ray.
A basis is obtained by taking a tree on each center set $C_S$;
for edge supports first identify the two endpoint centers.
\label{prop:quotient}
\end{proposition}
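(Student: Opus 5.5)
The plan is to work sector by sector in the invertible coordinates of Eq.~\eqref{eq:localmap}. By Theorem~\ref{thm:star}, $\ker J_{\perp,A}$ splits as a direct sum over the $(S,\boldsymbol k)$ sectors. A nonempty sector contributes the hyperplane $H_{S,\boldsymbol k}=\{x:\sum_{v\in C_S}x_{v,S}^{\boldsymbol k}=0\}\subseteq\mathbb C^{C_S}$ of dimension $m_S-1$, and the projective empty sector contributes all of $\mathbb C^N$. The first step is to show that $\mathcal G_A$ also splits compatibly along these sectors, i.e.\ $\mathcal G_A=\bigoplus_{(S,\boldsymbol k)}(\mathcal G_A\cap\text{sector})$. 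Then $\extra_A$ is the direct sum of the sectorwise quotients, and each can be computed separately.

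For the splitting, use Eq.~\eqref{eq:edgefunction}. A bond generator $X$ on $e=uv$ produces $\xi_{u,f_X}-\xi_{v,f_X}$. Expanding $f_X$ in characters of $(s_u,s_v)$ decomposes this into components in the sectors $\varnothing,\{u\},\{v\},\{u,v\}$. Each component is itself of the form $\xi_{u,S}^{\boldsymbol k}-\xi_{v,S}^{\boldsymbol k}$, and since $X\mapsto f_X$ is onto all edge functions, each such component is separately realized by some bond generator. Hence $\im\Phi_A$ is spanned by the vectors $z^{\boldsymbol k}_{S;u,v}$ with $uv\in E_G$ and $S\subseteq\{u,v\}$. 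The ray $r_A$ may be chosen as a single-site rescaling, which lies purely in the empty sector. So $\mathcal G_A$ is spanned by sector-homogeneous vectors, and the splitting follows.

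Next comes the case analysis. For $|S|\ge3$, or $|S|=2$ with $S$ not an edge, no bond generator touches the sector, so the quotient is the full $H_{S,\boldsymbol k}$ of dimension $m_S-1$, with basis given by the tree differences from Theorem~\ref{thm:star}. For $S=\{v\}$, we have $C_S=N[v]$, and the edges $vw$ with $w\in N(v)$ give $z^{k}_{\{v\};v,w}$. These form a star spanning tree on $N[v]$, so they span $H$ and the quotient is $0$. For $S=\{u,v\}\in E_G$, the single relation $z_{S;u,v}^{\boldsymbol k}$ is a nonzero element of $H$ (the $\xi$'s at different sites are independent). The quotient therefore has dimension $m_S-2$. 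Because $u,v\in C_S$, a basis is obtained by contracting the edge $uv$ of $C_S$ and taking a tree on the resulting $m_S-1$ classes. For the empty projective sector, the scalar gauges $a_eI$ act as $(\partial a)_v$ times the rescaling at $v$, which in these coordinates is $x_{v,\varnothing}=(\partial a)_v$. Their span is $\im\partial$, the sum-zero vectors of dimension $N-1$ on a connected graph. Adding the ray, whose coordinate sum is nonzero, gives all of $\mathbb C^N$, consistent with the count in Eq.~\eqref{eq:ordinarysectors}.

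The main obstacle is the splitting step. One must check that the edge-function decomposition really produces sector-homogeneous pieces realizable by bond generators. In particular, the constant part of $f_X$ must be identified with the scalar generator rescaling, with the correct normalization relative to $\xi_{v,\varnothing}$, so that the empty-sector image is exactly $\im\partial$. I would verify this directly from Eq.~\eqref{eq:qexplicit}: taking $X=I$ gives $F_uF_v^{\mathsf T}=W$, so $f_I\equiv1$, and $\xi_{v,\varnothing}=A_v$ up to the normalization. As a final consistency check, summing the sector dimensions should reproduce Eq.~\eqref{eq:extrarank}.
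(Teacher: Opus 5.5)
Your proposal is correct and takes essentially the same route as the paper. Both use the sector decomposition of Eq.~\eqref{eq:sector}, the edge-function isomorphism~\eqref{eq:edgefunction} that confines every bond action to the sectors $\varnothing,\{u\},\{v\},\{u,v\}$, and the star, endpoint and scalar-cycle counts behind Eq.~\eqref{eq:ordinarysectors}; you also make explicit what the paper leaves implicit in ``the preceding calculation'', namely that $\mathcal G_A$ is spanned by sector-homogeneous vectors and so splits along the sectors, which is what justifies computing the quotient sector by sector.
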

\begin{proof}
The kernel in each affine character sector is the sum-zero subspace
of $\mathbb C^{C_S}$, of dimension $m_S-1$.
The preceding calculation exhausts the empty and singleton sectors.
On an edge support it removes exactly the difference between the two
endpoints. No bond action has support on a nonedge pair or on three
or more vertices. Quotienting that one endpoint difference identifies
its centers and reduces the dimension by one. Tree differences give
bases of the remaining sum-zero spaces.
\end{proof}
In particular, the quotient can be constructed sector by sector without
finding a global orthogonal complement of the gauge image. This is a
linear classification at $A$; which representatives integrate to
finite non-bond transformations is the subject of
Sec.~\ref{sec:integrability}.

\subsection{Locality with the external legs retained}
We now make precise how large a piece of the graph is needed to
witness a relation, so that later sections can speak of a
\emph{minimal region} without ambiguity.

For a connected induced region $R$, contract the tensors inside $R$ but
leave every physical leg and every external virtual leg open. Denote its
unprojected differential by $J_R$. Such boundary-to-bulk maps are standard
in PEPS injectivity arguments~\cite{molnar2018normal,cirac2021matrix}.
Here we define its intrinsically new kernel to be the
quotient of $\ker J_R$ by internal bond gauges and all kernels supported
on proper connected induced subregions, embedded by zero extension.
We call a region minimal if this quotient is nonzero.
Appendix~\ref{SM-sec:sminimal} proves the open-region sector decomposition
and makes the subregion embeddings explicit.

The open-leg definition prevents a cancellation that requires a special
external environment from being mistaken for a local identity. For a
nonempty support, a relation $z_{S;u,v}^{\boldsymbol k}$ has a certificate
in the connected induced region on $R=S\cup\{u,v\}$. For an empty
support, take any connected induced region containing both centers.
The untouched factors on the external legs agree exactly. Usually only
two tensors change; the region size specifies where their cancellation
can be certified, not how many tensor blocks are nonzero.

\subsection{The intrinsic minimal-region polynomial}
\label{sec:minimalgeneral}
Two structural features of a region control its minimal kernel
completely: how many of its vertices are adjacent to every other vertex
in the region (its \emph{universal} vertices), and which pairs of
vertices are pinned down as the unique common neighborhood of everyone
else. Neither periodicity nor a planar embedding is needed.
Set $r=q-1$. Let $Q_R$ be the open-region quotient just defined, with all proper
connected induced subregions removed. For a connected induced graph $R$ with $|R|\ge3$ vertices, let $u_R$
be its number of universal vertices. Let $a_R$ count unordered pairs
$\{a,b\}$ such that
\begin{equation}
\bigcap_{s\in R\setminus\{a,b\}}N_R[s]=\{a,b\}.
\label{eq:aregion}
\end{equation}
\begin{theorem}[Intrinsic minimal-region polynomial]
\label{thm:minimalgeneral}
For a nondegenerate WGS representation, with its external legs retained,
\begin{equation}
\begin{split}
\dim Q_R={}&a_Rr^{|R|-2}+u_R\mathbf1_{u_R\ge2}r^{|R|-1}\\
&+\max(u_R-1,0)r^{|R|}.
\end{split}
\label{eq:Qgeneral}
\end{equation}
For $|R|=1,2$, the quotient is zero.
\end{theorem}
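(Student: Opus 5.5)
The plan is to work entirely in the sector coordinates of Theorem~\ref{thm:star}, adapted to the open region $R$. First I would establish the open-leg analogue of Eq.~\eqref{eq:localmap}. With the external virtual legs of a vertex $v\in R$ left open, a single-site variation divided by the amplitude is an arbitrary function of $s_{N_R[v]}$ tensored with arbitrary dependence on $v$'s own external legs. An external leg belongs to exactly one tensor, so any character that depends nontrivially on it has a unique center and contributes no kernel. Hence $\ker J_R$ decomposes into sectors $(S,\boldsymbol k)$ with $S\subseteq R$. In each sector the kernel is the sum-zero subspace of $\mathbb C^{C_S^R}$, where $C_S^R=\bigcap_{s\in S}N_R[s]$, spanned by differences $z_{S;u,v}^{\boldsymbol k}$.

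The key step is to show that the kernels of proper connected induced subregions, embedded by zero extension, are again sector-compatible. The same statement for internal bond gauges is already known from Sec.~\ref{sec:relations}. Concretely, a subregion $R'\subsetneq R$ contributes, in sector $(S,\boldsymbol k)$, exactly the differences $z_{S;u,v}^{\boldsymbol k}$ with $S\cup\{u,v\}\subseteq R'$. The point is that $R$-legs cut by $R'$ become external legs of $R'$, and these only carry unique-center characters. I expect this to be the main obstacle: one must check that zero extension commutes with the change to the invertible coordinates, and that no cross-sector mixing arises from the boundary legs of $R'$. This is what the appendix on open-region sector decomposition is meant to supply.

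Given this, for nonempty $S$ the set $S\cup\{u,v\}$ is automatically connected because $u,v$ are adjacent to all of $S$. I would then form the graph $H_S$ on $C_S^R$, joining $u,v$ whenever $S\cup\{u,v\}\ne R$. The new dimension in the sector is then (number of components of $H_S$) $-1$, and bond gauges never reduce it further here. Bond relations live on edge supports, where they join endpoints that lie inside $S$, and such pairs already span a proper region when $|R|\ge3$. The count then splits by $|S|$.
\begin{itemize}
\item If $|S|\le|R|-3$, every pair spans a proper region, so $H_S$ is complete and contributes nothing.
\item If $S=R\setminus\{a,b\}$, only the pair $\{a,b\}$ fills $R$. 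The graph $H_S$ is therefore disconnected exactly when $C_S^R=\{a,b\}$, which is condition~\eqref{eq:aregion}. Each such pair gives one dimension per frequency, $a_Rr^{|R|-2}$ in total.
\item If $S=R\setminus\{a\}$, the pairs containing $a$ fill $R$. The vertex $a$ is a center iff it is universal. Any other center $b$ is adjacent to $a$ and to $S$, hence also universal. So $H_S$ has two components precisely when $a$ is universal and $u_R\ge2$, giving $u_R\mathbf 1_{u_R\ge2}r^{|R|-1}$.
\item If $S=R$, the centers are the universal vertices and every pair fills $R$. This gives $\max(u_R-1,0)r^{|R|}$.
\end{itemize}

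The empty sector is exhausted by scalar gauges along paths, which are proper whenever $|R|\ge3$. For $|R|=1,2$, Proposition~\ref{prop:quotient} already shows that singleton and edge sectors carry no non-bond relations, so the quotient vanishes. Summing the cases gives Eq.~\eqref{eq:Qgeneral}.
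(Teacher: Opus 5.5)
Your proposal is correct and follows the paper's proof essentially step for step. Both arguments use the open-region sector decomposition and a reduction graph on $C_S$ that joins two centers whenever their difference already lives on the proper connected subregion $S\cup\{a,b\}$, with bond gauges subsumed once $|R|\ge3$; both then split into the cases $|R\setminus S|\ge3$, $=2$, $=1$, $=0$, weighted by $r^{|S|}$. The step you flag as the main obstacle is immediate: by Eq.~\eqref{eq:qexplicit}, $\xi_{v,S}^{\boldsymbol k}$ depends only on the tensor at $v$ and the neighboring factors $F_{w,e}$, and a cut leg of $R'$ is transformed by that same factor, so $\xi_{v,S}^{\boldsymbol k}$ is the same parameter vector whether computed in $R'$ or in $R$.
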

\begin{proof}
Fix a nonempty supported set $S\subseteq R$, with center set
$C_S\ne\varnothing$ computed in $R$, and one frequency assignment on
$S$. The sector kernel is the sum-zero subspace of $\mathbb C^{C_S}$,
spanned by center differences. Define the \emph{reduction graph} on
the vertex set $C_S$ by joining two centers $a,b$ whenever their
difference already lies in the subspace quotiented out in $Q_R$: either
(i) $S\cup\{a,b\}\subsetneq R$, so the difference occurs on this proper
subregion, which is connected because $S$ is nonempty and both vertices
are centers; or (ii) $S\subseteq\{a,b\}$ and $ab$ is an edge, so the
difference is an ordinary bond gauge. Differences along the edges of the
reduction graph span a subspace of codimension $c-1$ in the sum-zero
space, where $c$ is its number of connected components, so the sector
contributes $c-1$ to $\dim Q_R$. If $|R\setminus S|\ge3$, condition (i)
joins every center pair.
If it equals two, only the pair of omitted vertices can fail to be joined.
Any additional center $z\in S$ is joined to both omitted vertices by
(i), making the reduction graph connected. Thus the sector contributes
precisely when $C_S$ is that pair, as in
Eq.~\eqref{eq:aregion}. The ordinary-edge exception cannot satisfy this
condition for $|R|\ge3$. If one vertex is omitted, it supplies a new
component exactly when it and at least one other vertex are universal;
there are $u_R$ such supports. If none is omitted, all centers are
universal and contribute $\max(u_R-1,0)$ differences (zero if there
is no center). Weighting these sectors by
$r^{|S|}$ gives the formula. Empty and singleton ordinary sectors introduce
no non-bond quotient, as in Appendix~\ref{SM-sec:sminimal}.
\end{proof}

\begin{corollary}[Minimal regions on at most four vertices]
\label{cor:smallregions}
Every tree has $Q_R=0$. Among connected graphs on $|R|\le4$ vertices,
exactly four have $Q_R\ne0$:
\begin{center}
\normalfont
\begin{adjustbox}{max width=\columnwidth}
\begin{tabular}{lcccl}
\hline
region & $|R|$ & $u_R$ & $a_R$ & $\dim Q_R$\\
\hline
triangle $K_3$ & 3 & 3 & 0 & $3r^2+2r^3$\\
chordless square $C_4$ & 4 & 0 & 2 & $2r^2$\\
diamond $K_4-e$ & 4 & 2 & 1 & $r^2+2r^3+r^4=q^2r^2$\\
$K_4$ & 4 & 4 & 0 & $4r^3+3r^4$\\
\hline
\end{tabular}
\end{adjustbox}
\end{center}
\end{corollary}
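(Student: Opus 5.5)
The corollary is a direct evaluation of Theorem~\ref{thm:minimalgeneral}, so the plan is to compute the two invariants $u_R$ and $a_R$ for each small connected graph and substitute them into Eq.~\eqref{eq:Qgeneral}. Regions with $|R|\le2$ are already handled by the theorem, so only connected graphs on three and four vertices need to be checked. These are $P_3$ and $K_3$ on three vertices, and the path $P_4$, star $K_{1,3}$, $C_4$, paw, diamond $K_4-e$, and $K_4$ on four vertices.

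Reading Eq.~\eqref{eq:Qgeneral}, $\dim Q_R\ne0$ exactly when $a_R\ge1$ or $u_R\ge2$. The tree claim follows from this. A tree with $|R|\ge3$ has at most one universal vertex, since two universal vertices together with any third vertex would form a triangle. So the task reduces to showing $a_R=0$ for every tree. I would argue as follows. Take a pair $\{a,b\}$ and the remaining set $R\setminus\{a,b\}$, which is nonempty. If $\{a,b\}$ equals the intersection $\bigcap_{s}N_R[s]$, then each remaining vertex $s$ is equal or adjacent to both $a$ and $b$.
\begin{itemize}
\item If some remaining $s$ is adjacent to both $a$ and $b$, then $a$ and $b$ must be nonadjacent. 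Otherwise $a,b,s$ would form a triangle.
\item If two distinct remaining vertices $s,s'$ are both adjacent to $a$ and $b$, they create a $4$-cycle, which a tree cannot contain.
\item So there is a single remaining vertex $s$, giving $|R|=3$. In that case $R=P_3$ with center $s$, and $N_R[s]=R\ne\{a,b\}$.
\end{itemize}
Hence $a_R=0$ and $Q_R=0$ for every tree.

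For the non-tree cases I would compute directly.
\begin{itemize}
\item \textbf{Paw.} Take a triangle $xyz$ with a pendant $w$ attached to $x$. Only $x$ is universal, so $u_R=1$. For $a_R$, each remaining pair of vertices would need to have an intersection of closed neighborhoods equal to the chosen pair. Checking the six pairs should give $a_R=0$; for example, removing $\{y,z\}$ leaves $N[x]\cap N[w]=\{x,w\}\ne\{y,z\}$.
\item \textbf{$C_4$.} There are no universal vertices. For the opposite pair $\{a,c\}$, the remaining vertices are $b$ and $d$, with $N[b]\cap N[d]=\{a,c\}$. The same holds symmetrically for $\{b,d\}$, while adjacent pairs fail. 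So $a_R=2$, giving $2r^2$.
\item \textbf{Diamond.} The two degree-three vertices are universal, so $u_R=2$. The only pair satisfying Eq.~\eqref{eq:aregion} is the nonadjacent pair, giving $a_R=1$. The formula yields $r^2+2r^3+r^4=q^2r^2$, using $r+1=q$.
\item \textbf{$K_3$.} Here $u_R=3$, and $a_R=0$ since removing a pair leaves one vertex whose closed neighborhood is all of $R$. This gives $3r^2+2r^3$.
\item \textbf{$K_4$.} Here $u_R=4$ and $a_R=0$, giving $4r^3+3r^4$.
\end{itemize}

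The main place where care is needed is the enumeration of $a_R$ through the intersection condition, especially for the paw and the four-vertex trees. I would tabulate all pairs explicitly there rather than argue by symmetry. The rest is substitution into Eq.~\eqref{eq:Qgeneral}.
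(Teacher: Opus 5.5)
Your proposal follows the paper's argument: evaluate Eq.~\eqref{eq:Qgeneral}; for trees, kill the last two terms because two universal vertices plus a third vertex would form a triangle. Then kill $a_R$ because every remaining vertex must be a common neighbor of $a$ and $b$, which closes a four-cycle when $|R|\ge4$ and gives $N_R[s]=R$ when $|R|=3$. The non-tree cases, including the paw with $u_R=1$ and $a_R=0$, are then direct substitution.

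One step is stated incorrectly. Take the diamond with universal vertices $u,v$ and nonadjacent vertices $w,x$. The nonadjacent pair $\{w,x\}$ does \emph{not} satisfy Eq.~\eqref{eq:aregion}: its complement consists of the universal vertices, so $N_R[u]\cap N_R[v]=R\ne\{w,x\}$.

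The unique counted pair is the universal edge $\{u,v\}$, because $N_R[w]\cap N_R[x]=\{u,v\}$. In the sector language of Theorem~\ref{thm:minimalgeneral}, the support is $S=\{w,x\}$ and its center set is $C_S=\{u,v\}$; you appear to have conflated the support with the counted pair.

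The value $a_R=1$, and hence $\dim Q_R=q^2r^2$, is unaffected. However, the pair you name fails the condition, so the explicit pair-by-pair tabulation you propose would need to record $\{u,v\}$ instead.
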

\begin{proof}
In a tree on $|R|\ge3$ vertices at most one vertex is universal, so the
last two terms of Eq.~\eqref{eq:Qgeneral} vanish, and a pair $\{a,b\}$
counted by $a_R$ would need every other vertex adjacent to both $a$ and
$b$, which for $|R|\ge4$ closes a four-cycle and for $|R|=3$ gives
$N_R[s]=R\ne\{a,b\}$. The remaining connected graphs on three or four
vertices are the four listed ones and the paw (a triangle with a
pendant edge), which has $u_R=1$, $a_R=0$; evaluating
Eq.~\eqref{eq:Qgeneral} gives the table.
\end{proof}

The three regions of Corollary~\ref{cor:smallregions} other than $K_4$
are the minimal generating types that occur on periodic regular-polygon
tilings, where $K_4$ is excluded by unit-edge geometry
(Lemma~\ref{lem:shortcycles}); Fig.~\ref{fig:motifs} shows them for
$q=2$. Their relations have the following explicit sector templates,
which the lattice counts of Sec.~\ref{sec:lattice} use. The superscript
$\boldsymbol k$ is suppressed: each template is included for every
$\boldsymbol k\in\mathcal F_S$ on its displayed support $S$.

\paragraph{Triangle.}
For vertices $a,b,c$ joined pairwise, the five support-center templates are
\begin{align}
&z_{\{a,b\};a,c},\quad z_{\{a,c\};a,b},\quad z_{\{b,c\};b,a},\nonumber\\
&z_{\{a,b,c\};a,b},\quad z_{\{a,b,c\};a,c}.
\label{eq:triangle}
\end{align}
The first three express the ability of the third vertex to produce the
function on an edge ($3r^2$ directions, the $u_R\mathbf 1_{u_R\ge2}r^{|R|-1}$
term); the last two express the three centers' ability to produce each
full-triangle character ($2r^3$ directions, the $\max(u_R-1,0)r^{|R|}$
term).

\paragraph{Chordless quadrilateral.}
For the cycle $a-b-c-d-a$, with no diagonal edges, the two templates are
\begin{equation}
z_{\{a,c\};b,d},\qquad z_{\{b,d\};a,c}.
\label{eq:square}
\end{equation}
The two centers see the same pair of opposite vertices, and neither
center lies in the support; these are the $a_R=2$ pairs of
Eq.~\eqref{eq:aregion}.

\paragraph{Diamond.}
Let $u,v$ share an edge and have two common neighbors $w,x$ that are not
adjacent. Beyond the $2(3r^2+2r^3)$ directions of its two triangles, the
four templates
\begin{equation}
z_{S;u,v},\qquad \{w,x\}\subseteq S\subseteq\{u,v,w,x\},
\label{eq:diamond}
\end{equation}
whose supports contain both $w$ and $x$ and are therefore absent from
either triangle, supply $r^2(1+2r+r^2)=q^2r^2$ new directions: the
unique pair counted by $a_R=1$ is $\{u,v\}$, with complementary
support $R\setminus\{u,v\}=\{w,x\}$; $u,v$ are also the two universal
vertices. At $q=2$ the diamond carries fourteen non-bond directions,
ten of which already live on its triangles.

Regions with more universal vertices or larger joins occur beyond the
regular-polygon class (Appendix~\ref{SM-sec:periodiccatalogue}).
Unlike the independent sector basis,
\emph{summing all overlapping minimal-region dimensions need not give
$\nu$}; the total in Table~\ref{SM-tab:periodic} is computed by
Theorem~\ref{thm:cell}, which removes all dependencies.

\begin{figure*}[!tbp]
\centering
\includegraphics[width=\textwidth]{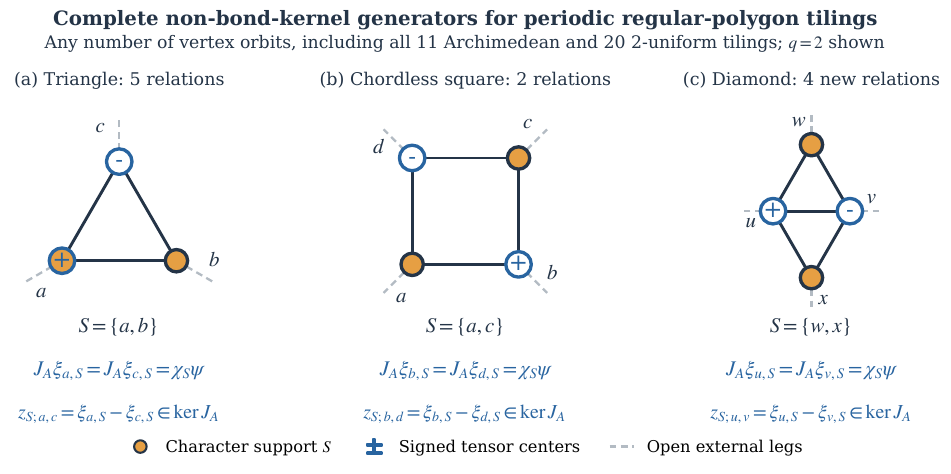}
\caption{The three minimal generating types of non-bond kernel relations
on periodic regular-polygon tilings, at $q=2$
(Theorem~\ref{thm:regular}). In each panel two site variations
(blue $\pm$ centers) generate the same physical vector $\chi_S\psi$
on the orange support $S$, so their difference lies in $\ker J_A$.
(a) One of the five triangle relations, Eq.~\eqref{eq:triangle}.
(b) One of the two chordless-square relations, Eq.~\eqref{eq:square}:
the opposite centers $b,d$ generate the character on $\{a,c\}$.
(c) One of the four diamond relations additional to those of its two
triangles, Eq.~\eqref{eq:diamond}. Dashed stubs are retained external
virtual legs; physical legs are suppressed.}
\label{fig:motifs}
\end{figure*}

\section{WGS classification on regular-polygon tilings}
\label{sec:lattice}
The unit-edge geometry of regular-polygon tilings restricts the
closed-neighborhood intersections of Sec.~\ref{sec:relations} to three
minimal motif types, with a face-order formula valid for any number of
vertex orbits. Tensor variations at distinct vertices remain
independent throughout, even when the graph has translation or point
symmetries.

\subsection{General classification and face-order formula}
The count depends on the \emph{cyclic order} of faces at a vertex,
not only on face densities. At each vertex let $n_3,n_4$ count incident triangular and square faces,
and let $n_{33}$ count consecutive pairs of triangular faces in the
cyclic vertex configuration, including the last--first pair. Brackets
$\langle\cdot\rangle$ denote the average over vertices of a translation
cell, and $r=q-1$.

Here and below, a finite quotient of a periodic graph is called
\emph{locally faithful} if the quotient map is injective on every
graph-distance-two vertex ball. This is exactly the condition under
which every support, its center set, and the multiplicities $m_S$ agree
with those of the infinite graph (Theorem~\ref{thm:cell}); in
particular it excludes wrapping three- and four-cycles.

\begin{theorem}[Periodic regular-polygon classification]
\label{thm:regular}
For nondegenerate WGS representations with $D=d=q\ge2$ on an
edge-to-edge, noncrossing, unit-edge Euclidean tiling by regular polygons
with a finite translation cell, the only minimal regions are
triangles, chordless squares, and shared-edge diamonds, whenever present.
These types generate every finite-support relation beyond ordinary bond
gauge, with no restriction on the number of vertex orbits. On locally
faithful tori, let $F_3,F_4,E_{33}$ count triangular faces, square
faces, and shared triangle edges. Their independent total non-bond nullity is
\begin{align}
\nu&=F_3(3r^2+2r^3)+2F_4r^2+E_{33}q^2r^2,
\label{eq:archmotifs}\\
\frac\nu N&=r^2\left[
\frac{\langle n_3\rangle}{3}(2q+1)
+\frac{\langle n_4\rangle}{2}
+\frac{\langle n_{33}\rangle}{2}q^2\right].
\label{eq:multiorbit}
\end{align}
The regions are minimal with every external virtual leg retained.
\end{theorem}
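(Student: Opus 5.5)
The plan is to compute $\nu$ sector by sector from Proposition~\ref{prop:quotient} and let unit-edge geometry decide which sectors can contribute. That proposition gives
\[
\nu=\sum_{S\in\K(G),\,|S|\ge2}\bigl(m_S-1-\mathbf 1_{S\in E_G}\bigr)\,r^{|S|},
\]
so only supports with $m_S\ge2$ (resp.\ $m_S\ge3$ for edges) matter. The first step is the geometric input, which I would isolate as Lemma~\ref{lem:shortcycles}. In an edge-to-edge unit-edge tiling by regular polygons:
\begin{enumerate}
\item[(a)] every $3$-cycle bounds a triangular face;
\item[(b)] $K_4$ and $K_{2,3}$ do not occur as subgraphs;
\item[(c)] every $4$-cycle either bounds a square face or is the boundary of two triangles sharing an edge, i.e.\ a diamond.
\end{enumerate}
I would prove this by elementary Euclidean geometry. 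Two unit-distance points have at most two common unit-distance neighbours, one on each side of their bisector. Moreover, a noncrossing tiling cannot contain a unit rhombus other than a face or a pair of triangles, because the angles at a vertex sum to $2\pi$ and regular-polygon angles are at least $\pi/3$. I expect this lemma to be the main obstacle. In particular, excluding a non-face unit rhombus, for example one with a $2\pi/3$ angle straddling a hexagon, needs a careful look at interior angles and at the noncrossing condition, rather than a pure distance count.

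With the lemma in hand, I would enumerate the contributing supports.
\begin{enumerate}
\item[(i)] \emph{Edge supports $\{a,b\}$.} The centers are $a$, $b$ and the common neighbours, which are the apexes of the at most two triangles on that edge. The contribution is therefore $r^2$ per triangle incident to the edge, giving $3F_3r^2$ in total.
\item[(ii)] \emph{Nonadjacent pairs.} The centers are the common neighbours; by (b) there are at most two. Two centers close a $4$-cycle, which by (c) is either a square diagonal (two per square, giving $2F_4r^2$) or the tip pair $\{w,x\}$ of a diamond (giving $r^2$ per shared triangle edge).
\item[(iii)] \emph{Triples $\{v,a,b\}$ with a center $v\in S$.} If $ab\in E_G$ this is a triangle, and since $K_4$ is excluded it has exactly the three centers $a,b,v$, contributing $2r^3$. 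If $ab\notin E_G$, a second center forms a diamond with tips $a,b$, contributing $r^3$ for each of the two triples $\{u,w,x\},\{v,w,x\}$.
\item[(iv)] \emph{Triples in $N(v)$ with $v\notin S$.} A second center would create a $K_{2,3}$, so these do not contribute.
\item[(v)] \emph{Larger supports.} The only contributing support of size four is the full diamond $\{u,v,w,x\}$, with centers $u,v$ and contribution $r^4$. Supports of size at least five, or other supports of size four, would require two centers whose closed neighbourhoods share four vertices, which (b) and (c) forbid.
\end{enumerate}
Summing, triangles give $3r^2+2r^3$, squares give $2r^2$, and each shared triangle edge gives $r^2+2r^3+r^4=q^2r^2$, which is Eq.~\eqref{eq:archmotifs}.

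The per-vertex form Eq.~\eqref{eq:multiorbit} follows from the double counts $3F_3=N\langle n_3\rangle$, $4F_4=N\langle n_4\rangle$ and $2E_{33}=N\langle n_{33}\rangle$. The last count holds because an edge shared by two triangles is counted as a consecutive triangle pair at each of its two endpoints, and conversely each consecutive pair determines such an edge. Combined with $3r^2+2r^3=r^2(2q+1)$, this gives the formula. None of this uses the number of vertex orbits.

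For generation and minimality, I would note that each relation $z^{\boldsymbol k}_{S;u,v}$ above is certified on $R=S\cup\{u,v\}$, which is always a triangle, a chordless square or a diamond. Corollary~\ref{cor:smallregions} shows these regions have $Q_R\ne0$ with external legs retained, so they are minimal. By (b) no region with $|R|\ge5$ can carry a new sector in Theorem~\ref{thm:minimalgeneral}: the pinned pairs and universal vertices it requires would produce $K_{2,3}$ or $K_4$. Finally, local faithfulness guarantees that supports, center sets and multiplicities on the torus coincide with those on the tiling, because every relevant configuration lies in a distance-two ball. Hence the finite count equals the infinite-lattice classification and no wrapped $3$- or $4$-cycles add sectors.
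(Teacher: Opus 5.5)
Your proposal is correct and follows essentially the paper's own route. Proposition~\ref{prop:quotient} reduces $\nu$ to a sum over supports. The short-cycle facts of Lemma~\ref{lem:shortcycles} (at most two common neighbours, no $K_4$, every unit $3$- or $4$-cycle a face or a pair of triangles) cut the contributing supports down to triangle, square and diamond sectors. The same incidence double counts then give Eq.~\eqref{eq:multiorbit}.

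The differences are minor. The paper proves the lemma by an area bound rather than by corner angles. Your angle route also closes at once: the acute corner of a unit rhombus is at most $\pi/2<2\pi/3$, so it is a single face corner, hence a triangle or square face spanned by the two rhombus edges. For the ``only minimal regions'' claim, you go through Theorem~\ref{thm:minimalgeneral}: $u_R\ge2$ or $a_R\ge1$ with $|R|\ge5$ would force a $K_{2,3}$. This is a slightly more explicit version of the paper's direct center-set argument.
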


\begin{lemma}[Short cycles in unit-edge regular-polygon tilings]
\label{lem:shortcycles}
In an edge-to-edge, noncrossing, unit-edge Euclidean tiling by regular
polygons, (i) two distinct vertices have at most two common neighbors
and $K_4$ does not occur; (ii) every three-cycle bounds a triangular
face; (iii) every four-cycle is either the boundary of a square face or
the union of two triangular faces sharing an edge; in particular a
chordless four-cycle is a square face.
\end{lemma}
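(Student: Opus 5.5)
The argument is Euclidean and uses only unit edge lengths, the convexity and angles of regular polygons, and the edge-to-edge, noncrossing hypothesis. The basic geometric fact is that the vertices of the tiling are points in the plane with all edges of length exactly one. Any two distinct vertices are at distance at least one, since in an edge-to-edge tiling by regular polygons of unit side the minimal vertex separation is the side length. I would prove this first: a vertex lying at distance less than one from another would have to sit in the interior of a polygon or of an edge, which the edge-to-edge condition forbids.

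For (i), common neighbors of $u\ne v$ lie on the intersection of the two unit circles centered at $u$ and $v$. Two distinct circles meet in at most two points, so there are at most two common neighbors. For $K_4$, four pairwise unit-distance points in the plane are impossible, since the plane contains no regular tetrahedron. Concretely, the two common neighbors of an edge $uv$ are the two apexes of equilateral triangles on $uv$, and these apexes are at distance $\sqrt3\ne1$.

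For (ii), a three-cycle $abc$ is an equilateral unit triangle. I would argue that its interior contains no vertex: every interior point lies within distance $1/\sqrt3<1$ of some corner. Its interior also meets no edge, since an edge crossing it would either cross a side, contradicting noncrossing, or have an endpoint inside. So the triangle lies in the closure of a single face. That face is a regular polygon containing the three corners as vertices, and a regular $n$-gon with $n\ge4$ has no three vertices pairwise at unit distance, because its diagonals are longer than one. Hence the face is the triangle itself.

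For (iii), a four-cycle $abcd$ has all sides of length one, so it is a rhombus, possibly degenerate.
\begin{itemize}
\item If it is degenerate, then $a=c$ or $b=d$ as points, which is impossible for distinct vertices.
\item Otherwise it is a rhombus with some angle $\theta\in(0,\pi)$, and its diagonals have lengths $2\sin(\theta/2)$ and $2\cos(\theta/2)$.
\item If one diagonal equals one, then $\theta=\pi/3$ or $2\pi/3$. That diagonal must be an edge: a unit segment between two vertices lying inside the rhombus is forced to be an edge by the same interior argument as in (ii). The rhombus is then the union of two triangular faces sharing this edge, by (ii).
\item If neither diagonal is an edge, then both diagonals differ from one. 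The short diagonal exceeds one exactly when $\theta\in(\pi/3,2\pi/3)$, and then no vertex lies in the interior, by a covering argument: every interior point is within distance less than one of a corner. As in (ii), the rhombus then lies in the closure of a single face, whose regular polygon contains four unit-spaced vertices in cyclic order. This forces a square with $\theta=\pi/2$.
\end{itemize}

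The main obstacle is the case where the short diagonal is less than one, that is $\theta<\pi/3$. Here I would use the minimum-distance fact directly: the two endpoints of that diagonal would be distinct vertices at distance less than one, which is a contradiction. Finally, a chordless four-cycle cannot be the union of two triangles, since the shared edge would be a chord, so it is a square face.
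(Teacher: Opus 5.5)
Your argument for (i) is the paper's. For (ii)--(iii) you take a different route.

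The paper uses an area count. The closed region bounded by a unit $3$- or $4$-cycle is a union of faces and has area $\sqrt3/4$, respectively $\sin\theta\le1$. A vertex strictly inside would force at least three faces of area $\ge\sqrt3/4$ into the region, for a total $\ge3\sqrt3/4>1$. Hence every face inside has its corners on the cycle. Comparing areas ($m$-gons with $m\ge5$ exceed $1$, the square equals $1$) then leaves one triangle, one square, or two triangles on a unit diagonal.

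You instead combine three ingredients: a minimum-separation lemma, covering radii, and the fact that an open set avoiding all edges and vertices lies in a single face. Granting the lemma, your rhombus case split is complete and each case is handled correctly. The claim that a unit diagonal $bd$ must be an edge is terse but sound. If the open segment $bd$ met no edge, the open rhombus would lie in one face having $a,b,c,d$ as vertices with $|bd|=1$, which is impossible. Otherwise, an edge meeting that segment but not the two open triangles is collinear with it, and since no vertex lies on the open segment, it equals $bd$.

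The one real gap is the lemma that distinct vertices are at distance at least $1$. Your case (ii), the unit-diagonal case and the $\theta<\pi/3$ case all rest on it. Your justification, that a nearby vertex ``would have to sit in the interior of a polygon or of an edge'', restates what must be shown. The quick reason one might give is that the faces at $u$ cover the open unit disc about $u$, and that is false. A triangular face $ubc$ reaches only distance $\sqrt3/2$ toward $bc$, so points at distance in $(\sqrt3/2,1)$ beyond $bc$ lie in a face not incident to $u$.

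A proof needs two checks.
\begin{enumerate}
\item For a face at $u$ with $n\ge4$ sides, every boundary point off the two sides at $u$ is at distance $\ge1$ from $u$. The next sides are at distance exactly $1$ because the interior angles are $\ge\pi/2$, and the remaining sides are farther. So the part of the open disc in that face's wedge lies in the face.
\item For a triangular face $ubc$, the open cap of the disc beyond the chord $bc$ lies in the interior of the face $F'$ across $bc$. The tangents to the circle of radius $1$ about $u$ at $b$ and $c$ make angle $\pi/6$ with $bc$. Meanwhile $F'$ contains the triangle joining $bc$ to its centre, whose base angles $\pi/2-\pi/n'$ are at least $\pi/6$.
\end{enumerate}

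With this supplied, your proof is complete. The paper's area argument needs no such local analysis around a vertex. It also disposes of the small-angle rhombus without a separate case, since no face then fits inside.
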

Part (i) holds because two unit circles meet in at most two points.
For (ii) and (iii), the region bounded by a unit three- or four-cycle
has area at most one, too small to contain an interior vertex together
with its at least three incident faces, so it is tiled by faces with
corners on the cycle (Appendix~\ref{SM-sec:latticeproofs}).

\begin{proof}[Proof of Theorem~\ref{thm:regular}]
By Theorem~\ref{thm:star} and Proposition~\ref{prop:quotient}, it suffices
to classify common closed neighborhoods after removing the ordinary
sectors in Eq.~\eqref{eq:ordinarysectors}.
By Lemma~\ref{lem:shortcycles}(i), adjacent centers have at most four
vertices in their closed-neighborhood intersection and nonadjacent
centers at most two. The remaining supports are precisely triangle,
chordless-square, and diamond supports, and by
Lemma~\ref{lem:shortcycles}(ii)--(iii) chordless four-cycles are
exactly square faces while diamonds correspond to shared triangle edges.
If nonadjacent centers have two adjacent common neighbors, their support
is the shared edge of two triangles. Each center then joins the ordinary
endpoint component through its own triangle, so this sector is already
counted by the triangle contributions below.

For each frequency assignment, each triangle edge support connects a new common
neighbor to the ordinary endpoint component. A triangle's full support
has three centers and two independent differences. Each remaining
square or diamond support has exactly two centers. These sectors are
independent by Eq.~\eqref{eq:sector}. Including the $(q-1)^{|S|}$
assignments on each support gives the stated contributions without
further dependencies. The support and centers of each new
relation require its stated region; no proper connected subregion
contains the required relation in that sector. The open-leg argument
in Appendix~\ref{SM-sec:sminimal} makes this last step independent of external tensors.

The motif dimensions in Eqs.~\eqref{eq:triangle}--\eqref{eq:diamond}
therefore give Eq.~\eqref{eq:archmotifs}. Counting face incidences and
consecutive triangular pairs over one cell yields
$F_3/N=\langle n_3\rangle/3$, $F_4/N=\langle n_4\rangle/4$, and
$E_{33}/N=\langle n_{33}\rangle/2$, proving
Eq.~\eqref{eq:multiorbit}. No step uses vertex transitivity.
\end{proof}
The square coefficient combines four vertex incidences per face with
$2r^2$ directions per square. Likewise, each shared triangle edge has
two endpoint incidences and adds $q^2r^2$ directions beyond its two
triangles. The last term therefore records cyclic face order, not just
the separate triangle and square densities.

\subsection{Archimedean tilings}
The eleven Archimedean tilings are vertex transitive, so the cell averages
reduce to a single cyclic configuration
\cite{daniel2020archimedean,lukin2024archimedean}.

\begin{corollary}[Archimedean-lattice counts]
\label{cor:lattices}
For nondegenerate WGS representations with $D=d=q\ge2$ on any of
the eleven Archimedean tilings, the three motif types give the complete
independent counts in Table~\ref{tab:counts} on locally faithful tori,
with
\begin{equation}
\frac\nu N=r^2\left[\frac{n_3}{3}(2q+1)
+\frac{n_4}{2}+\frac{n_{33}}{2}q^2\right].
\label{eq:archdensity}
\end{equation}
Square requires only chordless squares; honeycomb requires none;
kagome requires only triangles; triangular requires triangles and diamonds.
\end{corollary}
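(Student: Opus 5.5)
The corollary should follow from Theorem~\ref{thm:regular} by specializing to vertex-transitive tilings. The plan has three steps: check the hypotheses, reduce the density formula to one vertex configuration, and tabulate.

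First, the hypotheses. Each of the eleven Archimedean tilings is an edge-to-edge, noncrossing, unit-edge tiling by regular polygons with a finite translation cell. So Theorem~\ref{thm:regular} applies on every locally faithful torus. It gives the motif classification, the independent count~\eqref{eq:archmotifs}, and the density~\eqref{eq:multiorbit}. Vertex transitivity, through a symmetry of the tiling, maps each vertex's cyclic configuration to every other vertex's. Hence $n_3,n_4,n_{33}$ are constant, the cell averages $\langle\cdot\rangle$ equal the single values, and \eqref{eq:multiorbit} becomes \eqref{eq:archdensity}. The only care needed is that the torus count of \eqref{eq:archmotifs} uses the infinite-graph supports. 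Local faithfulness guarantees this, as stated before Theorem~\ref{thm:regular}.

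Second, the tabulation for Table~\ref{tab:counts}. From the vertex configuration symbol I read off $n_3$ (number of 3's), $n_4$ (number of 4's) and $n_{33}$ (number of cyclically adjacent 3--3 pairs, counting the last--first pair). Examples: $3^6$ gives $(6,0,6)$; $4^4$ gives $(0,4,0)$; $6^3$ gives $(0,0,0)$; $3.6.3.6$ gives $(2,0,0)$; $3^4.6$ gives $(4,0,3)$; $3^3.4^2$ gives $(3,2,2)$; $3^2.4.3.4$ gives $(3,2,1)$; $3.4.6.4$ gives $(1,2,0)$; $3.12^2$ gives $(1,0,0)$; $4.6.12$ and $4.8^2$ give $(0,1,0)$. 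These are substituted into \eqref{eq:archdensity}. I would also check each row against $F_3/N=n_3/3$, $F_4/N=n_4/4$, $E_{33}/N=n_{33}/2$.

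Third, the qualitative statements follow from which counts vanish, since each motif type appears exactly when its count is nonzero.
\begin{itemize}
\item Square, $(0,4,0)$: only the $F_4$ term survives, so only chordless squares are needed.
\item Honeycomb, $(0,0,0)$: $\nu=0$, so no motif is needed.
\item Kagome, $(2,0,0)$: only the triangle term survives.
\item Triangular, $(6,0,6)$: it has $F_3$ and $E_{33}$ but no squares, so triangles and diamonds are needed.
\end{itemize}
For square, the requirement also rests on Lemma~\ref{lem:shortcycles}(iii): chordless four-cycles are exactly square faces.

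The main obstacle I expect is the $n_{33}$ bookkeeping. Consecutive triangle pairs must be counted cyclically, and one must confirm that each shared triangle edge is counted once per endpoint, so that $E_{33}/N=\langle n_{33}\rangle/2$. For the two snub-type configurations, $3^4.6$ and $3^3.4^2$, I would verify this directly by counting the edges shared by two triangles in one translation cell.
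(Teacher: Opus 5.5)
Your proposal is correct and is essentially the paper's proof: specializing Eq.~\eqref{eq:multiorbit} to a single vertex orbit gives Eq.~\eqref{eq:archdensity} and Table~\ref{tab:counts}, and your $(n_3,n_4,n_{33})$ triples reproduce every row. The one difference is in the qualitative claims. The paper argues them directly from closed-neighborhood intersections (e.g., on honeycomb adjacent centers share only their endpoints and nonadjacent centers at most one neighbor; on triangular the supports containing both common neighbors are exactly the diamond templates), whereas you read them off from which terms of Eq.~\eqref{eq:archmotifs} vanish, which is equally valid given Theorem~\ref{thm:regular} and Lemma~\ref{lem:shortcycles}.
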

\begin{proof}
Specializing Eq.~\eqref{eq:multiorbit} to one vertex orbit gives
Eq.~\eqref{eq:archdensity} and the table. On honeycomb, adjacent
centers share only their endpoints and nonadjacent centers at most one
neighbor, so no non-bond support exists. On triangular, adjacent
centers share two nonadjacent neighbors $w,x$; supports containing
both are exactly the diamond templates~\eqref{eq:diamond}, the rest
lie on the two triangles. Square and kagome are analogous (Appendix~\ref{SM-sec:latticeproofs}).
\end{proof}
For square, honeycomb, triangular, and kagome in the coordinate
conventions of Appendix~\ref{SM-sec:coordinates}, $L_x,L_y\ge5$ suffices
for the bulk formulas. The other seven geometries likewise require
quotients preserving the relevant short neighborhoods. These conditions
are separate from the generic-completeness bounds of Sec.~\ref{sec:generic}.

\begin{table*}[!tbp]
\centering
\caption{Exact WGS bulk counts for the eleven Archimedean lattices on
locally faithful tori, $D=d=q\ge2$, $r=q-1$. Densities are per
vertex ($N$ vertices). $P$: tensor parameters; $g=E(q^2-1)+N$:
bond gauge plus ray; $\nu=\dim\extra_A$: non-bond projective null
directions. The last three columns evaluate $\nu/N$ at the indicated
$q$.}
\label{tab:counts}
\small
\setlength{\tabcolsep}{3.5pt}
\centering
\begin{adjustbox}{max width=\textwidth}
\begin{tabular}{lccccrrr}\toprule
Lattice & Configuration & $P/N$ & $g/N$ & $\nu/N$ & $q=2$ & $q=3$ & $q=4$\\
\midrule
Square &$4^4$&$q^5$&$2q^2-1$&$2r^2$&2&8&18\\
Honeycomb &$6^3$&$q^4$&$(3q^2-1)/2$&0&0&0&0\\
Triangular &$3^6$&$q^7$&$3q^2-2$&$r^2(3q^2+4q+2)$&22&164&594\\
Kagome &$3.6.3.6$&$q^5$&$2q^2-1$&$2r^2(2q+1)/3$&$10/3$&$56/3$&54\\
Star &$3.12^2$&$q^4$&$(3q^2-1)/2$&$r^2(2q+1)/3$&$5/3$&$28/3$&27\\
Square-octagon &$4.8^2$&$q^4$&$(3q^2-1)/2$&$r^2/2$&$1/2$&2&$9/2$\\
Cross &$4.6.12$&$q^4$&$(3q^2-1)/2$&$r^2/2$&$1/2$&2&$9/2$\\
Ruby &$3.4.6.4$&$q^5$&$2q^2-1$&$2r^2(q+2)/3$&$8/3$&$40/3$&36\\
Elongated triangular &$3^3.4^2$&$q^6$&$(5q^2-3)/2$&$r^2(q^2+2q+2)$&10&68&234\\
Snub-square &$3^2.4.3.4$&$q^6$&$(5q^2-3)/2$&$r^2(q^2/2+2q+2)$&8&50&162\\
Maple-leaf &$3^4.6$&$q^6$&$(5q^2-3)/2$&$r^2(9q^2+16q+8)/6$&$38/3$&$274/3$&324\\
\bottomrule
\end{tabular}
\end{adjustbox}
\end{table*}

The table follows from the total motif counts ($N$ squares on the
square lattice, $2N$ triangles and $3N$ shared edges on triangular,
$2N/3$ triangles on kagome) and agrees with the support-union
formula~\eqref{eq:extrarank}. Figure~\ref{fig:lattices} shows the
motifs on each lattice and compares these densities with independent
brute-force counts on periodic tori.

The comparison separates three graph characteristics. The scalar cycle
count grows extensively on all eleven lattices, including honeycomb, yet
honeycomb has no non-bond kernel. Coordination fixes the number of
local tensor parameters, but square and kagome both have degree four
and have different non-bond nullities. The relevant information is the
intersection pattern of closed neighborhoods, which specifies which
functions several tensors can produce.

Elongated triangular and snub-square have the same coordination and
the same triangle and square densities, but $n_{33}=2$ and $1$,
respectively. Their non-bond nullity densities differ by
$q^2(q-1)^2/2$. Thus even face densities do not determine the kernel:
the cyclic arrangement of faces matters. Honeycomb is the only one
of the eleven tilings with zero bulk WGS non-bond nullity.

Small periodic systems must be treated as their actual quotient graphs.
For example, a $4\times4$ square torus has wrapping four-cycles in addition
to its unit squares. At $q=2$ its WGS non-bond nullity is $48$, rather than
the bulk value $2N=32$. Equation~\eqref{eq:extrarank} remains exact.
The finite-graph theorem covers such quotients directly.

\begin{figure*}[!t]
\centering
\includegraphics[width=\textwidth]{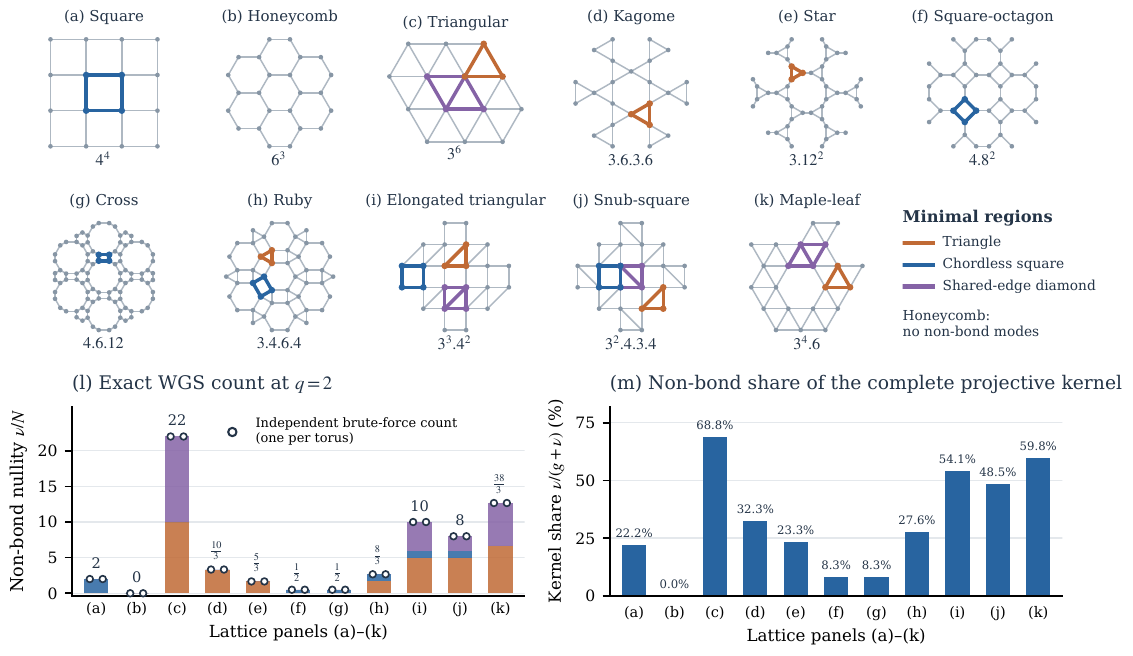}
\caption{All eleven Archimedean lattices and their complete WGS non-bond
nullity at $q=2$. Panels (a)--(k) show planar schematic patches, with
vertex configurations below; drawn edges need not have equal length.
Colored subgraphs mark representative triangles, chordless squares, and
shared-edge diamonds. Panel (l) gives the exact $q=2$ density as stacked
contributions $5F_3/N$, $2F_4/N$, and $4E_{33}/N$ in the same colors;
the diamond contribution is additional to its triangles. Open circles
are independent brute-force counts of the support union, one per
periodic torus: sizes $L=5,8$ for (a)--(d) and construction parameters
$s=7,8$ for (e)--(k), with cell conventions in Appendix~\ref{SM-sec:coordinates}; small horizontal offsets separate the
two tori. Panel (m) gives the exact non-bond share
$\nu/(g+\nu)$, where $g$ includes ordinary bond gauge and the ray,
in the same lattice order as panel (l). Honeycomb alone has zero non-bond nullity.
Completeness follows from Corollary~\ref{cor:lattices}.}
\label{fig:lattices}
\end{figure*}

\subsection{2-uniform tilings}
\label{sec:twouniform}
The twenty 2-uniform tilings have two vertex orbits under their full
Euclidean symmetry groups~\cite{maity2021quotient}, and belong to the
same regular-polygon class, so Theorem~\ref{thm:regular} gives their
complete WGS census from the cell averages
$\langle n_j\rangle=(n_An_j(A)+n_Bn_j(B))/(n_A+n_B)$ of the two vertex
configurations $A,B$ with cell multiplicities $n_A,n_B$. Appendix~\ref{SM-sec:twouniformSM} lists all twenty, using the identifiers
\texttt{t2001}--\texttt{t2020} of the Galebach coordinate data
\cite{soto2019acquiring,sotoGalebachData} together with their
Gr\"unbaum--Shephard configuration pairs $[A;B]$~\cite{grunbaum1987tilings}, and draws them from
the same data. All twenty have positive WGS non-bond nullity; at $q=2$ the densities
$\nu/N$ range from $11/9$ (\texttt{t2001}, $[3.4.6.4;4.6.12]$) to
$52/3$ (\texttt{t2020}, $[3^6;3^4.6]$). Two features of the census are
worth recording. Tilings with the same configuration pair but a
different cell ratio have different non-bond nullities (\texttt{t2003} and
\texttt{t2004}, $[3^3.4^2;4^4]$ with ratios $2{:}2$ and $2{:}1$, give
$6$ and $22/3$), whereas distinct tilings with the same pair and ratio
(\texttt{t2006} and \texttt{t2007}) necessarily coincide, since
Eq.~\eqref{eq:multiorbit} sees only cell-averaged vertex data.

\section{General graphs: periodic counts and tied tensors}
\label{sec:periodic}
We now drop the regular-polygon geometry of Sec.~\ref{sec:lattice}.
Theorem~\ref{thm:minimalgeneral} already classifies the minimal regions
of an arbitrary graph; this section computes the total non-bond
nullity at a WGS representation, which is not a sum of overlapping
minimal-region dimensions, from a periodic cell and extends it to translation-tied tensors.
Eight connectivity examples and the effects of subdivision and line
graphs are collected in Appendix~\ref{SM-sec:periodiccatalogue}.

\subsection{Exact rank and nullity from a periodic cell}
Let an infinite connected locally finite simple graph admit a free $\mathbb Z^2$
translation action with a finite cell $V_0$, containing $n$ vertices
and $e$ undirected edge orbits. Let $d_\alpha$ be the degree of cell
vertex $\alpha$. Closed neighborhoods and multiplicities $m_S=|C_S|$
are computed in the infinite graph. Define
\begin{equation}
\begin{gathered}
b_k=\sum_{\alpha\in V_0}\ 
\sum_{\substack{\varnothing\ne S\subseteq N[\alpha]\\ |S|=k}}
\frac1{m_S},\\
\kappa(q)=\sum_{k\ge1}b_k(q-1)^k.
\end{gathered}
\label{eq:cellbk}
\end{equation}
Only $k\le1+\max_\alpha d_\alpha$ contribute.
\begin{theorem}[Periodic-cell rank and nullity]
\label{thm:cell}
The numbers $b_k$ are integers counting translation orbits of supported
nonempty $k$-vertex sets. On a torus with $M$ cells whose quotient map is
injective on every graph-distance-two vertex ball, a nondegenerate WGS
representation has
\begin{align}
P&=M\sum_{\alpha\in V_0}q^{d_\alpha+1},\nonumber\\
\rank J&=1+M\kappa(q),\nonumber\\
\rank J_\perp&=M\kappa(q),\nonumber\\
g&=M\bigl[e(q^2-1)+n\bigr],\nonumber\\
\nu&=M\Bigl[\sum_{\alpha\in V_0}q^{d_\alpha+1}\nonumber\\
&\qquad-\kappa(q)-e(q^2-1)-n\Bigr].
\label{eq:cellrank}
\end{align}
Here $g$ includes the ray, so $\dim\im\Phi_A=g-1$.
These formulas do not tie tensor variations between translated vertices.
\end{theorem}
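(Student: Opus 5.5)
The proof reduces the torus to the finite-graph results: Theorem~\ref{thm:star} for the rank, and Proposition~\ref{prop:bond} together with Eq.~\eqref{eq:extra} for the gauge rank $g$. The only new ingredient is an orbit count showing that the support family of the torus is $M$ copies of the translation orbits of supports in the infinite graph.

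\emph{Integrality of $b_k$.} Fix a nonempty supported set $S$ in the infinite graph with $|S|=k$. Its centers $C_S$ are the vertices $v$ with $S\subseteq N[v]$. In the double sum defining $b_k$, a pair $(\alpha,S')$ with $\alpha\in V_0$ and $S'\subseteq N[\alpha]$ is the translate of a pair $(v,S)$ with $v\in C_S$. Because the $\mathbb Z^2$ action is free, each center $v\in C_S$ has exactly one translate lying in $V_0$. Hence the translates of $S$ appear in the sum exactly $m_S$ times in total, each with weight $1/m_S$. So each translation orbit contributes $1$. Translation invariance of $m_S$ and freeness of the action make this bookkeeping exact, and $b_k$ counts orbits, which gives integrality.

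\emph{Transfer to the torus.} Local faithfulness should imply three things, mirroring the honeycomb and square remarks after Corollary~\ref{cor:lattices}.
\begin{enumerate}
\item Closed neighborhoods in the torus are the images of those in the infinite graph.
\item Every supported set on the torus lifts to a supported set in the infinite graph.
\item Center sets and multiplicities are preserved.
\end{enumerate}
If $S\subseteq N[v]$ on the torus, then $S$ lies in the radius-one ball around $v$, which lifts injectively, so (2) holds. Any two centers of $S$ are within distance two of each other through a vertex of $S$. So all of $C_S\cup S$ lies in the distance-two ball about one vertex $s\in S$, and the hypothesis gives a bijection on centers, which is (3). Distinct lifted orbits stay distinct on the torus, since a coincidence would identify two points within one such ball. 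Consequently, each translation orbit of $k$-sets yields exactly $M$ supports on the torus, and $|\K|=1+M\sum_k b_k$ with the empty set counted once.

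\emph{Rank, gauge and nullity.} Weighting each support by $(q-1)^{|S|}$ gives $K_q=1+M\kappa(q)$. Equation~\eqref{eq:rank} then gives $\rank J_\perp=M\kappa(q)$, and the unprojected rank is $K_q=1+M\kappa(q)$, as noted after Theorem~\ref{thm:star}. For the parameter count, $P$ is Eq.~\eqref{eq:parameters} with $N=Mn$ vertices and $E=Me$ edges; local faithfulness excludes loops and multi-edges, so the torus is a simple graph and the earlier results apply. Proposition~\ref{prop:bond} gives $g=E(q^2-1)+N=M[e(q^2-1)+n]$. Substituting into Eq.~\eqref{eq:extrarank} yields $\nu$.

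The only delicate step is the transfer in (3): one must check that the distance-two hypothesis suffices to prevent two distinct infinite-graph centers from merging, or a new center from appearing, on the torus. The argument via the ball around a vertex of $S$ handles this.
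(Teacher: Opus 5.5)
Your proposal is correct and follows essentially the same route as the paper. Both arguments count incidences over one cell so that each support orbit gets weight one, use injectivity on distance-two balls to preserve supports, centers and multiplicities (so that each orbit has $M$ distinct translates on the torus), and then apply Theorem~\ref{thm:star} and Proposition~\ref{prop:bond}. The one fact you use only implicitly, which the paper states outright, is that a finite nonempty support has trivial translation stabilizer: a nonzero translation permuting a finite set would have a nonzero power fixing a point, contradicting freeness. This is what makes both the $m_S$-fold incidence count and the count of exactly $M$ distinct torus translates exact.
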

\begin{proof}
A finite nonempty support has no nonzero translation stabilizer. Counting
incidences $(\alpha,S)$ over one cell counts every supported-set orbit
$m_S$ times; division by $m_S$ gives Eq.~\eqref{eq:cellbk}.
A supported set lies in one closed neighborhood. Its other centers lie
within distance two of that center. Injectivity on these balls preserves
support identities and center multiplicities, so each support orbit has
$M$ distinct translates on the torus. The constant sector contributes
one affine direction, while each nonempty support contributes
$(q-1)^{|S|}$ directions. Theorem~\ref{thm:star} and the bond-gauge rank
then give Eq.~\eqref{eq:cellrank}.
\end{proof}
For edge labels $(\alpha,\beta,\delta_x,\delta_y)$ with
$|\delta_x|,|\delta_y|\le \ell_0$, rectangular periods
$L_x,L_y>4\ell_0$ suffice: the cell coordinates of a distance-two ball
span at most $4\ell_0$ in either direction. Smaller tori can instead be evaluated by the original finite-graph formula.

\subsection{Translation-tied tensors}
\label{sec:tied}
Practical PEPS calculations on periodic lattices usually tie the tensors
at translated vertices. The Fourier coordinates make this case a direct
corollary. Let $\Gamma\simeq\mathbb Z_{L_x}\times\mathbb Z_{L_y}$ act
freely on a locally faithful torus with $M=|\Gamma|$ cells, and restrict
to the tied parameter space
$\mathcal P_G^\Gamma=\bigoplus_{\alpha\in V_0}\mathbb C^{q^{d_\alpha+1}}$
with $\delta A_{\gamma v}=\delta A_v$, and to tied bond generators
$X_{\gamma e}=X_e$. We evaluate these maps at a $\Gamma$-invariant WGS
representation $A\in\mathcal P_G^\Gamma$: edge phases and the chosen
half-edge factors agree along translation orbits, with a fixed
translation-compatible ordering of virtual legs.
\begin{corollary}[Tied tensors on a locally faithful torus]
\label{cor:tied}
Under the hypotheses of Theorem~\ref{thm:cell}, at such a
$\Gamma$-invariant WGS representation, the tied differential
$J_A^\Gamma$ and tied bond map $\Phi_A^\Gamma$ satisfy
\begin{equation}
\begin{gathered}
\im J_{\perp,A}^\Gamma=\T_A^{\Gamma}\\
=\operatorname{span}\Bigl\{\sum_{\gamma\in\Gamma}\chi_{\gamma S,\boldsymbol k}\psi\Bigr\},\\
\rank J^\Gamma_{\perp,A}=\kappa(q),\\
\rank\Phi_A^\Gamma=e(q^2-1)+n-1,\\
\nu^\Gamma=\nu/M,
\end{gathered}
\label{eq:tied}
\end{equation}
with $\nu$ from Eq.~\eqref{eq:cellrank}: the tied non-bond nullity is
the per-cell value, and the tied physical tangent is the
$\Gamma$-invariant part of the untied one.
\end{corollary}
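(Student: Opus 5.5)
The plan is to restrict the untied sector decomposition of Theorem~\ref{thm:star} to the $\Gamma$-fixed subspaces, using the fact that $\Gamma$ acts linearly and compatibly on every space in the complex~\eqref{eq:complex}.

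\emph{Equivariance.} Because the WGS representation is $\Gamma$-invariant (edge phases, half-edge factors and leg ordering agree along orbits), the explicit representatives~\eqref{eq:qexplicit} transform covariantly: $\gamma\cdot\xi_{v,S}^{\boldsymbol k}=\xi_{\gamma v,\gamma S}^{\boldsymbol k}$. Likewise $\gamma\chi_{S,\boldsymbol k}\psi=\chi_{\gamma S,\boldsymbol k}\psi$, since $\psi$ itself is invariant. Hence $J_A$ and $\Phi_A$ are $\Gamma$-equivariant. The tied spaces $\mathcal P_G^\Gamma$ and $\bigoplus_{e}\mathfrak{gl}(q)^\Gamma$ are precisely the fixed subspaces, and $J_A^\Gamma,\Phi_A^\Gamma$ are the restrictions of $J_A,\Phi_A$ to them.

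\emph{Image and rank.} In the coordinates $x_{v,S}^{\boldsymbol k}$ of Eq.~\eqref{eq:sector}, a tied variation has $x_{\gamma v,\gamma S}^{\boldsymbol k}=x_{v,S}^{\boldsymbol k}$. By local faithfulness (Theorem~\ref{thm:cell}), each support orbit has $M$ distinct translates, no nonempty support is fixed by a nontrivial $\gamma$, and center sets map bijectively, $C_{\gamma S}=\gamma C_S$. Thus for each orbit of $(S,\boldsymbol k)$ the coefficient $\sum_{v\in C_S}x_{v,S}^{\boldsymbol k}$ is common to all translates. The image of $J_A^\Gamma$ is therefore spanned by the orbit sums $\sum_\gamma\chi_{\gamma S,\boldsymbol k}\psi$ together with $\psi$. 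Each such sum is attained, for instance by setting a single orbit of $x$ at one center equal to $1$. These sums are orthonormal up to scale and indexed by support orbits, so after projection removes $\psi$ the rank is $\sum_{\text{orbits}}(q-1)^{|S|}=\kappa(q)$. This is the $\Gamma$-invariant part of $\T_A$, because averaging over $\Gamma$ maps the basis of Eq.~\eqref{eq:tangent} onto these orbit sums.

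\emph{Bond rank.} $\ker\Phi_A^\Gamma=\mathcal R_{\rm cyc}\cap(\text{tied})$ by Proposition~\ref{prop:bond}. A tied scalar vector $a_e$ is constant on edge orbits, and $\partial a=0$ becomes the cell-level condition that each of the $n$ vertex orbits has zero net flow. The quotient-graph incidence matrix of the cell has rank $n-1$ on the connected quotient, so the kernel has dimension $e-n+1$ and $\rank\Phi^\Gamma_A=eq^2-(e-n+1)$. This is the step needing care. The quotient graph may have loops and multi-edges, and loop edges contribute zero columns. Connectivity of the quotient, which follows from connectivity of $G$, gives rank $n-1$ regardless.

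\emph{Nullity.} The tied ray is still available, for example by rescaling all tensors of one orbit and dividing appropriately. Alternatively, use the global rescaling $\sum_v A_v$ divided by $N$, which is tied and satisfies $J r=N\Psi$; this gives $g^\Gamma=e(q^2-1)+n$. Then $\nu^\Gamma=\sum_\alpha q^{d_\alpha+1}-\kappa(q)-g^\Gamma=\nu/M$ by Eq.~\eqref{eq:cellrank}. The main obstacle is verifying that $\Gamma$-equivariance of the chosen factors really holds with the fixed leg ordering, so that the tied coordinates are exactly the orbit-constant $x$'s. This follows from covariance of Eq.~\eqref{eq:localmap}.
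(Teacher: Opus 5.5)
Your proposal is correct and follows essentially the paper's proof: equivariance of the local Fourier isomorphisms makes tied variations orbit-constant in the sector coordinates, so the image is spanned by the $\kappa(q)$ orbit sums plus the constant; the tied bond kernel is the cycle space of the quotient multigraph, of dimension $e-n+1$; and subtraction then gives $\nu/M$. Obtaining $\ker\Phi^\Gamma_A$ as $\mathcal R_{\rm cyc}$ intersected with the tied subspace is slightly cleaner than the paper's rerun of the Vandermonde argument, and your one slip is harmless: the global rescaling divided by $N$ satisfies $J_A r=\Psi(A)$, not $N\Psi(A)$.
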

The proof, in the coordinates of Eq.~\eqref{eq:sector}, is given in
Appendix~\ref{SM-sec:latticeproofs}.
For translation-invariant Hamiltonians, the projector
of Sec.~\ref{sec:tdvp} restricted to $\T_A^\Gamma$ is the tied
projector, and the support criterion of
Corollary~\ref{cor:support} applies unchanged.

Further graph operations give closed counts: complete edge subdivision
eliminates the WGS non-bond kernel, so the Lieb, heavy-hex, and
heavy-square-octagon connectivities have $\nu=0$, while line graphs admit a formula
in terms of vertex degrees and four-cycles of the original graph
(Proposition~\ref{SM-prop:operations}).

\section{The fiber at a WGS point: injectivity, normality, and finite equivalences}
\label{sec:fiber}
This section asks what the fiber of the contraction map looks like at
a WGS point: which regions are injective, so that the normal-PEPS
theorem applies, and which non-bond null directions are initial
velocities of finite constant-state paths.

\subsection{Injective regions and normality}
\label{sec:injectivity}
A single WGS site with $\deg(v)\ge2$ is never injective. For fixed
$s_v$, Eq.~\eqref{eq:site} has virtual tensor rank one; an invertible
change on each virtual leg turns the rows of the virtual-to-physical
site map into the independent vectors $\langle s_v|^{\otimes\deg(v)}$,
so the map has rank $q$ on a domain of dimension $q^{\deg(v)}>q$. The
same dimension count applies to generic site tensors. It neither
decides injectivity after blocking nor implies a non-bond kernel:
honeycomb has $\nu=0$ despite it. For regions, the open-region
amplitude of Appendix~\ref{SM-sec:sminimal},
Eq.~\eqref{SM-eq:openamplitude}, decides injectivity completely.
The following criterion applies the geometric argument of
Ref.~\cite{perezgarcia2008parent}, Sec.~6.1, to complex-phase WGS:
the same proof uses invertible boundary factors and a nonvanishing
interior amplitude.
\begin{lemma}[Injective regions]
\label{lem:injective}
At a nondegenerate WGS representation, the boundary-to-bulk map $B_R$
of a connected induced region $R$ is injective if and only if every
vertex of $R$ has at most one edge leaving $R$.
\end{lemma}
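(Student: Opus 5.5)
The plan is to write the boundary-to-bulk map $B_R$ explicitly, using the open-region amplitude of Appendix~\ref{SM-sec:sminimal}. After that, injectivity reduces to a statement about restricting a tensor to ``diagonal'' index assignments. For a connected induced region $R$, let $\partial R$ denote the set of external edges, each attached to a unique vertex $v(e)\in R$. Contracting the internal virtual legs with Eq.~\eqref{eq:factor} turns every internal pair of factors into $W_{uv}(s_u,s_v)$. The external legs keep their factors $F_{v(e),e}(s_{v(e)},a_e)$. For a boundary vector $x\in(\mathbb C^q)^{\otimes\partial R}$ this gives
\begin{equation*}
(B_Rx)(s_R)=q^{-|R|/2}\prod_{uv\in E(R)}W_{uv}(s_u,s_v)\sum_{\{a_e\}}x(\{a_e\})\prod_{e\in\partial R}F_{v(e),e}(s_{v(e)},a_e).
\end{equation*}

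The first step is to remove the interior factor. Every $W_{uv}(s,t)$ is a nonzero phase, so the interior prefactor is a nowhere-vanishing function of $s_R$. Multiplying by it pointwise is invertible, so $B_R$ is injective iff the map $x\mapsto\sum_a x(a)\prod_eF_{v(e),e}(s_{v(e)},a_e)$ is injective. The second step uses the invertibility of each $q\times q$ factor $F_{v(e),e}$. Applying $F_{v(e),e}$ on leg $e$ defines an invertible change of variables $x\mapsto x'$ on $(\mathbb C^q)^{\otimes\partial R}$, with new leg indices $b_e\in\{0,\ldots,q-1\}$. In these coordinates the map becomes the diagonal restriction
\begin{equation*}
x'\longmapsto g,\qquad g(s_R)=x'\bigl(\{b_e=s_{v(e)}\}_{e\in\partial R}\bigr),
\end{equation*}
where $g$ is a function of $s_R$ that depends only on the boundary-vertex variables.

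The third step proves the criterion in both directions. Suppose every vertex has at most one external edge. Then $e\mapsto v(e)$ is injective, so the assignments $\{b_e=s_{v(e)}\}$ range over all of $\{0,\ldots,q-1\}^{\partial R}$ as $s_R$ varies. Hence $g\equiv0$ forces $x'=0$, and $B_R$ is injective. Conversely, suppose some vertex $v$ carries two external edges $e\neq f$. Take $x'=\delta_{b_e,0}\,\delta_{b_f,1}\otimes y$ for any nonzero $y$ on the remaining legs. This $x'$ vanishes on every diagonal assignment, because there $b_e=b_f=s_v$. Transforming back gives a nonzero $x$ in $\ker B_R$. Equivalently, the image has dimension at most $q^{\#\{v(e)\}}<q^{|\partial R|}$.

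The main obstacle is bookkeeping rather than conceptual. One must check that the open-region amplitude really factorizes as written. This includes orientation conventions, since $W_{uv}$ is symmetric here but $F_{u,e}$ and $F_{v,e}$ enter as a transpose pair. One must also check that the argument survives the general case of arbitrary invertible unit-modulus weight matrices. I would handle both points by citing Eq.~\eqref{SM-eq:openamplitude} and noting that the proof uses only two facts: that the boundary factors are invertible, and that the interior product has no zero entries. This matches the structure of the argument in Ref.~\cite{perezgarcia2008parent}.
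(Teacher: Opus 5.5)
Your proof is correct and is essentially the paper's argument: both strip off the nowhere-vanishing interior amplitude and apply an invertible change of basis on each boundary leg. You use the vertex's own factor $F_{v(e),e}$, which turns the map into a diagonal restriction, whereas the paper uses the neighbor's factor to obtain $\prod_e W_e(s_v,t_e)$. Both then conclude by counting boundary configurations per vertex, and your explicit kernel vector $\delta_{b_e,0}\delta_{b_f,1}\otimes y$ is a concrete form of the paper's remark that $q^{m_v}$ configurations are mapped into a $q$-dimensional space.
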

\begin{proof}
After the invertible boundary change of basis in
Eq.~\eqref{SM-eq:openamplitude}, the dependence of $\psi_R$ on the
boundary variables $t_e$ of the edges leaving a vertex $v\in R$ is
$\prod_{e}W_e(s_v,t_e)$, a function of the single $q$-valued variable
$s_v$. If $v$ has $m_v\ge2$ leaving edges, the $q^{m_v}$ boundary
configurations at $v$ are mapped into a space of dimension $q$, so
$B_R$ has a kernel. If every $m_v\le1$, then
$\psi_R(s_R,t_\partial)=C(s_R)\prod_{v}W_{e(v)}(s_v,t_{e(v)})$ with
$C$ nowhere zero and one invertible factor per boundary vertex, and
$B_R$ is injective.
\end{proof}
\begin{corollary}[Normality of WGS representations]
\label{cor:normal}
On the infinite square, triangular, and kagome lattices, WGS
representations admit no partition into finite injective blocks and
are not normal PEPS in this sense. Honeycomb and star-lattice WGS
representations are normal,
although the star lattice has $\nu=N(3r^2+2r^3)/3>0$.
\end{corollary}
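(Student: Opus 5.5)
The plan is to reduce everything to Lemma~\ref{lem:injective}: a blocked representation is normal in the intended sense exactly when the lattice can be partitioned into finite connected induced regions, each injective. By the lemma, a region $R$ is injective if and only if every vertex of $R$ has at most one edge leaving $R$. Equivalently, every vertex $v$ with degree $d_v$ has at least $d_v-1$ neighbours inside $R$. The two halves of the corollary are then a combinatorial obstruction argument and an explicit construction.

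\emph{Obstruction (square, triangular, kagome).} Suppose $R$ is a finite region satisfying the criterion. I would take an extremal vertex of $R$ in a chosen lattice direction, for example the vertex maximizing a generic linear functional $\ell$ over the planar positions. I would then exhibit at least two neighbours with larger $\ell$, which necessarily lie outside $R$. Concretely:
\begin{itemize}
\item On the square lattice, take $\ell=x+\epsilon y$. The maximizer has its right and up neighbours outside $R$, which is two leaving edges.
\item On the triangular lattice, the extreme vertex likewise has at least two of its six neighbours in the forward half-plane.
\item On kagome (degree four), each vertex lies on two triangles. For a generic direction, the extremal vertex has one forward neighbour in each triangle, giving two leaving edges.
\end{itemize}
So no finite injective region exists at all, and a fortiori there is no partition into finite injective blocks. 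The main point to check carefully is the kagome case: I would verify that for a suitable $\ell$ both forward neighbours are genuinely absent from $R$, which extremality guarantees as long as $\ell$ separates them strictly.

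\emph{Construction (honeycomb, star).} For the star lattice $3.12^2$, the triangles partition the vertex set. Each vertex of a triangle has degree three, with two edges inside the triangle and one leaving it, so every triangle is an injective block. For the honeycomb, I would block into hexagons chosen so that the hexagons partition the vertices, which is the standard $\sqrt3\times\sqrt3$ one-third covering. Each vertex has two hexagon edges inside and exactly one leaving edge, so Lemma~\ref{lem:injective} applies. Finite injective blocks covering the lattice give normality in the sense stated after the partition. If the definition also requires injectivity of larger unions, the criterion is preserved under taking unions of blocks, since leaving edges can only decrease.

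\emph{Nullity of the star lattice.} The star lattice has $n_3=1$, $n_4=0$, and $n_{33}=0$, because its triangles are isolated. Substituting into Corollary~\ref{cor:lattices} gives $\nu/N=r^2(2q+1)/3=(3r^2+2r^3)/3$. This is positive for $q\ge2$.

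The only delicate step is making precise which normality notion is meant (partition into finite blocks on the infinite lattice) and checking the kagome extremal argument. Everything else is direct application of Lemma~\ref{lem:injective}.
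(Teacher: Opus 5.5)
Your proposal is correct and follows essentially the paper's argument. Both combine Lemma~\ref{lem:injective} with a vertex maximizing a generic linear functional to rule out finite injective regions on square, triangular and kagome. Both take hexagons of one face colour (honeycomb) and the disjoint triangles (star) as injective blocks, and read the star nullity off the face-order count.

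One small correction on kagome: your claim of one forward neighbour in each triangle is not true for every generic direction. The reliable reason is that the two triangles at a vertex are point reflections of each other, so the neighbours form antipodal pairs $v\pm\boldsymbol d_1$, $v\pm\boldsymbol d_2$. A generic functional then makes exactly one member of each pair forward. The two forward neighbours may sit in the same triangle, but two leaving edges is all the argument needs.
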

On the first three infinite lattices every finite region has a vertex with at
least two external edges. Honeycomb and star instead admit partitions
into injective hexagons and triangles, respectively
(Appendix~\ref{SM-sec:normalitydetails}). The infinite-lattice
qualification is essential: on a square torus with $L_x\ge3$ and even
$L_y\ge6$, successive pairs of complete rows form injective blocks,
since each vertex has only one edge leaving its block. Such winding
blocks are not finite blocks of the infinite lattice.

The star lattice does not conflict with the fundamental theorem,
because the two statements concern different maps. The theorem applies
to the blocked network: two representations of the same state are
related by gauges on the coarse bonds between triangle blocks. On
star, Theorem~\ref{thm:regular} leaves only triangle motifs, and each
triangle is one block. The non-bond quotient therefore has a basis of
relations certified inside single blocks with their external legs
retained (Sec.~\ref{sec:relations}); each leaves every blocked tensor
unchanged to first order and is invisible at the coarse level. What
normality does not constrain is how a blocked tensor is parametrized
by its constituent site tensors, and on star this internal
parametrization carries the entire non-bond kernel. Normality and the
non-bond nullity are independent in this precise sense: star is normal
with $\nu>0$, honeycomb is normal with $\nu=0$, and the infinite square
lattice is not normal under finite blocking despite $\nu>0$. We next
construct finite equivalences in the original tensors.

\subsection{Exact finite equivalences}
\label{sec:integrability}
Which non-bond null directions are initial velocities of finite
constant-state paths? The local isomorphism answers this exactly
whenever the centers involved are pairwise nonadjacent. For a vertex
$v$ and a function $f$ on the variables of $N[v]$, let $L_v^{-1}(f)$
denote the unique site tensor whose contraction with the unchanged WGS
environment equals $f\psi$, i.e., the preimage of $f$ under
Eq.~\eqref{eq:localmap}; thus $L_v^{-1}(1)=A_v$ and
$L_v^{-1}(\chi_{S,\boldsymbol k})=\xi_{v,S}^{\boldsymbol k}$.

\begin{proposition}[Independent-set equivalences]
\label{prop:finite}
Let $U\subseteq V$ be an independent set of vertices, and for each
$u\in U$ let $f_u$ be a nowhere-vanishing function on the variables of
$N[u]$ such that $\prod_{u\in U}f_u\equiv1$. Then the tensors
\begin{equation}
A'_u=L_u^{-1}(f_u)\ (u\in U),\qquad A'_w=A_w\ (w\notin U),
\label{eq:finitefamily}
\end{equation}
satisfy $\Psi(A')=\Psi(A)$ exactly. In particular, for local functions
$h_u$ with $\sum_uh_u\equiv0$, choosing $f_u(t)=e^{t h_u}$ gives a
constant-state curve with initial velocity
$\delta=\sum_{u\in U}L_u^{-1}(h_u)$. In the coordinates of
Eq.~\eqref{eq:sector}, these are exactly the kernel elements whose
nonzero coefficients have all their centers in $U$.
\end{proposition}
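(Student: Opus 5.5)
The plan is to exploit the fact that, for an independent set $U$, the tensors at different centers $u\in U$ never share a bond. Each contraction term is therefore multilinear in the modified tensors with no cross-coupling between them. Concretely, I would first show that when site tensors at several pairwise nonadjacent vertices are replaced simultaneously, the contracted amplitude factorizes as
\begin{equation}
\Psi(A')(s)=\psi(s)\prod_{u\in U}f_u(s_{N[u]})\cdot\|\Psi(A)\|.
\end{equation}
The argument is a direct extension of Eq.~\eqref{eq:localmap}. Write $\Psi(A')$ as the WGS contraction in which each $A'_u$ is contracted only against the half-edge factors $F_{w,e}$ of its neighbors $w\notin U$, since no neighbor of $u$ lies in $U$. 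Every edge of $G$ has at most one endpoint in $U$. So I group the edges by the unique vertex of $U$ they touch, or by none. The contraction then splits into the unmodified edge weights $W_e$ for edges avoiding $U$, times, for each $u$, the single-site contraction appearing in Eq.~\eqref{eq:localmap}. By the definition of $L_u^{-1}$, that single-site contraction equals $f_u\prod_{e\ni u}W_e$, up to the normalization $q^{-1/2}$. The vertex-$w$ factors $F_{w,e}(s_w,\cdot)$ for $w\notin U$ are shared between edges only through the physical index $s_w$, which is held fixed, so each virtual index $a_e$ is summed independently. The factorization is therefore exact, and $\prod_u f_u\equiv1$ gives $\Psi(A')=\Psi(A)$.

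The constant-state curve follows immediately. Set $f_u(t)=e^{th_u}$. These functions are nowhere vanishing, and their product is $e^{t\sum_u h_u}=1$. Because $L_u^{-1}$ is linear, differentiating at $t=0$ gives the velocity $\sum_u L_u^{-1}(h_u)$.

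For the final characterization I would expand each $h_u$ in characters, $h_u=\sum_{S\subseteq N[u],\boldsymbol k}x_{u,S}^{\boldsymbol k}\chi_{S,\boldsymbol k}$, so that $L_u^{-1}(h_u)=\sum x_{u,S}^{\boldsymbol k}\xi_{u,S}^{\boldsymbol k}$. This matches the coordinates of Eq.~\eqref{eq:sector} with support only on centers in $U$. The condition $\sum_u h_u\equiv0$ is equivalent, by orthonormality of the characters, to $\sum_{v\in C_S\cap U}x_{v,S}^{\boldsymbol k}=0$ in every sector. That is exactly the kernel condition of Theorem~\ref{thm:star} restricted to coordinates supported in $U$. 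Conversely, any such kernel element defines functions $h_u:=\sum_{S,\boldsymbol k}x_{u,S}^{\boldsymbol k}\chi_{S,\boldsymbol k}$ on $N[u]$ whose sum vanishes. The two descriptions therefore coincide.

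The main point to check carefully is the factorization step, namely that the absence of edges inside $U$ really decouples the replaced tensors. The potential subtlety is two centers $u,u'\in U$ sharing a common neighbor $w$. In that case $F_{w,e}$ and $F_{w,e'}$ involve the same $s_w$ but distinct virtual legs $a_e,a_{e'}$. Because the WGS site tensor Eq.~\eqref{eq:site} is a product over legs for fixed $s_w$, the sums over $a_e$ and $a_{e'}$ factor. I would state this product structure explicitly, since this is where the independent-set hypothesis enters. Adjacent centers would share a contracted leg, and the factorization fails.
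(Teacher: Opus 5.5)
Your proposal is correct and follows essentially the same route as the paper: because the unchanged WGS tensors are products over their legs, each replaced tensor at $u\in U$ can be contracted separately and multiplies the amplitude by $f_u$. The exponential curve and the character expansion then give the velocity and the sector characterization. Your explicit handling of two centers with a common neighbor, and of the converse direction via linear independence of the characters, spells out steps the paper leaves implicit.
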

\begin{proof}
Every unchanged tensor is a product over its legs,
Eq.~\eqref{eq:site}. Since no two vertices of $U$ are adjacent, each
leg of a vertex $u\in U$ ends at an unchanged tensor, and the factor
$F_{w,e}$ of that tensor on the shared leg is exactly the factor
entering Eq.~\eqref{eq:localmap}. Contracting the legs of each $u\in U$
separately therefore multiplies the amplitude by $f_u(s)$, and the
remaining factors of the unchanged tensors reassemble $\psi$. Hence
$\Psi(A')=\bigl(\prod_{u\in U}f_u\bigr)\psi=\psi$. The exponential
choice obeys the product constraint for every $t$. Differentiating at
$t=0$ gives the stated velocity, and expanding each $h_u$ in characters
identifies its coefficients with the sector coordinates.
\end{proof}

\begin{corollary}[Single-character lines in the ray fiber]
\label{cor:lines}
Let $S$ be a support with two nonadjacent centers $u,v\in C_S$, and
write $\chi=\chi_{S,\boldsymbol k}$. Then
$f_u=e^{\theta\chi}$, $f_v=e^{-\theta\chi}$ give an exact constant-state
curve with initial velocity $z_{S;u,v}^{\boldsymbol k}$, which is a
non-bond direction whenever $|S|\ge2$. If $\chi^2=1$, the associated
straight line satisfies
\begin{equation}
\Psi(A_0+t\,z_{S;u,v}^{\boldsymbol k})=(1-t^2)\,\Psi(A_0),
\qquad |t|<1.
\label{eq:squareline}
\end{equation}
Thus every individual qubit chordless-square relation gives a straight
line in the fiber of the physical ray. Relations whose centers together
form an independent set compose through the multiplicative construction
of Proposition~\ref{prop:finite}; that family need not be affine linear
in the tensor entries.
\end{corollary}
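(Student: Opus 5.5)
This is Corollary~\ref{cor:lines}; the plan is to read it off Proposition~\ref{prop:finite} with $U=\{u,v\}$, then verify non-bond membership and the straight-line formula separately.

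\emph{Exact curve.} Since $u,v$ are nonadjacent, $U=\{u,v\}$ is independent. Both $u,v\in C_S$ give $S\subseteq N[u]\cap N[v]$, so $\chi$ is a function on $N[u]$ and on $N[v]$. The choices $f_u=e^{\theta\chi}$ and $f_v=e^{-\theta\chi}$ are nowhere vanishing and satisfy $f_uf_v\equiv1$. Proposition~\ref{prop:finite} therefore gives $\Psi(A'(\theta))=\Psi(A_0)$ for every $\theta$. Its velocity formula with $h_u=\chi$, $h_v=-\chi$ gives initial velocity $L_u^{-1}(\chi)-L_v^{-1}(\chi)=\xi_{u,S}^{\boldsymbol k}-\xi_{v,S}^{\boldsymbol k}=z_{S;u,v}^{\boldsymbol k}$.

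\emph{Non-bond.} By Eq.~\eqref{eq:sector}, the class of $z$ lives in the single $(S,\boldsymbol k)$ sector, with coefficients $+1$ at $u$ and $-1$ at $v$. By Proposition~\ref{prop:quotient} and the sector computation of Eq.~\eqref{eq:ordinarysectors}, bond gauges meet a sector with $|S|\ge2$ only when $S$ is an edge, and then only through the difference of the two endpoint centers. If $S$ is not an edge, $\im\Phi_A$ has no component in this sector, so $z\notin\im\Phi_A$. If $S=\{a,b\}$ is an edge, the only gauge direction is $\xi_{a,S}-\xi_{b,S}$. This equals our $z$ only if $\{u,v\}=\{a,b\}$, which is impossible because $u,v$ are nonadjacent while $a,b$ are adjacent.

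\emph{Straight line.} Here I use multilinearity of contraction. Because $u,v$ are nonadjacent and $z$ changes only the tensors at $u$ and $v$,
\[
\Psi(A_0+tz)=\Psi(A_0)+tJ_{A_0}z+t^2\,\Psi(\ldots),
\]
where the $t^2$ term contracts $\xi_{u,S}$ at $u$ and $-\xi_{v,S}$ at $v$ with the unchanged environment elsewhere. The linear term vanishes since $z\in\ker J_A$. For the quadratic term, the argument in the proof of Proposition~\ref{prop:finite} applies: each leg of $u$ and of $v$ ends on an unchanged tensor, so the two replacements multiply $\psi$ by $\chi$ and by $-\chi$ independently. The quadratic term is therefore $-\chi^2\psi=-\psi$ when $\chi^2=1$, which gives $(1-t^2)\Psi(A_0)$. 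The condition $|t|<1$ only keeps the ray nonzero. At $q=2$ every character squares to $1$. A chordless square has nonadjacent opposite centers, so each such relation gives a straight line in the fiber of the physical ray.

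\emph{Composition remark.} For several relations whose centers together form an independent set, take the product of the $f$'s at each center and apply Proposition~\ref{prop:finite} again. A center shared by two relations then carries $L^{-1}$ of a product of exponentials, which is not affine in $t$.

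The only step needing care is the non-bond claim in the edge-support case; the rest is bookkeeping with Proposition~\ref{prop:finite}.
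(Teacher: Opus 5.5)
Your proposal is correct and follows essentially the paper's route. The exact curve and its velocity come from Proposition~\ref{prop:finite} with $U=\{u,v\}$. The non-bond claim comes from the sector description of Proposition~\ref{prop:quotient}; the paper handles the edge-support case by noting that nonadjacent centers satisfy $u,v\notin S$, which is equivalent to your adjacency argument. The straight-line identity is the product of the two local factors at nonadjacent centers, and your multilinear expansion reproduces it as $(1+t\chi)(1-t\chi)=1-t^2\chi^2$.
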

\begin{proof}
Nonadjacent $u,v\in C_S$ force $u,v\notin S$, so $S$ consists of
common neighbors and $|S|\ge2$ places $z_{S;u,v}^{\boldsymbol k}$ outside
the ordinary gauge image by Proposition~\ref{prop:quotient}.
The straight-line factors at the two centers are $1+t\chi$ and
$1-t\chi$, whose product is $1-t^2\chi^2$; this proves
Eq.~\eqref{eq:squareline} when $\chi^2=1$. In that case
$e^{\pm\theta\chi}=\cosh\theta\pm\sinh\theta\,\chi$, so the
constant-state exponential curve is obtained from the line at
$t=\tanh\theta$ by multiplying each changed tensor by $\cosh\theta$.
These rescalings distinguish the exact-state curve from the ray-preserving
line. Multiplying local factors at any shared center proves the
composition statement.
\end{proof}
The non-bond character of these equivalences follows locally from
orbit geometry, without assuming trivial local tensor stabilizers.
The ordinary bond-gauge orbit through $A_0$ is a smooth locally closed
orbit with tangent space $\im\Phi_{A_0}$. A constant-state curve whose
initial velocity has nonzero class in
$\ker J_{A_0}/\im\Phi_{A_0}$ has velocity outside the orbit tangent and hence lies
outside the orbit for sufficiently small nonzero parameter values. This
applies to the pair curves above when $|S|\ge2$, and to any
independent-center curve with nonzero quotient class.

Appendix~\ref{SM-sec:finiteequivalencedetails} gives a finite slice-rank
invariant for the square curves and an explicit example in which
composition requires nonlinear corrections.

\subsection{Second-order obstructions and the tangent cone}
\label{sec:obstruction}
Relations whose centers are adjacent are not covered by
Proposition~\ref{prop:finite}; for them the second-order expansion of
the contraction map is the first test. Since $\Psi$ is multilinear in
the site tensors, a curve
$A(t)=A_0+t\delta+t^2\delta'+O(t^3)$ has
\begin{equation}
\begin{aligned}
\Psi(A(t))={}&\Psi(A_0)+tJ_A\delta\\
&+t^2\bigl[J_A\delta'+\Psi_2(\delta,\delta)\bigr]+O(t^3),\\
\Psi_2(\delta,\delta)={}&\sum_{u<v}\Psi\bigl(A_0;\,A_u\to\delta_u,\\
&\hphantom{\sum_{u<v}\Psi\bigl(A_0;\,}A_v\to\delta_v\bigr),
\end{aligned}
\label{eq:secondorder}
\end{equation}
For $\delta\in\ker J_A$, a constant-ray curve requires
$\Psi_2(\delta,\delta)\in\im J_A$, since the ray itself lies in
$\im J_A$. Equivalently, the projected Hessian obstruction must vanish:
\begin{equation}
\mathfrak B_A(\delta,\delta)
=2(I-P_{\im J_A})\Psi_2(\delta,\delta)=0.
\end{equation}
Here $P_{\im J_A}$ is the orthogonal projector onto the affine image.
This condition is necessary for integrability, but need not suffice at
higher orders.
\begin{proposition}[Second-order obstructions]
\label{prop:obstruction}
At the $q=2$ graph state on a $3\times3$ open square patch, every
single chordless-square relation $z$ of Eq.~\eqref{eq:square} satisfies
$\Psi_2(z,z)\in\im J_A$, but for two parallel relations $z_1,z_2$ in
edge-adjacent squares, $\Psi_2(z_1+z_2,z_1+z_2)\notin\im J_A$. Hence
the fiber of the physical ray through $A_0$ is not smooth at $A_0$:
it contains curves with initial velocities $z_1$ and $z_2$, but no
analytic curve with initial velocity $z_1+z_2$.
\end{proposition}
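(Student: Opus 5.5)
The plan is to compute $\Psi_2$ explicitly in the Walsh coordinates of Theorem~\ref{thm:star}, using the local isomorphism~\eqref{eq:localmap}. Label the $3\times3$ patch by $(i,j)$, $0\le i,j\le2$, and fix edge-adjacent squares $\square_1=\{(0,0),(0,1),(1,1),(1,0)\}$ and $\square_2=\{(0,1),(0,2),(1,2),(1,1)\}$. Take the parallel relations
\begin{equation*}
z_1=\xi_{(0,0),\{(0,1),(1,0)\}}-\xi_{(1,1),\{(0,1),(1,0)\}},\qquad
z_2=\xi_{(0,1),\{(0,2),(1,1)\}}-\xi_{(1,2),\{(0,2),(1,1)\}}.
\end{equation*}

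\emph{First step: single relations.} For a single square relation $z=\xi_{u,S}-\xi_{v,S}$ with $u,v$ nonadjacent, $\Psi_2(z,z)$ is the single term with both sites replaced. No leg joins $u$ and $v$, so the argument of Proposition~\ref{prop:finite} gives $\Psi_2(z,z)=-\chi_S^2\psi=-\psi$ at $q=2$. This lies in $\im J_A$ as the ray. Corollary~\ref{cor:lines} already supplies the exact curves with velocities $z_1$ and $z_2$.

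\emph{Second step: the sum.} Expand
\begin{equation*}
\Psi_2(z_1+z_2,z_1+z_2)=\Psi_2(z_1,z_1)+\Psi_2(z_2,z_2)+2\Psi_2(z_1,z_2),
\end{equation*}
where $2\Psi_2(z_1,z_2)$ collects the four cross terms pairing one center of $z_1$ with one center of $z_2$. The diagonal terms are $-2\psi\in\im J_A$. Among the cross pairs, $(0,0)$–$(0,1)$, $(1,1)$–$(0,1)$ and $(1,1)$–$(1,2)$ are adjacent, and only $(0,0)$–$(1,2)$ is nonadjacent.
\begin{itemize}
\item The nonadjacent pair gives the product $\chi_{\{(0,1),(1,0)\}}\chi_{\{(0,2),(1,1)\}}\psi$. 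Its support is $\{(0,1),(1,0),(0,2),(1,1)\}$. Since this set lies in no closed neighborhood of the patch, the term is orthogonal to $\im J_A$ by Eq.~\eqref{eq:tangent}.
\item For an adjacent pair $u$–$v$ on edge $e$, both varied tensors carry modified columns $b^{(k)}$ from Eq.~\eqref{eq:qexplicit} on the shared leg. The contraction therefore produces $f_uf_v\psi$ times an edge correction
\begin{equation*}
c_e(s_u,s_v)=\frac{\sum_a b_{u,e}(s_u,a)\,b_{v,e}(s_v,a)}{W_e(s_u,s_v)\,\omega^{\cdots}}.
\end{equation*}
\end{itemize}
I will compute $c_e$ explicitly at the qubit graph state, with Hadamard-type factorization $F$, and expand each cross term in Walsh characters. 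Factorization choices are related by gauges, so I expect the result to be independent of the choice.

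\emph{Third step: exhibit an obstruction.} The goal is to find one character $\chi_T\psi$ with $T\notin\K$ of the open patch whose total coefficient in $\Psi_2(z_1+z_2,z_1+z_2)$ is nonzero. The natural candidate is $T=\{(0,1),(1,0),(0,2),(1,1)\}$, or a support produced by $c_e$ that extends toward the far row. The main obstacle is that the adjacent cross terms might cancel the nonadjacent product on such $T$. Verifying noncancellation requires careful bookkeeping of signs, of $c_e$, and of which centers lie in the support.

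If the hand computation becomes unwieldy, I will fall back on an exact rational computation. The patch has only $2^9$ amplitudes, so one can evaluate $\Psi_2$ exactly and project onto the complement of $\operatorname{span}\{\chi_S\psi:S\in\K\}$, which is known in closed form by Theorem~\ref{thm:star}. Orthonormality of the characters reduces membership in $\im J_A$ to reading off coefficients.

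\emph{Conclusion.} Take an analytic curve $A(t)$ in the ray fiber with velocity $\delta=z_1+z_2$, so $\Psi(A(t))=\lambda(t)\Psi(A_0)$. The $t^2$ coefficient of Eq.~\eqref{eq:secondorder} gives
\begin{equation*}
J_A\delta'+\Psi_2(\delta,\delta)\in\mathbb C\Psi(A_0)\subseteq\im J_A,
\end{equation*}
so $\Psi_2(\delta,\delta)\in\im J_A$, which contradicts the third step. Therefore no such curve exists. Now suppose the fiber were smooth at $A_0$. Its tangent space would then be a linear space containing the velocities $z_1$ and $z_2$ of the exact curves from the first step. It would therefore contain $z_1+z_2$, and smoothness would supply an analytic curve with that velocity, which is impossible. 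Hence the fiber is not smooth at $A_0$.
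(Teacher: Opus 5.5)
Your argument has the same structure as the paper's proof. Each single-square term equals $-\psi$. The $t^2$ coefficient of a ray-preserving curve forces $\Psi_2(\delta,\delta)\in\im J_A$. Smoothness would then put $z_1+z_2$ into a tangent space that already contains the velocities of the exact curves of Corollary~\ref{cor:lines}. The one step you leave unexecuted is noncancellation on $T$. The paper settles exactly this by exact rational elimination ($\rank J_A=76$ but $\rank[J_A\,|\,\Psi_2(z_1+z_2,z_1+z_2)]=77$), which is your fallback. Your hand route also closes in a few lines, and it gives an explicit obstruction vector instead of a machine certificate.

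At phase $\pi$ one has $W_e(s,t)\omega^{t}=W_e(s\oplus1,t)$. Hence Eq.~\eqref{eq:qexplicit} gives $b^{(1)}_{v,e}(s_v,\cdot)=F_{v,e}(s_v\oplus1,\cdot)$ for every factorization: $\xi_{v,S}$ is $A_v$ with the legs pointing into $S$ evaluated at the flipped physical value. Take adjacent replaced sites $u,v$, and set $\alpha=1$ if $v\in S_u$ and $\beta=1$ if $u\in S_v$ (zero otherwise). The shared leg then contracts to
\[
W_e(s_u\oplus\alpha,s_v\oplus\beta)=(-1)^{\alpha\beta}\,W_e(s_u,s_v)\,(-1)^{\alpha s_v+\beta s_u},
\]
so your edge correction is the constant $c_e=(-1)^{\alpha\beta}$.

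Now write $f_{(0,0)}=\chi_{S_1}$, $f_{(1,1)}=-\chi_{S_1}$, $f_{(0,1)}=\chi_{S_2}$, $f_{(1,2)}=-\chi_{S_2}$, with $S_1=\{(0,1),(1,0)\}$ and $S_2=\{(0,2),(1,1)\}$. Every pair contraction equals $f_uf_v\psi$, except on the edge joining $(1,1)$ and $(0,1)$. That is the only edge whose endpoints each lie in the other's support, and there the sign flips. Since $\sum_uf_u=0$, the factorized sum is $-\tfrac12\sum_uf_u^2\,\psi=-2\psi$. Correcting the one edge gives
\[
\Psi_2(z_1+z_2,z_1+z_2)=-2\psi+2\chi_T\psi,\qquad T=S_1\cup S_2 .
\]
Term by term, the cross pairs $(0,0)(0,1)$, $(0,0)(1,2)$, $(1,1)(0,1)$, $(1,1)(1,2)$ contribute $+\chi_T\psi$, $-\chi_T\psi$, $+\chi_T\psi$, $+\chi_T\psi$. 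So one adjacent term does cancel the nonadjacent product, as you feared, but the sign on the doubly supported edge leaves net coefficient $2$. Since $T\notin\K$, the component outside $\im J_A$ is $2\chi_T\psi\ne0$, which is the paper's rank jump.

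The same bookkeeping explains why orthogonal pairs are unobstructed. Their three centers are pairwise nonadjacent, so all contractions factorize. Then $-\tfrac12\sum_uf_u^2\,\psi$ involves only $\psi$ and $\chi_{S\triangle S'}\psi$, where $S\triangle S'$ lies inside the closed neighborhood of the shared center.
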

\begin{proof}
Exact rational contractions certify that the single-square
second-order terms lie in $\im J_A$, while the mixed term of the
parallel pair does not (Appendix~\ref{SM-sec:obstructionchecks}).
A smooth ray fiber would have a tangent space containing both $z_i$
and their sum, and hence a curve with the latter velocity, contradicting
the second-order condition.
\end{proof}
Jointly independent centers admit the nonlinear corrections of
Proposition~\ref{prop:finite}; the obstructed parallel pair has
adjacent centers. For triangle-based motifs, numerical continuation
finds constant-ray paths for every tested direction, but a general
integrability proof remains open
(Appendix~\ref{SM-sec:obstructionchecks}).

\section{Generic gauge completeness and the WGS rank drop}
\label{sec:generic}
The preceding sections are pointwise in the WGS representation
family. We now compare with independent generic tensors in the full
space $\mathcal P_G$: is the WGS non-bond nullity a rank drop of the
parametrization at a special point, or a dimension defect of the
tensor-network variety itself? The lattice-by-lattice constructions
are in the appendices; this section states the general mechanism and the
result. Generic means on a nonempty Zariski-open subset
for a fixed finite graph and fixed $q$; tensors are not tied by
translation symmetry.

The WGS bond-gauge rank is already maximal. Consequently,
$\nu_{\rm generic}\le\nu_{\rm WGS}$, and their difference equals the
increase in physical differential rank at generic tensors
(Proposition~\ref{SM-prop:genericbounds}). The generic comparison
therefore reduces to establishing gauge completeness on the families
below. The normal-PEPS theorem enters through a statement about fibers,
not by differentiating a set-theoretic equality at a special point.
\begin{lemma}[Gauge fibers imply generic tangent completeness]
Let $X$ be an irreducible affine tensor parameter space with polynomial
contraction $\Psi$ and an algebraic bond-gauge action. Suppose a nonempty
Zariski-open set $U\subset X$ has the following property: any two points
of $U$ with equal contraction are related by bond gauge. If the generic
gauge-orbit dimension is $g_0$, then
\begin{equation}
\begin{gathered}
\rank d\Psi_{\rm generic}=\dim X-g_0,\\
\ker d\Psi_A=\im\Phi_A
\end{gathered}
\label{eq:genericfiber}
\end{equation}
on a possibly smaller nonempty open subset.
\label{lem:fiber}
\end{lemma}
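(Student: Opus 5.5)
The argument combines fiber dimension with generic smoothness. Let $Y=\overline{\Psi(X)}$; it is irreducible because $X$ is. By generic smoothness in characteristic zero, or equivalently the semicontinuity of rank together with Sard's theorem for polynomial maps, there is a nonempty open $U_1\subset X$ on which $\rank d\Psi_A$ equals its maximal value $\dim Y$. After shrinking $U_1$, the fibers $\Psi^{-1}(\Psi(A))$ are locally of pure dimension $\dim X-\dim Y$ near $A$, by the fiber-dimension theorem. The plan is to compute that fiber dimension a second way, from the gauge hypothesis, and then compare.

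\emph{Step 1 (fibers are orbits locally).} Fix $A\in U\cap U_1$ and write $\mathcal O_A$ for its bond-gauge orbit. The hypothesis gives $\Psi^{-1}(\Psi(A))\cap U\subseteq\mathcal O_A$. The reverse inclusion $\mathcal O_A\cap U\subseteq\Psi^{-1}(\Psi(A))$ holds because gauge preserves the contraction. Since $U$ is open, the fiber near $A$ coincides with the orbit near $A$. Orbits of an algebraic group action are smooth and locally closed, with tangent space $\im\Phi_A$. The orbit dimension is lower semicontinuous and attains its generic value $g_0$ on an open set $U_2$. On $U\cap U_1\cap U_2$ the local fiber therefore has dimension $g_0$, so $\dim X-\dim Y=g_0$. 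This proves the first identity in Eq.~\eqref{eq:genericfiber}.

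\emph{Step 2 (kernel equals gauge image).} We always have $\im\Phi_A\subseteq\ker d\Psi_A$, by the same cancellation as $J_A\Phi_A=0$. On $U_2$, $\dim\im\Phi_A=g_0$. On $U_1$, $\dim\ker d\Psi_A=\dim X-\rank d\Psi_A=\dim X-\dim Y=g_0$. Equal dimensions together with the inclusion give $\ker d\Psi_A=\im\Phi_A$ on the nonempty open set $U\cap U_1\cap U_2$. This set is nonempty because $X$ is irreducible, so any two nonempty open subsets meet.

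\emph{Main obstacle.} The delicate point is Step 1: turning a set-theoretic statement about fibers into a dimension statement, without differentiating the equality at a possibly singular point. I would work with the reduced fiber. Its local dimension at $A$ equals $\dim\mathcal O_A$ because the two sets agree on an open neighborhood. The generic-fiber-dimension theorem is applied to the dominant map $X\to Y$, which needs no smoothness of the fiber. Generic smoothness is used only to identify $\rank d\Psi_A$ with $\dim Y$, and this holds over $\mathbb C$ on a dense open set. If one prefers to avoid it, an alternative is to take a point where the rank is maximal and apply the constant-rank theorem locally. The rank-$\rho$ level set is then a smooth manifold of dimension $\dim X-\rho$ inside the fiber, and it is open in the fiber near $A$. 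Equating dimensions then gives $\rho=\dim X-g_0$ directly.
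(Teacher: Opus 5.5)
Your proposal is correct and takes the same route as the paper's proof: restrict to the open set of maximal orbit dimension, note that each fiber meets $U$ in an open piece of a single gauge orbit of dimension $g_0$, and apply the fiber-dimension theorem to get $\dim\overline{\Psi(X)}=\dim X-g_0$; then use characteristic zero to identify the generic differential rank with this image dimension, and conclude from $\im\Phi_A\subseteq\ker d\Psi_A$ by counting dimensions. Your additional details (the identity $\Psi^{-1}(\Psi(A))\cap U=\mathcal O_A\cap U$, the orbit tangent $\im\Phi_A$, and nonemptiness of $U\cap U_1\cap U_2$ by irreducibility) only make explicit what the paper's three-line proof leaves implicit.
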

\begin{proof}
Intersect $U$ with the open set of maximal orbit dimension.
Every nonempty fiber of the restricted map lies in one gauge orbit;
its intersection with that orbit is open and has dimension $g_0$.
The fiber-dimension theorem therefore gives
$\dim\overline{\Psi(X)}=\dim X-g_0$, and in characteristic zero the
generic differential rank equals this image dimension
\cite{hartshorne1977algebraic,borel1991groups}. Since gauge tangents
always lie in the kernel, equality of dimensions gives equality of
spaces.
\end{proof}

\begin{theorem}[Generic completeness]
For independent tensors with $D=d=q\ge2$, the thirty-three families
in the first four rows of Table~\ref{tab:genericsummary}, and at
$q=2$ the six families in its fifth row, satisfy
\begin{equation}
\begin{gathered}
\ker J_A=\im\Phi_A,\\
\rank J_A=P-[E(q^2-1)+N-1]
\end{gathered}
\label{eq:genericrank}
\end{equation}
on a nonempty Zariski-open subset of its parameter space, under the
period bounds listed there. Table~\ref{SM-tab:generic} gives the
family-by-family conditions.
\label{thm:generic}
\end{theorem}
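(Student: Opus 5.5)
By Lemma~\ref{lem:fiber}, it suffices to show, for each family, that a nonempty Zariski-open set of tensor assignments has gauge-only fibers. Proposition~\ref{prop:bond} already gives the generic gauge-orbit dimension $g_0=E(q^2-1)+N-1$. Equation~\eqref{eq:genericrank} then follows with $\rank J_A=P-g_0$.

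The plan is to obtain gauge-only fibers from a finite-equivalence theorem of fundamental-theorem type. The relevant results are the normal-PEPS theorem of Ref.~\cite{molnar2018normal} and, where blocks are only semi-injective, Ref.~\cite{molnar2018semiinjective}. These would be applied to a blocking of each periodic graph into finite connected regions, within the stated period bounds. The key subtlety is that the star lattice shows gauge completeness at the coarse level does not by itself control the internal parametrization of a block. So the argument has two layers.

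\textbf{Coarse layer.} For generic tensors, each block must be injective as a boundary-to-bulk map, and the blocking must satisfy the geometric hypotheses of the theorem. Injectivity is an open condition given by a nonvanishing maximal minor. It therefore suffices to exhibit one point per block where it holds, for instance a suitably perturbed WGS point, or a product of generic isometries. Lemma~\ref{lem:injective} shows that WGS points themselves fail this on square-type regions, so the witness must be chosen off the WGS locus. The period bounds enter here, ensuring that blocks embed and that the normality overlaps exist on the torus. The theorem then says that two generic representations with equal state differ by invertible gauges on the coarse bonds between blocks.

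\textbf{Internal layer.} Generically, the map from the constituent site tensors to a blocked tensor must have fibers equal to the internal bond-gauge orbits. This is a finite-graph statement for each block type. I would prove it by one of two routes:
\begin{itemize}
\item by induction, peeling off one vertex at a time, and using that generic site tensors with $d=D$ are invertible when viewed as maps from the physical leg together with all but one virtual leg;
\item by exhibiting a single explicit point whose differential rank equals $P_{\rm block}-g_{0,{\rm block}}$, via an exact rational Jacobian computation. Lower semicontinuity of the nullity then spreads this to a generic set.
\end{itemize}
Finally, the coarse gauges must split into edge gauges on the original bonds. A coarse bond is a tensor product of original bonds, so an invertible coarse gauge preserving generic block tensors must factor, once again by a generic-stabilizer argument on the block.

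\textbf{Main obstacle.} I expect the hardest step to be controlling all $q\ge2$ uniformly. The dimension counts in the subcritical regime $d=D$ are tight, and a witness point that works for every $q$ has to be constructed structurally rather than numerically. For the thirty-three families I would seek block witnesses built from tensor products of generic $q$-dimensional invertible maps, so that every rank statement reduces to a statement about a fixed small graph. For the six families with high-degree or triangle-rich local structure, such as the triangular and maple-leaf tilings, I would settle for certifying $q=2$ by exact rational rank computations on a finite patch. That certificate would then be combined with the locally faithful torus period bounds through a translation-covering argument.
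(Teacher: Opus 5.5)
Your architecture (Lemma~\ref{lem:fiber}, the normal-PEPS theorem on a blocking, internal completeness of blocks, and splitting of coarse bonds) matches what the paper formalizes in Theorem~\ref{SM-thm:template}. However, the splitting step fails as you state it, and the two layers are glued at the wrong level.

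\textbf{The splitting step.} A coarse gauge does not stabilize a block tensor. What you actually know, infinitesimally, is that a coarse generator $T$ acting on the block contraction $B$ is compensated inside the block: $T\cdot B$ lies in $\im J_B$ plus the other coarse actions at that block. At a single block this cannot force factorization. Any operator supported on several legs that end at the \emph{same} original site is just a variation of that site tensor, and it is genuinely allowed. Two or more legs of a coarse bond meet one site at an endpoint in many of the families: the $1+2$ triple bonds of the elongated-triangular and \texttt{t2003}/\texttt{t2004} dimers, of maple-leaf, \texttt{t2012} and \texttt{t2020}; the path blocks; and the king rows. There no one-block stabilizer argument yields single-edge gauges. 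The paper needs two further ingredients:
\begin{itemize}
\item detectors, i.e.\ image quotients $Q_i\otimes Q_j$ that annihilate every single-site variation, with the other coarse actions at the block either annihilated as well or jointly detected so that distinct bonds cannot cancel; these confine each generator to the sum of same-site operator algebras at that endpoint;
\item the intersection of the two endpoint site partitions, whose common refinement is single original edges because the graph is simple.
\end{itemize}

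\textbf{The level mismatch.} These detector statements, and your internal route (b), are tangent-level. Your reduction, however, asks for gauge-only \emph{finite} fibers. Generic tangent completeness of a block does not exclude finitely many non-gauge fiber components, so Lemma~\ref{lem:fiber} cannot be fed this way. The missing bridge is the following. On the open set where $d\Psi$ has maximal rank, the constant-rank theorem integrates every kernel vector to a constant-state curve. The coarse theorem places its blocked curve in one smooth coarse gauge orbit, so its velocity is a sum of coarse Lie-algebra generators. From there every step is linear.

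\textbf{The coarse layer.} Requiring each block to be injective is impossible by dimension count for most blockings needed here: a ruby triangle has boundary dimension $q^6$ against physical dimension $q^3$, and a king row $q^{6L}$ against $q^{L}$. The normal theorem needs injectivity of unions of blocks, namely $A$, $B$, $R$, $R\cup\{v\}$ and the full torus complements. The paper certifies these for every $q$ by delta-tensor routing witnesses, Eq.~\eqref{SM-eq:routeisometry}, and extends them to all periods by translated covers and the union lemma. A ``suitably perturbed'' WGS point or ``generic isometries'' gives neither uniformity in $q$ nor control of the cut bottlenecks of Eq.~\eqref{SM-eq:routecut}.

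\textbf{The internal layer.} Route (a) rests on a false premise. A degree-$k$ site read as a map from the physical leg and $k-1$ virtual legs to the last leg is $q^k\to q$, which is not invertible. The usable fact is injectivity of the grouped map from internal legs to physical plus external legs. That fails at bowtie centers and wheel rims, which require the separate arguments of Appendices~\ref{SM-sec:bowtieextension} and~\ref{SM-sec:wheelextension}.

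\textbf{What you overlook.} Where $\nu_{\rm WGS}=0$ (honeycomb and every complete subdivision of an arbitrary finite simple graph), the WGS point itself attains the upper bound $\rank J=P-g_0$. A nonzero maximal minor then proves the theorem with no blocking, uniformly over all such graphs and without case-by-case partitions.

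\textbf{The qubit rows.} Maple-leaf belongs to the all-$q$ rows, not the $q=2$ row. The $q=2$ certificates are joint block-plus-incidence minors, Eq.~\eqref{SM-eq:q2jointcriterion}, not patch ranks. A patch rank without the added incidence columns cannot exclude cancellations between coarse bonds.
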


\begin{table*}[!tbp]
\centering
\caption{Generic completeness by proof mechanism. Each row lists the
families for which $\nu_{\rm generic}=0$ is established, with the
sufficient period bounds in parentheses ($L$ denotes $L_x,L_y$; cell
conventions as in Appendix~\ref{SM-sec:coordinates}). The complete
family-by-family conditions for all thirty-nine families and the
proof-dependency map are in Tables~\ref{SM-tab:generic}
and~\ref{SM-tab:proofdependencies}.}
\label{tab:genericsummary}
\footnotesize
\setlength{\tabcolsep}{4pt}
\centering
\begin{adjustbox}{max width=\textwidth}
\begin{tabular}{>{\raggedright\arraybackslash}p{4.59cm}>{\raggedright\arraybackslash}p{7.65cm}c}\toprule
Mechanism & Families & $q$\\
\midrule
WGS rank witness ($\nu_{\rm WGS}=0$) & honeycomb ($L\ge5$); Lieb, heavy-hex, heavy-square-octagon (any complete subdivision) & all\\
Injective partitions on the original graph & square ($L\ge20$), kagome ($L\ge64$), snub-square (even $L\ge48$), dice, \texttt{t2016}--\texttt{t2018} ($L\ge30$) & all\\
Injective blocks, single-edge coarse bonds & star, square-octagon, cross (intact blocks); \texttt{t2001} ($L\ge32$) & all\\
Blocks with composite-bond splitting (triangle, dimer, bowtie, diamond, wheel, $K_4$, row) & ruby, maple-leaf, elongated triangular; \texttt{t2002}--\texttt{t2007}, \texttt{t2009}--\texttt{t2013}, \texttt{t2020}; checkerboard, shuriken ($L\ge3$ to $64$); king ($L_x\ge4,L_y\ge18$, or $x\leftrightarrow y$) & all\\
Normal or injective blocks with exact integer minors & \texttt{t2008} ($L\ge3$), \texttt{t2014}, \texttt{t2015} ($L\ge32$), \texttt{t2019} ($3\mid L\ge24$); triangular ($6\mid L\ge54$ or $19\mid L\ge57$); centered-square ($2\mid L_x\ge40$, $6\mid L_y\ge42$, or $x\leftrightarrow y$) & $2$\\
Open & the six qubit rows at $q>2$ & ---\\
\bottomrule
\end{tabular}
\end{adjustbox}
\end{table*}

\begin{proof}[Proof outline]
When $\nu_{\rm WGS}=0$, the WGS point attains the gauge upper bound on
$\rank J$, so a nonzero maximal minor proves generic completeness.
The other cases use injective partitions and blocking: the normal-PEPS
theorem confines constant-state curves to coarse gauge orbits, and
splitting arguments reduce those coarse actions to original-edge
gauges. For the six qubit-only families, exact integer minors certify
the internal block kernels and simultaneous splitting conditions.
Appendix~\ref{SM-sec:genericmechanism} states the required
hypotheses and proves the reduction; the family-specific constructions
and sufficient period bounds follow there.
\end{proof}

The triangular lattice admits no direct three-block edge partition,
but blocking into twelve- and nineteen-site hexagonal clusters gives
$\nu_{\rm generic}=0$ at $q=2$ on the period classes of
Table~\ref{tab:genericsummary}; exact modular certificates also settle
the $4\times3$ and $4\times4$ tori (Appendices~\ref{SM-sec:triangulargeneric} and~\ref{SM-sec:open5qubits}).
For king, blocking entire rows and an analytic simultaneous splitting
of the two row contacts prove the all-$q$ result (Appendix~\ref{SM-sec:kingrows}); no exponentially large contact-rank
calculation is required. The all-$q$ statements for the six qubit rows
remain open. The sufficient period bounds
in Table~\ref{tab:genericsummary} belong to the proof constructions;
the finite-graph WGS theorem applies independently of those bounds.

\begin{corollary}[Exact rank drop at WGS representations]
On every family and dimension covered by Theorem~\ref{thm:generic},
\begin{equation}
\rank J_{\rm generic}-\rank J_{\rm WGS}
=\nu_{\rm WGS},
\label{eq:rankdrop}
\end{equation}
with $\nu_{\rm WGS}$ from Theorem~\ref{thm:regular} or
Eq.~\eqref{eq:cellrank} on locally faithful quotients and from
Eq.~\eqref{eq:extrarank} otherwise.
\end{corollary}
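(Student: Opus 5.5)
The corollary is a bookkeeping identity: I would compare two rank–nullity decompositions of the same parameter space $\mathcal P_G$, one at the WGS point and one at a generic point, and use the fact that the bond-gauge term is the same at both.

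At a generic point covered by Theorem~\ref{thm:generic}, Eq.~\eqref{eq:genericrank} gives directly
\[
\rank J_{\rm generic}=P-[E(q^2-1)+N-1]=P-g_0 .
\]
At the WGS point, Proposition~\ref{prop:bond} gives $\dim\im\Phi_A=g_0$. The isomorphism~\eqref{eq:affineprojective} then gives $\dim\ker J_A=g_0+\nu_{\rm WGS}$, and rank–nullity yields
\[
\rank J_{\rm WGS}=P-g_0-\nu_{\rm WGS}.
\]
Subtracting the two ranks gives Eq.~\eqref{eq:rankdrop}. Two points need care. First, $g_0$ must be the same at both points. This holds because Proposition~\ref{prop:bond} asserts the maximal bond-gauge rank both at WGS and on a Zariski-open dense set, and that open set can be intersected with the open set of Theorem~\ref{thm:generic}, since nonempty Zariski-open subsets of the irreducible space $\mathcal P_G$ meet. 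Second, I would stay consistently in the affine picture. The unprojected $J$ is the natural object here, and the ray shifts both the projective rank and $g$ by one, so the projective version gives the same difference.

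It then remains to identify $\nu_{\rm WGS}$ by the cited formulas. Eq.~\eqref{eq:extrarank} applies on any finite graph. On locally faithful quotients I would invoke Theorem~\ref{thm:cell}, or Theorem~\ref{thm:regular} for regular-polygon tilings. Under local faithfulness these agree with Eq.~\eqref{eq:extrarank}, because they are specializations of it. For quotients that are not locally faithful but still satisfy the period bounds of Theorem~\ref{thm:generic}, I would fall back on Eq.~\eqref{eq:extrarank}, which is exact there.

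The only potential obstacle is making sure the generic rank of $J$ in Theorem~\ref{thm:generic} is stated for the same unprojected map and the same untied parameter space as at WGS. I would check this against the definitions in Sec.~\ref{sec:bond}. Once that matches, no further computation is required.
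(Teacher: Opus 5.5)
Your proposal is correct and is essentially the paper's argument. The paper cites the identity $\rank J_{\rm generic}-\rank J_{\rm WGS}=\nu_{\rm WGS}-\nu_{\rm generic}$ of Proposition~\ref{SM-prop:genericbounds}, which is exactly your rank--nullity comparison with the common gauge rank $g_0$, and then sets $\nu_{\rm generic}=0$ by Theorem~\ref{thm:generic}; your direct use of Eq.~\eqref{eq:genericrank} and the isomorphism~\eqref{eq:affineprojective} simply unpacks that identity, and your fallback to Eq.~\eqref{eq:extrarank} on non-locally-faithful periods matches the paper's own remark about small stated periods.
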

\begin{proof}
Both gauge ranks equal $g_0$, so Eq.~\eqref{SM-eq:generalrankdrop}
reduces to Eq.~\eqref{eq:rankdrop} once $\nu_{\rm generic}=0$.
\end{proof}
The non-bond nullity is thus a rank drop of the parametrization at
the WGS point rather than a dimension defect of the variety. By
Corollary~\ref{cor:normal}, in the infinite-lattice sense with finite
blocks, it occurs both at non-normal points (square, kagome) and at a
normal one (star), so it is not a consequence of non-injectivity.
Generic completeness does not imply that the WGS non-bond
nullity disappears along every path leaving the WGS family.
Section~\ref{sec:projectiongap} gives the condition for quadratic
lifting and paths with persistent nullity.

\section{Physical tangent projection, blocked directions, and the reduced linear system}
\label{sec:tdvp}
Theorem~\ref{thm:star} determines both which physical directions a
tensor representation can reach and which of its parameters are
redundant. These are independent questions, and this section separates
them into three operations on a representation of a fixed physical
state: kernel removal selects one parameter representative per
physical direction, image projection returns the realizable velocity
and its residual, and bond expansion supplies a direction the image
lacks. A final subsection analyzes how the reduced linear system
responds to an inexact force, at and near the WGS point.

At a normalized WGS representation, write
$e_{S,\boldsymbol k}=\chi_{S,\boldsymbol k}\psi$. The orthogonal
projector onto $\T_A$,
\begin{equation}
P_{\T_A}=\sum_{\substack{S\in\K(G)\\S\ne\varnothing}}
\sum_{\boldsymbol k\in\mathcal F_S}
\ket{e_{S,\boldsymbol k}}\bra{e_{S,\boldsymbol k}},
\label{eq:projector}
\end{equation}
does not depend on the choice of parameter-space complement.
For a Hermitian Hamiltonian $H$, define
\begin{equation}
\begin{split}
h(s)&=(H\psi)(s)/\psi(s),\\
h_{S,\boldsymbol k}&=q^{-N}\sum_s
\overline{\chi_{S,\boldsymbol k}(s)}h(s).
\end{split}
\label{eq:energywalsh}
\end{equation}
The conjugate is required at general $q$; for $q=2$ the characters are
real. We use the exact, unregularized orthogonal-projection form of the
time-dependent variational principle (TDVP), with $\hbar=1$
\cite{mclachlan1964variational,haegeman2011tdvp,hackl2020geometry};
because $\T_A$ is a complex-linear subspace, the McLachlan and
Dirac--Frenkel formulations coincide here, which would not be automatic
for a real parametrization.

\begin{corollary}[Instantaneous representability]
The optimal physical variational velocity is
\begin{equation}
\dot\psi_{\rm TDVP}=-i\sum_{\substack{S\in\K(G)\\S\ne\varnothing}}
\sum_{\boldsymbol k\in\mathcal F_S}
h_{S,\boldsymbol k}e_{S,\boldsymbol k}.
\end{equation}
Its squared residual is
\begin{equation}
\begin{split}
R_{\rm TDVP}^2
&=\|(I-P_{\T_A})(H-\langle H\rangle)\psi\|^2\\
&=\sum_{S\notin\K(G)}\sum_{\boldsymbol k\in\mathcal F_S}
|h_{S,\boldsymbol k}|^2.
\end{split}
\label{eq:residual}
\end{equation}
The exact instantaneous Schr\"odinger direction is representable if and
only if every unsupported Fourier coefficient vanishes.
\label{cor:tdvp}
\end{corollary}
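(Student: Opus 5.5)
The plan is to combine the orthonormal character basis of Theorem~\ref{thm:star} with the standard characterization of projective TDVP as orthogonal projection onto the horizontal tangent space. The Schr\"odinger velocity is $-iH\psi$. Its horizontal part, the component orthogonal to $\psi$, is $-i(H-\langle H\rangle)\psi$. Since $\T_A=\im J_{\perp,A}$ is a complex-linear subspace, the McLachlan minimizer of $\|\dot\psi+iH\psi\|$ over horizontal velocities in $\T_A$ is $-iP_{\T_A}(H-\langle H\rangle)\psi$. The Dirac--Frenkel condition gives the same vector, because multiplying by $i$ preserves $\T_A$. The minimal residual norm is $\|(I-P_{\T_A})(H-\langle H\rangle)\psi\|$, which proves the first equality in Eq.~\eqref{eq:residual}.

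Next I expand $H\psi$ in the characters. Since $\psi(s)\ne0$ everywhere, $H\psi=h\psi$ with $h=(H\psi)/\psi$. The product characters $\chi_{S,\boldsymbol k}$, with $S$ ranging over all subsets of $V$ and $\boldsymbol k\in\mathcal F_S$, form a basis of functions on $\mathbb Z_q^N$, orthonormal under the uniform average $q^{-N}\sum_s$. Hence $h=\sum_{S,\boldsymbol k}h_{S,\boldsymbol k}\chi_{S,\boldsymbol k}$, with coefficients as in Eq.~\eqref{eq:energywalsh}.

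Because $|\psi(s)|^2=q^{-N}$, the map $f\mapsto f\psi$ is an isometry from $L^2$ of the uniform measure onto the Hilbert space. Therefore $\{e_{S,\boldsymbol k}\}$ over all $S$ is an orthonormal basis of $(\mathbb C^q)^{\otimes N}$, extending the one in Theorem~\ref{thm:star}. The empty-support coefficient is $h_\varnothing=\langle\psi,H\psi\rangle=\langle H\rangle$, so subtracting $\langle H\rangle$ removes exactly the $S=\varnothing$ term. By Eq.~\eqref{eq:projector}, $P_{\T_A}$ keeps the nonempty $S\in\K(G)$ terms, and $I-P_{\T_A}$ keeps the $S\notin\K(G)$ terms together with the already-removed empty term. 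Applying $-iP_{\T_A}$ gives the stated velocity, and Parseval gives the residual sum over unsupported $S$.

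The representability claim follows from this. The exact horizontal direction lies in $\T_A$ iff the residual vanishes, which by Parseval holds iff $h_{S,\boldsymbol k}=0$ for every $S\notin\K(G)$ and every $\boldsymbol k$. The only delicate point is the conjugation convention at $q>2$: the coefficient must be $\langle\chi_{S,\boldsymbol k},h\rangle$, which matches the overlap $\langle e_{S,\boldsymbol k},H\psi\rangle$ because $|\psi|^2$ is constant. I will also note that the $\psi$ direction itself, i.e.\ the phase and norm, is excluded by working with $J_\perp$, so nothing beyond the empty sector needs separate treatment.
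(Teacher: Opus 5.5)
Your proposal is correct and follows the paper's proof. You use the orthonormal basis of all $e_{S,\boldsymbol k}$ (from $|\psi|^2=q^{-N}$), subtract $\langle H\rangle$ to remove the empty sector, project onto the nonempty supports in $\K(G)$, and apply Parseval for the residual. One wording fix: since $\psi\perp\T_A$, the minimum of $\|\dot\psi+iH\psi\|$ over $\dot\psi\in\T_A$ also contains $|\langle H\rangle|$, so the residual you quote is the minimum against the horizontal part $-i(H-\langle H\rangle)\psi$; the minimizer is the same either way.
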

\begin{proof}
The vectors $e_{S,\boldsymbol k}$ over all supports form an orthonormal
basis. Subtraction of $\langle H\rangle$ removes the empty support.
Orthogonal projection keeps precisely the remaining allowed supports,
and Parseval's identity gives the residual.
\end{proof}
This is a WGS expression for the local-in-time variational error
\cite{martinazzo2020error}; terms producing the same character are
summed before squaring. For positive energy variance we also use
$\epsilon_{\rm TDVP}=R_{\rm TDVP}/\sqrt{\Var_\psi(H)}$.

Since $\T_A$ contains the centered deformation
$(f-\langle f\rangle_\psi)\psi$ for every function $f$ on a closed
neighborhood, the residual criterion has a purely local form.
\begin{corollary}[Support criterion]
\label{cor:support}
Let $H=\sum_iH_i$ be such that, for each term, the amplitude ratio
$(H_i\psi)(s)/\psi(s)$ depends only on the variables in some set
$U_i\in\K(G)$, i.e., $U_i\subseteq N[v_i]$ for some vertex $v_i$. Then
$R_{\rm TDVP}(H)=0$ at every nondegenerate WGS representation on $G$.
In particular this holds for every operator diagonal in the
computational basis whose support lies in one closed neighborhood,
in particular every diagonal edge operator, and for every single-site
operator, whose amplitude
ratio is a function on the closed neighborhood of its site.
\end{corollary}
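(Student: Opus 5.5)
Corollary~\ref{cor:tdvp} reduces the claim to showing that $(H-\langle H\rangle)\psi\in\T_A$. I will show that each term $(H_i-\langle H_i\rangle)\psi$ lies in $\T_A$; linearity of $\T_A$ then finishes the argument. The key input is Theorem~\ref{thm:star}. The unprojected image $\im J_A$ contains $f\psi$ for every function $f$ on a closed neighborhood $N[v]$, because of the local isomorphism~\eqref{eq:localmap} together with Eq.~\eqref{eq:introimage}. The projective image $\T_A$ is then spanned by the $e_{S,\boldsymbol k}$ with $S\ne\varnothing$, $S\in\K(G)$.

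First, fix a term $H_i$ and set $f_i(s)=(H_i\psi)(s)/\psi(s)$. This is well defined because the WGS amplitude never vanishes. By hypothesis $f_i$ depends only on $s_{U_i}$ with $U_i\subseteq N[v_i]$. Expanding $f_i$ in the product characters of $\mathbb Z_q^{U_i}$ writes it as a sum of $\chi_{S,\boldsymbol k}$ with $S\subseteq U_i$. Since $\K(G)$ is downward closed, every such $S$ lies in $\K(G)$. Orthonormality of the $e_{S,\boldsymbol k}$ (Born weight $q^{-N}$) shows that the coefficient of $e_\varnothing=\psi$ is $\langle\psi,H_i\psi\rangle=\langle H_i\rangle$. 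Hence $(H_i-\langle H_i\rangle)\psi$ has all its Fourier mass on supported nonempty $S$, and so it lies in $\T_A$. Summing over $i$ shows that every coefficient $h_{S,\boldsymbol k}$ with $S\notin\K(G)$ vanishes, and Eq.~\eqref{eq:residual} gives $R_{\rm TDVP}=0$.

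Next come the special cases. For an operator diagonal in the computational basis, the ratio $(H_i\psi)/\psi$ equals its diagonal function. So if its support lies in one closed neighborhood, the hypothesis holds immediately. Every edge $uv$ satisfies $\{u,v\}\subseteq N[u]$, which covers diagonal edge operators.

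For a single-site operator $O$ at $v$, I would compute
\begin{equation*}
(O\psi)(s)=\sum_{t}O_{s_v t}\,\psi(s_{v\to t}).
\end{equation*}
In the ratio $\psi(s_{v\to t})/\psi(s)$, all edge factors not touching $v$ cancel. What remains is $\prod_{w\in N(v)}W_{vw}(t,s_w)/W_{vw}(s_v,s_w)$. This depends only on $s_{N[v]}$, and it is finite because the weights have no zero entries. So the ratio is a function on $N[v]$, as required.

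The main obstacle is only bookkeeping. One has to check that the $\langle H\rangle$ subtraction removes exactly the empty-support component, including at general $q$, where conjugated characters appear in Eq.~\eqref{eq:energywalsh}. One also has to check that expanding a function on $U_i$ never produces a support outside $\K(G)$. Downward closure of $\K(G)$ settles the second point, and orthonormality settles the first.
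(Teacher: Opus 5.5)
Your proposal is correct and follows essentially the same argument as the paper. You expand each amplitude ratio in characters, use $2^{U_i}\subseteq\K(G)$ (by downward closure) so that every unsupported coefficient in Eq.~\eqref{eq:residual} vanishes, and handle the single-site case by noting that only the edge factors at $v$ survive in $\psi(s')/\psi(s)$; your explicit identification of the empty-support coefficient with $\langle H_i\rangle$ is a detail the paper leaves to Corollary~\ref{cor:tdvp}, not a different route.
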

\begin{proof}
A function of the variables in $U_i$ has Fourier support inside
$2^{U_i}\subseteq\K(G)$, and $\K(G)$ is downward closed, so every
coefficient of $h$ outside $\K(G)$ vanishes in Eq.~\eqref{eq:residual}.
For a one-site operator $O_v$, $(O_v\psi)(s)/\psi(s)$ is a combination
of $\psi(s')/\psi(s)$ over configurations $s'$ differing from $s$ only
at $v$, and this ratio involves only the edges at $v$.
\end{proof}

For a diagonal operator the exact condition is that each nonzero
Fourier component has support in $\K(G)$: $\sum_vZ_v$ is tangent on
any qubit WGS graph, whereas $Z_aZ_b$ is not if
$\operatorname{dist}(a,b)>2$.

The criterion of Corollary~\ref{cor:support} is sufficient but not
necessary: a term whose ratio lies outside $\K(G)$ can still have
vanishing unsupported coefficients, or be fully or partially blocked;
the $XX$ example below exhibits both regimes as the edge phase varies.

\subsection{Coordinates, kernel removal, and the image}
\label{sec:liftcoords}
Choose $\alpha_{vS}$ with $\sum_{v\in C_S}\alpha_{vS}=1$.
The map
\begin{equation}
\Xi c=\sum_{\substack{S\in\K(G)\\S\ne\varnothing}}
\sum_{\boldsymbol k\in\mathcal F_S}c_{S,\boldsymbol k}
\sum_{v\in C_S}\alpha_{vS}\xi_{v,S}^{\boldsymbol k}
\label{eq:lift}
\end{equation}
is an explicit local lift with
$J_\perp\Xi c=\sum_{S,\boldsymbol k}c_{S,\boldsymbol k}e_{S,\boldsymbol k}$,
so the physical metric in these tangent coordinates is the identity
at the evaluation point. At the qubit graph-state point the same lift gives the projected
first-order ground-state response to a perturbation of the stabilizer
parent Hamiltonian; its infidelity is governed by the unsupported
coefficients of Eq.~\eqref{eq:residual}, weighted by inverse excitation
energies (Appendix~\ref{SM-sec:groundresponse}).

The projective QGT at $A$ is $\mathcal M_A=J_{\perp,A}^\dagger J_{\perp,A}$,
written $\mathcal M$ when the base point is fixed, so its kernel is exactly
$\ker J_\perp$~\cite{provost1980riemannian,haegeman2014geometry}.
Restricting to a complement of the complete kernel selects a unique
parameter representative of each accessible physical direction; it
leaves Eq.~\eqref{eq:projector} unchanged and cannot reduce
Eq.~\eqref{eq:residual}. A finite invertible bond gauge likewise
preserves the physical image by an invertible parameter change,
although it can change Euclidean parameter-space condition numbers.
Appendix~\ref{SM-sec:triangleidentifiability} gives the complementary
example of a full physical tangent with redundant tensor coordinates.

\subsection{Representable and blocked directions}
\label{sec:blockade}
\paragraph{Transverse-field Ising is instantaneously tangent.}
As a first application of Corollary~\ref{cor:support}, specialize to
$q=2$ and consider
\begin{equation}
H_{\rm I}=-\sum_{uv\in E_G}J_{uv}Z_uZ_v-\sum_v(h^x_vX_v+h^z_vZ_v).
\label{eq:ising}
\end{equation}
A diagonal edge term has amplitude ratio $\chi_{\{u,v\}}$, whose support
lies in both endpoint closed neighborhoods, and flipping the bit at $v$
gives
\begin{equation}
\frac{(X_v\psi)(s)}{\psi(s)}
=\exp\!\left[i(1-2s_v)\sum_{w\in N(v)}\phi_{vw}s_w\right],
\label{eq:xenergy}
\end{equation}
a function of $N[v]$. Hence $R_{\rm TDVP}(H_{\rm I})=0$ on every finite
graph, for arbitrary real couplings and fields, at every nondegenerate
WGS representation.

\paragraph{Complete $XX$ blockade at the graph state.}
The $XX$ term is the simplest operator outside the scope of
Corollary~\ref{cor:support}: for adjacent $u,v$ its amplitude ratio is
a function on $N[u]\cup N[v]$, which is not contained in any single
closed neighborhood on the lattices below, so the outcome depends on
which Fourier coefficients actually appear. Set every edge phase to
$\pi$. The resulting graph state obeys the stabilizer equations
\begin{equation}
K_v\psi=\psi,\qquad K_v=X_v\prod_{w\in N(v)}Z_w,
\label{eq:stabilizer}
\end{equation}
which also follow directly from Eq.~\eqref{eq:xenergy}
\cite{hein2004multiparty,hein2006graph}. For adjacent $u,v$, moving
$X_u$ through the $Z_u$ in the second stabilizer introduces a minus
sign:
\begin{equation}
X_uX_v\psi=-\chi_{B_{uv}}\psi,\qquad
B_{uv}=N(u)\mathbin{\triangle}N(v),
\label{eq:xxsupport}
\end{equation}
with open neighborhoods, so both endpoints belong to $B_{uv}$.
On all eleven bulk Archimedean lattices $|B_{uv}|\ge z+1$ for
coordination number $z$, and $B_{uv}$ has no center, so it lies outside
$\K$ (Lemma~\ref{SM-lem:xxcenter}). For
\begin{equation}
H_{XX}=\sum_{uv\in E_G}X_uX_v
\label{eq:xx}
\end{equation}
all Fourier components lie outside the tangent space and, even if two
edges produce the same support, their equal negative coefficients add.
It follows that
\begin{equation}
\begin{gathered}
P_{\T_A}H_{XX}\psi=0,\\
R_{\rm TDVP}^2=\Var_\psi(H_{XX})>0,\quad
\epsilon_{\rm TDVP}=1.
\end{gathered}
\label{eq:freeze}
\end{equation}
For a deterministic minimum-norm parameter update, zero forcing selects
zero tensor velocity, with or without a ridge, so the graph state is an
exact stationary solution of the projected equation for \emph{all}
times, while for a single edge with unit coupling the exact survival
probability is $\cos^2t$: the projected evolution misses an infidelity
$\sin^2t$ that can be of order one at times of order one.

The obstruction is not a kernel effect: honeycomb has $\nu=0$, so no
coordinate null direction remains to be removed, yet the blockade is
complete. Gauge reduction selects representatives of tangent vectors
\cite{haegeman2014geometry} but cannot add an absent direction; this is
a support-level instance of the TDVP projection error
\cite{haegeman2011tdvp,haegeman2016unifying,hackl2020geometry}.

\paragraph{State-level obstructions.}
A missing velocity excludes a differentiable lift through the specified
tensor, but maximal differential rank there alone does not establish
smoothness of the state variety. The following Schmidt bound excludes
all bond-two representations, independently of normality.
\begin{corollary}[State-level honeycomb obstruction]
\label{cor:honeycombstate}
Let $G$ be a honeycomb torus with $L_x,L_y\ge5$ in the conventions of
Appendix~\ref{SM-sec:coordinates}, and let $uv$ be any nearest-neighbor edge.
For $0<|\varepsilon|<\pi/2$, the state
$\ket{\psi_\varepsilon}=e^{-i\varepsilon X_uX_v}\ket G$ has no PEPS
representation with every bond dimension at most two on the same graph.
Moreover, if $\mathcal P_2(G)$ denotes the normalized states admitting
such a representation, then for all real $\varepsilon$
\begin{equation}
\max_{\phi\in\mathcal P_2(G)}
|\langle\phi|\psi_\varepsilon\rangle|^2
=\max\{\cos^2\varepsilon,\sin^2\varepsilon\}.
\label{eq:honeycombfidelity}
\end{equation}
\end{corollary}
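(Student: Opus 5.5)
The plan is to combine the stabilizer identity~\eqref{eq:xxsupport} with a Schmidt-rank bound across a single bipartition of the torus. Expanding the exponential gives
\[
\ket{\psi_\varepsilon}=\cos\varepsilon\,\ket G-i\sin\varepsilon\,X_uX_v\ket G
=\cos\varepsilon\,\ket G+i\sin\varepsilon\,Z_{B}\ket G,
\]
with $B=B_{uv}=N(u)\mathbin{\triangle}N(v)$. On honeycomb, $B$ consists of $u$, $v$, and the two further neighbors of each endpoint. The argument then has three steps: a bipartition bound, an orthogonality step, and a fidelity computation.

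\emph{Bipartition bound.} Choose a bipartition $V=A\sqcup\bar A$ whose boundary consists of $k$ cut edges. Any PEPS on $G$ with all bond dimensions at most two has Schmidt rank at most $2^k$ across it, since the cut edges carry the only virtual indices joining the two sides. For a graph state the Schmidt rank is exactly $2^{\rank_{\mathbb F_2}\Gamma_{A\bar A}}$, where $\Gamma_{A\bar A}$ is the off-diagonal block of the adjacency matrix. I would pick $A$ to be a cylinder bounded by two parallel noncontractible zigzag cuts, so that the cut edges form a matching. Then $\Gamma_{A\bar A}$ is a partial permutation matrix of rank $k$, and $\ket G$ saturates the bound with $2^k$ equal Schmidt coefficients $2^{-k}$. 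The condition $L_x,L_y\ge5$ leaves room to place $u,v$ in the bulk of $A$ with $B\subset A$, and at distance at least two from the cut.

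\emph{Orthogonality.} The two terms of $\ket{\psi_\varepsilon}$ should have mutually orthogonal Schmidt supports on both sides. On the $A$ side, the reduced state $\rho_A$ of $\ket G$ is proportional to the projector onto the joint $+1$ space of the stabilizer subgroup supported in $A$; this subgroup is generated by the $K_w$ with $N[w]\subseteq A$. I would exhibit one such $K_w$ that anticommutes with $Z_B$. Taking $w=u$ works: $u\in B$ and $K_u$ contains $X_u$, while $Z_B$ contains $Z_u$, and $N[u]\subseteq A$ by the placement above. Hence $Z_B\ket G$ has $A$-support in the $-1$ eigenspace of $K_u$, orthogonal to that of $\ket G$. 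On $\bar A$ the operator $Z_B$ acts trivially, so I would instead run the same argument with the roles of the two sides exchanged. If orthogonality on one side is not automatic, I would fall back on computing the combined reduced density matrix directly: a block-diagonal $\rho_A$ with two blocks of rank $2^k$ already forces Schmidt rank $2^{k+1}$. Consequently, for $0<|\varepsilon|<\pi/2$ the squared Schmidt coefficients of $\ket{\psi_\varepsilon}$ are $2^k$ copies of $\cos^2\varepsilon/2^k$ and $2^k$ copies of $\sin^2\varepsilon/2^k$. Since $2^{k+1}>2^k$, no bond-two representation exists.

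\emph{Fidelity.} For the upper bound in~\eqref{eq:honeycombfidelity}, every $\phi\in\mathcal P_2(G)$ has Schmidt rank at most $2^k$ across the chosen cut. By the Eckart--Young theorem, $|\langle\phi|\psi_\varepsilon\rangle|^2$ is at most the sum of the $2^k$ largest squared Schmidt coefficients, which equals $\max\{\cos^2\varepsilon,\sin^2\varepsilon\}$. For the lower bound, both $\ket G$ and $Z_B\ket G$ lie in $\mathcal P_2(G)$, the second because local unitaries can be absorbed into the site tensors. These two states attain overlaps $\cos^2\varepsilon$ and $\sin^2\varepsilon$ respectively. The same reasoning covers all real $\varepsilon$, including the endpoints where one term vanishes.

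The main obstacle I expect is the geometric verification on the honeycomb torus: choosing a cut that is a matching, keeping $\Gamma_{A\bar A}$ of full rank, placing $B$ and the stabilizer support $N[u]$ away from the cut, and doing all of this uniformly for every $L_x,L_y\ge5$ and every edge orientation. I would handle the three edge orientations separately, choosing in each case the cut direction transverse to $uv$.
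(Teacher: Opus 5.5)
\textbf{Gap: the perturbation never crosses your cut.} You place $u$, $v$, and all of $B_{uv}$ in the bulk of $A$. Then $e^{-i\varepsilon X_uX_v}=V_A\otimes I_{\bar A}$ is a unitary acting on $A$ alone. One-sided unitaries do not change Schmidt coefficients, so across your cylinder $\ket{\psi_\varepsilon}$ has exactly the $2^k$ equal coefficients $2^{-k/2}$ of $\ket G$ for every $\varepsilon$. Neither the exclusion nor the upper bound in~\eqref{eq:honeycombfidelity} follows from this cut.

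Your orthogonality step on the $A$ side is correct, but it only shows that $\rho_{\bar A}$ is unchanged. The $\bar A$ supports of $\ket G$ and $Z_B\ket G$ coincide, so the cross terms $i\sin\varepsilon\cos\varepsilon\,(Z_B\rho_A-\rho_AZ_B)$ survive the partial trace over $\bar A$. Hence $\rho_A$ is not block diagonal; in fact $\rho_A^{(\varepsilon)}=V_A\rho_A^{(0)}V_A^\dagger$ still has rank $2^k$. Your fallback therefore fails. Exchanging the roles of the two sides cannot help either, because $Z_B$ acts trivially on $\bar A$.

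\textbf{The missing idea: the target edge must be one of the crossing edges.} Take a matching cut with $u\in A$ and $v\in\bar A$. Require each endpoint to have a same-side neighbor, $w\in A$ and $w'\in\bar A$, incident to no crossing edge. Then $K_w$ is supported in $A$ and anticommutes with $X_uX_v$: it contains $Z_u$ and acts trivially at $v$. Likewise $K_{w'}$ is supported in $\bar A$ and anticommutes with $X_uX_v$. So the two terms have orthogonal Schmidt supports on \emph{both} sides, and the spectrum becomes $2^k$ copies each of $\cos^2\varepsilon/2^k$ and $\sin^2\varepsilon/2^k$. Your rank bound, Eckart--Young step, and attaining states then finish the proof.

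This is the paper's argument. There it is phrased by undoing the internal controlled-$Z$ gates, a bipartition-local unitary, which leaves Bell pairs on the cut edges and free $\ket{+}$ sites. Under this conjugation $X_uX_v$ becomes $P_R\otimes P_{\bar R}$, which flips a free $\ket{+}$ to $\ket{-}$ on each side.

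\textbf{The geometry also changes.} Once $uv$ must lie on the cut, a straight zigzag cylinder in these coordinates cuts only one edge orientation. For the intra-cell edges $A_{xy}B_{xy}$, the relevant zigzag chains need not bound a band: for coprime $L_x,L_y$ there is a single such chain through every vertex. The paper sidesteps this with a finite ten-vertex region of two adjacent hexagons, whose eight cut edges form a matching containing $uv$. It reaches all three orientations by an order-three automorphism, and the region embeds injectively whenever $L_x,L_y\ge5$. This gives $k=8$ and coefficients $|\cos\varepsilon|/16$ and $|\sin\varepsilon|/16$, each with multiplicity $256$.
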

\begin{proof}
For every edge $uv$, Appendix~\ref{SM-sec:schmidtobstruction} gives
an eight-edge matching cut through $uv$, with a same-side neighbor
incident to no cut edge at each endpoint. Removing internal controlled-$Z$
gates leaves eight maximally entangled pairs; the conjugated $X_uX_v$ flips a free
$\ket+$ site to $\ket-$ on each side. The two components therefore
have orthogonal Schmidt supports, with coefficients
$|\cos\varepsilon|/16$ and $|\sin\varepsilon|/16$, each repeated
$256$ times. Their combined rank is $512$ when both are nonzero,
exceeding the bond-two cut bound $256$. The largest $256$
coefficients give Eq.~\eqref{eq:honeycombfidelity}, attained by
$\ket G$ or $X_uX_v\ket G$, whichever has the larger overlap.
\end{proof}
A cylinder cut proves the corresponding state-level obstruction on
the square torus.
\begin{corollary}[State-level square-lattice obstruction]
\label{cor:squarestate}
Let $G$ be an $L_x\times L_y$ square torus with $L_x,L_y\ge6$ and let
$uv$ be any nearest-neighbor edge. Then the conclusions of
Corollary~\ref{cor:honeycombstate}, including
Eq.~\eqref{eq:honeycombfidelity}, hold for $G$.
\end{corollary}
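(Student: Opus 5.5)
We follow the proof of Corollary~\ref{cor:honeycombstate}, replacing its matching cut by a cylinder cut. Take the cut that separates a band of complete columns (or rows) containing $u$ from its complement. On a torus the boundary of such a band consists of two parallel circles, each crossing $L_y$ edges, so the cut carries $2L_y$ bonds. The bond-two bound therefore limits the Schmidt rank across it to $2^{2L_y}$.

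First choose the band. We orient it so that the edge $uv$ is cut, with $u$ inside and $v$ outside. Next we choose the widths so that each endpoint has a same-side neighbor incident to no cut edge: let $u'$ be the neighbor of $u$ away from the cut and $v'$ the neighbor of $v$ away from the cut. This requires $u'$ not to lie on the opposite boundary column of the band, which is where $L_x\ge6$ enters: the band and its complement each have width at least three. We then remove the controlled-$Z$ gates internal to each side by local unitaries, which leaves the $2L_y$ crossing gates. Each crossing gate joins a distinct boundary vertex pair because the cut is a matching (each boundary vertex has exactly one cross edge on the square lattice). The result is $2L_y$ maximally entangled pairs with all other sites in product states.

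Then we conjugate $X_uX_v$ through the removed internal gates. The operator becomes $X_u X_v$ times $Z$'s on the same-side neighbors of $u$ and of $v$. Since the crossing gates act on $u$ and $v$, the $X$'s also produce $Z$ factors that act on pair qubits. Those factors are local unitaries on the Bell pairs, which permute them within the same Schmidt span. The decisive factor is $Z_{u'}$ on a site $u'$ that is free after gate removal, which flips $\ket+$ to $\ket-$; likewise $Z_{v'}$ on the other side. Hence $\cos\varepsilon\,\ket G$ and $-i\sin\varepsilon\,X_uX_v\ket G$ have orthogonal supports on each side. Each component has $2^{2L_y}$ equal Schmidt coefficients of magnitude $|\cos\varepsilon|2^{-L_y}$ and $|\sin\varepsilon|2^{-L_y}$ respectively. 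Their combined rank $2^{2L_y+1}$ exceeds the bound whenever both are nonzero, which excludes every bond-two representation for $0<|\varepsilon|<\pi/2$.

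For Eq.~\eqref{eq:honeycombfidelity}, the Eckart--Young argument gives the upper bound on the overlap with any state of Schmidt rank at most $2^{2L_y}$: it equals the sum of the largest $2^{2L_y}$ squared coefficients, namely $\max\{\cos^2\varepsilon,\sin^2\varepsilon\}$. The bound is attained by $\ket G$ or $X_uX_v\ket G$, which are graph states with bond-two WGS representations.

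The main obstacle is the bookkeeping that makes the flipped free sites genuinely free, that is, incident to no crossing gate, for every edge orientation (horizontal or vertical $uv$). The construction must hold for every nearest-neighbor edge, including vertical ones, in which case the roles of $L_x$ and $L_y$ are exchanged. We also have to check that the two boundary circles are distinct, which uses both periods at least $6$.
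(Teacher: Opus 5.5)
Your proposal is correct and follows the paper's proof essentially step for step. The paper takes a cylinder of $\lfloor L_x/2\rfloor$ columns ending at $u$, whose $2L_y$ crossing edges form a matching, and uses widths $m\ge3$ and $L_x-m\ge3$ to guarantee that $(x_0-1,y_0)$ and $(x_0+2,y_0)$ are free. The matching-cut lemma, which you reproduce inline, then gives Schmidt rank $2^{2L_y+1}$ against the bond-two bound $2^{2L_y}$, and Schmidt truncation gives the fidelity; exchanging the axes handles vertical edges.

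One small correction to your bookkeeping. The $Z$ factors that land on pair qubits come from the same-side neighbors $(x_0,y_0\pm1)$ of $u$ in the boundary column, and likewise for $v$. They do not come from the crossing gates, which remain in the state. Either way these factors are one-sided unitaries that preserve the Schmidt supports, so your argument is unaffected.
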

\begin{proof}
For horizontal $uv$, a cylinder of $\lfloor L_x/2\rfloor$ columns
ending at $u$ has $2L_y$ crossing edges forming a matching. Each
endpoint has a same-side neighbor away from the cut. Lemma~\ref{SM-lem:matchingcut} gives rank $2^{2L_y+1}$ against the
bond-two bound $2^{2L_y}$ and the same fidelity. Exchanging axes
covers vertical edges; Appendix~\ref{SM-sec:squarecut} gives the coordinates.
\end{proof}
The argument applies wherever these cut conditions hold. Kagome and
triangular lattices have no nontrivial matching cut, because every
edge lies in a triangle. Their state-level exclusion remains open;
tangent exclusion alone does not decide it.

\paragraph{$YY$ terms and the Heisenberg velocity.}
On the same eleven bulk tilings, or locally faithful tori, the
stabilizer algebra gives, for adjacent $u,v$,
\begin{equation}
\begin{gathered}
Y_uY_v\psi=-Z_uZ_vX_uX_v\psi=\chi_{B'_{uv}}\psi,\\
B'_{uv}=B_{uv}\setminus\{u,v\}.
\end{gathered}
\label{eq:yysupport}
\end{equation}
where $B'_{uv}$ is the set of neighbors exclusive to one endpoint. The same
center-exclusion argument, using Lemma~\ref{lem:shortcycles}(i) and
an inspection of every edge orbit of the eleven tilings, shows that
$B'_{uv}$ has no center either (Appendix~\ref{SM-sec:yysupport}). Hence
nearest-neighbor $YY$ terms are blocked exactly as $XX$ terms are, and
at a graph state on any of the eleven lattices the projected velocity
of a Heisenberg Hamiltonian
$\sum_{uv}(J_xX_uX_v+J_yY_uY_v+J_zZ_uZ_v)$ is that of its $ZZ$ part
alone: $\dot\psi_{\rm TDVP}=-iJ_z\sum_{uv}(Z_uZ_v-\langle Z_uZ_v\rangle)\psi$.
The $ZZ$ part is diagonal and moves the state within the WGS family,
so the projected Heisenberg flow starting at a graph state initially
follows a pure Ising phase evolution; the $XX$ and $YY$ parts enter
only once the edge phases have left $\pi$, through the partial
residuals below.

\paragraph{A twelve-qubit trajectory.}
Figure~\ref{fig:dynamics} compares exact and projected evolution of a
twelve-qubit periodic kagome graph state under $H_{XX}$. Integrating
the projected equation for all independent $D=2$ tensor entries
leaves the state stationary, while the exact state decays rapidly.
Enlarging the family to
\begin{equation}
\ket{\psi(\boldsymbol\alpha)}=\prod_{uv\in E_G}e^{-i\alpha_{uv}X_uX_v}\ket{\psi_G},
\label{eq:gatefamily}
\end{equation}
supplies every missing edge direction $-iX_uX_v\psi_G$.
Because the $XX$ terms commute, this family contains the exact
trajectory, which the projected equation reproduces. Each gate can be
absorbed into the bond tensors at bond dimension at most four;
numerical checks are in Appendix~\ref{SM-sec:trajectorydetails}.

\begin{figure}[!tbp]
\centering
\includegraphics[width=\columnwidth]{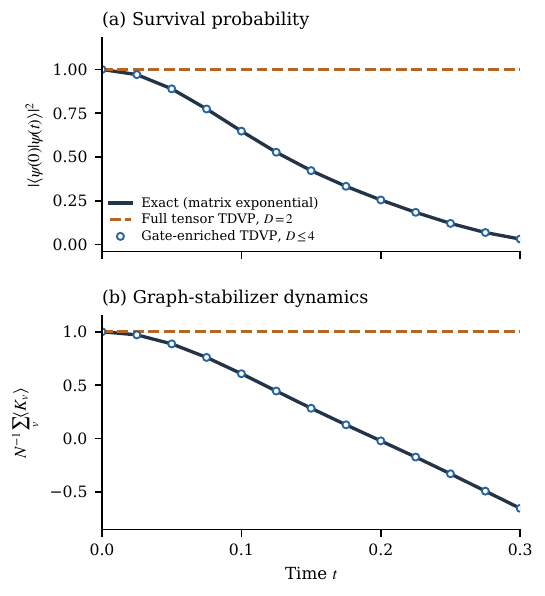}
\caption{Time evolution of the twelve-qubit periodic kagome graph
state under $H_{XX}$ with unit couplings and $\hbar=1$. (a) Survival
probability. (b) Mean graph stabilizer $N^{-1}\sum_v\langle K_v\rangle$.
The exact evolution (dark curves) departs rapidly, while projection
onto the tangent of all independent $D=2$ tensors (orange dashed)
leaves the state stationary. The augmented family of
Eq.~\eqref{eq:gatefamily} (blue circles) supplies the missing
$-iX_uX_v\psi$ directions and reproduces the exact trajectory.
All expectation values use exact Hilbert-space statistics;
numerical checks are in Appendix~\ref{SM-sec:trajectorydetails}.}
\label{fig:dynamics}
\end{figure}

\paragraph{General edge phase.}
The complete blockade is a property of the graph-state point.
At a uniform nondegenerate phase $\phi$ the single-edge ratio
$(X_uX_v\psi)/\psi$ has Fourier weight both inside and outside $\K$,
and Eq.~\eqref{eq:residual} gives its residual in closed form.
Writing $c=\cos\phi$, on the bulk honeycomb lattice
\begin{equation}
\begin{split}
R_{\rm TDVP}(X_uX_v)&=\frac{(1-c)(3+c)}{4}\\
&=\sin^2\tfrac\phi2\,\bigl(1+\cos^2\tfrac\phi2\bigr),
\end{split}
\label{eq:xxphi}
\end{equation}
so complete blockade at $\phi=\pi$ gives way to a partial residual
away from the graph-state point. The square and triangular closed
forms, limiting behavior, and scans on all eleven tilings are given in
Appendix~\ref{SM-sec:xxscan}.

\paragraph{A missing direction witnessed by an observable.}
\label{sec:mbqccase}
A blocked direction is not only a residual: it is a first-order change
of a local observable that no tensor variation can produce. Consider
the six-qubit graph state $\ket{G_\diamond}$ on a diamond with two leaves,
$V=\{a,b,c,d,x,y\}$, $E_G=\{ab,ac,ad,bc,bd,ax,by\}$, with its
nondegenerate $q=2$ WGS representation $A_0$. The leaves $x,y$ lie in
no common closed neighborhood, so
$\K(G)=\{S\subseteq V:\{x,y\}\not\subseteq S\}$. Let
$\ket{G_{\diamond,\epsilon}}=e^{-i\epsilon X_aX_b}\ket{G_\diamond}$ and
$M=-Y_bZ_cZ_dZ_x$.
\begin{corollary}[A missing tangent witnessed by an observable]
\label{cor:observableresponse}
At $A_0$, the projective velocity
$\ket t=\partial_\epsilon\ket{G_{\diamond,\epsilon}}|_0=iZ_aZ_bZ_xZ_y\ket{G_\diamond}$ is a
unit vector orthogonal to $\T_{A_0}$. Every projective tensor
variation gives
$\delta\langle M\rangle=2\operatorname{Re}\langle t|\delta\psi\rangle=0$,
whereas along the curve
$\langle G_{\diamond,\epsilon}|M|G_{\diamond,\epsilon}\rangle=\sin(2\epsilon)$,
with slope $2$ at $\epsilon=0$.
\end{corollary}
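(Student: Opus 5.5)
The plan is to reduce everything to the stabilizer identity \eqref{eq:xxsupport} and the orthonormal character basis of Theorem~\ref{thm:star}, working at the $q=2$ graph-state point (all edge phases $\pi$) with $\psi=\ket{G_\diamond}$.

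\emph{Step 1: the velocity $\ket t$.} Here $N(a)=\{b,c,d,x\}$ and $N(b)=\{a,c,d,y\}$, so $B_{ab}=N(a)\mathbin{\triangle}N(b)=\{a,b,x,y\}$. Then \eqref{eq:xxsupport} gives
\begin{equation}
X_aX_b\psi=-Z_aZ_bZ_xZ_y\psi .
\end{equation}
Hence
\begin{equation}
\ket t=-iX_aX_b\psi=iZ_aZ_bZ_xZ_y\psi=i\,e_{\{a,b,x,y\}},
\end{equation}
which is a unit vector. Its support contains both leaves, and no closed neighborhood contains both $x$ and $y$, so $\{a,b,x,y\}\notin\K(G)$. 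By the orthonormality of the $e_S$ used in the proof of Theorem~\ref{thm:star}, $\ket t$ is orthogonal to every $e_S$ with $S\in\K(G)$. In particular it is orthogonal to $\psi=e_\varnothing$, so it is a horizontal (projective) velocity, and it is orthogonal to $\T_{A_0}$ by \eqref{eq:tangent}.

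\emph{Step 2: the observable identity $M\psi=\ket t$.} This is the computational heart, and the only place a sign slip could occur. I would write $Y_b=iX_bZ_b$, commute $Z_cZ_dZ_x$ to the left, and use $Z_bX_b=-X_bZ_b$. Then I apply the stabilizer $K_b\psi=\psi$ from \eqref{eq:stabilizer}, in the form $X_b\psi=Z_aZ_cZ_dZ_y\psi$. The factors $Z_c^2$ and $Z_d^2$ cancel, leaving
\begin{equation}
Y_bZ_cZ_dZ_x\psi=-iZ_aZ_bZ_xZ_y\psi,
\qquad\text{so}\qquad M\psi=\ket t .
\end{equation}
Consequently $\langle M\rangle_\psi=\langle\psi|t\rangle=0$.

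Now take any projective tensor variation. It produces $\delta\psi=J_{\perp,A_0}\delta A\in\T_{A_0}$ with $\langle\psi|\delta\psi\rangle=0$, so
\begin{equation}
\delta\langle M\rangle=2\operatorname{Re}\langle\psi|M|\delta\psi\rangle=2\operatorname{Re}\langle t|\delta\psi\rangle=0
\end{equation}
by Step 1, using that $M$ is Hermitian. A variation with a ray component changes $\psi$ only by a multiple of $\psi$, which contributes nothing, so the projective restriction is harmless.

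\emph{Step 3: the curve.} The only factor of $M$ on $\{a,b\}$ is $Y_b$, which anticommutes with $X_b$; hence $M$ anticommutes with $X_aX_b$. This gives
\begin{equation}
e^{i\epsilon X_aX_b}Me^{-i\epsilon X_aX_b}=M\bigl(\cos2\epsilon-i\sin2\epsilon\,X_aX_b\bigr).
\end{equation}
Taking the expectation in $\psi$, the cosine term vanishes because $\langle M\rangle_\psi=0$. For the sine term, $X_aX_b\psi=i\ket t$ from Step 1 and $M\psi=\ket t$ from Step 2 give $\langle\psi|MX_aX_b|\psi\rangle=\langle t|i\,t\rangle=i$. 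Therefore $\langle M\rangle_\epsilon=-i\sin2\epsilon\cdot i=\sin2\epsilon$, whose slope at $\epsilon=0$ is $2$.

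The main obstacle is purely sign bookkeeping in Steps 2 and 3, and I would check it against the consistency relation $\partial_\epsilon\langle M\rangle|_0=2\operatorname{Re}\langle t|t\rangle=2$.
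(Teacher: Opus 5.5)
Your proof is correct and takes the paper's route. Eq.~\eqref{eq:xxsupport} together with the orthonormal character basis of Theorem~\ref{thm:star} gives $\ket t\perp\T_{A_0}$, and the stabilizer $K_b$ gives $M\ket{G_\diamond}=\ket t$ and $\langle M\rangle=0$. The only cosmetic difference is the last step: you conjugate $M$ using the anticommutator $\{M,X_aX_b\}=0$, whereas the paper writes the exact curve as $\cos\epsilon\ket{G_\diamond}+\sin\epsilon\ket t$ via $(X_aX_b)^2=M^2=I$; both give $\sin(2\epsilon)$.
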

\begin{proof}
Eq.~\eqref{eq:xxsupport} gives $X_aX_b\ket{G_\diamond}=-Z_aZ_bZ_xZ_y\ket{G_\diamond}$,
whose support contains both leaves, so Theorem~\ref{thm:star} gives
$t\perp\T_{A_0}$. The stabilizers give $M\ket{G_\diamond}=\ket t$ and
$\langle G_\diamond|M|G_\diamond\rangle=0$; since $(X_aX_b)^2=M^2=I$, the exact curve is
$\cos\epsilon\ket{G_\diamond}+\sin\epsilon\ket t$.
\end{proof}
Removing coordinate null directions leaves this obstruction unchanged.
The curve $\ket{G_{\diamond,\epsilon}}$ is the state produced by a coherent preparation
error on the shared edge of a measurement-based Hadamard primitive
\cite{raussendorf2001oneway,gross2007novel}; the protocol is given in
Appendix~\ref{SM-sec:mbqcprotocol}.

\subsection{State-preserving repair by bond expansion}
\label{sec:asymmetricrepair}
Removing parameters cannot restore a missing velocity; the
representation must be enlarged. This can be done for a specified edge
direction while keeping the physical state fixed. On $u\to v$,
replace its virtual label $a$ by $(a,b)$ with $b=0,1$, and set
\begin{equation}
\begin{aligned}
\widetilde A_u^{(0)}&=A_u,&\widetilde A_u^{(1)}&=0,\\
\widetilde A_v^{(0)}&=A_v,&\widetilde A_v^{(1)}&=X_vA_v.
\end{aligned}
\label{eq:asymmetricrepair}
\end{equation}
The superscript denotes the new virtual channel; $X_v$ acts on the
physical leg. The new-channel contraction vanishes at the base point,
so the unnormalized state is unchanged. Old-channel variations retain
every old derivative, while
$\delta\widetilde A_u^{(1)}=-iX_uA_u$ generates
$-iX_uX_v\Psi(A)$ and hence its projective velocity after normalization.
Thus the enlarged full tensor tangent contains both $\T_A$ and this
edge direction; its optimal residual cannot increase and is zero for
the single-edge Hamiltonian $X_uX_v$. Zero-padding both endpoints would
leave every new-channel first derivative zero and would not provide this
repair. This is an exactly solvable instance of the subspace- and
bond-expansion steps of single-site DMRG and TDVP
\cite{hubig2015subspace,yang2020tdvp,gleis2023cbe,li2024cbe,vanderstraeten2019tangent},
where the new channel is likewise seeded by the projected Hamiltonian
action rather than by zeros; here the exact projector identifies which
direction is missing. Only the selected bond dimension doubles.

\subsection{The reduced linear system at and near a WGS point}
\label{sec:projectiongap}
\begin{figure*}[!tbp]
\centering
\includegraphics[width=\textwidth]{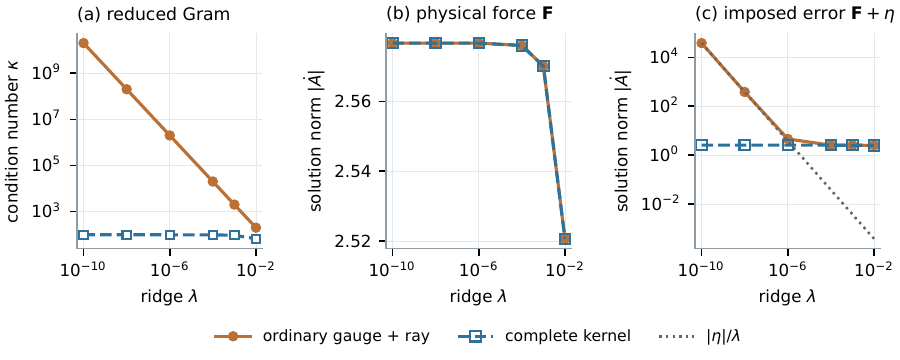}
\caption{Residual singularity and force compatibility on the open
$3\times3$ WGS square grid. (a) Condition number of the reduced Gram
matrix with ridge $\lambda$ after ordinary gauge and ray removal
(orange) or complete kernel removal (blue). In (b,c), $\dot A$ denotes
the tensor-coordinate solution lifted back to the raw parameter space,
with its Euclidean norm. (b) Solution norms for the
compatible force $\boldsymbol F=J_\perp^\dagger(H-\langle H\rangle)\psi$ with
$H=\sum_vZ_v$: both reductions remain bounded and agree to roundoff.
(c) The same force with an imposed error $\eta$ in the residual
non-bond kernel, $\|\eta\|=10^{-6}\|\boldsymbol F\|$: the gauge-only reduction
amplifies its null component as $\|\eta\|/\lambda$ (dotted), while
complete kernel removal eliminates it. The factor $-i$ of real-time
TDVP does not affect the norms. Numerical procedures and independent
checks are described in Appendix~\ref{SM-sec:reproducibility}.}
\label{fig:projectiongap}
\end{figure*}
Removing ordinary bond gauge and the ray leaves exactly $\nu$ zero
modes in the corresponding compression of the Gram matrix
$\mathcal M$ at a WGS point. The projection of Wu and Nys
\cite{wu2025peps} removes independent bond generators and the global
rescaling, which is exact for those directions; at points with
$\nu>0$ the remaining system still contains the $\nu$ non-bond zero
modes, so a Cholesky factorization needs either the complete kernel
removal below or regularization. Their additional
fixed-particle-number rescaling has no counterpart for the spin
systems considered here.

We verify this on the open $3\times3$ qubit square grid with four
chordless squares and independent nondegenerate edge phases, by dense
contraction over all $2^9$ configurations. Of $P=128$ complex
parameters, ordinary gauge and the ray remove $45$; the finite-graph
sector count of Proposition~\ref{prop:quotient} gives $\nu=8$, two
relations per square. The $83$-dimensional reduced Gram has eight
eigenvalues at roundoff scale and its unregularized Cholesky
factorization fails, whereas removing the complete kernel leaves a
$75$-dimensional positive-definite matrix with smallest eigenvalue
$0.020$.

At the exact WGS point the residual zero modes do not amplify a
compatible force. For a physical velocity source $b$ the exact
force is $\boldsymbol F=J_\perp^\dagger b\in\im\mathcal M=(\ker\mathcal M)^\perp$, so
\begin{equation}
\begin{gathered}
P_{\ker\mathcal M}(\mathcal M+\lambda I)^{-1}\boldsymbol F=0,\\
\lim_{\lambda\to0^+}(\mathcal M+\lambda I)^{-1}\boldsymbol F=\mathcal M^+\boldsymbol F,
\end{gathered}
\label{eq:compatibleforce}
\end{equation}
and the same holds in an orthonormal complement of ordinary gauge and
the ray: exact zero modes do not cause a $1/\lambda$ divergence, and
complete kernel removal and a consistent pseudoinverse give the same
physical projection. If an error $\eta$ with a component in the
residual kernel is added to the force, however,
\begin{equation}
P_{\ker\mathcal M}(\mathcal M+\lambda I)^{-1}(\boldsymbol F+\eta)
=\frac{P_{\ker\mathcal M}\eta}{\lambda},
\label{eq:incompatibleforce}
\end{equation}
a component that is invisible to the instantaneous physical map, since
$J_\perp P_{\ker\mathcal M}\eta=0$, and whose effect on a finite parameter step
is not determined by the parameter norm alone.

Figure~\ref{fig:projectiongap} illustrates both cases for
$H=\sum_vZ_v$, which is tangent by Corollary~\ref{cor:support}.
Finite sampling alone does not produce an incompatible force when
metric and force are formed from the same derivative samples (Appendix~\ref{SM-sec:sampledforce}). The same distinction underlies the
treatment of singular quantum geometric tensors in natural-gradient and
minimum-step stochastic-reconfiguration methods
\cite{stokes2020qng,chen2024minsr,rende2024minsr}: a singular metric
can be regularized or inverted on the range, but no such operation
enlarges the image.

\begin{figure}[!tbp]
\centering
\includegraphics[width=\columnwidth]{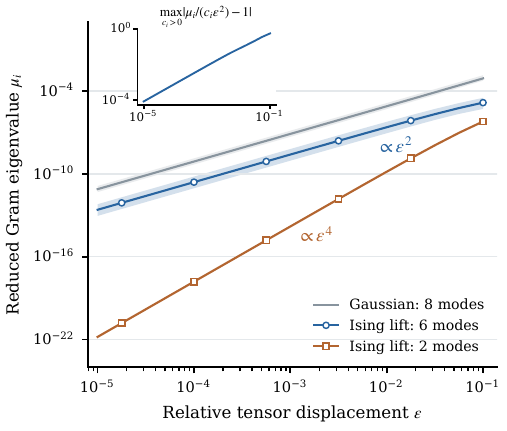}
\caption{Path-dependent opening on the open $3\times3$ WGS square
graph, with the current gauge/ray complement recomputed throughout.
Bands span six quadratic and two numerically quartic modes for the
normalized minimum-norm Ising lift, and eight quadratic modes for one
Gaussian direction (lifts defined in
Appendix~\ref{SM-sec:nearbygramproof}). Lines are geometric means; the inset tests the
six positive Hessian coefficients. The exact $ZZ$ path retains eight
zeros (below $10^{-28}$ numerically). Here $\varepsilon$ is relative
tensor displacement along each
straight line, not physical time.}
\label{fig:nearbygram}
\end{figure}

\paragraph{Lifting the zero modes off the WGS point.}
At $A_0$, the reduced Gram has $\nu$ exact zero modes. Fix the
Euclidean metric on tensor entries and a $C^2$ path
$A(\varepsilon)=A_0+\varepsilon\Delta+O(\varepsilon^2)$ through a
nondegenerate WGS representation, with real $\varepsilon$.
This parameter specifies tensor displacement; the phase-aligned state
distance is $|\varepsilon|\|J_{\perp,A_0}\Delta\|+O(\varepsilon^2)$,
so a first-order conversion requires $J_{\perp,A_0}\Delta\ne0$.
The gauge and ray space $\mathcal G_{A(\varepsilon)}$ has locally
constant dimension, since its WGS rank is maximal. Choose a smooth
orthonormal complement $Q(\varepsilon)$ and set
\begin{equation}
\begin{gathered}
\widetilde J(\varepsilon)=J_{\perp,A(\varepsilon)}Q(\varepsilon),\\
L_\Delta=(I-P_{\mathcal T_{A_0}})\widetilde J'(0)Z_0,
\end{gathered}
\label{eq:openingmap}
\end{equation}
where $Z_0$ spans $\ker\widetilde J(0)$ orthonormally. The projection removes
changes already accessible at $A_0$.
\begin{proposition}[Quadratic opening of the reduced Gram]
\label{prop:nearbygram}
Let $\nu=\dim\ker\widetilde J(0)$ and let $c_1\le\cdots\le c_\nu$ be the
eigenvalues of $L_\Delta^\dagger L_\Delta$. The $\nu$ eigenvalues of
$\widetilde J(\varepsilon)^\dagger\widetilde J(\varepsilon)$ tending to zero satisfy
\begin{equation}
\mu_i(\varepsilon)=c_i\varepsilon^2+O(|\varepsilon|^3).
\label{eq:nearbygram}
\end{equation}
If $\nu>0$ and $L_\Delta$ has full column rank, the reduced condition
number scales as $\varepsilon^{-2}$. If $c_i=0$, the corresponding
eigenvalue is $O(\varepsilon^4)$ and may remain exactly zero.
\end{proposition}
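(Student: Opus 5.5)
This is standard eigenvalue perturbation of a Gram matrix $B(\varepsilon)^\dagger B(\varepsilon)$ around a kernel, with $B(\varepsilon)=\widetilde J(\varepsilon)$. The plan is a Schur-complement reduction onto the near-kernel block.

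First, smoothness. On a neighborhood of $A_0$, $\mathcal G_{A(\varepsilon)}$ has constant dimension by Proposition~\ref{prop:bond} together with the ray. It is the image of a $C^2$ family of injective maps, so $Q(\varepsilon)$ can be chosen $C^2$ by Gram--Schmidt. The map $A\mapsto J_{\perp,A}$ is smooth wherever $\Psi(A)\ne0$. Hence
\[
\widetilde J(\varepsilon)=\widetilde J_0+\varepsilon\widetilde J_1+O(\varepsilon^2)
\]
is $C^2$, and $\widetilde J_0^\dagger\widetilde J_0$ has kernel $Z_0$ of dimension $\nu$. Its other eigenvalues are bounded below by some $\gamma>0$, by Theorem~\ref{thm:star} and the compression argument.

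Next, the block decomposition. Write $\widetilde J(\varepsilon)$ in the orthonormal basis $[Z_0\;Y_0]$ of the complement, with $Y_0\perp Z_0$. Then $\widetilde J(\varepsilon)Z_0=\varepsilon\widetilde J_1Z_0+O(\varepsilon^2)$, while $\widetilde J(\varepsilon)Y_0$ has singular values at least $\sqrt\gamma/2$ for small $\varepsilon$. The small eigenvalues of the Gram matrix are the eigenvalues $\mu$ of the $\nu\times\nu$ Schur complement. Equivalently, via Courant--Fischer, they are the squared singular values of $(I-\Pi_\varepsilon)\widetilde J(\varepsilon)Z_0$ up to higher order, where $\Pi_\varepsilon$ is the orthogonal projector onto $\im\widetilde J(\varepsilon)Y_0$. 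Precisely,
\[
\mu_i(\varepsilon)=\lambda_i\Bigl(Z_0^\dagger\widetilde J^\dagger(I-\Pi_\varepsilon)\widetilde J Z_0\Bigr)+O(\|\widetilde J Z_0\|^2\cdot\|\widetilde J Z_0\|^2/\gamma).
\]
This follows from the standard identity for the Schur complement $C-B^\dagger A^{-1}B$ with $C=Z_0^\dagger\widetilde J^\dagger\widetilde JZ_0$, $B=Y_0^\dagger\widetilde J^\dagger\widetilde JZ_0$, $A=Y_0^\dagger\widetilde J^\dagger\widetilde JY_0$, solving $\det(C-\mu-B^\dagger(A-\mu)^{-1}B)=0$ with $\mu=O(\varepsilon^2)$.

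The key step is identifying the leading term. Since $\im\widetilde J(0)Y_0=\im J_{\perp,A_0}=\T_{A_0}$, we have $\Pi_\varepsilon=P_{\T_{A_0}}+O(\varepsilon)$. Therefore
\[
(I-\Pi_\varepsilon)\widetilde J(\varepsilon)Z_0=\varepsilon(I-P_{\T_{A_0}})\widetilde J'(0)Z_0+O(\varepsilon^2)=\varepsilon L_\Delta+O(\varepsilon^2).
\]
One subtlety needs checking: the $O(\varepsilon)$ correction of $\Pi_\varepsilon$ multiplies an $O(\varepsilon)$ vector, so it contributes only at order $\varepsilon^2$. This yields
\[
\mu_i=\varepsilon^2c_i+O(|\varepsilon|^3),
\]
using Weyl's inequality for the eigenvalues of $\varepsilon^2L_\Delta^\dagger L_\Delta+O(\varepsilon^3)$. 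I expect this bookkeeping of orders to be the main obstacle. In particular, the error term from the Schur complement is $O(\varepsilon^4)$, since $B=O(\varepsilon)$. The remainder of the product expansion must be kept at $O(\varepsilon^3)$; a $C^2$ path suffices because $\widetilde J Z_0=\varepsilon\widetilde J_1Z_0+O(\varepsilon^2)$.

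Finally, the consequences. If $L_\Delta$ has full column rank then $c_1>0$, so the smallest eigenvalue is of order $\varepsilon^2$ while the largest is $O(1)$, giving condition number $\sim\varepsilon^{-2}$. If $c_i=0$, the corresponding direction has $(I-\Pi_\varepsilon)\widetilde JZ_0v=O(\varepsilon^2)$, so its eigenvalue is $O(\varepsilon^4)$. This needs a little more regularity for the $\varepsilon^2$ term, or else a statement as $o(\varepsilon^2)$ refined through the same Schur formula. Exact vanishing can occur, as along a WGS phase path where Theorem~\ref{thm:star} applies at every $\varepsilon$ and $\nu$ is constant.
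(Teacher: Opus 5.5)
Your argument is correct, but it takes a different route from the paper's.

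\textbf{The paper's route.} The paper works with $\widetilde J(\varepsilon)$ itself rather than with its Gram matrix. In the splitting $K_0^\perp\oplus K_0\to R_0\oplus R_0^\perp$, it multiplies on the right by $\exp(\varepsilon X)$ and on the left by $\exp(\varepsilon Y)$. Here $X$ and $Y$ are anti-Hermitian and built from $D^{-1}$ and the first-order off-diagonal blocks. This removes those blocks to first order and leaves $\operatorname{diag}(D+\varepsilon D_1,\varepsilon L_\Delta)+O(\varepsilon^2)$. Weyl's inequality for singular values then gives $s_i=|\varepsilon|\sigma_i(L_\Delta)+O(\varepsilon^2)$. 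Squaring yields $c_i\varepsilon^2+O(|\varepsilon|^3)$ and, when $c_i=0$, the bound $O(\varepsilon^4)$, both in one step.

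\textbf{Your route.} Your Schur-complement reduction instead produces the $\nu\times\nu$ matrix $M^\dagger M$, with $M=(I-\Pi_\varepsilon)\widetilde J(\varepsilon)Z_0$, accurate to $O(\varepsilon^4)$. This has a real advantage: it shows why the cokernel projection appears in $L_\Delta$, since one projects out the current image of the non-kernel block. The cost is that you must handle the $\mu$-dependence of $S(\mu)=C-B^\dagger(A-\mu)^{-1}B$, which you only sketch. A clean way to close this: $S(\mu)$ is Hermitian and nonincreasing in $\mu$ below the spectrum of $A$. Hence the $k$th small eigenvalue is the unique fixed point of $\mu\mapsto\lambda_k(S(\mu))$. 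It therefore differs from $\lambda_k(S(0))$ by at most $\|S(\mu)-S(0)\|=O(\varepsilon^2\mu)=O(\varepsilon^4)$.

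\textbf{Two corrections.}
\begin{enumerate}
\item \emph{The $c_i=0$ case needs no extra regularity.} You correctly noticed that Weyl applied to $M^\dagger M$ only gives $O(\varepsilon^3)$ there. But $M=\varepsilon L_\Delta+O(\varepsilon^2)$ already follows from the $C^2$ hypothesis. Applying Weyl to the singular values of $M$, or min-max over $\ker L_\Delta\subseteq\mathbb C^\nu$, gives $\sigma_i(M)=O(\varepsilon^2)$ and hence $O(\varepsilon^4)$. This is exactly what the paper's singular-value formulation delivers for free.
\item \emph{The condition-number claim needs a lower bound on the largest eigenvalue.} You state only that the largest eigenvalue is $O(1)$. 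What is needed is that it stays bounded away from zero, i.e.\ $\widetilde J(0)\ne0$, so that your $Y_0$ block is nonempty. Your gap constant $\gamma$ tacitly presupposes this. The paper states it explicitly and gives $\widetilde J(\varepsilon)=\varepsilon I$ as a counterexample when it fails. It holds in this setting because $\mathcal T_{A_0}\ne0$.
\end{enumerate}
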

\begin{proof}
Split the domain and codomain by the kernel and image of $\widetilde J(0)$.
Left and right unitary transformations remove the off-diagonal blocks
to first order, leaving small singular values
$|\varepsilon|\sigma_i(L_\Delta)+O(\varepsilon^2)$.
Squaring proves the claim. The block form needs $\widetilde J(0)\ne0$,
which holds because the WGS projective image is nonzero.
Appendix~\ref{SM-sec:nearbygramproof} gives the block calculation.
\end{proof}

\paragraph{Physical paths and their lifts.}
Evolution under edge $ZZ$ and on-site $Z$ terms only changes WGS
edge and local phases, so along this WGS lift $\nu$ stays constant
while the weights are nondegenerate: gauge-only Gram singularity
persists, although compatible forces still admit bounded physical
projections. This does not extend to general many-body diagonal
Hamiltonians. A physical velocity does not fix its lift, and lifts
differing by $z\in\ker J_{\perp,A_0}$ can have different opening
coefficients (Eq.~\eqref{SM-eq:openinghessian}). On the open
$3\times3$ square graph (Fig.~\ref{fig:nearbygram}), the Euclidean
minimum-norm lift of a transverse-field Ising velocity opens six modes
quadratically and two numerically as $\varepsilon^4$, the exact WGS
$ZZ$ path retains all eight zeros, and random Gaussian directions open
all eight quadratically. For a lifted mode, ridge regularization
amplifies an independent force error by $1/(\lambda+\mu_i)$, with
crossover $|\varepsilon|\sim\sqrt{\lambda/c_i}$. Removing current exact
null directions preserves the image; discarding modes that have become
positive changes the projection. Constructions, coefficients, and
protocols are in Appendix~\ref{SM-sec:nearbygramproof}.

\section{Discussion}
The results (i)--(vi) listed in the Introduction separate two questions
that are often merged in PEPS practice. The kernel side concerns coordinates. Its
non-bond part arises from overlapping closed neighborhoods, is local and
explicitly enumerable, and is a rank drop of the parametrization on
all thirty-nine families under the dimension and period conditions of
Theorem~\ref{thm:generic}. It can be removed sector by sector without a
global orthogonal complement; doing so regularizes the reduced linear
system without changing any physical velocity. The image side concerns
expressivity. Honeycomb has no non-bond kernel, yet it misses every
nearest-neighbor $XX$ direction. On the honeycomb and square tori of
Corollaries~\ref{cor:honeycombstate} and~\ref{cor:squarestate}, no
bond-two representation on the same graph reaches the state perturbed
by $e^{-i\varepsilon X_uX_v}$ for $0<|\varepsilon|<\pi/2$. No gauge
fixing, kernel removal, or regularization addresses such an
obstruction. Only an enlarged representation does, and the exact
projector identifies which direction to add.

For the finite-equivalence theory, the pair equivalences with
nonadjacent centers---a diagonal multiplication and its inverse at two
sites---are explicit non-bond equivalences that the semi-injective
theorem~\cite{molnar2018semiinjective} must accommodate, and the
second-order obstruction shows that the ray fiber at a graph state is
not smooth. For computation, graph states supply exactly solvable
benchmarks: zero modes whose opening depends on the tensor path, and
missing directions with a known state-preserving repair. Both are
controlled tests for regularization and bond-expansion strategies in
PEPS evolution.

Natural extensions include unequal physical and bond dimensions,
edge weights with zero entries such as the copy-tensor point $W=I$,
and imposed point-group symmetries, as well as an integrability
theorem for the triangle-based relations and generic completeness of
the six qubit-only families at $q>2$. The face-order formula also holds
for centered-square, beyond the unit-edge hypotheses of
Theorem~\ref{thm:regular} (Appendix~\ref{SM-sec:periodiccatalogue});
characterizing the larger graph class is a combinatorial question.

\section*{Acknowledgments}
The author thanks Yantao Wu and Yi-Yu Lin for valuable discussions, and acknowledges
the use of the language models GPT-6 Astra and GPT-5.6 Sol for
assistance with the details of mathematical proofs, and Claude Opus~5
for assistance with manuscript preparation. All proofs were verified
independently by the author. This work was
supported by the National Natural Science Foundation of China (Grant
No.~12004205 and No.~12574253).

\section*{Code and data availability}
The proof certificates, their independent checkers, and the scripts and
data behind every figure are archived on Zenodo at
\url{https://doi.org/10.5281/zenodo.22934461} (code under the MIT license,
certificates and data under CC BY 4.0), as described in
Appendix~\ref{SM-sec:reproducibility}.

\bibliographystyle{quantum}
\bibliography{gauge,gauge_the_extension}

\clearpage
\appendix
\counterwithin{equation}{section}

\twocolumn[{\begin{center}\sffamily\LARGE Appendices\end{center}
\vspace{-0.6em}\hrule\vspace{1.2em}}]
\phantomsection
\addcontentsline{toc}{section}{Appendices}

\paragraph*{Guide to the appendices.}
The appendices are grouped by purpose. Proofs and supplements for
Secs.~\ref{sec:bond}--\ref{sec:fiber} are in
Appendices~\ref{SM-sec:stabilizerdetails} (bond stabilizers and local
unitaries), \ref{SM-sec:latticeproofs} (regular-polygon and
periodic-cell proofs), \ref{SM-sec:sminimal} (open-region kernels and
minimality), and \ref{SM-sec:fiberdetails} (normality and finite
equivalences). The graph census and its conventions are in
Appendices~\ref{SM-sec:twouniformSM} (2-uniform tilings),
\ref{SM-sec:periodiccatalogue} (eight further periodic graphs), and
\ref{SM-sec:coordinates} (periodic conventions). The Euclidean QGT
spectrum and the physical applications of Sec.~\ref{sec:tdvp} are in
Appendices~\ref{SM-sec:positive_spectrum}
and~\ref{SM-sec:groundresponse}--\ref{SM-sec:triangleexperiment}. The
generic-completeness proofs of Sec.~\ref{sec:generic} are in
Appendix~\ref{SM-sec:genericappendix}, with their integer and routing
certificates in Appendix~\ref{SM-sec:certificates}.
Appendix~\ref{SM-sec:reproducibility} states what is proved
analytically and what is machine-verified, and how the checks are
replayed.

\section{Bond stabilizers and covariance under local physical unitaries}
\label{SM-sec:stabilizerdetails}
\subsection{Nonzero edge weights and compatible bond stabilizers}
Generic stabilizer results provide a related algebraic baseline
\cite{bernardi2023dimension}. Proposition~\ref{prop:bond}
also fixes the gauge rank at every nondegenerate WGS point, where the
physical differential can have a non-bond kernel.
The hypotheses matter here. Invertible weights with
zero entries, such as $W=I$, need not obey the same coefficient
conclusion: with $W=I$ the site tensor is a copy tensor, on which every
diagonal virtual matrix acts as the same diagonal on the physical leg,
so diagonal generators obeying a level-by-level scalar cycle condition
act trivially on all tensors and $\ker\Phi_A$ has dimension
$q(E-N+1)$. This is the mechanism
associated with virtual symmetries of copy tensors. A local tensor
stabilizer must, however, be distinguished from a compatible bond-gauge
stabilizer of the entire network. At a site of degree at least two,
write the factors on two legs as matrices $F_1,F_2$ whose rows are
$f_{s,1},f_{s,2}$. For any invertible diagonal matrix
$D=\operatorname{diag}(\lambda_0,\ldots,\lambda_{q-1})$, choose
\begin{equation}
U_1=F_1^{-1}DF_1,\qquad U_2=F_2^{-1}D^{-1}F_2.
\end{equation}
Then $f_{s,1}U_1=\lambda_s f_{s,1}$ and
$f_{s,2}U_2=\lambda_s^{-1}f_{s,2}$. With identity on the remaining
legs, these actions fix every physical slice even when $D$ is
nonscalar. Thus a nondegenerate WGS tensor can have nontrivial local
multi-leg stabilizers. Proposition~\ref{prop:bond}
uses the additional compatibility of the same bond generator at both
endpoints: the nonzero pairings $W_e(s,t)$ force its eigenvalues to
coincide. It establishes that the network's infinitesimal bond-gauge
stabilizer consists only of scalar cycles, not that each local tensor
has a trivial stabilizer. The non-bond quotient is defined relative
to the image of this compatible bond action.

\subsection{Local physical unitaries}
\label{SM-sec:localunitaries}
The geometric conclusions transport under fixed local physical unitaries.
Let $U=\bigotimes_vU_v$ and let $\mathcal U$ act on each tensor's physical
leg, so $A'=\mathcal U A$. Differentiating
$\Psi(\mathcal U A)=U\Psi(A)$ gives
$J_{\perp,A'}\mathcal U=UJ_{\perp,A}$ and hence
$\T_{A'}=U\T_A$; ranks and non-bond nullities are unchanged.
The Fourier vectors transport as $U(\chi_{S,\boldsymbol k}\psi)$, rather
than by reusing the diagonal-function formula in the rotated basis.
The projection residual is unchanged when the Hamiltonian is also
transported to $UHU^\dagger$, not for an arbitrary fixed Hamiltonian.

\section{Complete 2-uniform census}
\label{SM-sec:twouniformSM}
The twenty 2-uniform tilings have two vertex orbits under their full
Euclidean symmetry groups, rather than necessarily two vertices per
translation cell~\cite{maity2021quotient}. They belong to the same
regular-polygon class as the Archimedean tilings, so
Theorem~\ref{thm:regular} gives their complete WGS census.

For configurations $A,B$ occurring $n_A,n_B$ times in a cell, this means
$\langle n_j\rangle=(n_A n_j(A)+n_B n_j(B))/(n_A+n_B)$,
including $j=33$. The parameter and baseline densities are
\begin{align}
\frac PN&=\frac{n_Aq^{d_A+1}+n_Bq^{d_B+1}}{n_A+n_B},\nonumber\\
\frac gN&=1+\frac{n_A d_A+n_B d_B}{2(n_A+n_B)}(q^2-1).
\label{SM-eq:twoPg}
\end{align}
Together with Table~\ref{SM-tab:twouniform}, these determine the exact
physical rank density $P/N-g/N-\nu/N$ for every $q\ge2$.

We retain the identifiers \texttt{t2001}--\texttt{t2020} of the
Soto S\'anchez--Medeiros e S\'a--de Figueiredo representation of the
Galebach collection~\cite{soto2019acquiring,sotoGalebachData}.
Their integer seeds and two translations in $\mathbb Z[e^{i\pi/6}]$
fix each arrangement, not just its configuration pair. The
configuration columns $A$, $B$ of Table~\ref{SM-tab:twouniform} together
form the Gr\"unbaum--Shephard symbol $[A;B]$ of the tiling, so each
row can be identified independently of the dataset.
Figure~\ref{SM-fig:two_uniform} draws all twenty from these data.
For example, \texttt{t2006} and \texttt{t2007} have the same configurations,
cell ratio, and $\nu/N$, while being different tilings. In contrast,
\texttt{t2003} and \texttt{t2004} have the same pair of configurations
but different ratios, giving qubit densities $6$ and $22/3$.
All twenty have positive WGS non-bond nullity. This statement alone
makes no claim about their generic tensor ranks. The generic comparison is given in Sec.~\ref{sec:generic}.

\begin{table*}[!tbp]
\centering
\caption{Complete twenty-tiling WGS census. Identifiers are those of the
Galebach coordinate dataset. $A,B$ are cyclic configurations (up to
rotation and reflection); $n_A:n_B$ gives actual counts in the supplied
translation cell. The polynomial column is $\nu/[N(q-1)^2]$, valid for
all $q\ge2$. Equal configuration pairs are not merged.}
\label{SM-tab:twouniform}
\small
\centering
\begin{adjustbox}{max width=\textwidth}
\begin{tabular}{lccccc}\toprule
ID & Configuration $A$ & Configuration $B$ & $n_A:n_B$ & $\nu/(Nr^2)$ & $\nu/N$ at $q=2$\\\midrule
\texttt{t2001} & $3.4.6.4$ & $4.6.12$ & $6:12$ & $\frac{2}{9}q+\frac{7}{9}$ & $\frac{11}{9}$\\
\texttt{t2002} & $3.4.3.12$ & $3.12^{2}$ & $4:4$ & $q+\frac{3}{4}$ & $\frac{11}{4}$\\
\texttt{t2003} & $3^{3}.4^{2}$ & $4^{4}$ & $2:2$ & $\frac{1}{2}q^2+q+2$ & $6$\\
\texttt{t2004} & $3^{3}.4^{2}$ & $4^{4}$ & $2:1$ & $\frac{2}{3}q^2+\frac{4}{3}q+2$ & $\frac{22}{3}$\\
\texttt{t2005} & $3.4^{2}.6$ & $3.4.6.4$ & $12:6$ & $\frac{2}{3}q+\frac{4}{3}$ & $\frac{8}{3}$\\
\texttt{t2006} & $3.4^{2}.6$ & $3.6.3.6$ & $4:1$ & $\frac{4}{5}q+\frac{6}{5}$ & $\frac{14}{5}$\\
\texttt{t2007} & $3.4^{2}.6$ & $3.6.3.6$ & $4:1$ & $\frac{4}{5}q+\frac{6}{5}$ & $\frac{14}{5}$\\
\texttt{t2008} & $3^{3}.4^{2}$ & $3.4.6.4$ & $6:6$ & $\frac{1}{2}q^2+\frac{4}{3}q+\frac{5}{3}$ & $\frac{19}{3}$\\
\texttt{t2009} & $3^{2}.4.3.4$ & $3.4.6.4$ & $6:6$ & $\frac{1}{4}q^2+\frac{4}{3}q+\frac{5}{3}$ & $\frac{16}{3}$\\
\texttt{t2010} & $3^{6}$ & $3^{2}.6^{2}$ & $1:6$ & $\frac{6}{7}q^2+\frac{12}{7}q+\frac{6}{7}$ & $\frac{54}{7}$\\
\texttt{t2011} & $3^{2}.6^{2}$ & $3.6.3.6$ & $2:1$ & $\frac{1}{3}q^2+\frac{4}{3}q+\frac{2}{3}$ & $\frac{14}{3}$\\
\texttt{t2012} & $3^{4}.6$ & $3^{2}.6^{2}$ & $2:2$ & $q^2+2q+1$ & $9$\\
\texttt{t2013} & $3^{6}$ & $3^{2}.4.12$ & $2:12$ & $\frac{6}{7}q^2+\frac{12}{7}q+\frac{9}{7}$ & $\frac{57}{7}$\\
\texttt{t2014} & $3^{6}$ & $3^{3}.4^{2}$ & $1:2$ & $\frac{5}{3}q^2+\frac{8}{3}q+2$ & $14$\\
\texttt{t2015} & $3^{6}$ & $3^{3}.4^{2}$ & $2:2$ & $2q^2+3q+2$ & $16$\\
\texttt{t2016} & $3^{3}.4^{2}$ & $3^{2}.4.3.4$ & $4:8$ & $\frac{2}{3}q^2+2q+2$ & $\frac{26}{3}$\\
\texttt{t2017} & $3^{3}.4^{2}$ & $3^{2}.4.3.4$ & $4:4$ & $\frac{3}{4}q^2+2q+2$ & $9$\\
\texttt{t2018} & $3^{6}$ & $3^{2}.4.3.4$ & $1:6$ & $\frac{6}{7}q^2+\frac{16}{7}q+2$ & $10$\\
\texttt{t2019} & $3^{6}$ & $3^{4}.6$ & $2:6$ & $\frac{15}{8}q^2+3q+\frac{3}{2}$ & $15$\\
\texttt{t2020} & $3^{6}$ & $3^{4}.6$ & $6:6$ & $\frac{9}{4}q^2+\frac{10}{3}q+\frac{5}{3}$ & $\frac{52}{3}$\\
\bottomrule
\end{tabular}
\end{adjustbox}\end{table*}

\begin{figure*}[!tbp]
\centering
\includegraphics[width=0.98\textwidth]{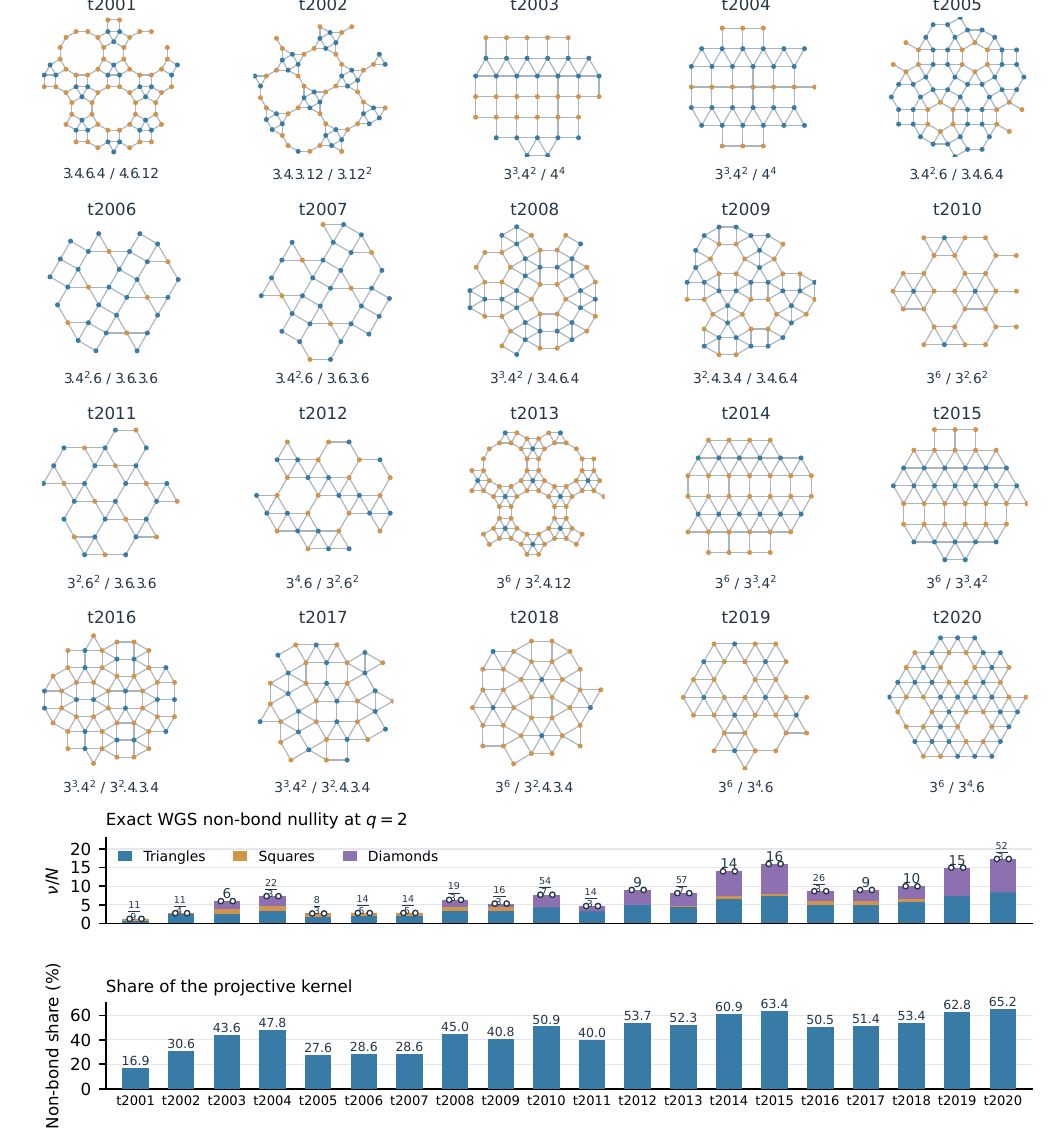}
\caption{Complete census of all twenty 2-uniform tilings,
\texttt{t2001}--\texttt{t2020}, reconstructed from the exact Galebach cell
data and drawn in Galebach order. The $4\times5$ panels show finite
patches; the two vertex colors distinguish the printed configurations.
Cell multiplicities $n_A:n_B$ are listed in
Table~\ref{SM-tab:twouniform}. The two $1\times20$ charts below use one bar
per tiling, aligned with the panel columns: the upper chart gives the
exact WGS non-bond nullity at $q=2$, with stacked independent triangle,
square, and diamond contributions $5F_3/N$, $2F_4/N$, and
$4E_{33}/N$, and open circles giving independent brute-force counts of
the support union on two tori with periods $4\ell_0+1$ and $4\ell_0+3$; the lower chart
gives $100\nu/(g+\nu)$, with $g$ the projective baseline of
Eq.~\eqref{SM-eq:twoPg}. Both charts are exact evaluations of
Theorem~\ref{thm:regular}; drawings are finite patches. Distinct
arrangements sharing a configuration pair are retained.}
\label{SM-fig:two_uniform}
\end{figure*}

\section{Proofs for the regular-polygon and periodic-cell sections}
\label{SM-sec:latticeproofs}
\subsection{Short cycles in unit-edge tilings}
\begin{proof}[Proof of Lemma~\ref{lem:shortcycles}]
(i) The common neighbors of $u\ne v$ lie on both unit circles about $u$
and $v$, which meet in at most two points; a unit-distance $K_4$ would
need a fourth point at unit distance from all three vertices of an
equilateral triangle, which does not exist in the plane.
(ii)--(iii) A three-cycle of unit edges is an equilateral triangle of
area $\sqrt3/4$, and a four-cycle of unit edges is a rhombus of area
$\sin\theta\le1$. Since the tiling is edge-to-edge and noncrossing, the
closed region bounded by either cycle is a union of faces, each a
regular polygon of unit side and hence of area at least $\sqrt3/4$. If
a vertex lay strictly inside the region, the at least three faces
meeting at it would all lie inside the region, with total area at
least $3\sqrt3/4>1$, a contradiction. So the region contains no
interior vertex and is tiled by faces whose corners are cycle vertices
only. For the triangle this leaves a single triangular face. For the
rhombus, a regular $m$-gon of unit side with $m\ge5$ has area greater
than one and a square has area exactly one, so the region is either one
square face or a union of triangular faces; four corners admit exactly
two triangles, which share a diagonal of unit length.
\end{proof}

\subsection{Motif identification on the square and kagome lattices}
The following completes the proof of Corollary~\ref{cor:lattices} for the two remaining Archimedean
examples treated there.
On the square lattice, two nonadjacent centers can share two neighbors
only as the opposite vertices of one unit square. Their two-neighbor
support gives Eq.~\eqref{eq:square}; all other intersections are already
covered by ordinary gauge.

On kagome, adjacent centers share precisely one third vertex. Their
intersection is a triangle. Nonadjacent centers share at most one
neighbor, so Eq.~\eqref{eq:triangle} is complete. 
\subsection{Translation-tied tensors}
\begin{proof}[Proof of Corollary~\ref{cor:tied}]
The translation-compatible factors at the $\Gamma$-invariant base point
make the local Fourier isomorphisms equivariant. Thus, in the coordinates
of Eq.~\eqref{eq:sector}, tying sets
$x_{\gamma v,\gamma S}^{\boldsymbol k}=x_{v,S}^{\boldsymbol k}$, so every
translate of a support receives the same coefficient
$\sum_{v\in C_S}x_{v,S}^{\boldsymbol k}$; the image is spanned by the
orbit sums, one for each of the $\kappa(q)$ orbits of nonempty
$(S,\boldsymbol k)$ plus the constant. A finite nonempty support has
trivial stabilizer on a locally faithful torus, so the $m_S$ tied
coordinates $x_{v,S}^{\boldsymbol k}$, $v\in C_S$, are distinct
variables and each orbit kernel is again the sum-zero hyperplane of
dimension $m_S-1$. The Vandermonde argument of
Proposition~\ref{prop:bond} is local and applies verbatim to tied
generators, giving $X_e=a_eI$ on each edge orbit with $a$ in the cycle
space of the quotient multigraph on $V_0$, of dimension $e-n+1$.
Subtracting $\rank\Phi^\Gamma_A+1$ and $\rank J^\Gamma_{\perp,A}$ from
$\dim\mathcal P_G^\Gamma=\sum_\alpha q^{d_\alpha+1}$ gives $\nu/M$.
\end{proof}

\section{Eight further periodic connectivity graphs}
\label{SM-sec:periodiccatalogue}
\subsection{Further periodic connectivity examples}
\label{SM-sec:hardwaregraphs}
Table~\ref{SM-tab:periodic} and Fig.~\ref{SM-fig:periodic} extend the calculation
to mixed-degree, non-regular-face, and crossing-edge graphs. Edges refer
to interaction connectivity; a crossing without a vertex does not
introduce a tensor. Here centered-square means that all square edges
are retained and a center is connected to all four corners (Union Jack).
Dice has degree-six hubs and degree-three rim vertices. The checkerboard
graph is $L(\text{square})$, where $L$ denotes the line graph; shuriken
is $L(\text{square-octagon})$. The king graph has all nearest square
edges and both diagonals of every square, without diagonal-crossing
vertices. These definitions remove ambiguities associated with names
such as ``face-centered square.''
The examples separate three mechanisms: complete subdivision removes
all triangle and four-cycle relations; line graphs create incident-edge
cliques; and dice, centered-square, and king test mixed coordination,
non-regular faces, or crossing edges. Their selection is structural,
without claiming to classify hardware layouts under a size or degree cut.

\begin{table*}[!tbp]
\centering
\caption{Eight further finite-cell periodic graphs. Degree notation $d^m$
means $m$ cell vertices of degree $d$, so $P/N=\sum_d m_dq^{d+1}/n$ and
$g/N=1+(e/n)(q^2-1)$. The nullities hold for all $q\ge2$, with $r=q-1$.
These rows specify eight examples outside the two regular-polygon
catalogues; they do not exhaust periodic graphs. Expanding in $r$ gives
coefficients by character-support size $|S|$, not by minimal-region
size; overlapping minimal-region dimensions must not be added.}
\label{SM-tab:periodic}
\small
\centering
\begin{adjustbox}{max width=\textwidth}
\begin{tabular}{lccclc}\toprule
Graph & $n$ & $e$ & Degrees/cell & $\nu/N$ & $q=2$\\\midrule
Lieb / heavy-square &3&4&$4^1,2^2$&$0$&0\\
Heavy-hex &5&6&$3^2,2^3$&$0$&0\\
Heavy-square-octagon &10&12&$3^4,2^6$&$0$&0\\
Dice &3&6&$6^1,3^2$&$2r^2$&2\\
Centered-square / Union Jack &2&6&$8^1,4^1$&$r^2(3q^2+4q+2)$&22\\
Checkerboard &2&6&$6^2$&$r^2[3(q+1)^2+2]/2$&$29/2$\\
Shuriken / square-kagome &6&12&$4^6$&$r^2(4q+3)/3$&$11/3$\\
King / full-diagonal square &1&4&$8^1$&$20r^2+36r^3+29r^4+12r^5+2r^6$&99\\
\bottomrule
\end{tabular}
\end{adjustbox}
\end{table*}

Centered-square and triangular have the same nullity polynomial, although
Theorem~\ref{thm:regular} does not apply to centered-square's non-regular
triangles. Its two-vertex cell has extra-sector coefficients
$18,20,6$ at support sizes $2,3,4$, respectively, and zero at larger
sizes. Thus Theorem~\ref{thm:cell} gives
$\nu/N=9r^2+10r^3+3r^4=r^2(3q^2+4q+2)$.
There is a useful distinction from the regular-tiling proof: the four
corners $a,b,c,d$ of an original square form a chordless cycle even
though its interior contains a center $h$. On support $\{a,c\}$ the
centers are $b,d,h$, and
$z_{\{a,c\};b,d}=z_{\{a,c\};b,h}-z_{\{a,c\};d,h}$.
The square relation is therefore already generated by the two diamond
relations involving $h$. Agreement with the face-density polynomial in
this example does not remove the geometric hypotheses of
Theorem~\ref{thm:regular}.

\begin{figure*}[!tbp]
\centering
\includegraphics[width=\textwidth]{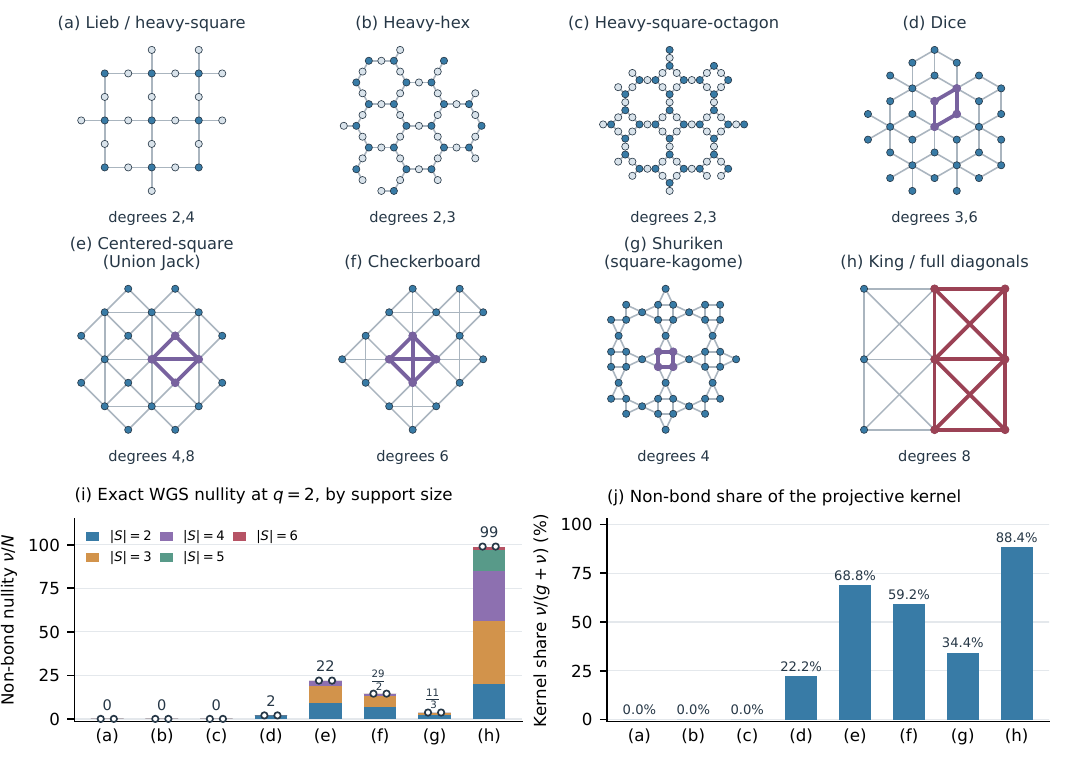}
\caption{Eight additional periodic connectivity graphs and their exact
WGS counts. Colored regions illustrate a largest minimal region present;
the first three graphs have none. Edge crossings without dots are not
vertices; in the king panel the six-vertex join includes such crossings. Stacks separate powers
$r^{|S|}$ in the non-bond-kernel polynomial at $q=2$, rather than summing
potentially dependent minimal regions. Open circles are independent
brute-force counts of the support union on $5\times5$ and $7\times7$ tori; all edge shifts in these
cell conventions have absolute value at most one. Panel (j) shows
$\nu/(g+\nu)$ with $g$ including the ray.}
\label{SM-fig:periodic}
\end{figure*}

Beyond the regular-polygon class, examples in Appendix~\ref{SM-sec:periodiccatalogue} exhibit $K_4$ and the joins
$K_2\vee(K_2\sqcup K_1)$ and $K_2\vee(K_2\sqcup K_2)$; the join connects
each of the first two vertices to every vertex of the second factor.
Their intrinsic polynomials are, respectively,
\begin{equation}
4r^3+3r^4,\qquad q^2r^3,\qquad q^2r^4.
\label{SM-eq:newminimal}
\end{equation}
Checkerboard and king contain $K_4$; king also contains the five- and
six-vertex joins. These regions have genuinely new open-leg relations
beyond all their proper subregions. 
\subsection{Subdivision and line-graph constructions}
The preceding catalogue can be extended by graph operations. Complete
edge subdivision gives infinitely many families with zero non-bond
nullity; line graphs have an explicit count in terms of the original
graph.

\begin{proposition}[Subdivisions and line graphs]
\label{SM-prop:operations}
Subdividing \emph{every} edge of a connected simple graph at least once
produces a graph with no WGS non-bond kernel. Ordinary gauge is also
complete at generic tensors on each such finite graph.
For a connected triangle-free simple graph $H$ with at least one edge,
\begin{align}
\nu_{L(H)}={}&\sum_{v\in V(H)}\Bigl[(d_v-1)(q^{d_v}-1)\nonumber\\
&\qquad-\binom{d_v}{2}(q^2-1)\Bigr]\nonumber\\
&+2C_4(H)(q-1)^2,
\label{SM-eq:linegraph}
\end{align}
where $C_4(H)$ counts unoriented four-cycles once. The periodic version
uses cell densities on locally faithful quotients.
\end{proposition}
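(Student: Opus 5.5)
The plan is to read both claims off Proposition~\ref{prop:quotient}. At a nondegenerate WGS point the non-bond quotient splits into character sectors, and a nonempty support $S$ contributes $0$ if $|S|=1$, $m_S-2$ if $S$ is an edge, and $m_S-1$ otherwise, for each of its $r^{|S|}$ frequency assignments, with $r=q-1$. So everything reduces to computing the center sets $C_S=\{v:S\subseteq N[v]\}$ in each graph.

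\emph{Subdivisions.} Let $G'$ be obtained by subdividing every edge of $G$ at least once. Since $G$ is simple, every cycle of $G$ has length at least $3$, so every cycle of $G'$ has length at least $6$. In particular $G'$ has no triangles and no four-cycles. Consider two distinct centers $u,v$ of a common support.
\begin{itemize}
\item If $u,v$ are adjacent, then $N[u]\cap N[v]=\{u,v\}$, because a common neighbour would close a triangle.
\item If $\operatorname{dist}(u,v)=2$, the intersection is a single common neighbour, because two common neighbours would close a four-cycle.
\item If $\operatorname{dist}(u,v)\ge3$, the intersection is empty.
\end{itemize}
Hence every support with $m_S\ge2$ is either a singleton or an edge. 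An edge support $\{u,v\}$ has $C_S=\{u,v\}$, since any further center would close a triangle, so it contributes $m_S-2=0$. Singletons contribute $0$. Therefore $\nu_{\rm WGS}=0$.

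Generic completeness then follows as in the first row of Theorem~\ref{thm:generic}. The WGS point attains the upper bound $\rank J_A=P-g_0$, and Proposition~\ref{prop:bond} gives $\rank\Phi_A=g_0$. A nonzero maximal minor of $J_A$ therefore defines a nonempty Zariski-open set on which $\ker J_A$ has dimension $g_0$. Since $\im\Phi_A\subseteq\ker J_A$ and generically $\rank\Phi_A=g_0$, the two spaces coincide there.

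\emph{Line graphs: geometry of $L(H)$.} Let $H$ be triangle-free. Each vertex $v$ of $H$ gives a clique $Q_v$ of size $d_v$ in $L(H)$, and for an edge $e=xy$ of $H$ we have $N[e]=Q_x\cup Q_y$ with $Q_x\cap Q_y=\{e\}$. The key step, and the one I expect to need the most care, is determining every center set. Triangle-freeness is used as follows.
\begin{itemize}
\item Let $S\subseteq Q_v$ with $|S|\ge2$, containing $f=va$ and $f'=vb$. An edge $g$ adjacent to both but not incident to $v$ would have to be $ab$, which closes the triangle $vab$. Hence $C_S=Q_v$ and $m_S=d_v$.
\item Let $S\subseteq N[e]$ contain $f=xa\in Q_x\setminus\{e\}$ and $h=yb\in Q_y\setminus\{e\}$. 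Then $a\ne b$, again because $H$ is triangle-free. Any center of $S$ is adjacent to (or equal to) both $f$ and $h$, which leaves only $e$ and the edge $g=ab$. The edge $g$ exists exactly when $xaby$ is a four-cycle of $H$.
\item Since $g$ is not adjacent to $e$, a second center occurs only for $S=\{f,h\}$. This is a nonedge pair with $m_S=2$, so it contributes $r^2$.
\item Each unoriented four-cycle of $H$ is a chordless four-cycle in $L(H)$. It yields exactly two such supports, one for each pair of opposite edges, giving the term $2C_4(H)(q-1)^2$.
\item I would verify that no support lies in two different cliques: this would require two vertices of $H$ sharing two edges, which is impossible in a simple graph. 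So the two families of supports are disjoint.
\end{itemize}

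\emph{Line graphs: the count.} For the clique $Q_v$, sum over subsets with $|S|\ge2$:
\[
(d_v-1)\bigl[(1+r)^{d_v}-1-d_vr\bigr]-\binom{d_v}{2}r^2 .
\]
The last term subtracts one direction for each edge support, matching the $m_S-2$ rule. Rewriting with $(1+r)^{d_v}=q^{d_v}$ and $d_v(d_v-1)r+\binom{d_v}{2}r^2=\binom{d_v}{2}(q^2-1)$ gives exactly the vertex term of Eq.~\eqref{SM-eq:linegraph}. Adding the four-cycle term gives the formula.

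For the periodic version, the same support and center-set computation holds on locally faithful quotients, and I would apply Theorem~\ref{thm:cell} to obtain the corresponding cell densities.
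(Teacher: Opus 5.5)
Your proposal is correct and follows essentially the same route as the paper. Both arguments reduce everything to the sector count of Proposition~\ref{prop:quotient}. For subdivisions, both use the absence of triangles and four-cycles together with the maximal-minor argument at the WGS point. For $L(H)$, both identify the incident-edge cliques $Q_v$ and the two opposite-edge pairs of each base four-cycle as the only supports with more than one center that are not ordinary edges. Your version spells out the center-set computations that the paper states in one line each, including the disjointness of the clique sectors and the closed-form simplification of the per-clique sum.
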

\begin{proof}
Complete edge subdivision leaves no triangles or four-cycles. Adjacent
closed neighborhoods then intersect only in their endpoints, and
nonadjacent ones in at most one vertex. Proposition~\ref{prop:quotient}
gives $\nu=0$. The WGS point realizes the maximal possible differential
rank $P-(g-1)$; a nonzero minor remains nonzero on a Zariski-open set,
where the bond-gauge rank has the same value. Thus generic completeness
also follows, without a blocking construction.

In $L(H)$, the edges at a base vertex of degree $d$ form a clique $K_d$.
Triangle-freeness makes the common closed neighborhood of adjacent line
vertices precisely this incident-edge clique. Its independent additional
kernel is $(d-1)(q^d-1)-\binom d2(q^2-1)$, obtained from the full-support
union on $K_d$. Distinct such cliques meet in at most one vertex, so these
non-bond sectors do not overlap. Two nonadjacent line vertices are disjoint
base edges. They have at most two common neighbors; two occur exactly
when they are opposite in a base four-cycle. Each cycle supplies two
nonedge-pair sectors of size $(q-1)^2$, proving the formula.
\end{proof}
Thus Lieb/heavy-square, heavy-hex, and heavy-square-octagon have zero
additional WGS and generic kernel. Subdivision and line-graph pairs also
arise in routing between hardware connectivity and simulated interactions
\cite{kattemolle2023routing}; their kernel counts differ because the
underlying graphs differ.

\subsection{Generic completeness of the eight connectivity graphs}
The generic comparison for these graphs is listed in the last block of
Table~\ref{SM-tab:generic}. Seven families have zero generic non-bond
nullity for every $q$ under the listed conditions, and centered-square
has it at $q=2$; Equation~\eqref{SM-eq:genericbounds} bounds its
unresolved value at $q>2$, with the WGS upper bound in
Table~\ref{SM-tab:periodic} on locally faithful quotients.
Centered-square uses the cell convention of
Appendix~\ref{SM-sec:open5qubits}, and dice the integer cells of
Appendix~\ref{SM-sec:generic39proof}. King is established by the row
construction in Appendix~\ref{SM-sec:kingrows}. The proof dependencies
are in Table~\ref{SM-tab:proofdependencies}.

\subsection{Which graph represents a toric-code layout?}
In the square toric code, data qubits occupy edges, while star and
plaquette checks occupy vertices and faces~\cite{kitaev2003anyons}.
There are several associated two-body graphs. Connecting a data qubit
to its incident star ancillas alone gives the subdivided square (Lieb)
graph. Including both star and plaquette ancillas gives an ordinary
square graph with half the original lattice spacing. Connecting pairs
of data qubits sharing a star gives the checkerboard line graph. Connecting
pairs sharing either a star or a plaquette gives the king graph after a
rotation and rescaling. Thus the corresponding qubit WGS densities are
$0,2,29/2,99$, respectively, each normalized by the vertices of its
\emph{own} interaction graph. On a square code torus the last identification
has skew periods $(L,L)$ and $(L,-L)$ in king coordinates; it is not an
identification with a rectangular $L\times L$ king torus.

These are statements about nondegenerate WGS tensor representations on
specified connectivity graphs. They do not identify the toric-code state
with a WGS state on its local check graph: graph-state representations of
toric code can require nonlocal connectivity~\cite{liao2021toric}.
Likewise, the centered-square row concerns pairwise WGS edges, not the
three-qubit controlled-phase resource often associated with the Union
Jack lattice~\cite{miller2016hierarchy}.

\section{Periodic graph conventions}
\label{SM-sec:coordinates}
Square uses nearest-neighbor edges of $\mathbb Z^2$. Triangular uses
steps $\pm(1,0),\pm(0,1),\pm(1,1)$. Honeycomb has two vertices
$A_{xy},B_{xy}$ per cell and edges from $A_{xy}$ to
$B_{xy},B_{x-1,y},B_{x,y-1}$. Kagome is its line graph; an equivalent
three-site convention is specified in Appendix~\ref{SM-sec:m28generic}.
Periods identify cell coordinates without identifying tensor parameters.

The star, square-octagon, and cross constructors truncate honeycomb,
square, and kagome, respectively: replace each source vertex by the
cyclic polygon of its incident half-edges and retain each original
edge between its two half-edges. Ruby uses the vertex--face flags of
honeycomb: the three flags at a vertex form a triangle, and the six
flags around a face form a hexagon. This yields intervening squares.

For elongated triangular, use horizontal and vertical edges on
$\mathbb Z^2$ and add $(x,y)(x+1,y+1)$ whenever $y$ is even; one family of
bulk quotients has periods $(s,2s)$. For snub-square use the same
grid with periods $(2s,2s)$ and add one diagonal in each square with
even $x+y$: $(x,y)(x+1,y+1)$ for even $x$, and
$(x+1,y)(x,y+1)$ for odd $x$. These integer-grid drawings are
combinatorial realizations, not regular unit-edge embeddings.
For maple-leaf use triangular-lattice steps $(1,0),(0,1),(1,1)$,
and remove vertices with $x+2y=0\pmod7$.
For the generic theorem use the vacancy-cell period vectors
$L_1(2,-1),L_2(1,3)$; square periods divisible by seven also define
valid quotient graphs for the finite-graph WGS formula. The regular embedding has coordinates
$(x-y/2,\sqrt3y/2)$. In the maple-leaf TDVP scan, the construction-size
label $s$ instead selects square coordinate periods
$L_x=L_y=7(s-5)$, giving $N=42(s-5)^2$. Thus the displayed sizes
$s=7,9,11$ have $N=168,672,1512$, respectively; $s$ is not the
vacancy-cell period $L_1$ or $L_2$ in this scan.

In the lattice classification a quotient must preserve the relevant
closed-neighborhood intersections, including nonfacial short cycles.
If it does not, the finite-graph theorem still applies directly.
The stronger generic period bounds are stated separately in
Tables~\ref{tab:genericsummary} and~\ref{SM-tab:generic}, and are not needed for the arbitrary-graph
WGS theorem.

\section{Generic completeness: injective blocks and routing}
\label{SM-sec:genericappendix}
Table~\ref{SM-tab:generic} collects the generic-completeness results for
all thirty-nine families: thirty-three for every $q\ge2$, and six
at $q=2$, under the listed period conditions. Theorem~\ref{thm:generic} covers every family;
Table~\ref{SM-tab:proofdependencies} maps each to its proof ingredients.

\begin{table*}[!tbp]
\centering
\caption{Generic non-bond nullity $\nu_{\rm generic}$ for the eleven
Archimedean tilings, the twenty 2-uniform tilings, and eight further
connectivity graphs (Appendix~\ref{SM-sec:periodiccatalogue}), with
independent $D=d=q\ge2$ tensors. Zero entries hold for every
integer $q\ge2$ under the listed sufficient conditions, except for the
six families restricted to $q=2$. Thus all thirty-nine families are
settled at $q=2$; ``Open'' refers only to the six unresolved $q>2$
extensions. Equation~\eqref{SM-eq:genericbounds} applies to every row.
Coordinate conventions are those of Appendix~\ref{SM-sec:coordinates};
kagome and ruby periods count honeycomb source cells, maple-leaf
periods count vacancy cells, shuriken and checkerboard periods count
original cells of their $2\times2$ supercells, dice uses the integer
cells of Appendix~\ref{SM-sec:generic39proof}, and centered-square the
cell convention of Appendix~\ref{SM-sec:open5qubits}. The proof mechanism
names the instance of Theorem~\ref{SM-thm:template} used; all
constructions and certificates are in Appendix~\ref{SM-sec:genericappendix},
with the proof-dependency map in Table~\ref{SM-tab:proofdependencies}.}
\label{SM-tab:generic}
\footnotesize
\setlength{\tabcolsep}{4pt}
\centering
\begin{adjustbox}{max width=\textwidth}
\begin{tabular}{lp{2.2cm}p{5.0cm}p{4.2cm}}\toprule
Graph & $\nu_{\rm generic}$ & Sufficient conditions & Proof mechanism\\
\midrule
\multicolumn{4}{l}{\emph{Eleven Archimedean tilings}}\\
Square & $0$ & $L_x,L_y\ge20$ & Injective partitions\\
Honeycomb & $0$ & $L_x,L_y\ge5$ & WGS rank witness\\
Triangular & $0$ ($q=2$); Open ($q>2$) & $6\mid L_x,L_y\ge54$ or $19\mid L_x,L_y\ge57$; also $4{\times}3$, $4{\times}4$ & Hexagonal blocks; exact minors\\
Kagome & $0$ & $L_x,L_y\ge64$ & Injective partitions\\
Star & $0$ & Intact disjoint triangles; simple coarse graph & Injective blocking\\
Square-octagon & $0$ & Intact disjoint squares; simple coarse graph & Injective blocking\\
Cross & $0$ & Intact disjoint squares; simple coarse graph & Injective blocking\\
Ruby & $0$ & $L_x,L_y\ge64$ & Triangle blocking and splitting\\
Elongated triangular & $0$ & $L_x\ge20$, even $L_y\ge40$ & Dimer blocking and splitting\\
Snub-square & $0$ & Even $L_x,L_y\ge48$ & Injective partitions\\
Maple-leaf & $0$ & Period vectors $L_1(2,-1),L_2(1,3)$; $L_1,L_2\ge40$ & Triangle blocking and splitting\\
\midrule
\multicolumn{4}{l}{\emph{Twenty 2-uniform tilings}}\\
\texttt{t2001} & $0$ & $L_x,L_y\ge32$ & Triangle and singleton blocking\\
\texttt{t2002}--\texttt{t2005}, \texttt{t2011} & $0$ & $L_x,L_y\ge32$ & Triangle/dimer blocking and splitting\\
\texttt{t2006}, \texttt{t2007} & $0$ & $L_x,L_y\ge32$ & Bowtie blocking and splitting\\
\texttt{t2008} & $0$ ($q=2$); Open ($q>2$) & $L_x,L_y\ge3$ & Injective block; exact minors\\
\texttt{t2009}, \texttt{t2012}, \texttt{t2020} & $0$ & $L_x,L_y\ge32$ & Diamond blocking and splitting\\
\texttt{t2010}, \texttt{t2013} & $0$ & $L_x,L_y\ge3$ & Wheel blocking\\
\texttt{t2014} & $0$ ($q=2$); Open ($q>2$) & $L_x,L_y\ge32$ & Normal partitions; exact minors\\
\texttt{t2015} & $0$ ($q=2$); Open ($q>2$) & $L_x,L_y\ge32$ & Normal partitions; exact minors\\
\texttt{t2016}--\texttt{t2018} & $0$ & $L_x,L_y\ge30$ & Open-region partitions\\
\texttt{t2019} & $0$ ($q=2$); Open ($q>2$) & $3\mid L_x,L_y\ge24$ & Normal partitions; exact minors\\
\midrule
\multicolumn{4}{l}{\emph{Eight further connectivity graphs}}\\
Lieb, heavy-hex, heavy-square-octagon & $0$ & Complete subdivision of a finite simple graph & WGS rank witness\\
Dice & $0$ & $L_x,L_y\ge30$ & Open-region partitions\\
Checkerboard & $0$ & Even $L_x,L_y\ge64$ & $K_4$ blocking and splitting\\
Shuriken & $0$ & Even $L_x,L_y\ge64$ & Triangle blocking and splitting\\
Centered-square & $0$ ($q=2$); Open ($q>2$) & $2\mid L_x\ge40$, $6\mid L_y\ge42$, or $x\leftrightarrow y$ & Twelve-site blocks; exact minors\\
King & $0$ & $L_x\ge4$, $L_y\ge18$, or $x\leftrightarrow y$ & Row blocking; analytic splitting\\
\bottomrule
\end{tabular}
\end{adjustbox}
\end{table*}

\subsection{Generic rank bounds}
\begin{proposition}[Generic rank bounds]
\label{SM-prop:genericbounds}
For every finite simple connected graph and fixed integer $q\ge2$, put
$g_0=E(q^2-1)+N-1$. On a nonempty Zariski-open subset,
\begin{gather}
\rank\Phi_A=g_0,\qquad
\rank J_A=P-g_0-\nu_{\rm generic},\nonumber\\
\max\{0,P-g_0-q^N\}\le\nu_{\rm generic}\le\nu_{\rm WGS}.
\label{SM-eq:genericbounds}
\end{gather}
Moreover, the affine non-bond nullity at any tensor assignment is
at least $\nu_{\rm generic}$, and
\begin{equation}
\rank J_{\rm generic}-\rank J_{\rm WGS}
=\nu_{\rm WGS}-\nu_{\rm generic}.
\label{SM-eq:generalrankdrop}
\end{equation}
\end{proposition}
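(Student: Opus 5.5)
We treat this as a semicontinuity statement: generic values of ranks are maximal values, reached on Zariski-open sets. Proposition~\ref{prop:bond} already gives $\rank\Phi_A=g_0$ on a nonempty open set, since the WGS point attains the upper bound of Eq.~\eqref{eq:cycle}. Let $\rho$ be the maximal rank of $J_A$ over the irreducible space $\mathcal P_G$. The locus where this rank is attained is the nonvanishing locus of some $\rho\times\rho$ minor, so it is a nonempty Zariski-open set. On the intersection of the two open sets, which is still nonempty because $\mathcal P_G$ is irreducible, we have $\ker J_A\supseteq\im\Phi_A$ with $\dim\im\Phi_A=g_0$. We then define $\nu_{\rm generic}=\dim\ker J_A-g_0=P-\rho-g_0$. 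This is the affine quotient $\ker J_A/\im\Phi_A$, which by Eq.~\eqref{eq:affineprojective} agrees with the projective nullity. This gives the first line of Eq.~\eqref{SM-eq:genericbounds}.

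For the upper bound we evaluate at a WGS point, where $\rank\Phi_A=g_0$ and, by Theorem~\ref{thm:star}, $\rank J_{\rm WGS}=K_q(G)=P-g_0-\nu_{\rm WGS}$. Maximality gives $\rho\ge\rank J_{\rm WGS}$, hence $\nu_{\rm generic}\le\nu_{\rm WGS}$. Subtracting the two rank expressions yields Eq.~\eqref{SM-eq:generalrankdrop} directly, since both gauge ranks equal $g_0$.

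For the lower bound we use $\rho\le q^N$, the codomain dimension, so $\nu_{\rm generic}=P-g_0-\rho\ge P-g_0-q^N$; nonnegativity holds because it is a dimension.

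For the statement at an arbitrary tensor assignment $A$, we bound the affine nullity there, $\dim\ker J_A-\rank\Phi_A$. Two semicontinuity facts hold at every point: $\rank J_A\le\rho$, and $\rank\Phi_A\le g_0$ by Eq.~\eqref{eq:cycle}. Combining them gives
\[
\dim\ker J_A-\rank\Phi_A\ge(P-\rho)-g_0=\nu_{\rm generic}.
\]

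The only delicate point is this last step. Its direction works precisely because both ranks can only drop at special points, and the gauge-rank drop helps rather than hurts. One should also check that ``affine non-bond nullity'' is meant as $\dim(\ker J_A/\im\Phi_A)$ even where $\Phi_A$ loses rank. With that reading the inequality is immediate and no further argument is needed.
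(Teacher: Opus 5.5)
Your proposal is correct and follows the paper's proof essentially step for step. Both intersect the maximal gauge-rank and maximal Jacobian-rank Zariski-open sets, bound the maximal rank of $J$ above by gauge inclusion and $q^N$ and below by evaluation at a WGS point, and obtain the pointwise inequality from the fact that both $\rank J_A$ and $\rank\Phi_A$ can only drop away from their generic values.
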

\begin{proof}
Proposition~\ref{prop:bond} gives the maximal gauge rank $g_0$ on a
nonempty open set. Let $R$ be the maximal rank of $J$; an attaining
minor defines another nonempty open set, and irreducibility of the
affine parameter space makes their intersection nonempty. There
$\nu_{\rm generic}=P-R-g_0$. Gauge inclusion and the physical
dimension bound give $R\le\min(P-g_0,q^N)$, while evaluation at a
nondegenerate WGS point gives $R\ge P-g_0-\nu_{\rm WGS}$,
proving both displayed statements.
At any other point both ranks are bounded above by their respective
maxima, so $P-\rank J_A-\rank\Phi_A\ge\nu_{\rm generic}$.
\end{proof}
For a locally faithful torus with $M$ cells, Eq.~\eqref{eq:cellrank}
makes the upper bound explicit:
\begin{equation}
\begin{split}
\nu_{\rm generic}\le{}&M\Bigl[
\sum_{\alpha\in V_0}q^{d_\alpha+1}\\
&-\kappa(q)-e(q^2-1)-n\Bigr].
\end{split}
\label{SM-eq:genericcellbound}
\end{equation}
Table~\ref{tab:counts} and Tables~\ref{SM-tab:twouniform} and
\ref{SM-tab:periodic} therefore bound $\nu_{\rm generic}$ from above
for every family studied here. For short periodic quotients
use the finite-graph WGS value in Eq.~\eqref{SM-eq:genericbounds} instead.

\subsection{From finite equivalence to the generic differential}
\label{SM-sec:genericmechanism}
The normal-PEPS theorem enters through a statement about fibers.
A finite-equivalence theorem alone need not fix a differential at a
special point: $x\mapsto x^3$ has a single-point fiber at zero but
vanishing derivative there. The following generic argument supplies
the required regularity.

We apply Lemma~\ref{lem:fiber} to the original tensor parameter space.
Its conclusion, Eq.~\eqref{eq:genericfiber}, supplies the required
generic differential rank once the gauge-fiber hypothesis is established.

For direct applications of the normal-PEPS theorem, both of its
partition requirements must be checked~\cite{molnar2018normal}. For
every edge $uv$, three nonempty injective blocks must separate $u$ and
$v$, with $uv$ the only edge between their two blocks. For every vertex
$v$, there must be a region $R$ excluding $v$ such that $R$,
$R\cup\{v\}$, and both complements are injective. When blocking creates
composite virtual legs, coarse gauge completeness must also be lifted to
the original edges. The following theorem isolates what any such
argument must supply.

\begin{theorem}[Generic completeness by blocking]
\label{SM-thm:template}
Let $G$ be a finite simple connected graph with independent
$D=d=q$ tensors, and suppose the following data exist:
\begin{enumerate}
\item[(i)] a partition of $V$ into connected blocks, inducing a coarse
graph whose bonds group one or more original edges;
\item[(ii)] for every coarse edge and every coarse vertex, the
partitions required by the normal-PEPS theorem on the coarse graph,
each of whose regions is certified injective by a routing witness on
the original $q$-dimensional legs;
\item[(iii)] generic internal tangent completeness of each block: the
kernel of the block's own open differential is its internal bond gauge
on a nonempty open set;
\item[(iv)] for every composite coarse bond, a detector: quotient maps
under which all single-site variations and all other coarse-bond
actions vanish, together with an all-$q$ witness showing that the
target generator lies in a sum of operator spaces supported on single
original sites; and
\item[(v)] a common-refinement argument showing that the two endpoint
site partitions of each coarse bond intersect exactly in the original
single-edge operator algebras.
\end{enumerate}
Then $\ker J_A=\im\Phi_A$ and
$\rank J_A=P-[E(q^2-1)+N-1]$ on a nonempty Zariski-open subset of
$\mathcal P_G$.
\end{theorem}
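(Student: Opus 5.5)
The plan is to reduce everything to Lemma~\ref{lem:fiber}. Hypotheses (i)--(v) will be used to show that on a nonempty Zariski-open $U\subset\mathcal P_G$, any two tensor assignments with equal contraction are related by an original-edge bond gauge. Proposition~\ref{prop:bond} gives the generic gauge-orbit dimension $g_0=E(q^2-1)+N-1$, so the lemma then yields $\ker J_A=\im\Phi_A$ and $\rank J_A=P-g_0$ on a smaller open set. The argument is organized in three layers: coarse equivalence, internal block rigidity, and splitting of composite bonds.

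\emph{Coarse layer.} Contract each block of (i) into a coarse tensor whose virtual legs are the tensor products of the original $q$-dimensional legs grouped by each coarse bond. The routing witnesses in (ii) are explicit assignments making each required region injective. Injectivity is a nonvanishing maximal minor of the boundary-to-bulk map, which is polynomial in the original entries, so each required region is injective on a nonempty open set. Intersecting finitely many such sets (irreducibility of $\mathcal P_G$) gives an open set on which the coarse network is normal in the sense of~\cite{molnar2018normal}. The fundamental theorem then says that two such points $A,A'$ with $\Psi(A)=\Psi(A')$ have coarse tensors related by invertible gauges $Y_\beta$ on composite bonds, up to scalars absorbable into the gauge.

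\emph{Lifting to original tensors.} First I would pass to the infinitesimal version and show $\ker J_A\subseteq\im\Phi_A$ generically, which is what Lemma~\ref{lem:fiber} ultimately delivers; a fiber version follows the same way. Given $\delta A\in\ker J_A$, the coarse statement gives coarse generators $Y_\beta\in\mathfrak{gl}(q^{k_\beta})$ with $\delta B_b=\sum_\beta(\text{action of }Y_\beta)$ on each block tensor. The main task is to show that each $Y_\beta$ decomposes as a sum of single-original-edge generators $X_e$ (tensored with identity on the other legs of the bond).
\begin{itemize}
\item The detector (iv) does this per bond. Apply the quotient map that kills single-site variations and the other coarse-bond actions; what remains constrains $Y_\beta$ to lie in the sum of operator algebras supported on single original sites of each endpoint block.
\item The common-refinement statement (v) intersects the two endpoint descriptions and leaves exactly $\bigoplus_{e\in\beta}\mathfrak{gl}(q)$ on the original edges.
\end{itemize}
Subtracting $\Phi_A(X)$ for these $X_e$ leaves a variation whose induced change on every block tensor vanishes. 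Each block's open differential then kills it, and (iii) makes it internal bond gauge inside that block. Summing gives $\delta A\in\im\Phi_A$.

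I expect the splitting step to be the main obstacle. The detectors must be shown to work on a Zariski-open set for all $q$ rather than only at the witness point. I would first establish the needed rank condition at the witness point and then propagate it by the nonvanishing-minor argument, the same way as in Proposition~\ref{prop:bond}. Finally, intersecting all the finitely many open sets from (ii)--(iv) with the maximal-rank set of Proposition~\ref{prop:bond} gives the claimed nonempty open subset, and the rank formula follows from $\rank J_A=P-\dim\ker J_A=P-g_0$.
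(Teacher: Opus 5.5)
There is a genuine gap at the step ``Given $\delta A\in\ker J_A$, the coarse statement gives coarse generators $Y_\beta$.'' Your overall architecture agrees with the paper's: coarse normality from the routing minors, splitting by (iv)--(v), then removal of the residual by (iii). But the normal-PEPS theorem is a statement about fibers: two normal networks with equal contraction differ by a finite coarse gauge. It says nothing directly about a kernel vector of the differential. A set-theoretic fiber statement does not determine a differential; the paper's example is $x\mapsto x^3$, whose fiber over $0$ is a single point while its derivative vanishes.

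To apply the theorem you must first realize $\delta A$ as the velocity of a constant-state curve. The paper does this by also intersecting with the nonempty open set on which $d\Psi$ has maximal rank.
\begin{enumerate}
\item On that set the rank is locally constant, so by the constant-rank theorem the fibers of $\Psi$ are smooth with tangent space $\ker d\Psi$.
\item Hence every $\delta A\in\ker J_A$ integrates to a local curve $A(t)$ with $\Psi(A(t))$ constant.
\item The blocked curve then lies in a single coarse gauge orbit. An orbit of the algebraic group $\prod_\beta\mathrm{GL}(q^{k_\beta})$ is smooth and locally closed, so the curve's velocity lies in the orbit tangent.
\item That orbit-tangent statement is exactly the decomposition $\delta B_b=\sum_\beta(\text{action of }Y_\beta)$ that you assumed.
\end{enumerate}
Your final list of open sets includes the maximal-rank set of $\Phi_A$ from Proposition~\ref{prop:bond}, but not the maximal-rank set of $J_A$. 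That omission is exactly where your argument breaks.

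Relatedly, your remark that ``a fiber version follows the same way'' does not hold as stated. Hypotheses (iv) and (v) are linear, Lie-algebra-level statements: detectors annihilating single-site \emph{variations}, and intersections of operator \emph{spaces}. They split infinitesimal coarse generators, not finite elements of $\mathrm{GL}(q^{k_\beta})$. The difference $B_b'-B_b$ of two block contractions is multilinear in the site differences, not a sum of single-site variations. So a fiber-level route through the hypothesis of Lemma~\ref{lem:fiber} in the original tensors would need a nonlinear splitting argument that (iv)--(v) do not provide.

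The argument that works is infinitesimal throughout. Take the intersection of the open sets from (ii)--(iv), the maximal gauge-rank set, and the maximal-rank set of $J_A$; it is nonempty by irreducibility. On it, prove $\ker J_A\subseteq\im\Phi_A$ by the curve-and-orbit argument above, then split with (iv)--(v) and absorb the residual with (iii). Finally read off $\rank J_A=P-g_0$.

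Your worry that the detectors must hold on an open set for every $q$ is not the real obstacle. Hypothesis (iv) already supplies an all-$q$ witness, hence a nonzero minor.
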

\begin{proof}
Each routing witness in (ii) is a nonzero injectivity minor valid for
every integer $q\ge2$ [Eq.~\eqref{SM-eq:routeisometry}], so all
conditions in (ii)--(iv) hold on a common nonempty open set, which we
intersect with the maximal original gauge-rank set of
Proposition~\ref{prop:bond} and with the open set on which $d\Psi$ has
its maximal rank; irreducibility keeps the intersection nonempty. On
the maximal-rank set the rank is locally constant, so by the
constant-rank theorem the fibers of $\Psi$ are smooth submanifolds
whose tangent space is $\ker d\Psi$, and every kernel tangent at such
a point integrates to a local constant-state curve. Its blocked
curve lies in a single coarse gauge orbit by (ii) and the normal-PEPS
theorem; the coarse generators split into original-edge gauges by (iv)
and (v); the residual block variations are internal gauges by (iii).
Hence the tangent lies in $\im\Phi_A$, and Lemma~\ref{lem:fiber} gives
the rank. No genericity of blocked tensors in an unrestricted coarse
parameter space is assumed.
\end{proof}

\subsection{Proof of the generic-completeness theorem}
\begin{proof}
Two mechanisms occur. A WGS point with $\nu=0$ attains the bond-gauge
upper bound on $\rank J$, so a nonzero maximal minor proves generic
completeness directly; this covers honeycomb and the complete
subdivisions by Corollary~\ref{cor:lattices} and
Proposition~\ref{SM-prop:operations}. Every other zero entry is an
instance of Theorem~\ref{SM-thm:template}. Square, kagome, snub-square,
dice, and \texttt{t2016}--\texttt{t2018} use the original graph as its
own blocking, so (iii)--(v) are vacuous and only the partitions (ii)
are needed; the covering argument that extends finite routing
templates to all stated periods is given for snub-square in
Appendix~\ref{SM-sec:archgenericproof}. The remaining families block
triangles, dimers, bowties, diamonds, wheels, $K_4$'s, or entire rows; the corresponding
sections supply for each the injective grouped site maps of (iii) and the
detectors and refinements of (iv)--(v); ruby (Appendix~\ref{SM-sec:m28generic}) is the simplest complete example, with a
two-site Vandermonde detector, Eq.~\eqref{SM-eq:rubysplit}, that forces
each coarse $q^2$-dimensional bond generator to be a sum of two
single-edge actions. All splitting witnesses are analytic and valid
for every $q$. King's row construction (Appendix~\ref{SM-sec:kingrows})
uses a simultaneous joint-rank witness proved by cyclic displacement
sectors, rather than a dense rank calculation. The six $q=2$ results, which block paths, a twelve-site
cluster, or lattice-periodic hexagonal clusters, replace (iii)--(iv) by
nonzero integer minors and are not extrapolated to higher $q$
(Appendices~\ref{SM-sec:open4qubits} and~\ref{SM-sec:open5qubits}).
\end{proof}

The period bounds describe the chosen routing templates and their
complement covers, not physical thresholds; for example the square
bound $20=16+4$ combines a width-sixteen template box with a width-four
covering block. Below these bounds the finite-graph WGS theorem remains
exact, and Proposition~\ref{SM-prop:genericbounds} bounds
$\nu_{\rm generic}$.

\subsection{The triangular lattice}
\label{SM-sec:triangulargeneric}
The direct three-block edge partition is impossible on the triangular
lattice for any tensor assignment. Injectivity of a region requires, at
each degree-six vertex in that region, at least three neighbors inside
it: the singleton case of the local cut condition [Eq.~\eqref{SM-eq:routecut}] reads $6-d_{\rm in}\le1+d_{\rm in}$. If $uv$
were the only edge between blocks $A$ and $B$, its two common neighbors
would have to lie in $C$; the other three neighbors of $u$ must then
belong to $A$ and those of $v$ to $B$, and each common neighbor has
four neighbors forced outside $C$, violating the same condition. A
blocking route needs coarse partitions and splitting arguments adapted
to the triangular graph; Appendix~\ref{SM-sec:open5qubits} supplies one at
$q=2$ with hexagonal blocks. The obstruction concerns the proof
route, not generic completeness itself: small tori are settled
directly, since a nonzero minor of $J$ at any point is a
lower bound on the generic rank, and the rank of $J$ at a point with
integer entries can be computed exactly modulo a prime. At $q=2$ on
the $4\times3$ triangular torus ($P=1536$, $g_0=119$), a random
integer point has $\rank J\ge1417=P-g_0$ modulo $p=8191$, and on the
$4\times4$ torus ($P=2048$, $g_0=159$) $\rank J\ge1889=P-g_0$, so
$\nu_{\rm generic}=0$ on both graphs
(\nolinkurl{experiments/triangular_generic_rank.py}; the rank of a
random subset of $1700$ and $2100$ rows already attains the bound).
These are finite-graph certificates. The blocked qubit proof covers
the period classes of Appendix~\ref{SM-sec:open5qubits}; the all-$q$
statement is open. Tensor-network varieties of the single triangle
graph are analyzed in Ref.~\cite{bernardi2026triangular}.

\subsection{The common structure of the blocking arguments}
\label{SM-sec:blockingtemplate}
Theorem~\ref{SM-thm:template} supplies the common reduction for the
proofs in Appendices~\ref{SM-sec:strivalent}--\ref{SM-sec:t2020extension},
\ref{SM-sec:open4qubits}, and \ref{SM-sec:open5qubits}. Each construction specifies the blocks,
injective partitions, internal kernels, composite-bond detectors, and
endpoint refinements. Its conclusion is applied in the original tensor
parameter space.

The sections below specify (i)--(v) for each graph family and prove
the corresponding detectors; their new ingredients are the trivalent
block inversion of Appendix~\ref{SM-sec:strivalent}, the Vandermonde two-site
splitting of Appendix~\ref{SM-sec:m28generic}, the joint dimer detector of
Appendix~\ref{SM-sec:m29generic}, the open-region routing extension of
Appendix~\ref{SM-sec:generic39proof}, the clique extension of the detector in
Appendix~\ref{SM-sec:fiveextension}, the noninjective-center bowtie block of
Appendix~\ref{SM-sec:bowtieextension}, the region detectors of
Appendix~\ref{SM-sec:diamondextension}, the wheel core decomposition of
Appendix~\ref{SM-sec:wheelextension}, the unequal-bond joint detector of
Appendix~\ref{SM-sec:t2020extension}, the exact qubit minors of
Appendix~\ref{SM-sec:open4qubits}, and the lattice blockings with disc covers
of Appendix~\ref{SM-sec:open5qubits}.

\subsection{Generic rank witnesses and trivalent blocking}
\label{SM-sec:strivalent}
The triangles of star and the squares of square-octagon and cross
are disjoint and cover all vertices. Each site has exactly one
external edge. Assume these blocks remain intact under the periodic
identification and their coarse graph is simple. The coarse graphs
are honeycomb, square, and kagome, respectively.

For a block $C$ of length $k=3$ or $4$, the contracted map is
\begin{equation}
B_C:(\mathbb C^q)^{\otimes k}_{\partial C}
\longrightarrow (\mathbb C^q)^{\otimes k}_{\mathrm{phys}}.
\end{equation}
It is generically invertible: set each site's physical index equal
to its external index and both internal indices to zero. This gives
$B_C=I$, proving that its determinant is a nonzero polynomial.
Independently, group each site's physical and external legs into one
$q^2$-dimensional output. Its two internal legs together also have dimension
$q^2$, so this site flattening is generically invertible. Both conditions
are nonempty Zariski-open conditions and hence hold simultaneously
on a nonempty open set. The sparse witness for the first condition
need not satisfy the second.

An injective tensor network on a simple graph has only the ordinary
bond-gauge fiber. The dimension count is the injective, or
supercritical, regime of Ref.~\cite{bernardi2023dimension}; the
identification of the fiber with a single gauge orbit is the injective
case of Ref.~\cite{molnar2018normal}. MPO-injective generalizations of
such fiber identifications are reviewed in Ref.~\cite{cirac2021matrix}.
Take two original tensor points
in the common open set with the same global contraction. Their injective
coarse networks are related by finite coarse bond gauges. Each coarse
edge here is one original external edge, so apply those gauges directly
to align the block contractions. Within each aligned block, the two
networks have identical tensors as outputs on their physical and external
legs. The grouped site maps remain injective under the external gauges.
The injective MPS theorem on the internal triangle or square therefore
relates them by internal bond gauges. Thus the original fiber within
this open set is contained in a single original gauge orbit.
Lemma~\ref{lem:fiber}, applied to the original parameter space, gives
\begin{equation}
\begin{gathered}
\nu_{\rm generic}=0,\\
\rank J_{\rm generic}=Nq^4-E(q^2-1)-N+1
\end{gathered}
\label{SM-eq:trivalentgeneric}
\end{equation}
for all three tilings and every $q\ge2$ under the stated quotient
conditions. The original sites have $d=q<q^3$; the supercritical
statement is used only after the specified grouping.

\subsection{Injective partitions and the generic snub-square theorem}
\label{SM-sec:archgenericproof}
We use Theorem~4 of Ref.~\cite{molnar2018normal} with its two
partition hypotheses checked explicitly. For every edge $uv$, there
must be a partition into three nonempty injective blocks $A,B,C$
with $u\in A$, $v\in B$, and $uv$ the sole edge between $A$ and $B$.
For every vertex $v$, there must be a region $R$ not containing $v$
such that $R$, $R\cup\{v\}$, and both of their complements are
injective. Injectivity always refers to the map from all boundary
virtual legs to the physical legs of the region. Multiple edges
between a pair of blocks can be combined into one larger virtual leg.

\paragraph{A certificate valid for every \texorpdfstring{$q$}{q}.}
Route each boundary input of a region to a distinct physical endpoint
along internal graph edges, with no internal edge used twice. Paths
may share vertices if they use distinct pairs of virtual legs there.
Kronecker deltas transmit each input label along its path and into
the physical leg at its endpoint; all unused indices are fixed to zero.
The resulting boundary map obeys
\begin{equation}
B_R^\dagger B_R=I_{q^{|\partial R|}}.
\label{SM-eq:routeisometry}
\end{equation}
This proves a nonzero injectivity minor for every integer $q\ge2$.
Each region may use a different witness: a finite product of nonzero
minor polynomials is nonzero, so all requested regions are injective
on a common nonempty Zariski-open set.

The maximum-flow/minimum-cut theorem and integral-capacity flow
construction~\cite{ford1956maximal,diestel2017graph} find such routes
using capacity one per
internal edge and per physical endpoint. Opposite flows are canceled
before path decomposition. An independent checker verifies each
boundary input, path edge, edge capacity, and endpoint. In particular,
the scalar dimension test $|\partial R|\le |R|$ alone is insufficient.
Every $S\subseteq R$ must satisfy the necessary local cut condition
\begin{equation}
|E(S,V\setminus R)|\le |S|+|E(S,R\setminus S)|.
\label{SM-eq:routecut}
\end{equation}
Failure of this condition obstructs that region's injectivity; it is
not evidence for an additional global tangent kernel.

\paragraph{Snub-square templates and extension to large periods.}
Use the grid-with-diagonals convention above and even periods $L_x,L_y$.
Translations by $(2,0)$ and $(0,2)$ have four vertex orbits and ten
undirected edge orbits. The saved open-region certificates contain
four $6\times6$ rectangles, one for each origin parity; ten edge
partitions $A,B,C_0$ inside $24\times24$ boxes; and four vertex
partitions $R,R\cup\{v\},C_0,C'_0$ inside $40\times40$ boxes.
Every listed region has a routing certificate. In particular the
complement regions retain their boundary legs on the outside of the
box, so their injectivity does not depend on closing a finite torus.

For even $L_x,L_y\ge48$, the complement of either box interval in
each coordinate has at least eight sites. Every site outside the box
is covered by a $6\times6$ rectangle disjoint from that box: choose
a length-six interval in a complementary coordinate interval containing
the site, and any length-six interval containing its other coordinate.
All such rectangles are translates of the four certified parity types.
A complementary interval of length at least eight is covered by
consecutive length-six windows, with successive windows overlapping.
All four origin parities are already certified, so no parity restriction
on the window containing a given site is needed. The union of injective regions
is injective (Lemma~4 of Ref.~\cite{molnar2018normal}), applied step by
step along the overlapping chain; joining these rectangles to $C_0$
or $C'_0$ therefore proves injectivity of the entire torus complement.
This verifies every edge and vertex partition required by the theorem
for all the stated rectangular periods.

\paragraph{Generic rank.}
The finite set of injectivity conditions defines a common nonempty
Zariski-open subset. The normal theorem identifies its fibers up to
bond gauge, and Lemma~\ref{lem:fiber} gives
\begin{equation}
\rank J_{\rm generic}=P-g_0,\qquad \nu_{\rm generic}=0.
\label{SM-eq:snubgeneric}
\end{equation}
Since snub-square has $E=5N/2$ and $P=Nq^6$, its generic affine rank is
$Nq^6-\frac{5N}{2}(q^2-1)-N+1$.

The bound $48$ is sufficient, not optimal. Separate full-torus
certificates also cover $24\times24$ and $26\times26$, checking
all $1440$ and $1690$ edges and all $576$ and $676$ vertices,
respectively. Independent template verification checks the mappings
to $48\times48$, $50\times50$, and $64\times64$ tori, the orbit
coverage, all routes, and the complement covering. The extension
to every larger even rectangular period is the preceding covering
proof, not a numerical size extrapolation.

\subsection{Square, kagome, and ruby: three further global results}
\label{SM-sec:m28generic}
For every integer $q\ge2$, independent generic tensors have
$\nu=0$ on square tori with coordinate periods $L_x,L_y\ge20$,
and on kagome and ruby tori with honeycomb source-cell periods
$L_x,L_y\ge64$. These are sufficient bounds. In all three cases the
coordination number is $z=4$, hence
\begin{equation}
\rank J_{\rm generic}=Nq^5-2N(q^2-1)-N+1.
\label{SM-eq:m28rank}
\end{equation}
The periods impose no equality between tensors at different sites.

\paragraph{Open partitions and their extension.}
Square has two edge types and one vertex type under cell translation.
For the edge $((0,0),(1,0))$, use
$A=[-3,0]\times[0,3]$, $B=[1,4]\times[-3,0]$ and the remaining
sites in the box $[-8,7]^2$. Exchanging coordinates gives the other
edge type. For the vertex $v=(0,0)$, use $R\cup\{v\}=[0,4]^2$
and the two complements in the same box. All these regions have
integer routing certificates, as does a $4\times4$ square.

For kagome, label the three sites per source cell by $t=0,1,2$.
One triangle is inside each cell; the other consists of
$(x,y,0),(x+1,y,1),(x,y+1,2)$. This graph agrees with the
honeycomb line-graph constructor. Six edge templates and three
vertex templates fit inside open $48\times48$ cell boxes.
Each region is certified on the original graph, retaining all
external legs of the box. A full $4\times4$ cell block is also
injective for generic tensors, by the same routing construction.

If a coordinate complement of a box has length at least four,
every site outside the box belongs to a $4\times4$ cell block
disjoint from it, and consecutive blocks can be chosen to overlap.
The union lemma then extends the certified open complement to the
whole torus. Thus $20=16+4$ suffices for square;
the chosen kagome bound $64$ exceeds $48+4$. Complete edge and
vertex types, cell-translation automorphisms, all paths, and the
covering across periodic seams are independently checked. The
generic-fiber argument leading to Eq.~\eqref{SM-eq:snubgeneric}
therefore proves Eq.~\eqref{SM-eq:m28rank} for these two lattices.

\paragraph{The ruby coarse graph.}
Ruby's disjoint triangles cover all vertices. Contracting each
triangle gives a honeycomb graph, with two original edges forming
each coarse bond of dimension $q^2$ and a physical dimension $q^3$
per coarse site. The three coarse edges incident on a triangle
touch its three different pairs of original sites; each coarse
edge acts on one external leg at each of its two sites.

Three coarse-edge templates and two coarse-vertex templates fit
inside open $48\times48$ source-cell boxes. Their regions are
unions of complete triangles, and every injectivity certificate
is checked on the original ruby graph with its $q$-dimensional
legs. Full $4\times4$ source-cell blocks are also certified.
The preceding union argument verifies the fundamental theorem's
conditions on the coarse graph for $L_x,L_y\ge64$.
No single coarse tensor is assumed injective. The resulting
$\mathrm{GL}(q^2)$ gauges must still be resolved into original
$\mathrm{GL}(q)$ edge gauges.

\paragraph{A two-site splitting lemma for every \texorpdfstring{$q$}{q}.}
Group the physical and two external legs of an original site as
outputs of its two internal legs, giving a generically full-column-rank
map $A_i:\mathbb C^{q^2}\to E_i=\mathbb C^{q^3}$.
Let $Q_i:E_i\to E_i/\im A_i$ be its quotient map, represented
by any complete left annihilator. For two adjacent sites $i,j$,
contract their shared internal leg and leave their other internal
legs open. Act with $T\in\operatorname{End}(\mathbb C^q\otimes
\mathbb C^q)$ on a selected external leg at each site, and then
with $Q_i\otimes Q_j$. Generically the kernel of this operator map is
\begin{equation}
\operatorname{End}(\mathbb C^q)\otimes I
+I\otimes\operatorname{End}(\mathbb C^q).
\label{SM-eq:rubysplit}
\end{equation}
The inclusion of this space follows from $Q_iA_i=Q_jA_j=0$.

To prove the reverse inclusion for every $q$, choose $q^2$ distinct
numbers $t_{ab}$ and write
$v_{ab}=(1,t_{ab},\ldots,t_{ab}^{q-1})^T$. Let $W$ be an
invertible, entrywise nonzero $q\times q$ matrix. With $p$ the
physical output, $y$ the spectator external output, $x$ the selected
external output, and $a,b$ the internal inputs, take
\begin{align}
A[p,y,x;a,b]&=\delta_{p,a}\delta_{y,b}(v_{ab})_x,\nonumber\\
B[p,y,x;a,b]&=W_{ap}\delta_{y,b}(v_{pb})_x.
\label{SM-eq:rubysplitwitness}
\end{align}
Both site maps have full column rank. Their images are direct sums
of the lines spanned by $v_{py}$ in the output fibers $(p,y)$.
After the shared internal leg is contracted, the coefficient of
output fibers $(a,b)$ and $(c,d)$ is $W_{ac}\ne0$. Thus the
double projection contains all detection functionals
\begin{equation}
T\longmapsto(\lambda\otimes\mu)T(v_{ab}\otimes v_{cd}),
\qquad \lambda v_{ab}=\mu v_{cd}=0.
\label{SM-eq:rubyfunctionals}
\end{equation}
The single-site functionals $K\mapsto\lambda Kv_{ab}$ have
common kernel $\mathbb C I$. A matrix preserving all these lines
is diagonal in any $q$ of them, while a further Vandermonde line
has all nonzero coordinates in that basis and forces its eigenvalues
to coincide. Such $q+1$ lines exist since $q^2\ge q+1$.
These functionals span the dual of
$\operatorname{End}(\mathbb C^q)/\mathbb C I$; their products
span the dual of its tensor square. This proves
Eq.~\eqref{SM-eq:rubysplit}, with detected rank $(q^2-1)^2$.
The property is Zariski open on full-column-rank site maps, so
all required site and pair conditions hold simultaneously on a
nonempty open set. Their separate witnesses need not coincide.

\paragraph{Lifting the coarse fiber to original edge gauges.}
Work on the common nonempty open set of all the preceding
conditions and the maximal-rank open set of the original
contraction map. Every original kernel tangent is then tangent
to a constant-state curve (constant-rank theorem). The coarse fundamental theorem places
its blocked tensors in a single coarse gauge orbit. Since this
orbit is smooth, the induced block tangent is a sum of coarse-edge
generator actions in $\operatorname{End}(\mathbb C^{q^2})$.
This does not assume tangent completeness at an arbitrary
special point of the coarse tensor space.

Within a triangle, apply $Q_i,Q_j$ to a pair of sites.
Every single-site variation vanishes, as do the actions of the
other two coarse edges, each of which leaves one of these sites
untouched. Applying the third site's left inverse opens the two
remaining internal legs. The surviving coarse generator satisfies
Eq.~\eqref{SM-eq:rubysplit}, so it is $X\otimes I+I\otimes Y$.
Each coarse generator splits into two original edge gauges.
Subtract them from the original variation. Each triangle block then
varies only by a block-internal null variation, and its grouped sites
are injective; the injective-network result on that simple triangle
leaves only internal edge gauges. This proves generic absence of additional
kernel for ruby and completes Eq.~\eqref{SM-eq:m28rank}.

\subsection{Elongated triangular and maple-leaf: all-\texorpdfstring{$q$}{q} global results}
\label{SM-sec:m29generic}
Two further blockings establish $\nu_{\rm generic}=0$ for every
$q\ge2$. Elongated triangular uses the original grid with horizontal
and vertical edges and diagonals $(x,y)(x+1,y+1)$ for even $y$.
Sufficient periods are $L_x\ge20$ and even $L_y\ge40$.
For maple-leaf, use triangular integer coordinates with edge directions
$\pm(1,0),\pm(0,1),\pm(1,1)$ and vacancies $x+2y=0\pmod7$.
Write the vacancy translations as $a=(2,-1)$ and $b=(1,3)$.
Sufficient period vectors are $L_1a,L_2b$ with $L_1,L_2\ge40$.
These lengths count vacancy cells, not original coordinate steps.
Both original graphs have degree five, so the affine rank is
\begin{equation}
\rank J=Nq^6-\left[\frac{5N}{2}(q^2-1)+N-1\right].
\label{SM-eq:m29rank}
\end{equation}

\paragraph{Block graphs and original-leg injectivity.}
For elongated triangular, pair $(x,2j)$ with $(x,2j+1)$.
The coarse graph is square, with physical dimension $q^2$,
horizontal bond dimension $q^3$, and vertical bond dimension $q$.
A horizontal bond contains the two horizontal original edges and
one diagonal. At a dimer, the left and right horizontal bonds
partition their three legs between the original sites as $1+2$
and $2+1$, respectively. Each site also has one spectator external
leg belonging to a vertical coarse bond.

For maple-leaf, label the six neighbors of a vacancy by
\begin{equation}
\begin{aligned}
(h_0,h_1,h_2)&=((1,0),(1,1),(0,1)),\\
(h_3,h_4,h_5)&=((-1,0),(-1,-1),(0,-1)).
\end{aligned}
\end{equation}
A site $(r,s,t)$ is at $ra+sb+h_t$. The triangles
\begin{align}
T^0_{rs}&=\{(r,s,0),(r+1,s,2),(r+1,s+1,4)\},\nonumber\\
T^1_{rs}&=\{(r,s,1),(r,s+1,5),(r+1,s+1,3)\}
\label{SM-eq:mapletriangles}
\end{align}
are disjoint and cover the original vertices. Their coarse graph
is honeycomb, with physical and bond dimensions both $q^3$.
Each coarse bond contains three original edges and touches two
sites at each endpoint triangle, in a $2+1$ distribution.
The three coarse bonds incident on a triangle touch its three
different pairs of sites.

All required injective regions are certified on the original graph,
with one $q$-dimensional output per original site. Elongated
triangular has two coarse-edge types and one coarse-vertex type;
maple-leaf has three and two. Their open templates fit inside
$16\times16$ dimer-cell boxes and $32\times32$ vacancy-cell boxes,
respectively. Full $4\times4$ cell blocks are also injective.
The exterior-cover argument of Appendix~\ref{SM-sec:m28generic} proves
the normal fundamental theorem's conditions on the coarse graphs.
In dimer cells, periods of at least $20$ suffice, giving the original
coordinate bounds above. Vacancy-cell periods of at least $40$
suffice for maple-leaf. The coarse tensors are constrained by
their internal networks; no genericity in the full coarse tensor
space is assumed.

\paragraph{Simultaneous splitting at a dimer.}
A site grouped as physical and four external outputs is a map
$\mathbb C^q\to\mathbb C^{q^5}$. Partition its external legs
into a small group $s$ of dimension $q$, a large group $(x,y)$
of dimension $q^2$, and a spectator $t$ of dimension $q$.
With $\Omega=\sum_j|j,j\rangle$, define
\begin{equation}
\begin{gathered}
v_0=\Omega_{s,x}\Omega_{y,t},\qquad
v_1=\Omega_{s,y}\Omega_{x,t},\\
v_p=v_0\quad(p\ge2).
\end{gathered}
\end{equation}
The two site maps are
\begin{equation}
A[p,\mathrm{ext};c]=\delta_{pc}v_p,\qquad
B[r,\mathrm{ext};c]=W_{cr}v_r,
\label{SM-eq:dimersplitwitness}
\end{equation}
where $W$ is invertible and entrywise nonzero. Both maps have
rank $q$, with images given by one line in each physical slice.
Let $Q_A,Q_B$ be their complete left annihilators.

Consider the joint local operator space
\begin{equation}
\mathcal D=(\operatorname{End}(\mathbb C^q)/\mathbb CI)
\oplus(\operatorname{End}(\mathbb C^{q^2})/\mathbb CI).
\end{equation}
If $(K_s+K_{xy})v_p$ is proportional to $v_p$ for every $p$,
the first two slices imply
\begin{equation}
K_{xy}=\lambda_0 I-K_s^{\mathsf T}\otimes I
=\lambda_1 I-I\otimes K_s^{\mathsf T}.
\end{equation}
Taking traces gives $\lambda_0=\lambda_1$, and then $K_s$
and $K_{xy}$ are scalar. Thus the local detection functionals
span $\mathcal D^*$. Contracting the internal dimer edge gives
the nonzero coefficient $W_{pr}$ for every pair of physical
slices. The double projection therefore contains every product
of these local functionals and is injective on
$\mathcal D\otimes\mathcal D$.

The non-single-site parts of the left and right horizontal
generators occupy different direct summands,
small$\otimes$large and large$\otimes$small. Their combined
detected rank is $2(q^2-1)(q^4-1)$, so cancellation between the
two coarse edges is excluded. Each generator splits into
single-site actions at this dimer endpoint. Actions on a vertical
coarse edge already act on a single original edge and vanish under
the double projection.

\paragraph{Asymmetric splitting at a maple-leaf triangle.}
Each grouped site is a map $\mathbb C^{q^2}\to\mathbb C^{q^4}$.
For a chosen pair, one site has two selected external legs and
the other has one. Write $Q=q^2$. At the first site, regard the
selected outputs and the spectator outputs (physical plus one
external leg) as two $Q$-dimensional factors. Choose its $Q$
input columns, viewed as $Q\times Q$ matrices, to span
\begin{equation}
\mathcal L=\operatorname{span}\{I,D,D^2,\ldots,D^{Q-2},S\},
\label{SM-eq:mapleleftspace}
\end{equation}
where $D$ is diagonal with $Q$ distinct entries and $S$ is a
cyclic shift. This space has dimension $Q$. Its left stabilizer
is scalar: $K\mathcal L\subseteq\mathcal L$ first gives
$K=f(D)+cS$ from $I\in\mathcal L$; the off-diagonal part of
$KD\in\mathcal L$ forces $c=0$; and $KS\in\mathcal L$
forces $f(D)$ to be scalar. Hence $K\mapsto Q_AKA$ detects
all of $\operatorname{End}(\mathbb C^Q)/\mathbb CI$.

At the second site use selected output $x$, spectator external
outputs $y,z$, physical output $p$, and internal inputs $c,d$:
\begin{equation}
B[p,z,y,x;c,d]=\delta_{pc}\delta_{zd}\delta_{xy}.
\label{SM-eq:mapleBell}
\end{equation}
Its image is the full $(p,z)$ space times the Bell line on $(y,x)$.
The quotient detects every non-scalar operator on $x$, while
$p$ exposes the shared internal input $c$. After contracting that
input with $A$, the pair detector is the tensor product of the
large-site detector $K\mapsto Q_AKA$ and the Bell detector.
It has rank $(q^4-1)(q^2-1)$; its kernel is exactly
\begin{equation}
\operatorname{End}(\mathbb C^{q^2})\otimes I
+I\otimes\operatorname{End}(\mathbb C^q).
\label{SM-eq:maplesplit}
\end{equation}
Within a triangle, double projection at the selected pair kills
every original single-site variation and the other two coarse-edge
actions. The third site's left inverse exposes the remaining
internal legs. Thus Eq.~\eqref{SM-eq:maplesplit} applies to each
coarse-edge generator separately. The column-rank and splitting
conditions are nonempty Zariski-open conditions, so they hold
simultaneously with all required region injectivities.

\paragraph{Intersecting the two endpoint splittings.}
For either lattice, label the three original legs of a coarse
$q^3$ edge. At both endpoints the generator lies in the sum of
operator algebras supported on the two original sites. Expand
operators using $\operatorname{End}(\mathbb C^q)=
\mathbb CI\oplus\mathfrak{sl}(q)$ on each original leg.
An allowed nontrivial support must lie within one part of each
endpoint's site partition. The common refinement consists of
single legs: two legs in the same part at both endpoints would be
parallel original edges, excluded by simplicity of the original
graph. Therefore the intersection is exactly the sum of the
three original single-edge operator algebras.
The transpose and minus sign at the opposite end of a bond preserve
these support spaces and give the usual matched gauge actions.

Apply the generic-fiber argument of Appendix~\ref{SM-sec:m28generic}
and subtract these original external-edge gauges. Each dimer or
triangle block then varies only by a block-internal null variation.
All grouped site maps are injective, so its remaining kernel consists
of internal bond gauges. This proves Eq.~\eqref{SM-eq:m29rank} for the stated
finite tori and every $q\ge2$.

\subsection{Dice and three 2-uniform tilings: open partitions}
\label{SM-sec:generic39proof}
We prove the four period-30 statements of Table~\ref{SM-tab:generic}
on the original graphs, for every integer $q\ge2$.
Write a vertex as $(x,y,\alpha)$, where $(x,y)\in\mathbb Z^2$
and $\alpha\in V_0$ is its cell label. For dice, $V_0=\{0,1,2\}$:
the rim vertex $(x,y,1)$ is adjacent to hubs
$(x,y,0),(x+1,y,0),(x,y+1,0)$, and $(x,y,2)$ to
$(x+1,y,0),(x,y+1,0),(x+1,y+1,0)$.
For \texttt{t2016}--\texttt{t2018}, use the exact integer cells of
Appendix~\ref{SM-sec:twouniformSM}. Their vertex and edge counts per cell are
$(n,e)=(12,30),(8,20),(7,18)$, respectively; dice has $(3,6)$.
Every edge changes either cell coordinate by at most one.

\paragraph{Open templates and simultaneous injectivity.}
The routing construction of Eq.~\eqref{SM-eq:routeisometry} applies
without change: each boundary input is routed to a different physical
endpoint, no internal edge is used twice, and independent paths may
pass through the same site on different legs. Delta tensors produce
an isometric boundary map for every $q$. Each certificate therefore
gives a nonzero injectivity minor in the original parameter space.
A finite product of these minors is nonzero, even though their sparse
witnesses need not be the same tensor assignment.

Let $I=[-12,11]\cap\mathbb Z$, $I_0=[-8,7]\cap\mathbb Z$, and
$I_+=[-11,10]\cap\mathbb Z$, and define
\begin{equation}
Q=I^2\times V_0,\qquad
K=I_0^2\times V_0,\qquad K^+=I_+^2\times V_0.
\label{SM-eq:extensionboxes}
\end{equation}
For each graph the certificates specify an outer region $O$ and a
routable covering region $D$ satisfying
\begin{gather}
K^+\subseteq O\subseteq Q,\nonumber\\
D\subseteq\{0,\ldots,5\}^2\times V_0,\nonumber\\
\{(3,3,\alpha):\alpha\in V_0\}\subseteq D.
\label{SM-eq:extensioncover}
\end{gather}
The regions have trimmed boundaries; an untrimmed rectangular corner
need not satisfy the necessary cut condition~\eqref{SM-eq:routecut}.

For every edge translation type, the data give disjoint
$A,B\subseteq K$, with $uv$ their sole connecting edge, and routing
certificates for $A,B,O\setminus(A\cup B)$. For every vertex type,
they give $R^+=R\cup\{v\}\subseteq K$, $v\notin R$, and
certificates for $R,R^+,O\setminus R,O\setminus R^+$.
All external legs are retained, including those leaving $Q$.
The independent checker verifies the set identities, the unique
connecting edge, and every routing path. The counts of edge and
vertex types are $(6,3),(30,12),(20,8),(18,7)$ for dice,
\texttt{t2016}, \texttt{t2017}, and \texttt{t2018}, respectively.
These exhaust the original edge and vertex translation orbits.

\paragraph{Extension to all rectangular periods at least thirty.}
For $L_x,L_y\ge30$, the templates and their external neighbors embed
without identifications: the coordinate span of $Q$ with its one-cell
neighbor shell is less than either period. To cover a cell
$p=(x,y)$ outside $K^+$, translate the center cell $(3,3)$ of $D$
to $p$. In at least one coordinate, the resulting six-cell interval
is disjoint from $I_0$. This remains true across a periodic seam:
the cyclic complement of $I_0$ has length at least fourteen, and
$p$ lies at least three cells beyond its endpoints whenever its
coordinate is outside $I_+$. Hence this translate of $D$ avoids $K$
and contains every vertex in cell $p$.

For an edge template the full complement is thus
\begin{equation}
V\setminus(A\cup B)
=[O\setminus(A\cup B)]\ \cup\!\bigcup_t(D+t),
\label{SM-eq:extensioncomplement}
\end{equation}
where the selected translates avoid $K$ and cover $V\setminus K^+$.
The part within $K^+$ is already contained in the first term.
Every region in this finite union is injective on a common nonempty
open set; the union lemma of Ref.~\cite{molnar2018normal} proves
injectivity of the full complement. The identical construction works
for both vertex-template complements. Translating the templates
checks every edge and vertex, including those across periodic seams.
This is an analytic extension to all stated periods, not an
extrapolation from a finite sequence of tori.

\paragraph{Fibers, differential rank, and the WGS rank drop.}
Both hypotheses of Theorem~4 of Ref.~\cite{molnar2018normal} now
hold on a common nonempty Zariski-open subset of the original tensor
space. It identifies equal-state tensors there by original-edge
gauges. Any site scalars whose product is one can be absorbed into
scalar bond gauges along a spanning tree, since the graph is connected.
Intersecting also with the maximal gauge-rank open set and applying
Lemma~\ref{lem:fiber} proves Eq.~\eqref{eq:genericrank}.
There is no coarse-parameter genericity assumption and no composite
bond to split.

With $M=L_xL_y$, write $\rank J_{\rm generic}=1+M\rho_G(q)$.
The four explicit rank polynomials are
\begin{align}
\rho_{\rm dice}(q)&=q^7+2q^4-6(q^2-1)-3,\nonumber\\
\rho_{\texttt{t2016}}(q)&=12q^6-30(q^2-1)-12,\nonumber\\
\rho_{\texttt{t2017}}(q)&=8q^6-20(q^2-1)-8,\nonumber\\
\rho_{\texttt{t2018}}(q)&=q^7+6q^6-18(q^2-1)-7.
\label{SM-eq:newgenericranks}
\end{align}
The first graph has degrees $6,3,3$ per cell, the middle two have
only degree-five vertices, and the last has one degree-six and six
degree-five vertices. Since these periods also preserve the WGS
short neighborhoods, the exact rank increases from WGS to generic
tensors, divided by $N$, are respectively
\begin{gather}
2r^2,\qquad r^2\bigl(\tfrac23q^2+2q+2\bigr),\nonumber\\
r^2\bigl(\tfrac34q^2+2q+2\bigr),\qquad
r^2\bigl(\tfrac67q^2+\tfrac{16}{7}q+2\bigr),
\label{SM-eq:newrankdrops}
\end{gather}
where $r=q-1$. The other families use the separate constructions
below; the six $q>2$ extensions marked Open in
Table~\ref{SM-tab:generic} remain unresolved.

\subsection{Triangle blocking for two further tilings and shuriken}
\label{SM-sec:triangleextension}
We prove Eq.~\eqref{eq:genericrank} for \texttt{t2001} and
\texttt{t2005} when $L_x,L_y\ge32$, and for shuriken when both
original periods are even and at least 64. All tensors at original
sites remain independent, and the proof applies to every integer
$q\ge2$.

\paragraph{The three blockings.}
The original cell of \texttt{t2001} has six degree-four and twelve
degree-three vertices. Its two disjoint triangles contain all six
degree-four vertices. Contract these triangles and retain the other
twelve vertices as singleton blocks. The coarse cell has fourteen
vertices and twenty-four edges, each containing one original edge.
For \texttt{t2005}, six disjoint triangles cover all eighteen
degree-four vertices per cell. The coarse cell has six vertices and
twelve edges: six single and six double bonds.

For shuriken, use its definition as the line graph of the square-octagon
graph $H$. Label the four vertices in each cell of $H$ cyclically by
$t=0,1,2,3$. Besides the intracell cycle, its edges are
$(x,y,0)\sim(x+1,y,2)$ and $(x,y,1)\sim(x,y+1,3)$.
The coloring
\begin{equation}
c(x,y,t)=x+y+(t\bmod2)\pmod2
\label{SM-eq:shurikencoloring}
\end{equation}
is bipartite. The three edges incident on each black vertex form a
triangle in the line graph. Every edge of $H$ has one black endpoint,
so these triangles partition the original shuriken sites. The coloring
is periodic in $2\times2$ original cells. Each such blocked cell
contains twenty-four original sites, eight triangle blocks, and twenty
coarse edges: sixteen single and four double bonds.

In each blocking the regrouped original edges shift cell coordinates
by at most one. Every triangle site has degree four, with two internal
and two external legs. Every double bond has a triangle endpoint
where its two original edges touch distinct sites $i,j$, and no other
incident coarse bond touches both $i,j$. The frozen blockings verify
this isolation property for all six double-bond types of
\texttt{t2005} and all four of shuriken.

\paragraph{Normal partitions in the original parameter space.}
Let $\mathcal B$ denote the block labels in one blocked cell. In
blocked-cell coordinates define
\begin{equation}
\begin{gathered}
Q=[-12,11]^2\times\mathcal B,\\
K=[-8,7]^2\times\mathcal B,\qquad
K^+=[-11,10]^2\times\mathcal B,
\end{gathered}
\label{SM-eq:triangleboxes}
\end{equation}
where intervals contain integer coordinates. The certificates provide
an open region $O$ and a covering region $D$ satisfying
\begin{equation}
\begin{gathered}
K^+\subseteq O\subseteq Q,\qquad
D\subseteq[0,5]^2\times\mathcal B,\\
\{(3,3,b):b\in\mathcal B\}\subseteq D.
\end{gathered}
\label{SM-eq:trianglecover}
\end{equation}
For each coarse-edge translation type there are disjoint $A,B\subseteq K$
with that edge as their only coarse connection, and routes certifying
$A$, $B$, and $O\setminus(A\cup B)$. For each coarse-vertex type $v$
there are $R$ and $R^+=R\cup\{v\}\subseteq K$, with routes for
$R$, $R^+$, $O\setminus R$, and $O\setminus R^+$. The region $D$
also has a routing certificate.

Every region is expanded to all original sites in its blocks before
checking its boundary inputs and paths. Thus all routed legs and
physical endpoints have dimension $q$, even when a coarse bond has
dimension $q^2$ and a triangle has physical dimension $q^3$.
The edge-disjoint paths to distinct physical endpoints give the delta
tensor witnesses of Eq.~\eqref{SM-eq:routeisometry}. Their injectivity
minors are nonzero polynomials in the original tensor entries for
every $q\ge2$. The finite product of these minors, including their
required translations, is nonzero.

Take both blocked periods at least 32. The templates and their exterior
neighbor shells embed without identifying original boundary legs.
For any cell $p$ outside $K^+$, translate the center $(3,3)$ of $D$
to $p$. In at least one coordinate its interval of length six avoids
$[-8,7]$, including across the periodic seam. This translate contains
every block at $p$ and is disjoint from $K$. Consequently the full
complement of $A\cup B$ is the union of $O\setminus(A\cup B)$ and
these translates of $D$. The same construction covers the complements
of $R$ and $R^+$. Applying the injective-region union lemma after
expanding all blocks proves injectivity of each full complement.
The complete lists of coarse-edge and coarse-vertex types therefore
verify both partition hypotheses of Theorem~4 of
Ref.~\cite{molnar2018normal}.
For \texttt{t2001} and \texttt{t2005} the blocked periods equal the
original periods; for shuriken they are half the original periods.
This proves the sufficient geometric bounds in Table~\ref{SM-tab:generic}.

\paragraph{Splitting double bonds and removing internal kernels.}
Within every triangle, group the physical and two external legs of
site $i$ into the output space $E_i=\mathbb C^{q^3}$, giving an
independent map $A_i:\mathbb C^{q^2}\to E_i$.
Impose full column rank for these maps and the generic two-site
detector condition~\eqref{SM-eq:rubysplit} for every selected double-bond
endpoint. The Vandermonde witnesses~\eqref{SM-eq:rubysplitwitness}
prove that each condition is nonempty and open for every $q\ge2$.

We also impose generic tangent completeness for each internal triangle
contraction, with the three spaces $E_i$ as physical outputs.
This is an independently justified local condition: the three maps
$\mathbb C^{q^2}\to\mathbb C^{q^3}$ range over their full parameter
spaces, single-site injectivity is nonempty and open, and the
injective fundamental theorem on the simple triangle identifies
its finite fibers with internal-edge gauge orbits. The generic-fiber
argument of Lemma~\ref{lem:fiber} then gives a nonempty open set
where the local differential kernel consists of internal gauges.
Reshaping the original tensors into the maps $A_i$ is an invertible
linear change of coordinates, so this local open set pulls back to
a nonempty open set of the original parameters.

The lifting step of Appendix~\ref{SM-sec:blockingtemplate} now applies:
intersect these conditions with all routed-region injectivity
conditions, the maximal original gauge-rank open set, and the
maximal-rank open set of the original global contraction. The
maximal-rank condition does not assume a value for the global rank.

For a double bond, choose the triangle endpoint and sites $i,j$
specified above. Apply $Q_i\otimes Q_j$, where
$Q_i:E_i\to E_i/\im A_i$, to the block tangent equation.
Every original single-site variation vanishes, since at least one
projector acts on an unchanged site. Every other coarse-bond action
also vanishes by the isolation property. Applying the third site's
left inverse opens the remaining two internal legs. Thus the target
generator is in the kernel of the two-site detector and has the form
$X\otimes I+I\otimes Y$ by Eq.~\eqref{SM-eq:rubysplit}.
The paired negative transpose at the other endpoint preserves this
decomposition, so one detecting endpoint suffices to split the double
bond into two original-edge gauges. No splitting is needed for
\texttt{t2001}.

Subtract these original external-edge gauges from $\delta A$.
Every block contraction now has zero variation. Singleton variations
are zero, and triangle variations are internal-edge gauges by the
local open condition established above. Hence
$\ker J_A\subseteq\im\Phi_A$; the reverse inclusion follows from
$J_A\Phi_A=0$. The maximal original gauge rank gives
Eq.~\eqref{eq:genericrank}.

\paragraph{Explicit ranks and WGS rank increases.}
With $M=L_xL_y$ counting \emph{original} cells, including for shuriken,
write $\rank J_{\rm generic}=1+M\rho_G(q)$. Then
\begin{align}
\rho_{\texttt{t2001}}(q)&=6q^5+12q^4-30(q^2-1)-18,\nonumber\\
\rho_{\texttt{t2005}}(q)&=18q^5-36(q^2-1)-18,\nonumber\\
\rho_{\rm shuriken}(q)&=6q^5-12(q^2-1)-6.
\label{SM-eq:trianglegenericranks}
\end{align}
The respective vertex and edge counts are $(N,E)=(18M,30M)$,
$(18M,36M)$, and $(6M,12M)$.
Since the stated periods preserve the WGS neighborhoods, the exact
rank increases from WGS to generic tensors, divided by $N$, are
\begin{equation}
\frac{r^2(2q+7)}9,\qquad
\frac{r^2(2q+4)}3,\qquad
\frac{r^2(4q+3)}3,
\label{SM-eq:trianglerankdrops}
\end{equation}
in the same order, with $r=q-1$.

\subsection{Four further 2-uniform tilings and checkerboard}
\label{SM-sec:fiveextension}
We establish Eq.~\eqref{eq:genericrank} for \texttt{t2002},
\texttt{t2003}, \texttt{t2004}, and \texttt{t2011} at original
periods $L_x,L_y\ge32$, and for checkerboard at even original
periods $L_x,L_y\ge64$. The statements hold for independent
original tensors at every integer $q\ge2$.

\paragraph{Exact blockings.}
In the integer coordinates of the 2-uniform dataset, select the
following triangles for \texttt{t2002}:
\begin{align}
T^0_{xy}&=\{(x,y,2),(x,y,3),(x,y,4)\},\nonumber\\
T^1_{xy}&=\{(x,y,5),(x,y,6),(x,y,7)\}.
\label{SM-eq:t2002blocks}
\end{align}
Labels 0 and 1 remain singleton blocks. Each triangle contains two
degree-four sites and one degree-three site. The coarse cell has
four vertices and seven edges, six single and one double.
The double bond has an endpoint where it touches the two degree-four
sites, and no other incident coarse bond touches both.

For \texttt{t2003} and \texttt{t2004}, respectively, use the dimers
\begin{align}
D^{(3)}_{xy}&=\{(x,y,0),(x,y,1)\},\nonumber\\
D^{(4)}_{xy}&=\{(x,y,0),(x,y-1,1)\}.
\label{SM-eq:twouniformdimers}
\end{align}
Both dimer sites have degree five. The remaining degree-four sites
are singletons: labels 2 and 3 for \texttt{t2003}, and label 2 for
\texttt{t2004}. Their coarse cells have three vertices and six edges,
and two vertices and four edges, respectively. Each has one triple-bond
translation type; all other bonds are single. At each dimer, the two
incident triple bonds partition their original legs as $1+2$ and
$2+1$ between its sites. Each site has one further spectator leg.
At both endpoints of a triple bond the common refinement of these
site partitions consists of three individual original edges.

For \texttt{t2011}, the triangles
\begin{equation}
T_{xy}=\{(x,y,2),(x,y+1,0),(x,y+1,1)\}
\label{SM-eq:t2011blocks}
\end{equation}
partition all original sites. There is one coarse vertex type and
two coarse-edge types, one single and one double. The double bond
touches a distinct pair of sites at a detecting endpoint, and other
incident bonds do not touch both sites. An edge between identical
block labels in different cells is not a self-loop of the finite graph.

For checkerboard, color the vertices $(x,y)$ of the square source graph
by $x+y\pmod2$. The four source edges incident on each black vertex
form a $K_4$ in the line graph. Every source edge has exactly one
black endpoint, so these cliques partition the original checkerboard
sites. Use a $2\times2$ source supercell to preserve the coloring.
Each blocked cell contains eight original sites, two $K_4$ blocks,
and eight coarse edges, four single and four double. Each original
site has three internal and three external legs. Every double bond
has a detecting endpoint at two distinct clique sites, with no other
incident coarse bond touching both. All these structural properties
are verified from the frozen original adjacency.

\paragraph{Open partitions and period bounds.}
Use the boxes and cover inclusions of
Eqs.~\eqref{SM-eq:triangleboxes} and \eqref{SM-eq:trianglecover} for each
new coarse graph. The frozen certificates provide all edge partitions
$A,B,O\setminus(A\cup B)$, all vertex partitions
$R,R\cup\{v\},O\setminus R,O\setminus(R\cup\{v\})$, and the
cover $D$, with the small regions contained in $K$.
They cover all 27 coarse-edge and 12 coarse-vertex translation types
across the five graphs. Every region expands to complete original
blocks before its paths are checked. Thus all routed virtual legs
and physical endpoints have dimension $q$; the physical dimensions
$q^2,q^3,q^4$ of dimers, triangles, and cliques are accounted for by
their original sites.

The regrouped original edges have coordinate shifts at most one.
The same translated-cover proof as in
Appendix~\ref{SM-sec:triangleextension} therefore applies at blocked
periods at least 32: the templates and their exterior shells embed
without boundary identifications, and translates of $D$ centered
outside $K^+$ avoid $K$ and cover every exterior block, including
across seams. The union lemma gives injectivity of the full complements.
For fixed $q$, the product of the nonzero original-parameter minors
over all required regions and translations is nonzero. Hence the
normal theorem's hypotheses hold on one nonempty open set of the
original parameters. Checkerboard's blocked periods are half its
original periods; the four tilings use their original cells directly.

\paragraph{A clique extension of the two-site detector.}
Let $k=2$ or $3$. Consider two adjacent sites with $k$ internal legs
each, one shared and the other $k-1$ open. Group their outputs to
give maps $\mathbb C^{q^k}\to\mathbb C^{q^{k+1}}$, and select one
$q$-dimensional external output at each site. The corresponding
double-quotient detector has generic kernel
\begin{equation}
\operatorname{End}(\mathbb C^q)\otimes I
+I\otimes\operatorname{End}(\mathbb C^q).
\label{SM-eq:cliquesplit}
\end{equation}
To prove this for every $q$, choose $q^k$ distinct values $t_{p,y}$,
where $p$ has $q$ values and $y$ denotes $k-1$ independent $q$-indices,
and let $v_{p,y}=(1,t_{p,y},\ldots,t_{p,y}^{q-1})^T$.
With an invertible, entrywise nonzero matrix $W$, take
\begin{align}
A[p,y,x;a,b]&=\delta_{pa}\delta_{yb}(v_{p,y})_x,\nonumber\\
B[p,y,x;a,b]&=W_{ap}\delta_{yb}(v_{p,y})_x.
\label{SM-eq:cliquewitness}
\end{align}
Here $p$ is physical, $y$ denotes spectator outputs, $x$ is selected,
$a$ is the shared input, and $b$ denotes the remaining inputs.
Both maps have rank $q^k$, and their images are one line in each
$(p,y)$ output fiber. After contraction, the detector contains
\begin{equation}
W_{pr}(\lambda\otimes\mu)T(v_{p,y}\otimes v_{r,z}),
\quad \lambda v_{p,y}=\mu v_{r,z}=0.
\label{SM-eq:cliquefunctionals}
\end{equation}
The single-site functionals have common kernel $\mathbb CI$:
$q$ Vandermonde lines diagonalize a line-preserving operator and
one further line forces equal eigenvalues. Since $q^k\ge q+1$,
their products detect the full tensor square of
$\operatorname{End}(\mathbb C^q)/\mathbb CI$.
This proves Eq.~\eqref{SM-eq:cliquesplit}, with rank $(q^2-1)^2$,
and supplies a nonzero minor defining a nonempty open condition.

For a checkerboard double bond, apply the two quotients at its
detecting clique sites $i,j$. Original single-site variations vanish,
as do all other coarse-bond actions by the isolation condition.
Apply left inverses at the remaining two clique sites. This opens
the other internal inputs of $i,j$ and leaves a Bell factor on the
edge between the remaining sites. Selecting a nonzero coordinate
of that factor yields exactly the $k=3$ detector above.
Thus the target generator splits into two original-edge generators.
The paired negative transpose at the other endpoint preserves the
decomposition, so one detecting endpoint suffices.
For \texttt{t2002}, the two detecting sites have degree four and
use $k=2$; the third site's map $\mathbb C^{q^2}\to\mathbb C^{q^2}$
is generically invertible and opens the remaining legs. Its zero
quotient is never used as a detector. The case \texttt{t2011}
uses the original $k=2$ argument directly.

\paragraph{Triple bonds and the original differential kernel.}
The dimers in \texttt{t2003} and \texttt{t2004} have precisely the
two incident $1+2$ and $2+1$ triple bonds and the spectator legs
required by Eq.~\eqref{SM-eq:dimersplitwitness}. The joint detector in
Appendix~\ref{SM-sec:m29generic} has rank $2(q^2-1)(q^4-1)$ on their
non-single-site parts; the small$\otimes$large and
large$\otimes$small direct summands cannot cancel.
Each generator therefore splits according to the site partition
at each endpoint. Expanding each original-leg operator space as
$\mathbb CI\oplus\mathfrak{sl}(q)$ shows that a nontrivial support
must lie in one part of both endpoint partitions. Their common
refinement is into individual legs, so each triple-bond generator
is a sum of three original-edge generators.

To complete the argument, impose all routed-region conditions and
the local detector conditions simultaneously with generic tangent
completeness of each internal dimer, triangle, or $K_4$ contraction.
The latter is an independent nonempty open condition: every grouped
site map ranges over its full matrix space and has output dimension
at least its input dimension. Single-site injectivity is nonempty;
the injective fundamental theorem on the internal simple graph and
Lemma~\ref{lem:fiber} then imply its generic tangent completeness.
Reshaping original site tensors is an invertible linear coordinate
change, so these conditions pull back to nonempty open sets in the
original parameter space.

The lifting step of Appendix~\ref{SM-sec:blockingtemplate} then applies:
intersect also with the maximal original gauge-rank open set and the
maximal-rank open set of the original global contraction. The coarse
generators split by the preceding arguments; subtracting the
corresponding original external gauges leaves zero variation of every
block contraction. Singleton variations vanish, and all other residual
variations are internal gauges by the local open conditions. This
proves $\ker J_A=\im\Phi_A$ and Eq.~\eqref{eq:genericrank}.

\paragraph{Ranks and WGS comparison.}
With $M=L_xL_y$ counting original cells, including for checkerboard,
write $\rank J_{\rm generic}=1+M\rho_G(q)$. The polynomials are
\begin{align}
\rho_{\texttt{t2002}}(q)&=4q^5+4q^4-14(q^2-1)-8,\nonumber\\
\rho_{\texttt{t2003}}(q)&=2q^6+2q^5-9(q^2-1)-4,\nonumber\\
\rho_{\texttt{t2004}}(q)&=2q^6+q^5-7(q^2-1)-3,\nonumber\\
\rho_{\texttt{t2011}}(q)&=3q^5-6(q^2-1)-3,\nonumber\\
\rho_{\rm checkerboard}(q)&=2q^7-6(q^2-1)-2.
\label{SM-eq:fivegenericranks}
\end{align}
In this order, the exact WGS-to-generic rank increases divided by
$N$, with $r=q-1$, are
\begin{gather}
r^2\bigl(q+\tfrac34\bigr),\qquad
r^2\bigl(\tfrac12q^2+q+2\bigr),\nonumber\\
r^2\bigl(\tfrac23q^2+\tfrac43q+2\bigr),\qquad
r^2\bigl(\tfrac13q^2+\tfrac43q+\tfrac23\bigr),\nonumber\\
\tfrac12r^2[3(q+1)^2+2].
\label{SM-eq:fiverankdrops}
\end{gather}
These follow from the independently proved generic completeness and
the WGS counts at locally faithful periods.

\subsection{Bowtie blocks for two overlapping-triangle tilings}
\label{SM-sec:bowtieextension}
For \texttt{t2006} and \texttt{t2007}, Eq.~\eqref{eq:genericrank}
holds for independent original tensors at every integer $q\ge2$
and original cell periods $L_x,L_y\ge32$. Each original cell has
five degree-four vertices and ten edges. The new ingredient is a
five-site block consisting of two triangles sharing one vertex,
whose central site is not itself injective.

\paragraph{Block geometry and global partitions.}
Write a block as $\{c\}\cup L\cup R$, where $c$ is its center
and $L,R$ are the two leaf pairs. For \texttt{t2006}, choose
\begin{equation}
\begin{aligned}
c_{xy}&=(x,y,0),\\
L_{xy}&=\{(x-1,y,3),(x,y,1)\},\\
R_{xy}&=\{(x-1,y-1,4),(x,y-1,2)\}.
\end{aligned}
\label{SM-eq:t2006bowtie}
\end{equation}
For \texttt{t2007}, choose
\begin{equation}
\begin{aligned}
c_{xy}&=(x,y,4),\\
L_{xy}&=\{(x,y,3),(x+1,y,2)\},\\
R_{xy}&=\{(x,y+1,1),(x+1,y+1,0)\}.
\end{aligned}
\label{SM-eq:t2007bowtie}
\end{equation}
In each case these translated blocks partition all original sites.
The internal graph has edges $ca,cb,ab,cd,ce,de$, writing
$L=\{a,b\}$ and $R=\{d,e\}$. The center has no external legs;
each leaf has two. The coarse graph of \texttt{t2006} has three
edge types with shifts $(-1,-1),(-1,0),(0,-1)$: the horizontal
bond is double and the others are single. For \texttt{t2007},
the horizontal and vertical bonds are both double.

At a detecting endpoint, a horizontal double bond touches one leaf
from each wing. The vertical double bond of \texttt{t2007} touches
the two leaves of a single wing. In every case no other incident
coarse bond touches both selected leaves; opposite incidences of
the same translation type are included in this check.

The open-region certificates use exactly the boxes and cover of
Eqs.~\eqref{SM-eq:triangleboxes} and \eqref{SM-eq:trianglecover}.
There are fourteen regions for \texttt{t2006} and eleven for
\texttt{t2007}, covering every coarse-edge and coarse-vertex type.
All routes are on original $q$-dimensional legs and physical sites,
not on hypothetical unrestricted coarse tensors. The regrouped edge
shifts are at most one. Thus the translated-cover argument of
Appendix~\ref{SM-sec:triangleextension} proves both normal-partition
hypotheses on every torus with original periods at least 32.

\paragraph{Internal completeness despite a noninjective center.}
Regard the physical and two external legs at each leaf $i$ as the
output space $E_i=\mathbb C^{q^3}$. Its map
$A_i:\mathbb C^{q^2}\to E_i$ is generically injective.
The center tensor $C$ has one physical and four internal indices.
For every choice of two internal indices, impose full column rank
on the flattening
\begin{equation}
C_{ij}:\mathbb C^{q^2}\longrightarrow\mathbb C^{q^3},
\label{SM-eq:bowtiecenter}
\end{equation}
whose outputs are the physical index and the other two internal
indices. Each condition has an elementary identity-matrix witness,
so all six conditions hold on a nonempty common open set. This
does not require injectivity of $C$ as a map
$\mathbb C^{q^4}\to\mathbb C^q$.

The wing map from its two center legs to $E_a\otimes E_b$ is
injective: left inverses of $A_a,A_b$ recover the two input legs
together with a fixed nonzero Bell vector on $ab$.
The region $\{c,a,b\}$ is also injective, by composing the
full-rank flattening of $C$ with this injective wing map.
The same statements hold for $\{d,e\}$ and $\{c,d,e\}$.

These regions verify the normal theorem for the internal bowtie
itself. For edge $ca$, take the three blocks
$\{c,d,e\},\{a\},\{b\}$; for edge $ab$, take
$\{a\},\{b\},\{c,d,e\}$. Relabeling leaves and exchanging
wings covers every edge. For the center vertex use $R=\{a,b\}$;
then $R$, $R\cup\{c\}$ and their complements are the two wings
and the two center-plus-wing regions. For leaf $a$, use $R=\{b\}$.
Its complements are $\{a,c,d,e\}$ and $\{c,d,e\}$; the first is
injective as the union of the injective singleton $\{a\}$ and
the injective region $\{c,d,e\}$. The other leaf cases are symmetric.
All separating connections are single original edges.
The normal theorem and Lemma~\ref{lem:fiber} therefore give
generic internal tangent completeness in the full original block
parameter space, with the external legs treated as physical outputs.

\paragraph{Adjacent and nonadjacent leaf detectors.}
Let $Q_i:E_i\to E_i/\im A_i$ be the leaf quotients. Apply
$Q_i\otimes Q_j$ at the selected endpoint of an external double
bond. Every original single-site variation vanishes, as do all
other coarse-bond actions by the isolation property.
Left inverses at the two unselected leaves expose the corresponding
center indices and any remaining leaf-edge indices. The left inverse
of Eq.~\eqref{SM-eq:bowtiecenter} then exposes the center legs of
the two selected leaves. Any Bell factor between unselected leaves
can be evaluated at a nonzero coordinate.

If the selected leaves are adjacent, one shared internal edge remains
and the other input of each leaf is open. The two-site lemma
\eqref{SM-eq:rubysplit} therefore splits the coarse generator.
If the leaves belong to different wings, both internal inputs at
each leaf are now independently open and there is no shared edge.
The single-leaf map
\begin{equation}
K\longmapsto Q_i(I\otimes K)A_i
\label{SM-eq:bowtieleafdetector}
\end{equation}
on its selected external leg has generic kernel $\mathbb CI$.
Indeed, the $q^2$ Vandermonde lines of
Eq.~\eqref{SM-eq:rubysplitwitness}, with both internal inputs open,
give the functionals $K\mapsto\lambda Kv_{ab}$; any operator
preserving these lines is scalar, as already proved there.
The two nonadjacent leaf detectors form a tensor product and hence
have kernel
\begin{equation}
\operatorname{End}(\mathbb C^q)\otimes I
+I\otimes\operatorname{End}(\mathbb C^q).
\label{SM-eq:bowtiesplit}
\end{equation}
Thus both kinds of double bond split into two original-edge gauges
for every $q\ge2$. The matched negative transpose at the other
endpoint preserves the splitting.

\paragraph{Global lifting and rank.}
The lifting step of Appendix~\ref{SM-sec:blockingtemplate} applies to
the intersection of the routed-region, leaf-injectivity,
center-flattening, detector, and internal generic-completeness open
sets. The preceding detectors split the coarse generators; after
subtracting the corresponding original external gauges, each block
has zero contraction variation, hence only internal gauges by the
internal bowtie result. This proves Eq.~\eqref{eq:genericrank}
without assuming an injective center.

With $M=L_xL_y$ counting original cells, both graphs have
\begin{equation}
\rank J_{\rm generic}
=1+M[5q^5-10(q^2-1)-5].
\label{SM-eq:bowtierank}
\end{equation}
Their exact WGS-to-generic rank increase per original site is
\begin{equation}
\frac{\rank J_{\rm generic}-\rank J_{\rm WGS}}{N}
=\frac{(q-1)^2(4q+6)}5.
\label{SM-eq:bowtierankdrop}
\end{equation}
The two arrangements have different coarse graphs despite identical
degree counts and rank polynomials.

\subsection{Diamond blocks and region detectors}
\label{SM-sec:diamondextension}
We establish Eq.~\eqref{eq:genericrank} for \texttt{t2009} and
\texttt{t2012}, for independent original tensors at every integer
$q\ge2$ and original cell periods $L_x,L_y\ge32$. A diamond has
vertices $a,b,c,d$ and edges $ab,ac,bc,bd,cd$, with tips $a,d$
and centers $b,c$. Each tip has original degree four and each center
degree five. Every site has two external legs. Grouping these with
its physical leg gives $E_i=\mathbb C^{q^3}$; the tip maps have
input dimension $q^2$, and the center maps have input dimension $q^3$.
They are all generically injective. In particular, the center maps
are square and their single-site quotients are zero. The following
region detectors avoid using those quotients.

\paragraph{Exact geometry and normal partitions.}
For \texttt{t2012}, use the translated diamonds
\begin{equation}
\begin{split}
D_{xy}=\{&(x,y,1),(x,y,2),\\
&(x,y,3),(x+1,y+1,0)\}.
\end{split}
\label{SM-eq:t2012diamond}
\end{equation}
The displayed order is $a,b,c,d$. There are two coarse-edge types:
a single bond with shift $(-1,-1)$ and a triple bond with shift
$(-1,0)$. At its first endpoint the triple bond has one leg at $a$
and two at $b$; at its other endpoint two are at $c$ and one at $d$.
The three original edges have distinct endpoint-site pairs, so the
common refinement of these two site partitions consists of single edges.
At either endpoint, every other coarse-bond incidence acts wholly
within the selected tip or wholly within the complementary triangle.

For \texttt{t2009}, the following three translated diamonds partition
all twelve sites of an original cell:
\begin{align}
D^0_{xy}&=\{(x,y,2),(x,y,3),(x,y,6),(x,y,8)\},\nonumber\\
D^1_{xy}&=\{(x,y,5),(x,y,9),\nonumber\\
        &\hspace{1.5em}(x,y+1,1),(x,y+1,4)\},\nonumber\\
D^2_{xy}&=\{(x,y,10),(x,y,11),(x,y+1,7),\nonumber\\
        &\hspace{1.5em}(x+1,y+1,0)\}.
\label{SM-eq:t2009diamonds}
\end{align}
The six coarse-edge types are all double. At every diamond, each
incident double bond touches one tip and one center. Each tip has
two such incidences, one through each center; each center has one
external leg in the pair of incidences selected by either tip.
These incidence statements include opposite ends of translated
bonds and are checked directly from the original adjacency.

The open certificates contain 31 regions for \texttt{t2009} and 11
for \texttt{t2012}, covering all eight coarse-edge and four
coarse-vertex types. They use the boxes and cover of
Eqs.~\eqref{SM-eq:triangleboxes} and \eqref{SM-eq:trianglecover}.
All 9934 paths use original virtual legs and distinct original
physical endpoints. The regrouped original edge shifts have size
at most one. Consequently the translated-cover proof in
Appendix~\ref{SM-sec:triangleextension} verifies the normal theorem
for every original torus with periods at least 32.

\paragraph{The region quotient.}
Choose a tip $a$, and let $B$ be the contraction of the remaining
triangle $\{b,c,d\}$, with its two legs to $a$ left open. Both
\begin{equation}
A:\mathbb C^{q^2}\longrightarrow E_a,\qquad
B:\mathbb C^{q^2}\longrightarrow E_b\otimes E_c\otimes E_d
\label{SM-eq:diamondregionmaps}
\end{equation}
are generically injective. This follows for $B$ by taking left
inverses of the three injective site maps and recovering its two
inputs and the three internal Bell vectors. Let $Q_A,Q_B$ be the
quotients by their images. Applying $Q_A\otimes Q_B$ to the block
tangent equation kills every original site variation: a variation
at $a$ is killed by $Q_B$, and a variation in the triangle is killed
by $Q_A$. It also kills every coarse action confined to one side
of this partition. Quotient matrices can be chosen rationally on
full-rank minor charts, so a nonzero detector minor defines a
nonempty Zariski-open condition in the original tensors.

\paragraph{An isolated $1+2$ triple bond.}
For \texttt{t2012}, let $x$ be the selected external leg at $a$,
and $x_1,x_2$ the selected external legs at $b$. Write the other
tip outputs as $p,s$, and its inputs as $i,j$. Choose distinct
$t_{ij}$ and set
\begin{equation}
\begin{gathered}
A[p,s,x;i,j]=\delta_{pi}\delta_{sj}(v_{ij})_x,\\
v_{ij}=(1,t_{ij},\ldots,t_{ij}^{q-1})^T.
\end{gathered}
\label{SM-eq:diamondtipwitness}
\end{equation}
The $q^2$ lines spanned by $v_{ij}$ have only scalar common
stabilizers, by the Vandermonde argument used above. Hence
$K\mapsto Q_A K_x A$ has kernel $\mathbb CI$.

A simultaneous witness for the triangular region can be given with
Kronecker deltas. Label its internal edges $bc,bd,cd$ by $u,v,w$,
and write the physical and two external outputs of $c,d$ as
$(p_c,y_1,y_2)$ and $(p_d,z_1,z_2)$. Take
\begin{align}
A_b[p_b,x_1,x_2;i,u,v]
 &=\delta_{p_b i}\delta_{x_1u}\delta_{x_2v},\nonumber\\
A_c[p_c,y_1,y_2;j,u,w]
 &=\delta_{p_c j}\delta_{y_1u}\delta_{y_2w},\nonumber\\
A_d[p_d,z_1,z_2;v,w]
 &=\delta_{p_d0}\delta_{z_1v}\delta_{z_2w}.
\label{SM-eq:diamondtriplewitness}
\end{align}
Thus $B|i,j\rangle$ consists of the visible label $|i,j\rangle$
at $p_b,p_c$, two Bell vectors pairing $(x_1,x_2)$ with $(y_1,z_1)$,
and a spectator Bell vector on $y_2,z_2$.
For $Y\in\operatorname{End}(\mathbb C^{q^2})$, the map
$Y\mapsto(Y\otimes I)|\Omega_{q^2}\rangle$ modulo the Bell line
has kernel $\mathbb CI$ and rank $q^4-1$. The visible $i,j$ labels
retain every column of the tip detector. Evaluating the spectator
Bell vector at a nonzero coordinate, the joint detector is therefore
the tensor product of maps of ranks $q^2-1$ and $q^4-1$.
Its kernel on the triple-bond generator is exactly
\begin{equation}
\operatorname{End}(\mathbb C^q)\otimes I
+I\otimes\operatorname{End}(\mathbb C^{q^2}).
\label{SM-eq:diamondtriplesplit}
\end{equation}
The geometry isolates this action from all other coarse incidences.
Repeating at the opposite endpoint gives the other $2+1$ partition.
Expanding original-leg operator spaces as
$\mathbb CI\oplus\mathfrak{sl}(q)$ shows that only supports contained
in a part of both partitions survive. Their common refinement is
single edges, so the triple generator splits into three original
bond generators.

\paragraph{Two double bonds detected jointly.}
For \texttt{t2009}, fix a tip $a$ and call its external legs $x,y$.
The two selected double bonds pair these with external legs $s_b,s_c$
at centers $b,c$, respectively. Let their other external legs be
$z_b,z_c$. All other coarse actions are confined to the triangular
region. The map $A$ can be chosen generically so that both
\begin{equation}
F_x(K)=Q_A K_x A,\qquad F_y(K)=Q_A K_y A
\label{SM-eq:diamondtipdetectors}
\end{equation}
have kernel $\mathbb CI$. Each condition has the witness
\eqref{SM-eq:diamondtipwitness}, with the two external outputs exchanged
when necessary; irreducibility makes their intersection nonempty.

To prove joint detection, choose the center tensors to route their
inputs by
\begin{align}
A_b[p_b,s_b,z_b;i,u,v]
 &=\delta_{p_bu}\delta_{s_bv}\delta_{z_bi},\nonumber\\
A_c[p_c,s_c,z_c;j,u,w]
 &=\delta_{p_cu}\delta_{s_cw}\delta_{z_cj}.
\label{SM-eq:diamonddoublewitness}
\end{align}
Choose any injective $A_d:\mathbb C^{q^2}\to E_d$.
Then $B|i,j\rangle$ factors as the visible labels at $z_b,z_c$,
a Bell vector at $p_b,p_c$, and a fixed vector
\begin{equation}
|T\rangle=\sum_{v,w}|v,w\rangle_{s_b,s_c}
                  \otimes A_d|v,w\rangle.
\label{SM-eq:diamondschmidt}
\end{equation}
Its Schmidt rank across $(s_b,s_c)|E_d$ is $q^2$.
Modulo $\mathbb C|T\rangle$, the maps
$G_b(K)=(K\otimes I)|T\rangle$ and
$G_c(K)=(I\otimes K)|T\rangle$, each defined on
$\operatorname{End}(\mathbb C^q)/\mathbb CI$, are injective
and have disjoint images. Indeed, applying a left inverse of $A_d$
reduces a relation between these images to
$K\otimes I+I\otimes H=\lambda I$, which forces both $K,H$
to be scalar.

Retaining the visible $i,j$ labels, the two non-single-leg parts of
the coarse generators are detected by
$F_x\otimes G_b$ and $F_y\otimes G_c$. Their images are disjoint,
so the joint detector has rank $2(q^2-1)^2$: the two coarse actions
cannot cancel. Both double bonds split into original-edge generators.
Repeating for the other tip covers every incident double bond.
All witnesses used here are assignments of original site tensors;
no arbitrary choice of a coarse tensor is assumed.

\paragraph{Lifting to the original network and explicit ranks.}
The internal diamond is an injective network on a simple graph when
its external legs are treated as physical outputs. The injective
fundamental theorem and Lemma~\ref{lem:fiber} therefore give generic
internal tangent completeness. The lifting step of
Appendix~\ref{SM-sec:blockingtemplate} then applies to the intersection
of this open set with the routed-region conditions and the region
detector minors. The preceding detectors split the coarse generators.
Subtracting these original external gauges leaves zero contraction
variation within every diamond, and internal completeness removes
the residual variations. This proves Eq.~\eqref{eq:genericrank}.

For $M=L_xL_y$ original cells, the ranks are $1+M\rho_G(q)$ with
\begin{align}
\rho_{\texttt{t2009}}(q)&=6q^6+6q^5-27(q^2-1)-12,\nonumber\\
\rho_{\texttt{t2012}}(q)&=2q^6+2q^5-9(q^2-1)-4.
\label{SM-eq:diamondgenericranks}
\end{align}
Their exact WGS-to-generic rank increases per original site are,
respectively,
\begin{equation}
\frac{(q-1)^2(3q^2+16q+20)}{12},\qquad
(q-1)^2(q+1)^2.
\label{SM-eq:diamondrankdrops}
\end{equation}

\subsection{Injective wheel blocks with noninjective rim sites}
\label{SM-sec:wheelextension}
For \texttt{t2010} and \texttt{t2013}, Eq.~\eqref{eq:genericrank}
holds for every integer $q\ge2$ and original periods $L_x,L_y\ge3$.
The proof uses seven-site wheels. Their boundary maps are injective,
but grouping the external legs with the physical outputs leaves
individual rim sites noninjective. We therefore prove internal
completeness separately.

\paragraph{Wheel geometry and block injectivity.}
In the integer coordinates of the dataset, take the closed neighborhoods
of $(x,y,5)$ for \texttt{t2010}, and of $(x,y,6)$ and $(x,y,12)$
for \texttt{t2013}. These neighborhoods partition the original sites.
Each consists of a degree-six center $c$ and a six-cycle of degree-four
rim sites $r_1,\ldots,r_6$, with spokes $cr_i$. Each rim site has
exactly one external leg. The wheel boundary map is generically
injective: copy each external index to its rim physical index and
fix every internal index and the center physical index to zero.
This gives a $q^6$-dimensional identity map into the $q^7$-dimensional
physical output.

For \texttt{t2010}, there is one coarse vertex per original cell;
its three single-edge types have shifts $(-1,0),(0,-1),(-1,1)$.
For \texttt{t2013}, there are two coarse vertices and three double-edge
types with shifts $(0,-1),(0,0),(1,-1)$ from the first type to the second.
At either endpoint these double bonds touch three disjoint adjacent
rim pairs that cover the six-cycle. The frozen adjacency verifies
these properties directly. For periods at least three, neighbor
offsets differing by at most two cannot alias. Thus the coarse graph,
after parallel original legs are combined, is simple on every stated
torus. The injective fundamental theorem applies to this coarse network
without further regional routing.

\paragraph{Three invertible rim dimers.}
Pair consecutive rim sites into three dimers, using the external
bond pairs just described for \texttt{t2013}. A contracted dimer
$D_i$ has output dimension $q^4$ from its two physical and two external
legs, and input dimension $q^4$ from its two spokes and the two
uncontracted rim edges. It is generically invertible: at either site,
copy the spoke and uncontracted rim inputs to the physical and external
outputs, respectively, and fix the shared dimer edge to zero.
These conditions and wheel injectivity need not use the same witness.

Write $n=q^2$ and group the two spokes of dimer $i$ into
$S_i=\mathbb C^n$. After applying the three $D_i^{-1}$, the internal
wheel tensor is
\begin{equation}
\begin{gathered}
C\otimes|\Omega_q\rangle_{12}
 \otimes|\Omega_q\rangle_{23}
 \otimes|\Omega_q\rangle_{31},\\
C\in\mathbb C^q\otimes S_1\otimes S_2\otimes S_3.
\end{gathered}
\label{SM-eq:wheelcore}
\end{equation}
The three Bell vectors are the remaining rim edges. Impose full column
rank on each one-$S_i$ flattening of $C$. Also impose, for each pair
$i\ne j$, the condition
\begin{equation}
\begin{gathered}
(X_i\otimes I+I\otimes Y_j)C=0\\
\Longrightarrow\quad
X_i=\lambda I,\quad Y_j=-\lambda I.
\end{gathered}
\label{SM-eq:wheelpairrigidity}
\end{equation}
Each is a nonempty open condition. For the latter, regard the physical
index and the third spoke pair as slice labels, and choose three slices
on $S_i\otimes S_j$ to be $I,D,S$, where $D$ is diagonal with distinct
entries and $S$ is a cyclic shift. The first slice gives $Y_j^T=-X_i$;
the other two make $X_i$ commute with both $D$ and $S$, hence scalar.
There are $q^3\ge3$ slice labels. The separate flattening and pair
conditions therefore hold on a common nonempty open set.

Now consider a zero tangent of this center-plus-three-dimer network.
Pulling back the dimer variations gives operators
$H_i=D_i^{-1}\delta D_i$ on $S_i$ and its two rim half-edges.
Expand each rim operator space as $\mathbb CI\oplus\mathfrak{sl}(q)$.
On a Bell pair, these two summands give independent vector subspaces,
and transposition at the other endpoint preserves the decomposition.
The tangent equation consequently separates by the set of rim edges
carrying a nonidentity operator.
A two-edge support occurs in only one $H_i$; its coefficients vanish
because the $S_i$ flattening of $C$ has full column rank.
A one-edge support has contributions from its two endpoints; by
Eq.~\eqref{SM-eq:wheelpairrigidity}, their coefficients are opposite
scalars on the spoke spaces. They are ordinary rim-edge gauges.
The empty support consists of arbitrary operators on the $S_i$,
absorbed by the center variation. Hence the only remaining coarse
internal directions are the three spoke-pair gauges and the three
uncontracted rim-edge gauges.

\paragraph{Separating spoke and external pair actions.}
A single original dimer can also be regarded as a rank-$q$ bipartite
matrix, contracting its shared edge between site maps
$N_i:\mathbb C^q\to\mathbb C^{q^4}$. The four outputs at each site
are its physical index $p$, its uncontracted rim index $y$, its
external index $x$, and its spoke index $z$. Let $Q_i$ be quotients
by the images of $N_i$. The product $Q_1\otimes Q_2$ kills every
original single-site variation.

We need to distinguish a joint action on the two external indices
from a joint action on the two spoke indices. Choose distinct $t_{py}$,
put $v_{py}=(1,t_{py},\ldots,t_{py}^{q-1})^T$, and take an invertible
entrywise nonzero $W$. The original site tensors admit the witnesses
\begin{align}
N_1[p,y,x,z;a]&=\delta_{pa}(v_{py})_x(v_{py})_z,\nonumber\\
N_2[p,y,x,z;a]&=W_{ap}(v_{py})_x(v_{py})_z.
\label{SM-eq:wheeljointwitness}
\end{align}
Both maps have full column rank. An annihilator of the $x$ line in
each $(p,y)$ output fiber annihilates $N_i$ even when an arbitrary
operator acts on $z$. Applying these annihilators at both sites and
evaluating $z=0$ therefore detects only the external-pair action.
Its rows are the two-site Vandermonde product functionals, multiplied
by nonzero entries of $W$, so its rank is $(q^2-1)^2$ on the
non-single-leg operator space. Exchanging $x,z$ gives a second channel
of the same rank, which detects only the spoke-pair action.
Thus the joint detector has rank $2(q^2-1)^2$ and forces both
operators separately to be sums of their two single-leg actions.
These annihilators factor through the quotients $Q_i$, so the
nonzero minors prove generic conditions in the original site parameters.
Input/output transposes arising from reshaping preserve each
single-leg operator subspace.

For internal wheel completeness, remove the uncontracted rim gauges
found above. Each remaining dimer variation has the form $D_iK_i$
for a spoke-pair operator $K_i$. The spoke channel splits $K_i$
into two original spoke gauges. After subtracting these, the dimer
variation is zero, and the full-column rank-$q$ matrix factorization
has only its shared-edge gauge kernel. This proves internal tangent
completeness for the original wheel.

For an external coarse tangent of \texttt{t2013}, let $T_i$ act on
the two external legs of dimer $i$. Subtracting $T_iD_i$ from the
original dimer variation gives an internal zero tangent. The preceding
support decomposition, after removing the uncontracted rim gauges,
therefore yields
\begin{equation}
\delta D_i=T_iD_i+D_iK_i.
\label{SM-eq:wheeljointtangent}
\end{equation}
The two-channel detector splits both $T_i$ and $K_i$. In particular,
every external double bond splits into original-edge gauges at its
wheel endpoint. For \texttt{t2010}, all external bonds are already
single and only internal completeness is needed.

\paragraph{Global conclusion and ranks.}
The lifting step of Appendix~\ref{SM-sec:blockingtemplate} applies to
the intersection of the wheel-injectivity, dimer-invertibility,
center, two-channel detector, and full-column $N_i$ conditions.
The external splitting above lifts the coarse generators to original
edges; subtracting these gauges leaves only the internal wheel
kernels, already proved to be gauges. This gives
Eq.~\eqref{eq:genericrank} without requiring the individual
rim maps $\mathbb C^{q^3}\to\mathbb C^{q^2}$ to be injective.

With $M=L_xL_y$ original cells, write the ranks as $1+M\rho_G(q)$:
\begin{align}
\rho_{\texttt{t2010}}(q)&=q^7+6q^5-15(q^2-1)-7,\nonumber\\
\rho_{\texttt{t2013}}(q)&=2q^7+12q^5-30(q^2-1)-14.
\label{SM-eq:wheelgenericranks}
\end{align}
On locally faithful quotients, in particular $L_x,L_y\ge5$, their
WGS-to-generic rank increases per site are, respectively,
\begin{equation}
\frac{6(q-1)^2(q+1)^2}{7},\qquad
\frac{3(q-1)^2(2q^2+4q+3)}7.
\label{SM-eq:wheelrankdrops}
\end{equation}
For smaller stated periods the generic ranks still hold; the finite-graph
WGS formula supplies the appropriate comparison.

\subsection{Joint unequal triple bonds in \texorpdfstring{\texttt{t2020}}{t2020}}
\label{SM-sec:t2020extension}
For \texttt{t2020}, Eq.~\eqref{eq:genericrank} holds for every integer
$q\ge2$ and original periods $L_x,L_y\ge32$. The appropriate diamonds
have degree-five tips and degree-six centers. Their physical-plus-external
output spaces are all $\mathbb C^{q^4}$, while the internal input spaces
have dimensions $q^2$ at tips and $q^3$ at centers. All internal site maps
are therefore generically injective.

\paragraph{A diamond packing with certified partitions.}
Use the three translated blocks
\begin{align}
D^0_{xy}&=\{(x,y,6),(x,y,7),(x,y,8),(x,y,9)\},\nonumber\\
D^1_{xy}&=\{(x,y,10),(x,y+1,1),\nonumber\\
 &\hspace{1.5em}(x,y+1,2),(x,y+1,4)\},\nonumber\\
D^2_{xy}&=\{(x,y,11),(x,y+1,3),\nonumber\\
 &\hspace{1.5em}(x,y+1,5),(x+1,y,0)\}.
\label{SM-eq:t2020diamonds}
\end{align}
They partition all twelve original sites. The coarse cell has three
vertices and six triple-edge types. Every incidence meets one tip and
one center, with leg counts $1+2$ or $2+1$. At each tip its two
incidences meet distinct centers, with counts $1+2$ at one and $2+1$
at the other. Each center has an external spectator leg in either
chosen tip partition. At the two ends of every triple bond, the common
refinement of the site partitions is into the three original edges.
All these statements are checked from the frozen original adjacency.

The 31 open-region certificates cover all six coarse-edge and three
coarse-vertex types. They satisfy the same boxes and cover inclusions
as Eqs.~\eqref{SM-eq:triangleboxes} and \eqref{SM-eq:trianglecover}; their
original edge shifts are at most one. The same translated-cover
argument therefore proves the normal theorem's hypotheses on every
torus with original periods at least 32.

\paragraph{A tip code with scalar stabilizers on both sides.}
Let $n=q^2$, let $D$ be an $n\times n$ diagonal matrix with distinct
entries, and let $S$ be a cyclic shift. The $n$-dimensional matrix space
\begin{equation}
\mathcal U=\operatorname{span}\{I,D,\ldots,D^{n-2},S\}
\label{SM-eq:t2020code}
\end{equation}
has only scalar left and right multiplicative stabilizers. Indeed,
$Y\mathcal U\subseteq\mathcal U$ and $I\in\mathcal U$ imply
$Y=P(D)+aS$. Requiring $YS\in\mathcal U$ forces $a=0$, since
$S^2$ has support disjoint from both the diagonal and $S$ for $n\ge4$.
The remaining $P(D)S$ is a multiple of $S$ only when $P(D)$ is scalar.
The right-sided argument uses $SY$ and is identical.

Vectorize a basis of $\mathcal U$ to define an injective tip map
$A:\mathbb C^n\to\mathbb C^n\otimes\mathbb C^n$.
Assign the two external legs of one triple bond to the first output
factor; assign the physical and remaining external leg to the second.
With $Q_A$ the image quotient, the large-leg detector on
$\operatorname{End}(\mathbb C^{q^2})$ has kernel $\mathbb CI$ by
left rigidity. Right rigidity gives the same conclusion for the
small-leg detector on the remaining $q$-dimensional external leg.
Thus the two tip detectors have ranks $q^4-1$ and $q^2-1$
simultaneously, at a witness consisting of one original site tensor.

\paragraph{A region witness for the two unequal actions.}
Partition a diamond into a tip $a$ and its complementary triangle
$\{b,c,d\}$, as in Eq.~\eqref{SM-eq:diamondregionmaps}. Name the centers
so that the small external leg of $a$ is paired with two selected
external legs at $b$, and its large external pair is paired with one
selected leg at $c$. Write the two boundary inputs of the triangle
as $i,j$, and its internal edges $bc,bd,cd$ as $u,v,w$.
A witness for its map $B$ is
\begin{align}
A_b[p_b,s_1,s_2,z_b;i,u,v]
 &=\delta_{p_b s_2}\delta_{s_1v}\delta_{z_bi}\delta_{u0},\nonumber\\
A_c[p_c,t,z_c,e_c;j,u,w]
 &=\delta_{p_c0}\delta_{tw}\delta_{z_cj}\delta_{e_c0}\delta_{u0},\nonumber\\
A_d[p_d,d_1,d_2,d_3;v,w]
 &=\delta_{p_dv}\delta_{d_1w}\delta_{d_20}\delta_{d_30}.
\label{SM-eq:t2020regionwitness}
\end{align}
The visible output labels $z_b,z_c$ retain $i,j$. The selected outputs
$(s_1,s_2,t)$ are paired with $(p_d,p_b,d_1)$ by three Bell vectors;
all other outputs are fixed. Thus $B$ is injective and the fixed vector
on the selected outputs and their spectators has Schmidt rank $q^3$.
The center maps need not be injective at this particular detector
witness; generic internal site injectivity is an independent nonempty
open condition.

Modulo the line of this fixed vector, actions on the selected
$q^2$-dimensional factor at $b$ and the selected $q$-dimensional
factor at $c$ have independent images, of dimensions $q^4-1$ and
$q^2-1$. This follows by applying a Schmidt left inverse: a relation
reduces to $Y\otimes I+I\otimes Z=\lambda I$, forcing scalar
$Y,Z$. Applying $Q_A\otimes Q_B$ kills every original site variation
and every coarse action confined to one side of the tip partition.
Keeping the visible $i,j$ labels, the two remaining triple-bond
contributions are tensor products of the small tip detector with
the large region detector, and the large tip detector with the
small region detector. Their images are disjoint. The joint rank is
therefore
\begin{equation}
2(q^2-1)(q^4-1).
\label{SM-eq:t2020jointrank}
\end{equation}
Each triple generator consequently splits according to its endpoint
site partition. Repeating for both tips of every diamond and using
the common refinement at the opposite endpoint splits all triple
bonds into original-edge generators.

\paragraph{Global lifting and rank.}
The region witnesses, tip code, routed partitions, and internal
site-injectivity conditions are nonempty open conditions in the
original parameters, and internal completeness follows from the
injective fundamental theorem on the simple diamond graph and
Lemma~\ref{lem:fiber}. The lifting step of
Appendix~\ref{SM-sec:blockingtemplate} applies to their intersection;
the joint detector splits the coarse generators, and after
subtracting original external gauges, internal completeness removes
the residual block kernels. Thus Eq.~\eqref{eq:genericrank}
holds at all stated $q$ and periods.

For $M=L_xL_y$ original cells, the rank is
\begin{equation}
\rank J_{\rm generic}
=1+M[6q^7+6q^6-33(q^2-1)-12].
\label{SM-eq:t2020genericrank}
\end{equation}
The WGS-to-generic rank increase per original site is
\begin{equation}
\frac{(q-1)^2(27q^2+40q+20)}{12}.
\label{SM-eq:t2020rankdrop}
\end{equation}

\subsection{Exact qubit certificates and reductions for \texttt{t2008}, \texttt{t2014}, \texttt{t2015}, and \texttt{t2019}}
\label{SM-sec:open4qubits}
The results in this subsection distinguish fixed bond dimension from
an all-$q$ statement. For independent original tensors with $q=2$,
we establish Eq.~\eqref{eq:genericrank} for \texttt{t2008} at
$L_x,L_y\ge3$ and for \texttt{t2014} at $L_x,L_y\ge32$.
The local algebraic step is computer assisted: explicit integer
matrices have nonzero square minors modulo two primes. This proves
nonempty Zariski-open conditions over $\mathbb C$ at $q=2$; it does
not extrapolate to $q>2$. We also give certified normal partitions
for a path blocking of \texttt{t2015} for every $q\ge2$, and an exact
$q=2$ local certificate for \texttt{t2019}, identifying the additional
steps supplied in Appendix~\ref{SM-sec:open5qubits}, which completes both
rows, the triangular lattice, and centered-square at $q=2$.

\paragraph{A block criterion using exact minors.}
Let $B$ be a connected block of $n_B$ original vertices and $e_B$
internal edges. Regard every original physical leg and every external
leg as an output of its local contraction $F_B$. Write $J_B$ for its
Jacobian, $P_B$ for its parameter count, and $\Phi_B$ for the internal
gauge map. At $q=2$ the maximal possible internal gauge rank is
\begin{equation}
g_B=3e_B+n_B-1.
\label{SM-eq:q2blockgauge}
\end{equation}
For each coarse incidence $h$, its original legs are partitioned
according to their original endpoint sites. If the nonempty parts have
sizes $k_{h,1},\ldots,k_{h,s_h}$ and $k_h=\sum_i k_{h,i}$, the
operators supported at a single original site span a linear
sum of dimension $\sum_i4^{k_{h,i}}-(s_h-1)$.
Use the basis $I,E_{01},E_{10},E_{11}$ on each qubit leg, and let
$\Theta_h$ contain the actions on $F_B$ of basis products whose
nonidentity support meets at least two original sites. Their number is
\begin{equation}
c_h=4^{k_h}-\sum_i4^{k_{h,i}}+s_h-1.
\label{SM-eq:q2nonlocalcount}
\end{equation}
A single original leg contributes no such columns. All same-site
operators act by a variation of that original tensor and hence already
belong to $\im J_B$.

It suffices to verify, at one integer tensor tuple,
\begin{equation}
\begin{split}
\rank\Phi_B&=g_B,\\
\rank J_B&=P_B-g_B,\\
\rank[J_B\mid\Theta_h:\ h]&=P_B-g_B+\sum_h c_h.
\end{split}
\label{SM-eq:q2jointcriterion}
\end{equation}
A row submatrix attaining the last rank is sufficient. Indeed, gauge
inclusion bounds the generic rank of $J_B$ from above by $P_B-g_B$.
The displayed minor attains the matching upper bound after the stated
number of columns is added. On the resulting nonempty open set,
$\ker J_B=\im\Phi_B$, and the nonlocal incidence actions are jointly
independent modulo $\im J_B$. Consequently, any sum of coarse actions
that is an original block tangent has only same-site terms at each
incidence. The joint test is essential: testing each incidence alone
would not exclude cancellations between different coarse edges.

At opposite endpoints of a coarse edge, the generator and its negative
transpose therefore belong to the respective same-site operator sums.
The common refinement of the two endpoint partitions consists of
individual original edges whenever no two such edges have the same
original endpoint pair. Expanding operator spaces into scalar and
traceless factors then shows that their intersection is precisely the
sum of original single-edge operator spaces. Thus the generator splits
into original-edge gauges.

\paragraph{The injective block for \texttt{t2008}.}
In the fixed integer coordinates of the dataset, translate the block
\begin{equation}
\begin{split}
B_8={}&\{(0,0,t):t\in\{0,3,8\}\}\\
 &\cup\{(-1,0,t):t\in\{1,2,4,5,6,7\}\}\\
 &\cup\{(-1,-1,t):t\in\{9,10,11\}\}.
\end{split}
\label{SM-eq:t2008qubitblock}
\end{equation}
Each source label occurs once, so these blocks partition every periodic
quotient. The block has twelve vertices and twenty-one internal edges.
Its four coarse incidences have shifts $(\pm1,0),(0,\pm1)$ and carry
three original legs each. At each incidence these legs meet three
distinct original sites. A frozen routing certificate sends the twelve
boundary inputs to the twelve distinct physical outputs along
edge-disjoint internal paths. The associated delta tensors prove block
injectivity for every integer $q\ge2$. Since all regrouped edge shifts
have magnitude at most one, the coarse graph is simple whenever both
original periods are at least three.

For the local algebra at $q=2$, $P_B=576$ and $g_B=74$.
Each incidence contributes $c_h=4^3-3\cdot4+2=54$ columns.
The exact ranks are
\begin{equation}
\begin{gathered}
\rank J_B=502,\\
\rank[J_B\mid\Theta_1\mid\cdots\mid\Theta_4]=718.
\end{gathered}
\label{SM-eq:t2008localranks}
\end{equation}
The full output has dimension $2^{24}$; only 4096 fixed output rows
are needed for these rank lower bounds. The explicit $718\times718$
minor has determinants 6153 and 58914 modulo 65521 and 65519,
respectively. The gauge map has rank 74 in both fields. Thus all
conditions of Eq.~\eqref{SM-eq:q2jointcriterion} are nonempty at $q=2$.

\paragraph{The path block for \texttt{t2014}.}
Use the translated three-site paths
\begin{equation}
B_{14}=\{(0,0,0),(0,0,1),(0,-1,2)\}.
\label{SM-eq:t2014qubitblock}
\end{equation}
Their internal edges join source site 0 to the other two sites.
The coarse graph is a square graph with five original legs on each
horizontal edge and one on each vertical edge. In the ordered source
labels $0,1,2$, a horizontal edge contains the pairs
\begin{equation}
(0,0),\ (0,1),\ (0,2),\ (1,1),\ (2,2).
\label{SM-eq:t2014fivepairs}
\end{equation}
Its endpoint partitions have sizes $3+1+1$ and $1+2+2$ and refine
to the five individual edges.

The frozen open templates satisfy the box and cover conditions
\eqref{SM-eq:triangleboxes}--\eqref{SM-eq:trianglecover} on the original
$q$-dimensional legs. They cover both coarse-edge types, the single
coarse-vertex type, and all required complementary regions: eleven
regions and 2072 paths in total. The same cover argument as in
Appendix~\ref{SM-sec:triangleextension} proves both normal-theorem
partition hypotheses for all $L_x,L_y\ge32$ and all $q\ge2$.
This is a statement in the independent original tensor parameters.

At $q=2$, $P_B=256$, $g_B=8$, and the two horizontal incidences
contribute 954 and 990 nonlocal columns. The certified ranks are
\begin{equation}
\rank J_B=248,\qquad
\rank[J_B\mid\Theta_-\mid\Theta_+]=2192.
\label{SM-eq:t2014localranks}
\end{equation}
The output dimension is $2^{15}$; the audit again uses 4096 fixed
rows. Its explicit $2192\times2192$ minor has determinants 35394
and 54211 modulo 65521 and 65519. Both fields also give gauge rank
8, establishing Eq.~\eqref{SM-eq:q2jointcriterion}.

\paragraph{Global qubit conclusion.}
For either tiling, the lifting step of
Appendix~\ref{SM-sec:blockingtemplate} applies to the intersection of
these local open conditions with all required block or regional
injectivity minors, maximal original gauge rank, and the smooth
generic fibers of the original contraction. For \texttt{t2008}, the
injective coarse fundamental theorem applies; for \texttt{t2014},
the certified partitions give the normal theorem.
Equation~\eqref{SM-eq:q2jointcriterion} and the endpoint refinement
split the coarse generators into original-edge generators.
Subtracting them leaves zero block variations, whose kernels are the
certified internal original gauges. This proves $\ker J_A=\im\Phi_A$
at $q=2$.
With $M=L_xL_y$ counting original cells,
\begin{align}
\rank J_{\texttt{t2008},q=2}&=1+483M,\nonumber\\
\rank J_{\texttt{t2014},q=2}&=1+229M.
\label{SM-eq:open4qubitranks}
\end{align}
On locally faithful quotients, the corresponding WGS-to-generic rank
increases are $76M$ and $42M$. For \texttt{t2008}, local faithfulness
is guaranteed by $L_x,L_y\ge5$; the generic qubit result itself
already holds at periods at least three.

\paragraph{Higher-dimensional conditions and the other qubit constructions.}
For \texttt{t2014} the path's grouped site maps are generically
injective for every $q$, so its internal generic rank is
$j_{14}(q)=q^7+2q^6-2q^2$. The unproved higher-dimensional step is
the joint rank increase
\begin{equation}
D_{14}(q)=2q^{10}-q^6-2q^4-3q^2+4
\label{SM-eq:t2014remainingtest}
\end{equation}
from its two horizontal incidences. The certificate above establishes
$j_{14}(2)+D_{14}(2)=2192$, not the all-$q$ identity.
For \texttt{t2008}, injectivity of the twelve-site block is already
proved for all $q$, but its internal and external joint rank conditions
also remain to be proved at $q>2$.

For \texttt{t2015}, use the four-site path
\begin{equation}
B_{15}=\{(0,0,0),(0,0,1),(0,-1,2),(0,0,3)\}.
\label{SM-eq:t2015pathreduction}
\end{equation}
Its coarse square graph has seven-leg horizontal edges and single
vertical edges. Eleven independently checked open regions with
2632 paths establish the normal partition hypotheses for every
$q\ge2$ and $L_x,L_y\ge32$. The internal path maps are generically
injective, giving $j_{15}(q)=2q^7+2q^6-3q^2$. The horizontal endpoint
partitions have sizes $3+2+1+1$ and $1+2+2+2$, with singleton common
refinement. Therefore a sufficient local condition, certified below
at $q=2$ and still open for $q>2$, is
\begin{equation}
\rank[J_{B_{15}}\mid\Theta_-\mid\Theta_+]
=j_{15}(q)+D_{15}(q),
\label{SM-eq:t2015remainingtest}
\end{equation}
where $D_{15}(q)=2q^{14}-q^6-4q^4-3q^2+6$.
At $q=2$ the required joint rank is 33006; it is certified in
Appendix~\ref{SM-sec:open5qubits}. Normality alone does not settle this row.

For \texttt{t2019}, the translated blocks
\begin{equation}
\begin{split}
B_{19}={}&\{(0,0,t):0\le t\le4\}\\
       &\cup\{(0,-1,t):5\le t\le7\}
\end{split}
\label{SM-eq:t2019localreduction}
\end{equation}
have eight vertices, thirteen internal edges, and four four-leg
coarse incidences. At $q=2$, an independent exact local calculation
gives $P_B=640$, $g_B=46$, $\rank J_B=594$, and joint rank 1512
with 918 added nonlocal columns. An explicit minor of order 1512 has
determinants 23361 and 65501 modulo the same two primes. This settles
local completeness and simultaneous splitting for this particular
qubit block. It does not supply the injective partitions required for
its global coarse network. No global rank conclusion for
\texttt{t2019}, nor an obstruction to alternative partitions, follows
from this local certificate. Appendix~\ref{SM-sec:open5qubits} replaces
$B_{19}$ by a 24-site block that admits both partitions.

\subsection{Qubit certificates for triangular, centered-square, \texttt{t2015}, and \texttt{t2019}}
\label{SM-sec:open5qubits}
This subsection proves Eq.~\eqref{eq:genericrank} at $q=2$ for the
triangular lattice, centered-square, \texttt{t2015}, and \texttt{t2019}.
The geometric step holds for every $q\ge2$. The local algebraic step is
the criterion~\eqref{SM-eq:q2jointcriterion}, certified by exact minors at
$q=2$ and not extrapolated. King's all-$q$ row construction is separate
(Appendix~\ref{SM-sec:kingrows}).

\paragraph{Lattice blockings.}
Each blocking consists of one finite block $B_0$ of original sites and
a lattice $\Lambda$ of cell translations whose translates
$B_0+\lambda$ partition the sites. Every block therefore has the same
internal graph and incidences, and $\Lambda$ acts transitively on
blocks. All four coarse graphs are triangular, so three coarse-edge
orbits and one coarse-vertex orbit exist. In each block every site of
degree $d$ has at least $(d-1)/2$ neighbors in its own block. The
singleton case of Eq.~\eqref{SM-eq:routecut} therefore holds in every union
of blocks. This removes the single-site obstruction of
Appendix~\ref{SM-sec:triangulargeneric}.

In the triangular coordinates of Appendix~\ref{SM-sec:coordinates} we use
two blockings. The first is the hexagon
\begin{equation}
\begin{aligned}
H_{19}&=\{(x,y):\max(|x|,|y|,|x-y|)\le2\},\\
\Lambda_{19}&=\langle(5,2),(3,5)\rangle,
\end{aligned}
\label{SM-eq:hex19}
\end{equation}
where $\Lambda_{19}=\{(x,y):y\equiv8x\pmod{19}\}$. The second is the
twelve-site block
\begin{equation}
\begin{split}
H_{12}={}&\{(0,0),(\pm1,0),(0,1),(1,1),\\
&\ (0,-1),(1,-1),(0,-2),(-1,-1),\\
&\ (-1,-2),(-2,-1),(-2,-2)\}
\end{split}
\label{SM-eq:hex12}
\end{equation}
with $\Lambda_{12}=\langle(6,0),(4,2)\rangle$. For centered-square,
cell $(x,y)$ contains the square vertex $v_{xy}$ and the center
$c_{xy}$ of the face $[x,x+1]\times[y,y+1]$. The block is
\begin{equation}
\begin{split}
C_{12}={}&\{v_{xy}:x\in\{0,1\},\,y\in\{-1,0,1\}\}\\
&\cup\{c_{-1,-1},c_{-1,0},c_{0,-2},c_{0,-1},c_{0,0},c_{0,1}\}
\end{split}
\label{SM-eq:cs12}
\end{equation}
with $\Lambda=\langle(2,0),(1,3)\rangle$. For \texttt{t2019}, in the
dataset coordinates $(x,y,t)$, the block $T_{24}$ contains all eight
sites of cell $(0,0)$, sites $3$--$7$ of $(0,-1)$, sites $2,3,4,6,7$
of $(-1,0)$, sites $1,2$ of $(-1,1)$, sites $0,1$ of $(0,1)$, site $5$
of $(1,-1)$, and site $0$ of $(1,0)$, with $\Lambda=\langle(3,0),(1,1)\rangle$.
Table~\ref{SM-tab:open5blocks} lists the incidences. \texttt{t2015} keeps the
path blocking~\eqref{SM-eq:t2015pathreduction} and its certified partitions.

\begin{table}[!tbp]
\centering
\caption{Lattice blockings of Appendix~\ref{SM-sec:open5qubits}. ``Legs''
lists the original legs of the six coarse incidences of one block.
Every coarse graph is triangular. The template disc radius $\rho_O$ and
the block radius $r_B$ are plane distances with unit edges. The last
column is the systole bound of Eq.~\eqref{SM-eq:open5systole}.}
\label{SM-tab:open5blocks}
\footnotesize
\setlength{\tabcolsep}{3pt}
\centering
\begin{adjustbox}{max width=\columnwidth}
\begin{tabular}{lcccccc}\toprule
Block & Sites & Legs & Paths & $\rho_O$ & $r_B$ & $2(\rho_O+r_B+1)$\\
\midrule
$H_{19}$ & 19 & $5^6$ & 3990 & 21.79 & 2.00 & 49.59\\
$H_{12}$ & 12 & $4^6$ & 3528 & 20.78 & 1.53 & 46.62\\
$C_{12}$ & 12 & $3^4\,7^2$ & 3830 & 17.26 & 1.51 & 39.54\\
$T_{24}$ & 24 & $4^6$ & 3464 & 25.98 & 2.54 & 59.04\\
\bottomrule
\end{tabular}
\end{adjustbox}
\end{table}

\paragraph{Normal partitions for all stated periods.}
For each blocking the frozen templates consist of three parts. The
first is a cover disc $D$ of all blocks whose centers lie within a
radius $\rho_D$ of the origin block; in all four blockings these are
the origin block and its six coarse neighbors. The second is, for each coarse-edge orbit $uv$,
regions $A\ni u$ and $B\ni v$ whose only coarse edge is $uv$, together
with a disc $O$ of blocks around $u$. The third is, for the vertex
orbit $v$, regions $R\not\ni v$ and $R^+=R\cup\{v\}$ with a disc $O$.
Every block whose center lies within $\rho_D$ of $A\cup B$, or of $R^+$,
belongs to $O$. Edge-disjoint routes certify injectivity of $D$, $A$,
$B$, $O\setminus(A\cup B)$, $R$, $R^+$, $O\setminus R$, and
$O\setminus R^+$ for every $q$, with all boundary legs taken in the
infinite graph. Let the period lattice $\Gamma\subseteq\Lambda$ satisfy
\begin{equation}
s(\Gamma)>2(\rho_O+r_B+1),
\label{SM-eq:open5systole}
\end{equation}
where $s(\Gamma)$ is the shortest nonzero period vector in the plane
embedding. Then every template, with its neighbor shell, embeds without
identifications. A block $p$ outside $O$ lies at distance greater than
$\rho_D$ from every block of $A\cup B$, so the translate $D+p$ is
injective and misses $A\cup B$. The torus complement of $A\cup B$ is the
union of $O\setminus(A\cup B)$ and these translates, and it is injective
by the union lemma of Ref.~\cite{molnar2018normal}. The complements of
$R$ and $R^+$ are covered in the same way. Both partition hypotheses of
Theorem~4 of Ref.~\cite{molnar2018normal} therefore hold for every coarse
edge and vertex.

For rectangular triangular periods, the plane period vectors have
lengths $L_x,L_y$ at angle $2\pi/3$, and
$|m\mathbf L_x+n\mathbf L_y|^2=(nL_y-mL_x/2)^2+\tfrac34m^2L_x^2$. If
$6\mid L_x,L_y$ and $L_x,L_y\ge54$, then $s\ge\tfrac{\sqrt3}{2}\,54>46.62$.
If $19\mid L_x,L_y$ and $L_x,L_y\ge57$, the terms with $|m|\ne1$ are at
least $57^2$. For $|m|=1$, either $nL_y-mL_x/2$ is a nonzero multiple of
$19/2$, or $L_x=2|n|L_y\ge114$. In both cases $s^2\ge(19/2)^2+\tfrac34 57^2$,
so $s>50.2>49.59$. Both residue conditions give $\Gamma\subseteq\Lambda$.
For centered-square, $s=\min(L_x,L_y)$, and $\Gamma\subseteq\Lambda$ holds
when $2\mid L_x$ and $6\mid L_y$. The reflection $(x,y)\mapsto(y,x)$ is an
automorphism, so the reflected blocking gives $6\mid L_x$, $2\mid L_y$.
For \texttt{t2019} the cell vectors have length three at angle $\pi/3$,
so $s\ge\tfrac{3\sqrt3}{2}\min(L_x,L_y)$, which exceeds $62.3>59.04$
for $L_x,L_y\ge24$, and $3\mid L_x,L_y$ gives $\Gamma\subseteq\Lambda$.

\paragraph{Local criterion at $q=2$.}
Table~\ref{SM-tab:open5local} lists the certified ranks of
Eq.~\eqref{SM-eq:q2jointcriterion}. Each audit uses fixed integer tensors
with entries in $\{-1,0,1\}$ and sampled output rows. It recomputes each
Jacobian block from sampled contractions and checks it against
directional derivatives. It also checks $J_B\Phi_B=0$ on the sampled
rows and compares
sampled nonlocal columns with named operators applied to the original
tensors. The Jacobian rank, the gauge rank, and the joint rank equal
their targets modulo 65521 and 65519. The pivot minor found modulo the
first prime is nonsingular modulo both. An independent replay rebuilds
every column of each frozen minor from the witness tensors alone:
Jacobian columns by tensor replacement, and nonlocal columns from fresh
evaluations of the block amplitude at the modified outputs. A random
sample of the nonlocal columns is also recomputed by applying the named
operators to the original tensors. The replay confirms nonsingularity
over both fields.

\begin{table}[!tbp]
\centering
\caption{Exact qubit block ranks for Eq.~\eqref{SM-eq:q2jointcriterion}.
$\sum_hc_h$ is the number of added nonlocal columns; the joint rank
equals $P_B-g_B+\sum_hc_h$ and is also the order of the frozen minor.}
\label{SM-tab:open5local}
\footnotesize
\setlength{\tabcolsep}{4pt}
\centering
\begin{adjustbox}{max width=\columnwidth}
\begin{tabular}{lcccccc}\toprule
Block & $n_B$ & $e_B$ & $P_B$ & $\rank J_B$ & $\sum_hc_h$ & Joint rank\\
\midrule
$H_{19}$ & 19 & 42 & 2432 & 2288 & 5940 & 8228\\
$H_{12}$ & 12 & 24 & 1536 & 1453 & 1377 & 2830\\
$C_{12}$ & 12 & 23 & 3264 & 3184 & 32814 & 35998\\
$T_{24}$ & 24 & 51 & 1920 & 1744 & 1377 & 3121\\
$B_{15}$ & 4 & 3 & 384 & 372 & 32634 & 33006\\
\bottomrule
\end{tabular}
\end{adjustbox}
\end{table}

\paragraph{Global qubit conclusion.}
The lifting argument of Appendix~\ref{SM-sec:open4qubits} applies without
change. The certified partitions give the normal theorem on the coarse
network. Equation~\eqref{SM-eq:q2jointcriterion} and the endpoint refinement
split every coarse generator into original-edge generators; the
refinement consists of single edges because the graphs are simple. The
residual block variations are internal gauges. At $q=2$ this proves
$\ker J_A=\im\Phi_A$ for triangular tori with $6\mid L_x,L_y\ge54$ or
$19\mid L_x,L_y\ge57$; for centered-square with $2\mid L_x\ge40$ and
$6\mid L_y\ge42$, or with $x$ and $y$ exchanged; for \texttt{t2015}
with $L_x,L_y\ge32$; and for \texttt{t2019} with $3\mid L_x,L_y\ge24$.
With $M=L_xL_y$ cells,
\begin{equation}
\begin{aligned}
\rank J_{\rm tri}&=1+118M,\\
\rank J_{\rm cs}&=1+524M,\\
\rank J_{\texttt{t2015}}&=1+347M,\\
\rank J_{\texttt{t2019}}&=1+569M
\end{aligned}
\label{SM-eq:open5ranks}
\end{equation}
at $q=2$. On locally faithful quotients the WGS-to-generic rank
increases are $22M$, $44M$, $64M$, and $120M$, respectively.

\paragraph{Scope and the earlier king searches.}
The residue conditions come from the chosen lattices $\Lambda$, and the
period bounds from the templates. Neither is a physical threshold, and
below or outside them Proposition~\ref{SM-prop:genericbounds} still bounds
$\nu_{\rm generic}$. The seven-site triangular hexagon and eight-site
centered-square blocks also satisfy the singleton condition. For each,
however, an integer program for the three routing flows finds no edge
partition that keeps the far regions fixed and frees every block within
about three coarse spacings of the edge. We record this as a failed
search, not an obstruction theorem. For king, a scan of Euclidean
Voronoi blocks of up to 24 sites finds that every block whose sites all
have at least four in-block neighbors has an incidence of at least seven
legs. The two twelve-site candidates fail
the same search. Larger blocks require joint qubit tests beyond the
sizes above. These failed searches are bypassed by the analytic row
construction of Appendix~\ref{SM-sec:kingrows}, which proves king completeness
without a dense contact-rank test. The all-$q$ statement remains open for all four families
treated here.

\subsection{King: row blocking and analytic splitting for every dimension}
\label{SM-sec:kingrows}
For the king graph, with the horizontal, vertical, and both diagonal
nearest-neighbor edges of the square grid, the following sufficient
period bounds establish generic completeness for every $q\ge2$.
Together with the preceding constructions, this completes the
thirty-nine-family catalogue, with six families restricted to $q=2$.
\begin{proposition}
\label{SM-prop:kinggeneric}
Independent tensors with $D=d=q\ge2$ on a rectangular king torus have
$\nu_{\rm generic}=0$ if $L_x\ge4$ and $L_y\ge18$, or with the axes
exchanged. With $N=L_xL_y$, their affine generic rank is
\begin{equation}
 \rank J_{\rm generic}=1+N(q^9-4q^2+3).
 \label{SM-eq:kinggenericrank}
\end{equation}
\end{proposition}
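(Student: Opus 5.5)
The plan is to verify the hypotheses of Theorem~\ref{SM-thm:template} for a row blocking, for every $q\ge2$. The rank formula then follows by bookkeeping. The king graph has degree eight, so $P=Nq^9$ and $E=4N$. Hence $P-g_0=Nq^9-[4N(q^2-1)+N-1]=1+N(q^9-4q^2+3)$, which is Eq.~\eqref{SM-eq:kinggenericrank}.

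\emph{Blocking.} I would block each full row $\{(x,y):x\in\mathbb Z_{L_x}\}$ into one coarse site. The coarse graph is then a ring of $L_y$ sites. Each coarse bond between rows $y$ and $y+1$ groups the $3L_x$ vertical and diagonal edges, so it has dimension $q^{3L_x}$. The condition $L_x\ge4$ ensures that the offsets $x,x\pm1$ are distinct mod $L_x$, so the grouped edges are distinct and no original parallel edges arise. The coarse network is a translation-free MPS on a ring, so the normal-PEPS hypotheses (ii) reduce to injectivity of contiguous row intervals.
\begin{itemize}
\item A single row has $6L_x$ boundary legs but only $L_x$ physical outputs, so it is never injective.
\item An interval of $k$ rows has $6L_x$ boundary legs and $kL_x$ outputs, suggesting $k\ge6$.
\end{itemize}
I would certify injectivity of every interval of at least six rows for all $q$ with explicit edge-disjoint routes and delta tensors, as in Eq.~\eqref{SM-eq:routeisometry}. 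The routes come from a column-periodic pattern: each top boundary leg descends along verticals and diagonals to a distinct physical site in the first three rows, and symmetrically from the bottom. Periodicity in $x$ makes the witness independent of $L_x\ge4$. Three such intervals partition the ring once $L_y\ge18$. This gives the edge partitions and the vertex partitions ($R$, $R\cup\{v\}$ and their complements are intervals of length at least six). The union lemma handles longer intervals.

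\emph{Internal completeness (iii).} Within a row the internal graph is the cycle $C_{L_x}$. Grouping each site's physical leg with its six external legs gives a map $\mathbb C^{q^2}\to\mathbb C^{q^7}$, which is generically injective. The injective fundamental theorem on a simple cycle, together with Lemma~\ref{lem:fiber}, then gives generic internal tangent completeness. Pulled back to original coordinates, this is a nonempty open condition.

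\emph{Splitting (iv)--(v), the main obstacle.} I must show that a coarse generator $T\in\operatorname{End}(\mathbb C^{q})^{\otimes 3L_x}$ on a row contact, which becomes a block tangent, is a sum of operators each supported at a single original site of that row. This has to hold simultaneously for the upward and downward contacts of the row, since both act on the same row. At each site the upward legs form a triple and the downward legs form another, so the site partition is $3+3+\cdots$ and is offset consistently between the two endpoints. A dense joint-rank minor is hopeless, because the dimensions are exponential in $L_x$.

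The approach I would try first is the following.
\begin{itemize}
\item Use quotient maps $Q_i$ by the images of the grouped site maps. With the Vandermonde-line witnesses of Eqs.~\eqref{SM-eq:rubysplitwitness} and \eqref{SM-eq:cliquewitness}, each local detector kills exactly the scalars on a chosen leg group.
\item Expand $T$ in the basis $(\mathbb CI\oplus\mathfrak{sl}(q))^{\otimes 3L_x}$ and sort its components by the set of sites they touch. Organize these "cyclic displacement sectors" by the cyclic offset between the sites touched.
\item For each sector, choose a product of single-site quotients at two touched sites, with left inverses elsewhere in the row. The shared horizontal edges contribute nonzero $W$ coefficients along the path between them.
\item Show that this product detects the sector injectively and kills all single-site variations, the other contact's actions (which live on disjoint leg groups), and all other sectors of lower support.
\end{itemize}
Upward and downward actions occupy different leg factors at every site, so their detected images lie in different tensor summands and cannot cancel. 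This mirrors the joint dimer detector of Appendix~\ref{SM-sec:m29generic}. Each sector condition is a nonempty open condition with an analytic witness valid for all $q$, and there are finitely many of them. Intersecting the two endpoint site partitions then refines to single original edges, because the king graph is simple, which is condition (v). Theorem~\ref{SM-thm:template} then yields $\ker J_A=\im\Phi_A$ and the stated rank.

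The main risk is making this sector-by-sector detection genuinely simultaneous for all sectors at once.
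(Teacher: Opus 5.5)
Your geometric half follows the paper's proof. That includes the row blocking into a coarse cycle with $q^{3L_x}$-dimensional bonds, and injectivity of every band of $w\ge6$ rows via routes ending in the three rows nearest each boundary. It also includes the $6,6,L_y-12$ edge partitions (which force $L_y\ge18$) and the $6,7$ vertex partitions, internal completeness from the injective theorem on $C_{L_x}$, the single-edge endpoint refinement, and the count $P-g_0=1+N(q^9-4q^2+3)$.

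\textbf{The gap.} The gap is condition (iv), the simultaneous splitting of a row's two contacts, which is the actual content of the proposition. You flag it as the main risk, but the sketch does not establish it, and three of its steps fail as stated.
\begin{enumerate}
\item $Q_i$ annihilates only $\im A_i$. It does not kill the opposite contact's action $(I\otimes H_i)A_i$. Nor does acting on different leg triples prevent cancellation: if $\im A_i$ couples the upper and lower triples through a Bell-type vector, then $K\otimes I$ and $I\otimes K^{\mathsf T}$ have the same detected image.
\item Quotients at two touched sites $i,j$ remove only components acting trivially at $i$ or $j$. Every component whose support contains $\{i,j\}$ survives, including those of smaller support than the sector being detected. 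A left inverse at a third touched site compresses that site's action onto its internal legs instead of removing it. You would need quotients at every touched site and a downward induction from full support. There the row is closed and the fibers of neighboring sites are coupled through the shared horizontal indices, so the tensor-product mechanism of the ruby lemma does not transfer directly.
\item The witnesses \eqref{SM-eq:rubysplitwitness} and \eqref{SM-eq:cliquewitness} put one line in each of at most $q^2$ output fibers of a row site. The functionals $K\mapsto\lambda Kv$ therefore span at most $q^2(q^3-1)<q^6-1$ dimensions, so they cannot detect $\operatorname{End}(\mathbb C^{q^3})/\mathbb CI$ on a three-leg contact group. Selecting a single leg instead misses components with two or three nonidentity legs at one site.
\end{enumerate}

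\textbf{How the paper closes this step.} It uses one explicit witness for the whole row, Eq.~\eqref{SM-eq:kingrowtensor}. There the constraint $u=d+\sigma(p)$ entangles the two contact groups, so each of the $Q$ label values is probed separately with Vandermonde weights $t_d^b$. The $p$-dependent shift $\sigma(p)$ is what lets the equations distinguish $K$ from $H$.

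The row amplitude and every single-site tangent are explicit. The paper then decomposes $K$ and $H$ into displacement sectors $\boldsymbol\delta\in\mathbb Z_Q^L$ of the external labels, not into offsets between touched sites. It pairs the $+\boldsymbol\delta$ sector of $K$ with the $-\boldsymbol\delta$ sector of $H$, so cancellation between contacts is built into each sector equation. The sectors are then settled as follows:
\begin{itemize}
\item Sectors with two or more displaced sites vanish by the holonomy $R^Q\ne1$.
\item A sector with one displaced site vanishes by a recurrence and the polynomial intersection \eqref{SM-eq:kingpolynomialintersection}; this is where $Q\ge2q-1$ enters.
\item The zero-displacement sector reduces to single-site terms by mixed differences and the absence of triangles in $C_L$.
\end{itemize}
The last item is where $L_x\ge4$ is really used; distinctness of $x,x\pm1$ already holds for $L_x\ge3$. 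The resulting joint rank attains the gauge bound plus the number of added columns, and a maximal minor then gives the required nonempty open condition for every $q$.
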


\paragraph{Injective bands and both coarse partitions.}
Block each entire horizontal row. The coarse graph is the cycle
$C_{L_y}$, with physical dimension $q^{L_x}$ and bond dimension
$q^{3L_x}$. A consecutive band of $w$ rows, with $6\le w\le L_y-6$,
has $6L_x$ boundary legs. Here is an explicit routing of its lower
boundary legs, with the band's rows numbered $0,\ldots,w-1$.
For the three legs incident at $(x,0)$, use respectively the paths
\begin{equation}
\begin{gathered}
 (x,0),\qquad (x,0),(x,1),\\
 (x,0),(x+1,1),(x+1,2).
\end{gathered}
 \label{SM-eq:kingbandroutes}
\end{equation}
Assign these to the three exterior neighbors in any fixed order.
All horizontal coordinates are taken modulo $L_x$. Reflection in the
horizontal midline gives the upper-boundary routes, ending in rows
$w-1,w-2,w-3$. All terminal sites are distinct and all used internal
edges are distinct, including across the horizontal seam. Thus every
such band is generically injective for every $q$ by the routing witness.

At each coarse edge, partition the cycle into three consecutive bands
of lengths $6,6,L_y-12$, with the distinguished edge between the first
two. It is their only coarse edge. At a coarse vertex $v$, let $R$
be the six consecutive rows immediately preceding $v$; $R$, $R+v$,
and their complements have lengths $6,7,L_y-6,L_y-7$, respectively,
and the same routes apply. This verifies both normal-PEPS partition
hypotheses of Ref.~\cite{molnar2018normal}. No injectivity of a single
row as a coarse tensor is assumed. In particular, the singleton cut
condition applies to these bands, not to the atomic row blocks:
a boundary site has five neighbors inside its band, although only
two neighbors lie in its own row.

\paragraph{An analytic row witness.}
We prove the splitting statement with the more general external
group dimension $Q\ge2q-1$; king uses $Q=q^3$.
Let the row length be $L\ge4$, with all site indices cyclic. At site
$i$, the two internal horizontal indices are $a_i,b_i\in\{0,\ldots,q-1\}$,
contracted as $b_i=a_{i+1}$. Write $u_i,d_i\in\mathbb Z_Q$ for the
upper and lower external groups, and $p_i\in\{0,\ldots,q-1\}$ for
the physical index. Set $t_d=d+1$, and set $\sigma(1)=1$ and
$\sigma(p)=0$ for all other $p$. Use the integer tensors
\begin{equation}
 A[p,u,d;a,b]=\delta_{a,p}\,
 \delta_{u,d+\sigma(p)}\,t_d^b.
 \label{SM-eq:kingrowtensor}
\end{equation}
The $q^2$ columns, grouped by physical index $p=a$, are independent
Vandermonde columns. Thus these grouped site maps are injective.
The injective-network theorem on the simple internal cycle gives
internal tangent completeness, with gauge rank $Lq^2-1$ and
Jacobian rank $Lq^3Q^2-Lq^2+1$.

The row contraction is explicitly
\begin{equation}
 F(\mathbf p,\mathbf u,\mathbf d)=
 \prod_i\delta_{u_i,d_i+\sigma(p_i)}
 \prod_i t_{d_i}^{p_{i+1}}.
 \label{SM-eq:kingrowamplitude}
\end{equation}
Let $K$ and $H$ be arbitrary operators on all upper and all lower
groups. We show that
\begin{equation}
\begin{gathered}
 (K+H)F\in\im J_{\rm row}\\
 \Longrightarrow\quad
 K,H\in\sum_i\operatorname{End}(\mathbb C^Q)_i.
\end{gathered}
 \label{SM-eq:kingrowsplit}
\end{equation}
The scalar identity is included in the sum. This is a simultaneous
statement, so cancellation between the two contacts is included.

Decompose $K$ into cyclic displacement sectors: in sector
$\boldsymbol\delta\in\mathbb Z_Q^L$, it sends an input
$\mathbf z$ to $\mathbf z+\boldsymbol\delta$, with coefficient
$k(\mathbf z)$. Pair this with the sector of $H$ sending input
$\mathbf z$ to $\mathbf z-\boldsymbol\delta$, with coefficient
$h(\mathbf z)$. At output
$\mathbf u=\mathbf d+\boldsymbol\sigma(\mathbf p)+\boldsymbol\delta$,
divide by the nonzero factor $\prod_i t_{d_i}^{p_{i+1}}$. The
combined action is
\begin{equation}
\begin{gathered}
 k(\mathbf d+\boldsymbol\sigma(\mathbf p))+
 h(\mathbf d+\boldsymbol\delta)
 \prod_i R_i(d_i)^{p_{i+1}},\\
 R_i(d)=\frac{t_{d+\delta_i}}{t_d}.
\end{gathered}
 \label{SM-eq:kingsector}
\end{equation}
All shifts of $d$ are modulo $Q$; the positive numbers $t_d$ themselves
are not taken modulo $Q$.

Replacing only tensor $i$ can break the constraint
$u_j=d_j+\sigma(p_j)$ only at $j=i$. In a sector supported on
$\{i\}$, its normalized output has precisely the form
\begin{equation}
 \sum_{a=0}^{q-1} f_{p_i,p_{i+1},a}(d_i)\,
                  t_{d_{i-1}}^{a-p_i},
 \label{SM-eq:kingrowtangent}
\end{equation}
where the functions $f$ are arbitrary. In the zero-displacement sector
one sums this expression over $i$. This follows directly by fixing
$b_i=p_{i+1}$ and allowing $a_i=a$ in the replaced tensor; the latter
changes the exponent at $i-1$ from $p_i$ to $a$.

\paragraph{Sectors with at least two displaced sites.}
There is no row tangent in these sectors. Taking $\mathbf p=0$ in
Eq.~\eqref{SM-eq:kingsector} gives
$h(\mathbf d+\boldsymbol\delta)=-k(\mathbf d)$.
Taking only $p_j=1$ then gives
\begin{equation}
 k(\mathbf d+e_j)=R_{j-1}(d_{j-1})k(\mathbf d).
 \label{SM-eq:kingholonomy}
\end{equation}
Choose $j$ with $\delta_{j-1}\ne0$. Iterating the shift of $d_j$
$Q$ times leaves $R_{j-1}$ unchanged. This positive ratio is not one,
so its $Q$th power is not one. Hence $k=h=0$.

\paragraph{Exactly one displaced site.}
Suppose only $\delta_i\ne0$. Varying $p_j$ from zero to one for
$j\notin\{i,i+1\}$ leaves Eq.~\eqref{SM-eq:kingrowtangent} and the
second term of Eq.~\eqref{SM-eq:kingsector} unchanged. Thus $k$ is
invariant under the unit shift of every such $d_j$ and depends only
on $d_i,d_{i+1}$. The equations with $\mathbf p=0$ and with only
$p_{i+1}=1$ now imply that
\begin{equation}
 k(d_i,d_{i+1}+1)-R_i(d_i)k(d_i,d_{i+1})
\end{equation}
is independent of $d_{i+1}$.
Since $R_i(d_i)^Q\ne1$, the unique cyclic solution of this recurrence
is constant in $d_{i+1}$. Therefore $k$ depends only on $d_i$, and
the zero-physical-index equation shows that
$h(\mathbf d+\boldsymbol\delta)$ depends only on $d_{i-1},d_i$.
For fixed $d_i$, the equations for all $p_i$ place its dependence on
$d_{i-1}$ in the intersection of the spaces
\begin{equation}
 \begin{gathered}
 V_p=\operatorname{span}\{t_d^{a-p}:0\le a<q\}
       \subset\operatorname{Fun}(\mathbb Z_Q),\\
 \bigcap_{p=0}^{q-1}V_p=\mathbb C1.
 \end{gathered}
 \label{SM-eq:kingpolynomialintersection}
\end{equation}
To verify the intersection, it suffices to use $p=0,q-1$: multiply
the two polynomial expressions by $t^{q-1}$. Their difference has
degree at most $2q-2<Q$ and vanishes at all $Q$ distinct $t_d$, so
it is zero as a polynomial and the original function is constant.
Consequently $h$ too depends only on site $i$.

\paragraph{The zero-displacement sector.}
Equation~\eqref{SM-eq:kingsector} now reads
\begin{equation}
 k(\mathbf d+\boldsymbol\sigma(\mathbf p))+h(\mathbf d)
 =\sum_i\sum_a f_{i;p_i,p_{i+1},a}(d_i)t_{d_{i-1}}^{a-p_i}.
 \label{SM-eq:kingdiagonal}
\end{equation}
For nonadjacent $i,j$ on $C_L$, take the mixed difference between
$p_i=0,1$ and $p_j=0,1$. Each summand on the right depends on at most
two adjacent physical indices, hence this gives
$\Delta_i\Delta_j k=0$, where $\Delta_i$ is the unit cyclic
difference in $d_i$. Decompose functions on $\mathbb Z_Q^L$ using,
on each coordinate, constants and zero-mean functions under uniform
averaging. Cyclic difference is invertible on the zero-mean factor.
Thus no component of $k$ can involve two nonadjacent coordinates.
Since $L\ge4$, every clique of $C_L$ is a vertex or an edge:
$k$ is a sum of constants, single-coordinate functions, and
zero-mean adjacent-pair functions. Setting $\mathbf p=0$ in
Eq.~\eqref{SM-eq:kingdiagonal} gives the same decomposition for $h$.

Project Eq.~\eqref{SM-eq:kingdiagonal} onto the zero-mean pair component
on $(i-1,i)$, averaging over the other coordinates. Only tangent
summand $i$ survives. It is independent of $p_{i-1}$; varying that
physical index forces the pair component of $k$ to be invariant
under the cyclic shift of its first coordinate. Its zero mean then
forces it to vanish. With all pair components of $k$ removed, the
pair component of $h$, as a function of $d_{i-1}$, lies in the
zero-mean projections of every $V_{p_i}$. Constants belong to all
$V_p$, so Eq.~\eqref{SM-eq:kingpolynomialintersection} says that these
projected spaces have zero intersection. The pair component of $h$
therefore vanishes as well. This proves Eq.~\eqref{SM-eq:kingrowsplit}.

\paragraph{Genericity and original-edge lifting.}
Choose fixed complements to the same-site operator sums in the two
contact algebras. Each complement has dimension
$c=Q^{2L}-LQ^2+L-1$. Equation~\eqref{SM-eq:kingrowsplit}, together with
internal tangent completeness at the witness, gives joint rank
\begin{equation}
 \rank[J_{\rm row}\mid\Theta_+\mid\Theta_-]
 =Lq^3Q^2-Lq^2+1+2c.
 \label{SM-eq:kingjoint}
\end{equation}
It attains the gauge upper bound plus the number of added columns,
so a maximal minor proves the same splitting on a nonempty
Zariski-open set. This step does not claim that an arbitrary
special-point splitting is automatically open.

Set $Q=q^3$. At each end of an inter-row contact, the generator now
lies in the sum of operator algebras on the triples of legs incident
to individual original sites. The intersection of the two endpoint
site partitions consists of individual bonds: no two original edges
have the same endpoints. The scalar/traceless support decomposition
therefore reduces every coarse generator to original-edge actions,
with the matched transpose and sign at its opposite endpoint.
Subtracting them leaves only internal row gauges. Intersect the
nonempty open splitting conditions with the band-injectivity,
maximal-gauge-rank, and smooth-generic-fiber conditions. The
normal-PEPS and generic-fiber argument of
Theorem~\ref{SM-thm:template} now proves Proposition~\ref{SM-prop:kinggeneric}.
Here $P=Nq^9$ and $E=4N$, giving Eq.~\eqref{SM-eq:kinggenericrank}.

\paragraph{Independent verification.}
The script \nolinkurl{research/king_rows/verify.py} checks the named
routes, both coarse partitions and periodic seams, evaluates the row
and single-replacement contractions independently, and checks the
polynomial intersection and displacement-sector joint ranks by exact
modular arithmetic. The smaller external dimension $Q=3$ at $q=2$
is used to check complete sector matrices of the same analytic lemma;
the actual king dimensions $Q=q^3$ are also checked in contraction
identities. These tests audit the formulas; the all-$q$ and
arbitrary-period conclusions follow from the argument above.

\section{Open-region kernels and minimality}
\label{SM-sec:sminimal}
\subsection{Why the external-leg definition is necessary}
Let $R$ be a connected induced region. The output of its contraction
contains physical variables $s_v$ for $v\in R$ and independent virtual output
indices on every cut half-edge. In particular, two cut legs ending at the
same exterior vertex are distinct output indices. Identifying them using
an exterior tensor would change the local question.

On each cut edge $e=vw$, $v\in R$, apply the invertible output change of
basis $F_{w,e}(t_e,a_e)$. This introduces an independent $q$-level variable
$t_e$ for that half-edge. The transformed open-region amplitude is
\begin{equation}
\begin{split}
\psi_R(s_R,t_\partial)
={}&q^{-|R|/2}\prod_{uv\in E(R)}W_{uv}(s_u,s_v)\\
&\times\prod_{e=vw\in\partial R}W_e(s_v,t_e).
\end{split}
\label{SM-eq:openamplitude}
\end{equation}
It is nowhere zero. An invertible output transformation preserves the
kernel of the open-region differential. No assertion about preservation
of a physical Hilbert metric is needed in this argument.

The single-site isomorphism now uses the extended star
\begin{equation}
\widetilde N_R[v]=\{v\}\cup N_R(v)\cup\{t_e:e\in\partial R, e\ni v\}.
\end{equation}
Every boundary variable is private to one center. A support containing
such a variable has only one center and therefore no relation. All
relations arise from supports contained entirely in $R$, with center
sets $C^R_S=\{v\in R:S\subseteq\{v\}\cup N_R(v)\}$.
Consequently the relation structure depends on the induced graph inside
$R$ and not on special coincidences in the exterior state.

\begin{lemma}[Open-region relations]
At a nondegenerate WGS, the kernel of the open-region differential is
generated by $\xi_{u,S}^{\boldsymbol k}-\xi_{v,S}^{\boldsymbol k}$ for each frequency assignment with $u,v\in R$ and
$S\subseteq N_R[u]\cap N_R[v]$. Its empty sector has dimension $|R|-1$.
The internal bond-gauge image has dimension $(q^2-1)|E(R)|+|R|-1$.
\end{lemma}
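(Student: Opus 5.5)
The plan is to repeat the proof of Theorem~\ref{thm:star} for the open-region amplitude~\eqref{SM-eq:openamplitude}, and then the proof of Proposition~\ref{prop:bond} for the internal gauge rank.

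\emph{Local isomorphism.} After the invertible output change on each cut half-edge, the contraction with one site varied and divided by the nowhere-zero $\psi_R$ gives a function of the variables in $\widetilde N_R[v]$. This works as in Eq.~\eqref{eq:localmap}. For fixed $s_v$ the numerator is a tensor product of invertible $q\times q$ maps: $F_{w,e}$ on internal legs, and the applied output factor on cut legs. The denominator has no zeros. Both sides have dimension $q^{\deg(v)+1}$, so we get an isomorphism. We expand functions in product characters of all the variables $s_R$ and $t_\partial$. The open differential then splits into sectors $(S,\boldsymbol k)$, with coefficient $\sum_{v\in C_S}x_{v,S}^{\boldsymbol k}$. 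The vectors $\chi_{S,\boldsymbol k}\psi_R$ are linearly independent because $\psi_R$ is nowhere zero and the characters form a basis. Every $t_e$ belongs to the extended star of exactly one vertex, so a support containing a boundary variable has a single center and no relation. The surviving supports lie in $R$, with center sets $C^R_S$. Each sector kernel is the sum-zero hyperplane, spanned by the differences $\xi_{u,S}-\xi_{v,S}$ with $u,v\in C^R_S$. The empty sector has $C_\varnothing=R$, so its kernel has dimension $|R|-1$.

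\emph{Internal gauge rank.} The gauge computation follows Proposition~\ref{prop:bond}, restricted to internal edges. Cut legs carry no generator, and each physical slice is still a nonzero product of virtual vectors. The slice argument therefore forces $f_sX_e=\alpha_sf_s$ and $X_eg_t=b_tg_t$ on every internal edge. The nonzero weights $W_e(s,t)$ then force $X_e=a_eI$, and the site equations become $\partial_R a=0$ on the induced connected graph. This gives $\dim\ker\Phi_R=|E(R)|-|R|+1$, and hence rank $q^2|E(R)|-|E(R)|+|R|-1$, as stated.

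\emph{Main obstacle.} The main step to check is the claim that boundary variables are private. Two cut legs can end at the same exterior vertex; they remain distinct output indices, with independent $t_e$. So the claim needs only that each $t_e$ is attached to a single $v\in R$. The independence of the sectors also needs care, because the amplitude is not normalized. Instead of orthonormality we use invertibility of pointwise multiplication by $\psi_R$ on the full function space.
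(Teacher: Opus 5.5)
Your argument is correct, and its kernel half is the paper's proof. The common steps are the invertible output change on cut half-edges, the extended-star isomorphism, privacy of each $t_e$ forcing multiplicity one, and sum-zero sector hyperplanes with $C^R_\varnothing=R$. Replacing orthonormality by invertibility of pointwise multiplication by $\psi_R$ is fine; at unit-modulus weights $|\psi_R|$ is constant, so orthogonality would hold as well.

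The genuine difference is the internal gauge count. You rerun the rigidity argument of Proposition~\ref{prop:bond} to identify $\ker\Phi_R$ with the cycle space of the induced graph, and then conclude by rank--nullity. The paper instead reuses the sector bookkeeping of Sec.~\ref{sec:relations}. Via the edge-function isomorphism $X\mapsto f_X$, an internal generator on $e=uv$ acts as $\xi_{u,f_X}-\xi_{v,f_X}$. In each sector, the gauge image is therefore the span of endpoint differences over internal edges whose endpoints contain $S$. This gives $|R|-1$ in the empty sector, a star of $\deg_R(v)$ differences times $q-1$ for each singleton, and $(q-1)^2$ per edge support, summing to $(q^2-1)|E(R)|+|R|-1$.

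The paper's count costs nothing once the sector coordinates are in place. It also locates the gauge image inside the sectors, which is what the later quotient by $\mathcal D_R$ and condition (ii) of the reduction graph in Theorem~\ref{thm:minimalgeneral} actually use. Your route yields only the dimension, but it is self-contained, avoids the edge-function isomorphism, and exhibits the scalar-cycle dependencies explicitly.
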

\begin{proof}
Apply the invertible output maps above and the one-site coordinate
isomorphisms. The sector map is Eq.~\eqref{eq:sector}; supports involving
private variables have multiplicity one. The scalar and singleton
argument from Sec.~\ref{sec:relations} applies to the connected graph
induced on $R$ and proves the internal gauge count. Undoing the
invertible changes of basis does not alter these kernel statements.
\end{proof}

A kernel from a proper subregion extends to a kernel on $R$ by changing
the same tensor blocks and setting all others to zero: contracting more
tensors cannot change the zero output. Define its new quotient as
\begin{equation}
\begin{split}
\mathcal D_R&=\im\Phi_R+
\sum_{\substack{R'\subsetneq R\\R'\text{ connected, induced}}}
\iota_{R'\to R}\ker J_{R'},\\
\mathcal Q_R&=\ker J_R/\mathcal D_R.
\end{split}
\label{SM-eq:newregion}
\end{equation}
All spaces in the denominator are embedded in the same parameter space.
The sum denotes a span, not a sum of dimensions. The unprojected
definition avoids incorrectly treating a local rescaling as an
environment-independent zero tensor.

\section{Normality and finite equivalences at WGS points}
\label{SM-sec:fiberdetails}
\subsection{Normality and the geometry of injective blocks}
\label{SM-sec:normalitydetails}
A nondegenerate WGS representation admits an injective blocking only if
the vertex set can be partitioned into connected blocks in each of
which every vertex has at most one external edge. On the infinite
square, triangular, and kagome lattices no finite region has this
property: on each of them the neighbors of a vertex come in antipodal
pairs $v\pm\boldsymbol d_i$ (two pairs on square and kagome, three on
triangular), so the vertex of a finite set that maximizes a generic
linear functional has one neighbor of every pair, hence at least two
neighbors, outside the set. The corresponding graph-state
representations are therefore not normal PEPS. On honeycomb, the
hexagons of one face color partition the vertices into injective
blocks whose coarse graph is the triangular lattice with one original
edge per coarse bond; the honeycomb WGS representation is normal. On
the star lattice $3.12^2$, likewise, the disjoint triangles are
injective blocks (each vertex has one external edge) with coarse
honeycomb graph and single-edge bonds, so the representation is normal
although $\nu=N(3r^2+2r^3)/3>0$.

Normality and the non-bond kernel are therefore independent
properties. On the infinite square, triangular, and kagome lattices,
the normal-PEPS theorem cannot be applied at the WGS point after any
partition into finite blocks. This corner argument does not exclude
winding blocks on finite tori, such as the square-lattice row bands
described after Corollary~\ref{cor:normal}. The mechanism is
the one noted in Ref.~\cite{molnar2018semiinjective} for purified
Gibbs states of commuting nearest-neighbor Hamiltonians on the square
lattice, which share the product-of-commuting-two-site-operators
structure of WGS: a corner of any finite block in the infinite lattice
has two external edges, and
Lemma~\ref{lem:injective} turns this observation into a criterion on
arbitrary graphs. The finite-equivalence problem in this setting
falls under semi-injective PEPS; Sec.~\ref{sec:integrability}
exhibits the finite non-bond equivalences explicitly. On the star
lattice the injective theorem applies to the blocked network and
confines every constant-state curve through $A_0$ to the coarse gauge
orbit; its tangent then lies in $\im\Phi_A$ plus the internal
open-region kernels of the triangle blocks, which is exactly where the
star-lattice non-bond kernel resides; the injective theorem thus
localizes every finite equivalence to those blocks. For generic tensors, injectivity is an open condition whenever an
injective witness exists. The total dimension inequality
$q^{|\partial R|}\le q^{|R|}$ is necessary but not sufficient:
internal bottlenecks must also satisfy the local cut conditions of
Eq.~\eqref{SM-eq:routecut}. The routing witnesses of Appendix~\ref{SM-sec:genericappendix} certify
nonzero injectivity minors on the specified regions.

\subsection{Character order, finite invariants, and nonlinear composition}
\label{SM-sec:finiteequivalencedetails}
Proposition~\ref{prop:finite} constructs exact
constant-state curves on independent sets of centers, and
Corollary~\ref{cor:lines} identifies their single-character
straight lines in the ray fiber.
The condition $\chi^2=1$ holds for every qubit character, and also for
characters with $k_w=q/2$ on every $w\in S$ when $q$ is even.
For a character of order greater than two, $1-t^2\chi^2$ is not constant,
so the straight line leaves the ray at second order, although the
exponential curve remains exact.

For a single qubit square relation there is also a direct finite
invariant. Ordinary invertible bond gauges preserve the matrix rank of
each physical slice across every bipartition of its virtual legs.
The original WGS slices have rank one. By Eq.~\eqref{eq:qexplicit},
the changed slice at either center of $A_0+t z_{S;u,v}$ is the sum of
the original product tensor and $\pm t$ times a second product tensor.
On each of the two support legs, their factors are linearly independent:
an invertible neighbor map takes them to the vectors
$[W_e(s,x)]_x$ and $[W_e(s,x)(-1)^x]_x$, which are independent since
both entries of $W_e(s,x)$ are nonzero. A bipartition separating these
two legs therefore has slice rank two for every $t\ne0$.
Consequently the ray-preserving line, and the corresponding rescaled
constant-state curve, cannot be obtained by ordinary bond gauges;
nonzero site rescalings do not change this rank either.

Exact composition does not imply that sums of such velocities generate
ray-preserving straight lines. For example, two square relations sharing
one center can have three jointly independent centers and distinct
qubit characters $a,b$. Along their summed straight-line direction the
three local factors are $1+t(a+b)$, $1-ta$, and $1-tb$, giving
\begin{equation}
\begin{split}
&[1+t(a+b)](1-ta)(1-tb)\\
&\qquad=1-t^2(2+ab)+t^3(a+b).
\end{split}
\end{equation}
This is generally not constant. The exponential factors
$e^{t(a+b)},e^{-ta},e^{-tb}$ instead have product one and supply the
required nonlinear corrections. On a bulk qubit square lattice the
$2N$ elementary square relations each give a ray-preserving line;
linear combinations with jointly independent centers integrate by
Proposition~\ref{prop:finite}, without implying an affine linear fiber
piece. On honeycomb there is no non-bond kernel, while the elementary
triangle and diamond relations have adjacent centers and are treated
in Sec.~\ref{sec:obstruction}.

\subsection{Second-order certificates and numerical continuation}
\label{SM-sec:obstructionchecks}
For the expansion in Eq.~\eqref{eq:secondorder},
the linearized equations of the exact-state fiber have kernel
$\ker J_A$, whereas those of the physical-ray fiber have kernel
$\ker J_{\perp,A}$. The latter allows $J_A\delta=c\Psi(A_0)$;
subtracting $c r_A$ fixes the amplitude to first order and gives an
affine-null representative, as in Eq.~\eqref{eq:affineprojective}.
For the directions $\delta\in\ker J_A$ considered here, a
constant-ray curve requires
$\Psi_2(\delta,\delta)\in\im J_A$, since the ray itself lies in
$\im J_A$. Equivalently, its projected Hessian obstruction is
\begin{equation}
\begin{aligned}
\mathfrak B_A(\delta,\delta)
&=(I-P_{\im J_A})D^2\Psi_A[\delta,\delta]\\
&=2(I-P_{\im J_A})\Psi_2(\delta,\delta)=0,
\end{aligned}
\end{equation}
where $P_{\im J_A}$ is the orthogonal physical projector onto the
affine image. This is a necessary second-order condition, not a
sufficient condition for integrability at all orders.
\paragraph{Certificate for Proposition~\ref{prop:obstruction}.}
All quantities are rational: $J_A$ is a $512\times128$ integer matrix,
$z_{1,2}$ are explicit rational vectors from Eq.~\eqref{eq:qexplicit},
and $\Psi_2$ is a finite contraction. Exact rational elimination gives
$\rank J_A=76$, $J_Az_i=0$, $\rank[J_A\,|\,\Psi_2(z_i,z_i)]=76$ for each
$i$ and also for the pair of orthogonal diagonals in adjacent squares,
and $\rank[J_A\,|\,\Psi_2(z_1+z_2,z_1+z_2)]=77$
(\nolinkurl{experiments/second_order_obstruction_exact.py}). The
obstruction is the mixed second-order contribution. If the ray fiber
were smooth, its tangent space would contain both $z_i$ and their sum,
and the sum would be the velocity of an analytic curve in that fiber,
contradicting the second-order condition.
Proposition~\ref{prop:finite} locates the obstruction. Two relations
whose center sets are jointly independent compose exactly, so their
sum integrates with the nonlinear corrections supplied by the
exponential family. In particular $\Psi_2(\delta,\delta)$ lies in
$\im J_A$ and is canceled by $J_A\delta'$; the second-order contribution
need not vanish on the uncorrected straight line.
The parallel pair of Proposition~\ref{prop:obstruction} has adjacent
centers, and there the cross contractions do not factorize into
characters. The same obstruction occurs for generic
non-bond directions on the $4\times4$ square torus, where the
component of $\Psi_2$ outside $\im J_A$ carries about half of its norm.

Triangle-based relations have adjacent centers, where the exponential
construction does not apply; for these, Gauss--Newton continuation
integrates every tested direction beyond second order: on the open triangle with one
external leg per vertex (the star-lattice block), the open kagome
bowtie, the open diamond with three external legs per vertex, and an
open strip of three triangles, a Gauss--Newton continuation finds, for
random non-bond kernel directions $\delta$ and displacements
$t\le0.2$, corrections $x(t)$ orthogonal to the complete kernel and to
the ray with $\|(I-\ket\psi\bra\psi)\Psi(A_0+t\delta+x)\|/\|\Psi\|
\le10^{-13}$ and $\|x\|\approx(0.04$--$0.1)\,t^2$; since $J_A$ is
injective on that complement, such a curve has velocity exactly
$\delta$. On the two edge-adjacent squares the same continuation
leaves a residual of order $t^2$ for the parallel pair and for generic
directions, and none for the orthogonal pair, in agreement with the
exact certificates
(\nolinkurl{experiments/nongauge_integrability.py}). A proof of
integrability for the triangle-based motifs, and a characterization
of the tangent cone where independent and adjacent centers mix, are
natural next steps.

\section{Euclidean QGT spectrum and conditioning}
\label{SM-sec:positive_spectrum}
The support theorem determines the rank and kernel independently of a
parameter metric. This section additionally determines the positive
spectrum in fixed balanced coordinates, and bounds it away from the
Fourier point. These results concern the exact tensor differential,
not a sampled metric or a solver's performance.

\subsection{Coordinates, exact zeros, and the positive condition number}
Fix the normalized WGS representation and balanced factors
\begin{equation}
\begin{gathered}
W_e=U_e\Sigma_eV_e^\dagger,\\
F_{u,e}=U_e\sqrt{\Sigma_e},\quad
F_{v,e}=\overline V_e\sqrt{\Sigma_e}.
\end{gathered}
\end{equation}
No site tensor is independently rescaled. All parameter inner products
are Euclidean in the raw complex tensor entries. Set
$\mathcal M_A=J_{\perp,A}^\dagger J_{\perp,A}$.
This is the projective Gram form of the QGT
\cite{provost1980riemannian,haegeman2014geometry}.
Its nullity is $g+\nu$. If $Q_g$ has orthonormal columns spanning
$\mathcal G_A^\perp$, then $Q_g^\dagger \mathcal M_A Q_g$ has exactly $\nu$ zeros
and the same positive eigenvalues as $\mathcal M_A$: every positive eigenvector
is orthogonal to the complete kernel and hence to $\mathcal G_A$.
An orthonormal complement of the full kernel removes those remaining
zeros without changing the positive eigenvalues. The quantity
$\kappa_+=\lambda_{\max}/\lambda_{\min}^{+}$ therefore measures the QGT
on physical parameter directions. It is the square of the corresponding
nonzero-singular-value condition number of $J_\perp$; these spectral
relations follow from the SVD \cite{golub2013matrix}.

Balanced SVD degeneracies allow unitary changes of virtual coordinates,
which preserve these eigenvalues. A general nonunitary bond gauge can
change them while preserving the state, physical tangent image, and
nullity. The spectrum is therefore assigned only in the stated tensor
coordinates~\cite{evenbly2018gauge}.

\subsection{Exact spectrum at the Fourier graph-state point}
Equip functions on $N[v]$ with the uniform probability inner product.
Let $L_v$ denote the map~\eqref{eq:localmap} from the Euclidean site parameter block
to this function space. At $W_eW_e^\dagger=qI$, each balanced half-edge
factor has all singular values $q^{1/4}$. The phase denominator in
Eq.~\eqref{eq:localmap} is a diagonal unitary on function values. For a fixed physical
value at the center, the numerator is a tensor product of those half-edge
maps, multiplied by $\sqrt q$. Including the uniform measure on
$q^{\deg(v)+1}$ configurations gives
\begin{equation}
L_vL_v^\dagger=q^{-\deg(v)/2}I_{\operatorname{Fun}(N[v])}.
\label{SM-eq:slocalisometry}
\end{equation}
Blocks with different center values have disjoint parameter and function
coordinates, so this identity applies to the whole site block.

Let $P_v$ project onto the physical vectors
$\chi_{S,\boldsymbol k}\psi$ with $S\subseteq N[v]$, including the
constant sector, and let $P_0=\ket\psi\bra\psi$. The different site parameter
blocks are orthogonal, and multiplication by the normalized WGS is
isometric from the global uniform function space into Hilbert space.
Consequently, on the physical tangent image,
\begin{equation}
\begin{gathered}
J_\perp J_\perp^\dagger=\sum_vq^{-\deg(v)/2}(P_v-P_0),\\
\lambda_{S,\boldsymbol k}=\sum_{v\in C_S}q^{-\deg(v)/2},\qquad S\ne\varnothing.
\end{gathered}
\label{SM-eq:sframespectrum}
\end{equation}
The nonzero spectra of $J_\perp J_\perp^\dagger$ and $\mathcal M_A$ agree
\cite{golub2013matrix}.
In parameter space, normalized positive eigenvectors can be chosen as
$J_\perp^\dagger(\chi_{S,\boldsymbol k}\psi)/\sqrt{\lambda_{S,\boldsymbol k}}$.
Each support has $(q-1)^{|S|}$ frequency copies.
The formula is exact for every finite graph at this balanced Fourier point,
including irregular graphs and short periodic identifications.

For a $z$-regular graph with distinct closed neighborhoods, a full
closed-neighborhood support has only its own center, so $m_{\min}=1$.
A singleton has $z+1$ centers, and no nonempty support has more centers
than any of its singletons. Hence $m_{\max}=z+1$ and
$\kappa_+=z+1$, independent of $N$ and $q$. Graphs with coincident
closed neighborhoods use the exact multiplicity formula instead.
For qubits this point is $\phi=\pi$; for arbitrary $q$ the edge phase
$2\pi/q$ gives $W_eW_e^\dagger=qI$.

\subsection{Local-frame assembly away from that point and uniform bounds}
At general nondegenerate WGS phases, Fourier transform each map~\eqref{eq:localmap}
with the uniform-measure convention. Let $C_v$ be its coefficient matrix
with the constant row removed, and $E_v$ embed its nonconstant local
character rows into the global support union. The finite matrix
\begin{equation}
B=\sum_vE_vC_vC_v^\dagger E_v^\dagger
\label{SM-eq:sframeassembly}
\end{equation}
is the physical frame expressed in the orthonormal character basis.
It is positive definite on the $K_q(G)-1$ allowed characters and has
exactly the nonzero eigenvalues of $\mathcal M_A$. This construction retains
all local tensor-entry directions and does not replace the Euclidean
metric by the metric of the inverse-map coordinates.

The squared singular values of $L_v$ are bounded by
\begin{equation}
\ell_v=\prod_{e\ni v}\frac{\sigma_{\min}(W_e)}q,
\qquad u_v=\prod_{e\ni v}\frac{\sigma_{\max}(W_e)}q.
\end{equation}
This follows from the tensor-product singular values in the numerator;
the phase denominator and the normalized Fourier transform are unitary.
Removing the constant row preserves these quadratic-form bounds on the
remaining rows. On the union space, each character is present at least
once and at most $\Delta+1$ times, where $\Delta$ is the maximum degree.
Summing the local bounds in Eq.~\eqref{SM-eq:sframeassembly} therefore yields
\begin{equation}
\min_v\ell_v\le\lambda_{\min}(B)\le\lambda_{\max}(B)
\le(\Delta+1)\max_v u_v.
\label{SM-eq:sframebound}
\end{equation}
For a $z$-regular graph with the same weight matrix $W$ on every edge,
\begin{equation}
\kappa_+\le(z+1)
\left[\frac{\sigma_{\max}(W)}{\sigma_{\min}(W)}\right]^z.
\label{SM-eq:positiveconditionbound}
\end{equation}
It is uniform in $N$ at fixed $q$, degree, and nondegenerate edge phase.
It can deteriorate as edge weights approach degeneracy, or after a
nonunitary change of tensor gauge, and is not uniform in either.

\subsection{Perturbations away from the WGS rank-drop set}
\label{SM-sec:nearbygramproof}
The preceding bounds vary the weights within the WGS family. They
control its positive spectrum but do not describe how its additional
zeros open under arbitrary tensor perturbations. Here the graph, bond
dimensions, independent parameter entries, and Euclidean parameter
metric are fixed. Let $A(\varepsilon)$ be a $C^2$ path through a
nondegenerate WGS representation $A_0$, parametrized by real
$\varepsilon$, with $\Delta=A'(0)$ and $\Psi(A(\varepsilon))\ne0$
near zero. The gauge rank is locally constant: a maximal minor of
$\Phi_{A_0}$ remains nonzero, and the scalar-cycle upper bound holds
at every tensor. Adding the independent ray preserves constant
dimension. Thus a $C^2$ orthonormal frame $Q(\varepsilon)$ for
$\mathcal G_{A(\varepsilon)}^\perp$ exists locally. No assumption of
constant physical differential rank is made.

Write $\widetilde J(\varepsilon)=J_{\perp,A(\varepsilon)}Q(\varepsilon)$,
$K_0=\ker \widetilde J(0)$, and $R_0=\im \widetilde J(0)=\mathcal T_{A_0}$.
For an isometry $Z_0:\mathbb C^\nu\to K_0$, define
\begin{equation}
L_\Delta=(I-P_{R_0})\widetilde J'(0)Z_0,\qquad \nu=\dim K_0.
\label{SM-eq:openingmapSM}
\end{equation}
Changing either orthonormal frame only changes $L_\Delta$ by a
constant unitary on its domain: derivatives of frame changes produce
terms in $R_0$, which its projector removes. Its singular values
are consequently well defined for the chosen tensor metric and path.
General nonunitary reparametrizations need not preserve their
prefactors.

\paragraph{Proof of Proposition~\ref{prop:nearbygram}.}
Choose orthonormal domain and codomain decompositions
$K_0^\perp\oplus K_0$ and $R_0\oplus R_0^\perp$. The restriction
$D=\widetilde J(0)|_{K_0^\perp}:K_0^\perp\to R_0$ is invertible, and
\begin{equation}
\widetilde J(\varepsilon)=
\begin{pmatrix}
D+\varepsilon D_1 & \varepsilon C\\
\varepsilon F & \varepsilon L_\Delta
\end{pmatrix}+O(\varepsilon^2).
\label{SM-eq:openingblocks}
\end{equation}
First multiply on the right by $\exp(\varepsilon X)$, where
\begin{equation}
X=\begin{pmatrix}
0&-D^{-1}C\\
C^\dagger(D^{-1})^\dagger&0
\end{pmatrix}.
\end{equation}
Since $X^\dagger=-X$, this preserves singular values and removes the
upper-right block to first order. Multiplication on the left by
$\exp(\varepsilon Y)$, with
\begin{equation}
Y=\begin{pmatrix}
0&(D^{-1})^\dagger F^\dagger\\
-FD^{-1}&0
\end{pmatrix},
\end{equation}
likewise removes the lower-left block. The resulting matrix differs
from $\operatorname{diag}(D+\varepsilon D_1,\varepsilon L_\Delta)$
by $O(\varepsilon^2)$ in operator norm. The singular-value perturbation
bound~\cite{golub2013matrix} therefore gives, for the $\nu$ singular
values tending to zero, with zeros included when necessary,
\begin{equation}
s_i(\widetilde J(\varepsilon))
=|\varepsilon|\sigma_i(L_\Delta)+O(\varepsilon^2).
\end{equation}
Order both small spectra increasingly. Squaring gives
$\mu_i=c_i\varepsilon^2+O(|\varepsilon|^3)$, where
$c_i=\lambda_i(L_\Delta^\dagger L_\Delta)$. For $c_i=0$ the
stronger estimate $\mu_i=O(\varepsilon^4)$ follows directly.
If $K_0^\perp$ is empty the assertion follows from
$\widetilde J(\varepsilon)=\varepsilon \widetilde J'(0)+O(\varepsilon^2)$.
When $\nu>0$, $L_\Delta$ is injective, and $\widetilde J(0)\ne0$, all reduced modes
are positive at sufficiently small nonzero $\varepsilon$ and
\begin{equation}
\kappa\!\left(\widetilde J(\varepsilon)^\dagger \widetilde J(\varepsilon)\right)
\sim\frac{\|\widetilde J(0)\|_{\rm op}^2}
{\sigma_{\min}(L_\Delta)^2\varepsilon^2}.
\label{SM-eq:openingcondition}
\end{equation}
In the WGS setting of the main text, $\rank J_{\perp,A_0}\ge q-1>0$
(the single-site Fourier sectors already suffice), so $\widetilde J(0)\ne0$
is automatic. For a general matrix curve it is necessary in
Eq.~\eqref{SM-eq:openingcondition}: $\widetilde J(\varepsilon)=\varepsilon I$
has quadratic Gram eigenvalues but constant condition number.
Generic completeness alone does not imply injectivity of $L_\Delta$.
For example, the scalar matrix $\widetilde J(\varepsilon)=\varepsilon^2$
has full rank off zero but a quartic Gram eigenvalue.

\paragraph{Mixed Hessian and the displacement convention.}
No differentiated QR basis is required to evaluate the opening map.
Put $Z=Q(0)Z_0$. Frame-derivative terms are in $\im J_{\perp,A_0}$.
Moreover, normalization and horizontal-projector derivatives on a
projective-null column contribute only in the affine image
$\im J_{A_0}=\mathbb C\psi_0\oplus\mathcal T_{A_0}$. Hence
\begin{equation}
L_\Delta
=\frac{(I-P_{\im J_{A_0}})D^2\Psi_{A_0}[\Delta,Z]}
{\|\Psi(A_0)\|}.
\label{SM-eq:openinghessian}
\end{equation}
The columns of $Z$ need not themselves be affine-null. The extra
ray projection in Eq.~\eqref{SM-eq:openinghessian} accounts for this.
This is the mixed projected Hessian already underlying the
second-order fiber obstruction in Appendix~\ref{SM-sec:obstructionchecks}.
If $\Delta_v$ denotes the perturbation at site $v$, multilinearity
evaluates $D^2\Psi[\Delta,Z]$ by the sum of contractions replacing
two distinct site tensors; no finite-difference subtraction is needed.

In each straight-line scan below,
$A(\varepsilon)=A_0+\varepsilon\Delta$ with
$\|\Delta\|=\|A_0\|$, so $|\varepsilon|$ is the relative Frobenius
displacement of the concatenated tensors. It is not a gauge-invariant
distance between states. For a general path the phase-aligned
normalized-state distance has leading term
$|\varepsilon|\|J_{\perp,A_0}\Delta\|$ when this coefficient is
nonzero. If $\Delta\in\ker J_{\perp,A_0}$, physical displacement can
start at second or higher order.

\paragraph{Paths with persistent nullity.}
Let $\mathcal V$ apply fixed invertible physical matrices $V_v$ to
the site tensors, and let $V_{\rm phys}=\bigotimes_v V_v$.
Contraction and gauge action obey
\begin{equation}
J_{\mathcal V A}\mathcal V=V_{\rm phys}J_A,\qquad
\Phi_{\mathcal V A}=\mathcal V\Phi_A.
\end{equation}
Thus affine rank, projective rank, gauge rank, and $\nu$ are unchanged.
A smooth path of such transformations can leave the fixed-basis WGS
family while retaining its full non-bond nullity. If every $V_v$
is unitary, the Euclidean parameter transformation is also unitary
and the entire projective Gram spectrum is unchanged. Varying
nondegenerate WGS edge phases also preserves $\nu$, although positive
eigenvalues may change. Neither path conflicts with generic completeness.

For qubits, this gives an exact physical evolution. Set
\begin{equation}
H_{\rm d}=\sum_{uv\in E_G}J_{uv}Z_uZ_v+\sum_vh_vZ_v
\label{SM-eq:zzpersistent}
\end{equation}
with real couplings. For initial WGS edge phases $\varphi_{uv}$,
use $Z_uZ_v=(1-2s_u)(1-2s_v)$ to write the evolved amplitude,
up to a global phase, as
\begin{equation}
2^{-N/2}\exp\!\left[i\sum_{uv}\varphi_{uv}(t)s_us_v
+i\sum_v\theta_v(t)s_v\right],
\end{equation}
where $\varphi_{uv}(t)=\varphi_{uv}-4J_{uv}t$ and
$\theta_v(t)=2t(h_v+\sum_{u:uv\in E_G}J_{uv})$.
Its WGS lift retains the exact nullity on each interval where the
edge matrices remain invertible. A many-body diagonal term can
instead introduce phases outside this pairwise family. Nor does
staying in the WGS state family force an unrestricted solver to
choose a rank-preserving tensor lift.

\paragraph{Dependence on the velocity lift.}
Let $b=-i(H-\langle H\rangle)\psi_0$ be tangent and
$J_{\perp,A_0}\Delta=b$. Another lift is $\Delta+z$ for any
$z\in\ker J_{\perp,A_0}$. With the same initial orthonormal
kernel basis $Z$, Eq.~\eqref{SM-eq:openinghessian} gives
\begin{equation}
L_{\Delta+z}-L_\Delta
=\frac{(I-P_{\im J_{A_0}})D^2\Psi_{A_0}[z,Z]}{\|\Psi(A_0)\|}.
\label{SM-eq:openingliftdependence}
\end{equation}
This need not vanish for a non-bond $z$. The coefficients thus
characterize a lift in a specified tensor metric, not just a Hamiltonian.
Equation~\eqref{eq:lift} likewise requires a choice of its
coefficients $\alpha_{vS}$.

Let $A_{ZZ}(t)$ be a smooth WGS lift of
$e^{-itH_{ZZ}}\psi_0$, and let $\mathcal U_X(t)$ apply
$e^{-ith_v^XX}$ to each physical leg. The curve
$\mathcal U_X(t)A_{ZZ}(t)$ retains $\nu$ and has physical velocity
$-i(H_{ZZ}+\sum_vh_v^XX_v)\psi_0$ at zero. Its opening map vanishes
by the proposition. It is a first-order lift of the Ising velocity,
not its exact finite-time trajectory when the two parts do not commute.
The straight line with the same initial tensor velocity has the same
zero quadratic coefficients, but curvature is needed to conclude
that its zero modes persist exactly.

\paragraph{Ridge crossover and compatible forces.}
For a singular pair
$\widetilde J(\varepsilon)v_i=\sqrt{\mu_i}\,u_i$ and physical source $b$,
the compatible force is $\boldsymbol F=\widetilde J^\dagger b$. Its regularized parameter
and physical coefficients are respectively
\begin{equation}
\begin{aligned}
x_i&=\frac{\sqrt{\mu_i}\langle u_i,b\rangle}{\lambda+\mu_i},\\
(\widetilde Jx)_i&=\frac{\mu_i\langle u_i,b\rangle}{\lambda+\mu_i}.
\end{aligned}
\label{SM-eq:openingcompatible}
\end{equation}
At $\lambda=0$, a lifted mode can therefore require a parameter
coefficient of order $1/|\varepsilon|$, while its physical coefficient
remains bounded. This does not contradict the absence of a
$1/\lambda$ divergence for a compatible force at the exact WGS point.
An independent error $\eta_i$ in the right-hand side instead gives
\begin{equation}
\delta x_i=\frac{\eta_i}{\lambda+\mu_i},\qquad
(\widetilde J\delta x)_i=\frac{\sqrt{\mu_i}\eta_i}{\lambda+\mu_i}.
\label{SM-eq:openingerror}
\end{equation}
For $c_i>0$, the leading denominator is
$\lambda+c_i\varepsilon^2$, giving the crossover
$|\varepsilon|\sim\sqrt{\lambda/c_i}$ when it lies in the
perturbative regime. Equation~\eqref{SM-eq:openingerror} describes
sensitivity to an independent force error; it does not ascribe
incompatibility to sampling metric and force from the same derivative
matrix. Converting the displacement scale into a time interval
requires a specified physical evolution and its tensor path.
Removing only the current exact kernel preserves the physical image;
discarding modes that have become positive is an additional truncation.

\paragraph{Finite-graph spectral experiment.}
The producer \nolinkurl{experiments/wgs_nearby_spectrum.py} uses the
open $3\times3$ qubit square graph, with all $128$ tensor entries
independent. Its twelve edge phases are uniform draws from
$[0.6,2.4]$ using NumPy generator seed $20260920$, and the tensors
use the balanced SVD factors stated at the start of this section.
Three complex Gaussian directions, seeds $101,102,103$, are rescaled
to $\|\Delta\|=\|A_0\|$. Each path uses seventeen logarithmically
spaced values from $10^{-1}$ to $10^{-5}$. The current projective
Jacobian and its gauge/ray complement are recomputed at every point.
At zero the reduced dimension is $83$, its rank is $75$, and
$\nu=8$; all sampled nonzero displacements have rank $83$ at relative
singular-value threshold $10^{-10}$. The small eigenvalues are
squared singular values of this reduced Jacobian, avoiding the loss
of precision from explicitly forming and diagonalizing its Gram.

The opening map is computed by exact multilinear differentiation
of the tensor contraction, followed by analytic differentiation of
normalization and the horizontal projector. The independent
affine-image expression~\eqref{SM-eq:openinghessian} agrees to relative
error below $1.9\times10^{-15}$. Centered finite differences of the
state, raw Jacobian, and projective Jacobian converge quadratically
to their analytic derivatives when the step is halved.
The $24$ coefficients $c_i$ lie between $0.02900398$ and $0.21163185$.
Fits over the common window $10^{-5}\le\varepsilon\le10^{-3}$ give
exponents from $1.999721$ to $2.000415$, and the largest relative
coefficient error at $10^{-5}$ is $2.29\times10^{-5}$.

Direct edge-phase amplitudes independently check the base PEPS
contraction to vector error below $3.9\times10^{-15}$. Gauge
annihilation is checked throughout, and a finite nonunitary bond
gauge preserves the state and rank. As a rank-preserving control,
$e^{-i\theta Y_0}$ for $\theta=0.001,0.01,0.1$ makes the Born
probability nonuniform but leaves all eight zeros and the entire
Gram spectrum unchanged to relative error below $3.0\times10^{-15}$.
An additional two-by-two matrix test,
$B(\varepsilon)=\left(\begin{smallmatrix}1&\varepsilon\\0&\varepsilon^2
\end{smallmatrix}\right)$, checks that omitting the cokernel
projection would incorrectly predict quadratic rather than quartic
opening. Configuration, all $408$ small-mode data points, source
hashes, and checks are recorded in
\nolinkurl{manuscript/quantum/data/wgs_nearby_spectrum}. Figure~\ref{SM-fig:randomopening}
retains the three-direction comparison; Fig.~\ref{fig:nearbygram} uses
one representative direction alongside the physical lift below.

\begin{figure}[!tbp]
\centering
\includegraphics[width=\columnwidth]{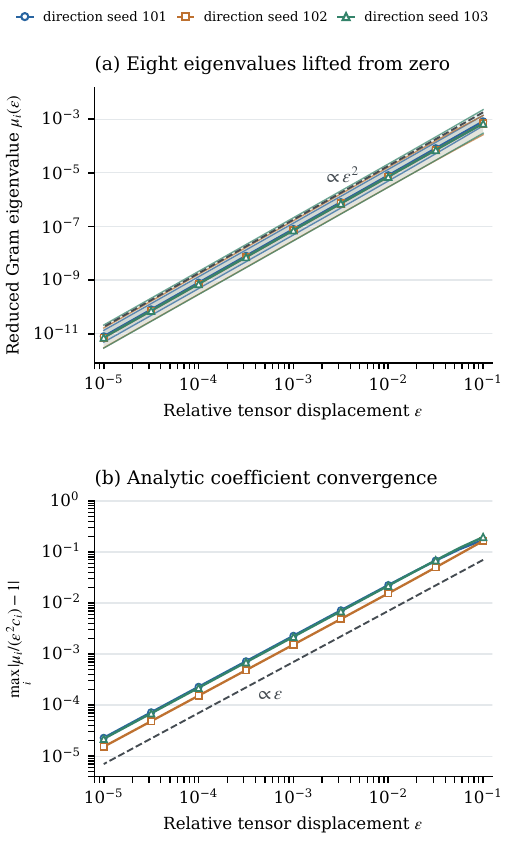}
\caption{Random-direction checks on the open $3\times3$ WGS square
graph. (a) Envelopes of the eight small Gram eigenvalues for each
seed, with geometric means of the middle pair. (b) Maximum relative
error from the eight analytic quadratic coefficients. The local-unitary
control preserves all eight zeros and the whole spectrum.}
\label{SM-fig:randomopening}
\end{figure}

\paragraph{Physical-path controls and minimum-norm lifts.}
The producer \nolinkurl{experiments/wgs_physical_paths.py} uses the
same graph, base phases, tensor metric, and seventeen displacements.
For $H=\sum_{uv\in E_G}Z_uZ_v+h\sum_vX_v$, with $h=0$ and $0.7$,
it computes the Euclidean minimum-norm projective lift
$\Delta_{\min}=J_{\perp,A_0}^{+}[-i(H-\langle H\rangle)\psi_0]$
using relative SVD cutoff $10^{-10}$. The scanned line is
$A_0+\varepsilon\Delta_{\min}/r$, where
$r=\|\Delta_{\min}\|/\|A_0\|$. The values of $r$ are $1.20825$
and $1.37729$, respectively; hence $\varepsilon/r$ is the associated
linear time parameter, not a time-integrated trajectory.
The physical velocities agree with independent computational-basis
Hamiltonian actions to vector error below $10^{-14}$.

For both lifts, six squared singular values of the opening map are
resolved as positive and two are below $10^{-28}$. The positive
coefficients range from $8.30\times10^{-4}$ to $7.26\times10^{-3}$
for $ZZ$ and from $9.10\times10^{-4}$ to $6.36\times10^{-3}$ for
Ising. Fits over $10^{-5}\le\varepsilon\le10^{-3}$ give quadratic
exponents $1.99790$--$1.99955$ and $1.99846$--$1.99931$;
the other two modes have exponents $3.99780$--$3.99816$ and
$3.99838$--$3.99844$, respectively. The positive Hessian coefficients
agree with $\mu_i/\varepsilon^2$ at $10^{-5}$ within relative errors
$1.16\times10^{-4}$ and $8.47\times10^{-5}$.
The quartic fits are numerical observations for these two prescribed
lines, not a general theorem for vanishing quadratic coefficients.
Squared singular values are used throughout, including modes too
small for the fixed relative rank threshold to resolve.

For an independent rank-preserving lift, keep each head factor
$F_{v,e}(0)$ fixed at its balanced initial value and set
\begin{equation}
\begin{aligned}
W_e(t)_{sr}&=W_e(0)_{sr}e^{-it(1-2s)(1-2r)},\\
F_{u,e}(t)&=W_e(t)F_{v,e}(0)^{-T}.
\end{aligned}
\label{SM-eq:zzsmoothfactors}
\end{equation}
This is a smooth, exact $ZZ$ path through the original tensors.
Following it by local $e^{-i0.7tX}$ rotations supplies the structural
Ising lift above. Their velocities are differentiated analytically
and checked by centered finite differences; both opening-map norms
are below $9\times10^{-15}$. At $t=0.001,0.003,0.01,0.03,0.1$,
both curves retain rank $75$, gauge/ray dimension $45$, and eight
residual Gram eigenvalues below $10^{-28}$. Direct computational-basis
$ZZ$ phases, followed where applicable by bit-flip rotations, check
the contracted states independently. The weights remain nondegenerate
at these times. Data and source hashes are in
\nolinkurl{manuscript/quantum/data/wgs_physical_paths}; tests also check
minimum-norm orthogonality to the complete projective kernel.
These scans test neither time integration nor large-system scaling;
the parameter space after bond expansion is outside their scope.

\section{Integer proof certificates}
\label{SM-sec:certificates}
An injectivity certificate records a finite region, every boundary input,
and a path to a distinct physical endpoint for each input. Paths use
no internal edge twice. Their delta-tensor construction proves an
injective boundary map for every integer $q\ge2$, as in
Eq.~\eqref{SM-eq:routeisometry}. The checker verifies inputs, graph edges,
capacities, endpoints, translation types, and the exterior cover.
Thus these records certify finite combinatorial statements in the proof;
they are not numerical singular-value estimates. Each verification
report cited below records the hashes of every file it checks, so the
certified version remains identifiable independently of later changes
to the repository. The accompanying frozen review bundle is identified by its SHA-256
manifest and archive checksum. Its common entry point is
\nolinkurl{reproducibility/README.md}, and the command
\texttt{python reproducibility/reproduce.py --suite all}
checks the manifest and replays the finite checks in a temporary copy.
The bundle includes the source, accepted certificates, and independent
verifiers; no partition search is needed for replay.
Appendix~\ref{SM-sec:reproducibility} explains the scopes of the replay
suites. Table~\ref{SM-tab:proofdependencies} maps the proof obligations
to their analytic arguments and certificate families.

\begin{table*}[!tbp]
\centering
\caption{Proof dependencies for the generic comparison. Each row specifies
normality/injectivity, internal completeness, coarse-edge splitting, and
extension to all stated periods. ``Direct'' means that no blocking or
composite-edge splitting is required. All period bounds and cell conventions
are those of Table~\ref{SM-tab:generic}; finite torus checks alone do not prove
an all-period statement. The last column concerns global generic completeness,
not merely the availability of a local witness.}
\label{SM-tab:proofdependencies}
\scriptsize
\setlength{\tabcolsep}{3pt}
\renewcommand{\arraystretch}{1.0}
\centering
\begin{adjustbox}{max width=\textwidth}
\begin{tabular}{>{\raggedright\arraybackslash}p{2.8cm}>{\raggedright\arraybackslash}p{2.8cm}>{\raggedright\arraybackslash}p{2.5cm}>{\raggedright\arraybackslash}p{3.4cm}>{\raggedright\arraybackslash}p{2.6cm}>{\raggedright\arraybackslash}p{0.9cm}}\toprule
Families / proof & Normal conditions and certificates & Internal block kernel & Composite-edge detector & All-period extension & Scope\\\midrule
Honeycomb; three subdivisions & Maximal-rank WGS witness & Direct & None & Exact finite-graph witness; bulk faithfulness for honeycomb & All $q$\\
Star, square-octagon, cross (App.~\ref{SM-sec:strivalent}) & Invertible triangle/square blocks & Injective grouped site maps; internal cycle theorem & Single original edges & Intact blocks and simple coarse quotient & All $q$\\
Square, kagome, snub-square (Apps.~\ref{SM-sec:archgenericproof}, \ref{SM-sec:m28generic}) & Original-edge and vertex partitions; m27/m28 routes & Direct & None & Translated injective covers & All $q$\\
Ruby; \texttt{t2005}, shuriken (Apps.~\ref{SM-sec:m28generic}, \ref{SM-sec:triangleextension}) & Triangle-block partitions; m28/blocked routes & Injective grouped triangle sites & Two-site double-bond detector; opposite-site annihilation & Translated covers on original legs & All $q$\\
Elongated triangular, maple-leaf (App.~\ref{SM-sec:m29generic}) & Dimer/triangle partitions; m29 routes & Injective grouped block sites & Joint dimer / asymmetric triangle triple-bond detectors & Covers with stated parity and vacancy cells & All $q$\\
Dice; \texttt{t2016}--\texttt{t2018} (App.~\ref{SM-sec:generic39proof}) & Original-edge and vertex partitions; open routes & Direct & None & Open-box exterior covers & All $q$\\
\texttt{t2001} (App.~\ref{SM-sec:triangleextension}) & Triangle/singleton partitions; blocked routes & Injective grouped triangle sites & Single original edges & Translated exterior covers & All $q$\\
\texttt{t2002}--\texttt{t2004}, \texttt{t2011}; checkerboard (App.~\ref{SM-sec:fiveextension}) & Triangle/dimer/$K_4$ partitions; open19 routes & Injective grouped block sites & Two-site double-bond, joint dimer triple-bond, or clique detector & Translated covers; checkerboard supercells & All $q$\\
\texttt{t2006,t2007} (App.~\ref{SM-sec:bowtieextension}) & Bowtie partitions; open14 routes & Internal normal partitions; center flattenings & Adjacent/nonadjacent leaf-pair double-bond detectors & Translated exterior covers & All $q$\\
\texttt{t2009,t2012} (App.~\ref{SM-sec:diamondextension}) & Diamond partitions; open12 routes & Injective grouped diamond sites & Joint double-bond / isolated triple-bond region detectors & Translated exterior covers & All $q$\\
\texttt{t2010,t2013} (App.~\ref{SM-sec:wheelextension}) & Injective wheel copy witness; geometry checks & Rim-dimer reduction; center and Bell-pair sectors & Two-channel joint dimer detector; external doubles for \texttt{t2013} & Offset bounds exclude aliasing for periods $\ge3$ & All $q$\\
\texttt{t2020} (App.~\ref{SM-sec:t2020extension}) & Diamond partitions; open10 routes & Injective grouped diamond sites & Joint unequal $1+2$ / $2+1$ region detector; endpoint refinement & Translated exterior covers & All $q$\\
\texttt{t2008,t2014} (App.~\ref{SM-sec:open4qubits}) & Injective twelve-site block / normal partitions; open4 routes & Explicit nonzero integer minors & Joint incidence minors; singleton endpoint refinement & Offset bound / translated exterior cover & $q=2$\\
\texttt{t2015} (Apps.~\ref{SM-sec:open4qubits}, \ref{SM-sec:open5qubits}) & Path-block normal partitions; open4 routes & Explicit nonzero integer minor & Joint incidence minor; singleton endpoint refinement & Translated exterior cover & $q=2$\\
Triangular, centered-square, \texttt{t2019} (App.~\ref{SM-sec:open5qubits}) & Lattice-block normal partitions; open5 routes & Explicit nonzero integer minors & Joint incidence minors; singleton endpoint refinement & Disc covers; residue classes and systole bound & $q=2$\\
King (App.~\ref{SM-sec:kingrows}) & Explicit six-row band routes; both cycle partitions & Injective grouped row sites; internal cycle theorem & Simultaneous displacement-sector row witness; endpoint refinement & $L_x\ge4$, $L_y\ge18$, or exchanged & All $q$\\
\bottomrule
\end{tabular}
\end{adjustbox}
\end{table*}

The snub-square open templates and verification records are retained in
\begin{verbatim}
artifacts/m27_global_generic_20260910/
\end{verbatim}
with source and certificate hashes. The square, kagome, and ruby
compressed templates are in
\begin{verbatim}
experiments/certificates/m28/
\end{verbatim}
and the elongated-triangular and maple-leaf templates are in
\begin{verbatim}
experiments/certificates/m29/
\end{verbatim}
The source modules for independent routing verification are
\begin{verbatim}
experiments.m27_verify_certificates
experiments.m28_verify
experiments.m29_verify
\end{verbatim}
From the repository root, set \texttt{PYTHONPATH=python:.} and invoke
\texttt{python -m} followed by the module name. The m28 and m29
verifiers read frozen templates without an integer-programming search.
The splitting constructions are implemented separately in
\texttt{experiments.m28\_ruby\_split} and
\texttt{experiments.m29\_split}; their analytic all-$q$ arguments are
included in the preceding appendix.
The original numerical experiments, additional finite-graph checks,
and their reports remain separate from these proof certificates.

The dice and three 2-uniform open-template certificates are stored in
\begin{verbatim}
research/generic_39_review/
\end{verbatim}
as \nolinkurl{dice_open.json}, \nolinkurl{t2016_open.json},
\nolinkurl{t2017_open.json}, and \nolinkurl{t2018_open.json}.
The read-only checker \nolinkurl{verify_certificates.py} in that
directory does not import the partition search or maximum-flow code.
It verifies 346 open-region certificates and 98\,719 paths, including
all 74 edge types and 30 vertex types, the inclusions in
Eq.~\eqref{SM-eq:extensioncover}, and the absence of boundary-leg
identifications. Dice adjacency is also checked against an independent
hub/rim construction. The saved \nolinkurl{verification.json} records
certificate hashes and checks on $30\times30$ and $31\times33$ tori;
Appendix~\ref{SM-sec:generic39proof} supplies the all-period covering proof.
From the repository root, run the checker with arguments
\texttt{dice t2016 t2017 t2018} and \texttt{PYTHONPATH=.:python}.

The same directory contains the triangle extensions as
\nolinkurl{t2001_blocking.json}, \nolinkurl{t2005_blocking.json},
and \nolinkurl{shuriken_blocking.json}, together with each graph's
\nolinkurl{*_blocked_open.json}. The independent read-only checker
\nolinkurl{verify_blocked.py} imports neither the blocking constructor
nor the partition search or maximum-flow code. It verifies the exact
original-site partition and adjacency, the ten double-bond isolation
conditions, all 56 coarse-edge and 28 coarse-vertex types, and 283
open regions with 75\,937 routing paths. It also checks the cover
in Eq.~\eqref{SM-eq:trianglecover} and boundary identifications on
$32\times32$ and $33\times35$ blocked tori, corresponding to
$64\times64$ and $66\times70$ original periods for shuriken.
Hashes and results are recorded in \nolinkurl{blocked_verification.json}.
Run this checker with arguments \texttt{t2001 t2005 shuriken} and
\texttt{PYTHONPATH=.:python}. Appendix~\ref{SM-sec:triangleextension}
proves the extension to all stated periods.

The separate \nolinkurl{audit_triangle_lift.py} checks the splitting
detector at $q=2,3$, obtaining ranks $9,64$ over two prime fields,
and compares two independent contraction formulas. At $q=2$, the
internal triangle has 96 parameters: its $512\times96$ Jacobian has
rank 85 and its gauge map has rank 11 over both fields, with
$J\Phi=0$ and finite gauge invariance checked exactly. These are
algebraic cross-checks; the all-$q$ statements follow from the analytic
splitting and fiber arguments, not extrapolation from small $q$.

The five further extensions are stored in
\begin{verbatim}
research/open19_extension/
\end{verbatim}
Each graph has a \nolinkurl{*_blocking.json} record and a
\nolinkurl{*_blocked_open.json} routing certificate. The independent
checker \nolinkurl{verify.py} reads these frozen files without importing
the blocking constructor, search, or maximum-flow code. It verifies
134 open regions and 29\,392 paths, covering all 27 coarse-edge and
12 coarse-vertex types, six double-bond isolation conditions, and
two triple-bond types with the required dimer incidence and endpoint
refinements. Identical coarse labels in neighboring cells are checked
as distinct block incidences. Checkerboard adjacency is additionally
checked directly by intersection of square-source edge endpoints.
The report \nolinkurl{blocked_verification.json} contains hashes and
checks on $32\times32$ and $33\times35$ blocked tori; the analytic
cover argument in Appendix~\ref{SM-sec:fiveextension} proves all stated
periods. Invoke the checker from the repository root with
\texttt{PYTHONPATH=.:python} and arguments
\texttt{t2002 t2003 t2004 t2011 checkerboard}.

The local \nolinkurl{audit.py} checks named contractions against
independent formulas over two prime fields. At $q=2$, the mixed-degree
triangle has 80 parameters, Jacobian rank 69, and gauge rank 11.
For the $K_4$ block it checks full-column site maps, gauge rank 21,
the differential gauge identity, and finite gauge invariance, without
computing the full block Jacobian rank. Its two-site detector has
direct rank 9 at $q=2$; at $q=3$, rank 64 follows from the checked
product factorization, with selected columns independently contracted.
The dimer detector is similarly checked at ranks 90 and 1280.
Finally, \nolinkurl{check_polynomials.py} compares the exact
support-census coefficients with Eq.~\eqref{SM-eq:fiverankdrops}.
These audits support the all-$q$ analytic proofs and do not replace them.

The bowtie blockings and open routes are stored in
\begin{verbatim}
research/open14_extension/
\end{verbatim}
as \nolinkurl{t2006_blocking.json}, \nolinkurl{t2007_blocking.json},
and their \nolinkurl{*_blocked_open.json} files. The independent
\nolinkurl{verify.py} checks the original adjacency, the five-site
bowties and their leaf pairs, all three double-bond detecting endpoints,
and all five coarse-edge and two coarse-vertex types. Its 25 open-region
certificates contain 3368 paths. It also checks the cover inclusions
and boundary identifications on $32\times32$ and $33\times35$ tori;
Appendix~\ref{SM-sec:bowtieextension} supplies the all-period proof.
Run this checker with arguments \texttt{t2006 t2007} and
\texttt{PYTHONPATH=.:python} from the repository root.

The exact local audit \nolinkurl{audit_bowtie.py} uses named tensor
indices and two prime fields. At $q=2$, the full bowtie Jacobian is
$8192\times160$ and has rank 138, while the gauge map has rank 22.
It checks all six center flattenings, $J\Phi=0$, and finite gauge
invariance. Adding the non-single-leg generators of every incident
external double bond raises the rank to 156 for \texttt{t2006} and
174 for \texttt{t2007}, accounting for nine independent directions
per incidence. These local audits agree with, but do not replace,
the all-$q$ splitting and internal normal-partition proofs.

The diamond blockings and open routes are stored in
\begin{verbatim}
research/open12_extension/
\end{verbatim}
as \nolinkurl{t2009_blocking.json}, \nolinkurl{t2012_blocking.json},
and their \nolinkurl{*_blocked_open.json} files. The independent
\nolinkurl{verify.py}, run with arguments \texttt{t2009 t2012}
and \texttt{PYTHONPATH=.:python}, checks original adjacency, all
four diamonds, the joint-double and isolated-triple incidence
conditions, 42 regions and 9934 paths. It verifies the eight
coarse-edge and four coarse-vertex types, the cover inclusions,
and boundary identifications on $32\times32$ and $33\times35$ tori.
Appendix~\ref{SM-sec:diamondextension} proves the all-period extension.

The named-index \nolinkurl{audit.py} checks the complete local
$4096\times192$ Jacobian at $q=2$: its rank is 174 and the gauge
rank is 18 over both primes 65521 and 65519. It verifies
$J\Phi=0$, finite gauge invariance, and all site-map ranks.
Adding every incident non-single-site coarse generator raises the
rank to 210 for each \texttt{t2009} block and 264 for the
\texttt{t2012} block. It also contracts the three-site witness
\eqref{SM-eq:diamondtriplewitness} and checks detector ranks 45 and
640 at $q=2,3$. The separate \nolinkurl{audit_joint.py} contracts
\eqref{SM-eq:diamonddoublewitness}, checks both tip quotients and the
disjoint spectator images, and obtains joint ranks 18 and 128.
Both audits use two prime fields and compare named contractions
with independent product formulas. The exact support-census
comparison in \nolinkurl{check_polynomials.py} verifies
Eq.~\eqref{SM-eq:diamondrankdrops}. These finite checks supplement
the analytic witnesses for every $q\ge2$.

The wheel and higher-degree diamond records are in
\begin{verbatim}
research/open10_extension/
\end{verbatim}
The independent \nolinkurl{verify_wheels.py} checks the original
adjacency, disjoint closed-neighborhood wheels, rim-dimer incidence,
and simple coarse quotients for \texttt{t2010} and \texttt{t2013}.
It checks $3\times3$, $3\times4$, and $5\times5$ tori, together
with the offset bounds proving no aliasing for every period at least
three. Wheel injectivity has the explicit copy-tensor witness in
Appendix~\ref{SM-sec:wheelextension}. The separate \nolinkurl{verify.py},
with argument \texttt{t2020}, checks its three diamond blocks, all
six triple-edge and three coarse-vertex types, both tip partitions,
and the single-edge endpoint refinement. Its 31 open regions contain
11\,976 paths, with cover inclusions and boundary checks on
$32\times32$ and $33\times35$ tori. All commands below use
\texttt{PYTHONPATH=.:python} from the repository root.

The named-index \nolinkurl{audit_wheel.py} computes the complete
$8192\times320$ local Jacobian at $q=2$. Its rank is 278 and the
gauge rank is 42 over both primes 65521 and 65519; adding the three
incident external double-bond actions raises the rank to 305.
It verifies all three dimer inverses, center flattenings, $J\Phi=0$,
and finite gauge invariance. The three-slice center rigidity witness
has ranks 31 and 161 at $q=2,3$. The separate
\nolinkurl{audit_dimer.py} contracts both channels of
Eq.~\eqref{SM-eq:wheeljointwitness}, checks that the opposite action
vanishes in each channel, and obtains joint ranks 18 and 128.

For each \texttt{t2020} diamond, \nolinkurl{audit_diamond.py}
checks an exact $2048\times384$ row submatrix of the full
$65536\times384$ Jacobian at $q=2$. Its rank is 366 and the gauge
rank is 18; adjoining all four incident non-single-site triple-bond
actions gives rank 546. These nonzero minors prove rank lower bounds,
with matching upper bounds supplied by gauge inclusion and the
number of added columns. Both fields, the differential gauge identity,
and finite gauge invariance are checked. The named contractions in
\nolinkurl{audit_region.py} verify Eq.~\eqref{SM-eq:t2020regionwitness}
and the left/right scalar-stabilizer code. The joint rank is checked
directly as 90 at $q=2$; at $q=3$, rank 1280 follows from the checked
factor ranks, disjoint images, and product identities. Finally,
\nolinkurl{check_polynomials.py} verifies
Eqs.~\eqref{SM-eq:wheelrankdrops} and \eqref{SM-eq:t2020rankdrop}
against the exact support census. These audits supplement the
all-$q$ proofs rather than extrapolating from small dimensions.

The qubit results and higher-dimensional reductions of
Appendix~\ref{SM-sec:open4qubits} are frozen in
\begin{verbatim}
research/open4_twouniform/
\end{verbatim}
The independent \nolinkurl{verify_injective.py} checks the twelve
original-leg paths of the \texttt{t2008} block and the offset bounds
for all periods at least three. The independent \nolinkurl{verify.py},
with arguments \texttt{t2014 t2015}, checks 22 open regions and
4704 paths, their exterior covers, and the boundary identifications
on $32\times32$ and $33\times35$ tori. Neither checker imports a
partition search or a flow solver. \nolinkurl{verify_geometry.py}
checks all four accepted blockings against the original integer graph,
the incidence partitions, and their endpoint refinements.

For each of \texttt{t2008}, \texttt{t2014}, and \texttt{t2019},
the \nolinkurl{*_q2_witness.json} file records every integer tensor
entry and sampled output row; \nolinkurl{*_q2_minor.json} records
an explicit square minor and both modular determinants.
The generator \nolinkurl{audit_block.py} also verifies the full
two-site differential gauge identities, finite gauge invariance,
and sampled Jacobian directional derivatives. The independent
\nolinkurl{verify_minor.py}, with arguments
\texttt{t2008 t2014 t2019}, reconstructs each minor by replacing
original tensors and applying named external-leg operators before
contraction. It reproduces the determinants without the generator's
flat-state bit-index implementation. Exact modular linear algebra
uses \texttt{python-flint}; the remaining dependencies and commands
are listed in the directory's \nolinkurl{README.md}.
The polynomial checker verifies the two qubit global rank counts,
the WGS comparisons, and the counts of columns in the unresolved
higher-dimensional tests. These minors prove the stated qubit claims;
they are not used as evidence for an unproved all-$q$ rank identity.

The certificates of Appendix~\ref{SM-sec:open5qubits} are frozen in
\begin{verbatim}
research/open5_remaining/
\end{verbatim}
For each lattice blocking, \nolinkurl{*_blocking.json} records the block,
its translation lattice, and all regrouped original edges, and
\nolinkurl{*_templates.json} records the cover disc, the three edge
templates, and the vertex template with every route. The independent
\nolinkurl{verify_templates.py} imports no search or flow code. It
rebuilds the four adjacencies, checks that the block tiles the sites,
checks every route and the separation of each edge template, verifies
that every disc $O$ contains the required cover neighborhoods, and
reports the bound~\eqref{SM-eq:open5systole}. For the four lattice blocks
and the \texttt{t2015} path, \nolinkurl{audit_big.py} generates the
qubit witnesses and minors, with ranks computed by an exact blocked
row echelon over $\mathrm{GF}(p)$ (\nolinkurl{rank_modp.c}). The
independent \nolinkurl{verify_minor_big.py} rebuilds every column of
each frozen minor from the witness tensors, as described in
Appendix~\ref{SM-sec:open5qubits}, and recomputes its rank over both fields. Commands are listed in the
directory's \nolinkurl{README.md}.

The king proof and its verification are in
\nolinkurl{research/king_rows/}. The NumPy-only script
\nolinkurl{verify.py} checks all band widths and translations on
$(4,18),(5,19),(7,23),(18,18)$ tori, including both coarse partition
hypotheses and periodic seams. It checks $2240$ explicit contraction
identities per prime, including actual king external-group dimensions
$Q=8,27$, and polynomial-space intersections. At $q=2,Q=3,L=4$, it
checks all $81$ displacement sectors of the analytic row lemma over
both $65521$ and $65519$, obtaining total row-Jacobian rank $273$
and joint rank $13329$. The all-$q$ statement uses the analytic proof
in Appendix~\ref{SM-sec:kingrows}, not extrapolation of these finite ranks.

\section{Projected ground-state response at the graph-state point}
\label{SM-sec:groundresponse}
At the qubit graph-state point, the lift of Eq.~\eqref{eq:lift} also computes a static response.
Write $\ket{e_S}=Z_S\ket{\psi_G}$ and take the stabilizer parent
$H_0=-\sum_vK_v$, where $K_v=X_v\prod_{w\in N(v)}Z_w$, including
all boundary stabilizers. Its unique ground state is $\psi_G$, with
$E_0=-N$, and $(H_0-E_0)e_S=2|S|e_S$ \cite{hein2006graph}.

\begin{corollary}[Projected ground-state response]
\label{SM-cor:groundresponse}
For $H(\lambda)=H_0+\lambda V$ with Hermitian $V$, put
$v_S=\langle e_S|V|\psi_G\rangle$. Let $\psi_0(\lambda)$ be its normalized
nondegenerate ground state for sufficiently small $\lambda$, with
$\psi_0(0)=\psi_G$, and let $A_0$ be the graph-state representation.
In the parallel phase convention,
\begin{equation}
\begin{gathered}
\ket{\psi'_0(0)}=-\sum_{S\ne\varnothing}\frac{v_S}{2|S|}\ket{e_S},\\
c_S^{(1)}=-\frac{v_S}{2|S|},\qquad S\in\K(G)\setminus\{\varnothing\}.
\label{SM-eq:groundresponse}
\end{gathered}
\end{equation}
The lift $\Xi c^{(1)}$ of Eq.~\eqref{eq:lift} realizes the
orthogonal projection of this response and minimizes the energy
through second order among smooth tensor paths through $A_0$. The
normalized contraction $\psi_{\rm lift}(\lambda)$ of
$A_0+\lambda \Xi c^{(1)}$ obeys
\begin{equation}
1-|\langle\psi_0(\lambda)|\psi_{\rm lift}(\lambda)\rangle|^2
=\lambda^2\sum_{S\notin\K(G)}\frac{|v_S|^2}{4|S|^2}+O(\lambda^3).
\label{SM-eq:groundresponse_error}
\end{equation}
If the entire first-order response is accessible, this error is
$O(\lambda^4)$.
\end{corollary}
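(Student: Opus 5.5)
The plan is to reduce everything to the stabilizer spectrum of $H_0$ and the orthonormal Walsh basis. First, $\{e_S=Z_S\psi_G\}$ over all $S\subseteq V$ is an orthonormal eigenbasis of $H_0$. This holds because $K_v$ anticommutes with $Z_S$ exactly when $v\in S$, so $K_ve_S=\pm e_S$ and $(H_0-E_0)e_S=2|S|e_S$. Since $\psi_0(0)=\psi_G$ is nondegenerate, ordinary Rayleigh--Schr\"odinger perturbation theory in the parallel gauge ($\langle\psi_G|\psi_0'(0)\rangle=0$) gives $\psi_0'(0)=-(H_0-E_0)^{+}V\psi_G$. Expanding in the $e_S$ gives Eq.~\eqref{SM-eq:groundresponse}. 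At $q=2$ the vectors $e_S$ coincide with the tangent basis vectors $\chi_S\psi$ of Theorem~\ref{thm:star}, since $Z_S$ acts diagonally as $(-1)^{\sum_{v\in S}s_v}$.

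Second, I use the lift. By Eq.~\eqref{eq:lift}, $J_\perp\Xi c^{(1)}=\sum_{S\in\K\setminus\{\varnothing\}}c^{(1)}_Se_S=P_{\T_{A_0}}\psi_0'(0)$. This is the orthogonal projection of the response, because Theorem~\ref{thm:star} identifies $\T_{A_0}$ as the span of exactly those $e_S$. Let $\phi=\psi_0'(0)$, let $p=P_{\T}\phi$ be its projection, and let $r=\phi-p$ be the remainder, which is supported on $S\notin\K$. The normalized contraction of the straight line is then $\psi_{\rm lift}(\lambda)=\psi_G+\lambda p+O(\lambda^2)$, up to a phase; the ray component of $J_A\Xi c$ only changes the norm and phase. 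The exact ground state is $\psi_0=\psi_G+\lambda\phi+O(\lambda^2)$. For two normalized curves agreeing at $\lambda=0$, the infidelity equals $\lambda^2\|(I-|\psi_G\rangle\langle\psi_G|)(\phi-p)\|^2+O(\lambda^3)$. The second-order curvature terms cancel in this difference, which I would check by writing both curves as $\psi_G+\lambda a+\lambda^2 b$ and using $\langle\psi_G|a\rangle$ imaginary. This yields Eq.~\eqref{SM-eq:groundresponse_error} with $\|r\|^2=\sum_{S\notin\K}|v_S|^2/4|S|^2$.

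Third, I treat energy minimality. Along any smooth tensor path $A(\lambda)$ through $A_0$ with normalized state $\psi(\lambda)=\psi_G+\lambda a+O(\lambda^2)$, $a\in\T_{A_0}$, the energy expands as $E(\lambda)=E_0+\lambda\langle V\rangle+\lambda^2\bigl[\langle a|(H_0-E_0)|a\rangle+2\operatorname{Re}\langle a|V|\psi_G\rangle+(\text{terms from }a'')\bigr]$. The terms from second derivatives drop out because $(H_0-E_0)\psi_G=0$ and normalization pins their $\psi_G$ component. The bracket is a positive definite quadratic in $a$ on the excited subspace. Its minimizer over $\T_{A_0}$ is found sector by sector, since $H_0-E_0$ is diagonal in the $e_S$ basis and $\T$ is spanned by a subset of that basis. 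The result is $a_S=-v_S/2|S|$, i.e. $a=p$.

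Finally, when every $v_S$ with $S\notin\K$ vanishes, $r=0$. The leading error must then come from second-order discrepancies, which enter the infidelity only quadratically. I expect this last claim to be the main obstacle. It requires showing that the $O(\lambda^2)$ mismatch between the straight tensor line and $\psi_0''$ contributes at order $\lambda^4$ and not $\lambda^3$. The infidelity is $\|(I-P_{\psi_0})\psi_{\rm lift}\|^2$, so an $O(\lambda^2)$ orthogonal deviation squares to $O(\lambda^4)$, provided no cross term of order $\lambda^3$ survives. I would verify this by expanding the infidelity as a squared norm of the transverse component rather than through the overlap directly.
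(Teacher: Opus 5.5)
Your proposal is correct and follows essentially the same route as the paper: perturbation theory in the stabilizer eigenbasis $e_S$, the lift realizing the orthogonal projection onto $\T_{A_0}$, and the quadratic energy coefficient $\sum_S[2|S||c_S|^2+2\operatorname{Re}(\bar c_S v_S)]$ minimized sector by sector because $H_0-E_0$ is diagonal in the $e_S$. The omitted component then gives the infidelity, as in the paper. The step you flag as the main obstacle is in fact immediate from your own transverse-component formulation: when $r=0$, $(I-P_{\psi_0})\psi_{\rm lift}=(I-P_{\psi_0})(\psi_{\rm lift}-\psi_0)=O(\lambda^2)$, so its squared norm is $O(\lambda^4)$ and there are no cross terms to control.
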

\begin{proof}
Nondegenerate perturbation theory in the stabilizer eigenbasis gives
the first identity. For a normalized path with horizontal derivative
$\sum_S c_Se_S$, the quadratic energy coefficient is
$\sum_S[2|S||c_S|^2+2\operatorname{Re}(c_S^*v_S)]$, with sums over
allowed nonempty supports. Its minimizer is $c^{(1)}$.
The lift realizes this derivative; the squared norm of its omitted
orthogonal component gives Eq.~\eqref{SM-eq:groundresponse_error}.
When that component vanishes, the states differ only at second order
up to phase, giving fourth-order infidelity.
\end{proof}
The equivalence of energy minimization and orthogonal projection uses
that $H_0-E_0$ preserves $\T_{A_0}$, which an arbitrary parent
Hamiltonian need not do. The inverse excitation energies distinguish
this static response error from the unweighted dynamical residual of
Eq.~\eqref{eq:residual}: both are sums over the same
unsupported Fourier coefficients, with different weights.

\section{Nearest-neighbor \texorpdfstring{$XX$ and $YY$}{XX and YY} supports on the eleven tilings}
\label{SM-sec:yysupport}
Equation~\eqref{eq:xxsupport} writes
$X_uX_v\psi=-\chi_{B_{uv}}\psi$ at a graph state, with
$B_{uv}=N(u)\mathbin{\triangle}N(v)$ in open neighborhoods.
\begin{lemma}
\label{SM-lem:xxcenter}
On each of the eleven bulk Archimedean tilings, and on locally faithful
tori, no vertex is a center of $B_{uv}$ for any edge $uv$; hence
$B_{uv}\notin\K$ and every nearest-neighbor $XX$ term is orthogonal to
the physical tangent space at the graph state.
\end{lemma}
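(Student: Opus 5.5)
The argument reduces to a containment: a vertex $w$ is a center of $B_{uv}$ exactly when $B_{uv}\subseteq N[w]$. I will show that $|B_{uv}|$ always exceeds what any closed neighborhood can share with it. Once $B_{uv}\notin\K$ is known, Theorem~\ref{thm:star} and Eq.~\eqref{eq:xxsupport} finish the proof. Both endpoints lie in $B_{uv}$, since $v\in N(u)\setminus N(v)$ and symmetrically. Beyond these, $B_{uv}$ contains every neighbor of $u$ or $v$ that is not a common neighbor. With coordination $z$ and $c_{uv}=|N(u)\cap N(v)|\le2$ (Lemma~\ref{lem:shortcycles}(i)), this gives
\[
|B_{uv}|=2(z-1-c_{uv})+2=2(z-c_{uv}).
\]
This is at least $z+1$ on all eleven tilings, because $c_{uv}\le2$ and $z\ge3$, and $c_{uv}\le1$ whenever $z=3$ (no triangles share an edge there). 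The case $z=4$ with $c_{uv}=2$ cannot occur: two triangles on one edge at a degree-four vertex force the configuration $3^4$, which is not a tiling.

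The key step is that any candidate center must be an endpoint. If $w\notin\{u,v\}$ is a center, then $u,v\in N[w]$. When $w$ is adjacent to both, $w$ is a common neighbor, so $w\notin B_{uv}$, and $N[w]$ must contain all of $B_{uv}$ within $N(w)$. Then $N(w)$ contains $u$, $v$, and the $2(z-1-c_{uv})$ exclusive neighbors. By Lemma~\ref{lem:shortcycles}(i), however, $N(w)\cap N(u)$ has at most two elements, and one of them is $v$. So $N(w)$ contains at most one exclusive neighbor of $u$ other than $v$, and likewise at most one for $v$. If $w\in\{u,v\}$ is excluded, $N[w]$ meets $B_{uv}$ in at most four points. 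But $B_{uv}$ has at least $z+1\ge4$ elements, with equality only when $z=3$ and $c_{uv}=1$, a case with no common neighbor to begin with. Hence no non-endpoint center exists.

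Endpoints fail as well. Taking $w=u$, the set $N[u]$ misses the exclusive neighbors of $v$, and there are $z-1-c_{uv}\ge1$ of them. The exception $c_{uv}=z-1$ would put $u,v$ inside a $K_4$-type configuration and is excluded by Lemma~\ref{lem:shortcycles}. This is the general shape of the argument; to be safe I would confirm it by inspecting each edge orbit of the eleven tilings from their vertex configurations, tabulating $z$ and $c_{uv}$ for each.

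Local faithfulness means the quotient preserves distance-two balls, so the support $B_{uv}$ and the center sets $C_S$ are the same as on the infinite graph, and the bulk conclusion transfers to the torus. Then $\chi_{B_{uv}}\psi$ is a character outside $\K$ and is orthogonal to every $e_{S}$ with $S\in\K$ (the Walsh characters are orthonormal at the graph state). By Eq.~\eqref{eq:tangent}, it is orthogonal to $\T_A$. The main obstacle is the counting step bounding $N(w)\cap B_{uv}$ when $w$ is a common neighbor; that is where I would double-check the tilings containing consecutive triangles.
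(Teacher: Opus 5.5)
Your outline is sound and largely parallels the paper's: endpoints fail because each misses the other endpoint's exclusive neighbors, and everything rests on $|B_{uv}|=2(z-c_{uv})\ge z+1$. But the non-endpoint step fails as written. Bounding $|N(w)\cap B_{uv}|\le 4$ through Lemma~\ref{lem:shortcycles}(i) gives a contradiction only when $|B_{uv}|\ge5$. You close the remaining case $|B_{uv}|=4$, i.e.\ $(z,c_{uv})=(3,1)$, by claiming it has no common neighbor. That is false: $c_{uv}=1$ means exactly one common neighbor. This case occurs on every triangle edge of the star lattice $3.12^2$, and there your argument never excludes the third triangle vertex $w$.

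The missing step is the plain degree count the paper uses. A common neighbor has $w\notin B_{uv}$, so $B_{uv}\subseteq N[w]$ would force $B_{uv}\subseteq N(w)$. This is impossible since $|N(w)|=z<z+1\le|B_{uv}|$. The count handles every vertex outside $B_{uv}$ uniformly, so Lemma~\ref{lem:shortcycles}(i) is not needed in this step at all. Combine it with your correct observation that a non-endpoint center must be adjacent to both endpoints, which rules out the exclusive neighbors as in the paper, and the argument is complete.

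Two justifications also need correcting, although their conclusions hold. Excluding $(z,c_{uv})=(4,2)$ has nothing to do with ``$3^4$'', which has angle sum $240^\circ$ and is not a vertex configuration. Two triangles on an edge at a degree-four vertex give $3^2.6^2$ or $3^2.4.12$. These are genuine vertex types that occur in the 2-uniform tilings of Table~\ref{SM-tab:twouniform}. The correct reason is that none of the degree-four Archimedean configurations $4^4$, $3.6.3.6$, $3.4.6.4$ has two consecutive triangles. Separately, the ``$K_4$-type'' appeal for $c_{uv}=z-1$ is unnecessary: $z-1-c_{uv}\ge1$ follows directly from $2(z-c_{uv})\ge z+1$ with $z\ge3$. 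The transfer to locally faithful tori and the final orthogonality via Eq.~\eqref{eq:tangent} are fine.
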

\begin{proof}
Both endpoints lie in $B_{uv}$, each endpoint has a neighbor exclusive
to it, and $|B_{uv}|\ge z+1$ for coordination number $z$. An endpoint
is therefore not a center, because it misses the other endpoint's
exclusive neighbors, and every other vertex of $B_{uv}$ misses one
endpoint. A center outside $B_{uv}$ would need $z+1$ neighbors and has
only $z$.
\end{proof}

Equation~\eqref{eq:yysupport} writes $Y_uY_v\psi=\chi_{B'_{uv}}\psi$
at the same point, with $B'_{uv}=B_{uv}\setminus\{u,v\}$ the set of
neighbors exclusive to one endpoint of the edge $uv$.
\begin{lemma}
\label{SM-lem:yycenter}
On each of the eleven bulk Archimedean tilings, and on locally faithful
tori, no vertex is a center of $B'_{uv}$ for any edge $uv$; hence
$B'_{uv}\notin\K$ and every nearest-neighbor $YY$ term is orthogonal
to the physical tangent space at the graph state.
\end{lemma}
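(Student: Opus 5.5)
Write $B'=B'_{uv}=(N(u)\setminus N[v])\cup(N(v)\setminus N[u])$. The plan is to show directly that no vertex $c$ satisfies $B'\subseteq N[c]$, by splitting on where $c$ sits relative to $u,v$ and $B'$. Once this holds, $B'\notin\K(G)$, and Theorem~\ref{thm:star} together with Eq.~\eqref{eq:yysupport} places $Y_uY_v\psi=\chi_{B'}\psi$ orthogonal to $\T_A$. On locally faithful tori every set involved lies inside a distance-two ball, so the bulk argument carries over unchanged.

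Let $a$ be the set of neighbors exclusive to $u$ and $b$ the set exclusive to $v$. Each is nonempty on all eleven tilings, since by Lemma~\ref{lem:shortcycles}(i) the endpoints share at most two neighbors and $z\ge3$. Then $|a|=|b|=z-1-t$, where $t\in\{0,1,2\}$ is the number of common neighbors.

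\emph{Case $c=u$.} Here $N[u]$ contains $a$ but misses $b$, because a vertex of $b$ is not adjacent to $u$ and differs from $u$. So $u$ is not a center, and $v$ fails symmetrically.

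\emph{Case $c\in a$.} For $c$ to be a center it must be adjacent to every vertex of $b$. Take $w\in b$: then $u\!-\!c\!-\!w\!-\!v\!-\!u$ is a four-cycle, with $c\not\sim v$ and $w\not\sim u$, so it is chordless whenever $c\ne w$, which always holds. By Lemma~\ref{lem:shortcycles}(iii) it must therefore be a square face. The task is then to rule this out for every $w\in b$ simultaneously, together with adjacency of $c$ to the remaining members of $a$. If $|b|\ge2$, the vertex $c$ would need to lie in two square faces through the edge $uv$, which is impossible because an edge borders only two faces and only one of them lies on the $c$ side. If $|b|=1$ with $|a|\ge2$, then $c$ must also be adjacent to the other vertex $c'\in a$, giving a triangle $u c c'$ next to the square. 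I expect this to clash with the vertex configuration in the few tilings where $|a|=|b|=1$ can occur, for instance honeycomb ($z=3$, $t=0$, $|a|=2$) and degree-four tilings with $t=1$; in those I would go through the edge orbits and check.

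\emph{Case $c\notin B'\cup\{u,v\}$.} Now $c$ needs $|B'|=2(z-1-t)$ neighbors. When $t\le1$, $B'$ contains two vertices of $a$, both neighbors of $u$, and both must be adjacent to $c$. By Lemma~\ref{lem:shortcycles}(i) the pair $u,c$ then has common neighbors that include $a$, so $|a|\le2$, and likewise $|b|\le2$. After that, $c$ lies at distance two from both $u$ and $v$ through disjoint sets of vertices. I would rule this out by area and angle arguments, or failing that by listing the orbits.

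The main obstacle is that these geometric exclusions do not close uniformly. I would therefore fall back on the paper's stated approach: an inspection of every edge orbit of the eleven tilings. For each edge type, compute $a$ and $b$ from the vertex configuration and check by hand, or via the tabulated cells of Appendix~\ref{SM-sec:coordinates}, that no candidate within distance two covers $B'$. The lemma above trims this to a short finite check.
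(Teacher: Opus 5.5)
Your skeleton is the paper's. You locate a would-be center relative to $u$, $v$ and $B'_{uv}$, exclude the nearby positions by face geometry, and leave only vertices at distance two from both endpoints to an edge-orbit inspection. An unreduced inspection of every candidate within distance two on every edge orbit would prove the lemma. The reduction you offer to make that inspection ``short'' has two holes, however, and they sit exactly where the paper argues analytically.

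First, in the case $c\in a$ you yourself record $|a|=|b|=z-1-t$. So the subcase ``$|b|=1$ with $|a|\ge2$'' is empty. Your examples (honeycomb, degree-four tilings with $t=1$) have $|a|=|b|=2$, and your two-squares argument has already disposed of them. The subcase that actually survives is $|a|=|b|=1$. There $B'_{uv}=\{c,w\}$, and $c$ is a center exactly when $c\sim w$. You never treat it. On the eleven tilings it occurs only on the triangle edges of the star lattice, which your list omits. The missing step is a face count:
\begin{itemize}
\item $|b|=1$ means $t=z-2$, so by Lemma~\ref{lem:shortcycles}(ii) the edge $uv$ already borders $z-2\ge1$ triangular faces;
\item the chordless cycle $ucwv$ adds a square face;
\item an edge borders only two faces, so $z=3$, and the configuration at $u$ would be $3.4.x$, which no degree-three Archimedean tiling has.
\end{itemize}
The paper reaches $|b|\le1$ more directly: $c$ and $v$ would share $u$ and all of $b$, and Lemma~\ref{lem:shortcycles}(i) allows at most two common neighbors.

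Second, your last case silently contains the common neighbors of $u$ and $v$. Your claim that $c$ then lies at distance two from both endpoints is false for them, so your reduction leaves them unexamined. They need their own exclusion, as in the paper. A common neighbor $c$ covering $B'_{uv}$ shares $v$ and all of $a$ with $u$. Hence $a=\{a_1\}$ and $t=z-2$, so $z\le4$. Then $u$ lies in the $z-2$ triangles $uvc'$, one for each common neighbor $c'$, together with $uca_1$. That is two triangles at a degree-three vertex or three at a degree-four vertex, and neither occurs on these tilings. Your own degree count also works here. Such a $c$ has degree at least $2+2(z-1-t)$, which forces $t=2$ and $z=4$: two triangles on $uv$ at a degree-four vertex, which none of $4^4$, $3.6.3.6$, $3.4.6.4$ allows. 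With both repairs only genuinely distance-two vertices remain for inspection, as in the paper.

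A smaller point: nonemptiness of $a$ and $b$ does not follow from $t\le2$ and $z\ge3$, since $z=3$, $t=2$ would give $|a|=0$. You also need $t\le1$ at degree three, for instance because $3.3.x$ is not a degree-three configuration.
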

\begin{proof}
A center of $B'_{uv}$ cannot be an endpoint, since each endpoint has an
exclusive neighbor. It cannot be an exclusive neighbor $w$ of $u$,
since $w$ would then share with $v$ the common neighbors $u$ and every
exclusive neighbor of $v$, so by Lemma~\ref{lem:shortcycles}(i) $v$ would have a single exclusive
neighbor $v_1$, so that $uv$ would border $z-2\ge1$ triangles and the
square face $uwv_1v$: an edge borders two faces, and $3.4.x$ is not a
vertex configuration of degree three. It cannot be a common neighbor,
which would force two triangles at a degree-three vertex or three at a
degree-four vertex. The remaining possibility, a vertex at distance two
from both endpoints adjacent to every exclusive neighbor, is excluded
on each of the eleven tilings by inspection of its cell, checked for
every edge orbit in
\nolinkurl{tests/unit/test_nongauge_integrability.py}.
\end{proof}

\section{Twelve-qubit trajectory: construction and numerical checks}
\label{SM-sec:trajectorydetails}
Figure~\ref{fig:dynamics} shows the blockade and its repair as actual
time traces. The graph is the periodic kagome lattice with $2\times2$
honeycomb source cells, twelve qubits and $24$ unit-coupling $XX$
bonds, whose $D=2$ graph-state representation has projective rank
$248$ with every $B_{uv}$ unsupported, so that
$\Var(H_{XX})=48$ and $\epsilon_{\rm TDVP}=1$. Integrating the exact
projected equation for all independent tensor entries
\cite{sorella2001generalized,tvmc1} with fourth-order Runge--Kutta
steps~\cite{hairer1993ode} and no regularization leaves the state
stationary to numerical precision, while the exact survival
probability at $t=0.3$ is $0.0315$ and the mean stabilizer is
$-0.653$. The family
of Eq.~\eqref{eq:gatefamily},
the all-edge version of the single-gate augmentation used in Appendix~\ref{SM-sec:mbqcprotocol}, has derivatives $-iX_uX_v\psi_G$ at
$\boldsymbol\alpha=0$ on every edge; each gate has operator Schmidt
rank two and can be absorbed into the bond tensors at bond dimension
at most four~\cite{verstraete2004valence,cirac2021matrix}. The
projected equation in these coordinates recovers the exact trajectory
to $2\times10^{-14}$ against an independent sparse matrix-exponential
integration~\cite{almohy2011exponential,virtanen2020scipy}. Because
the $XX$ terms commute, the exact trajectory lies in this family, so
the identical projected equation is exactly stationary or exactly
correct depending only on whether the tangent image contains
$-iX_uX_v\psi$.

The tensor displacement in the unaugmented evolution stays below
$4\times10^{-15}$. The plotted step is $0.025$; steps $0.05$ and
$0.1$ give the same physical results. In the augmented family, the
phase-aligned vector error stays below $2\times10^{-14}$.
All expectation values use exact Hilbert-space statistics.

\section{Sampled metric and force in gauge-only projection}
\label{SM-sec:sampledforce}
In the setting of Sec.~\ref{sec:projectiongap}, finite
sampling alone does not produce an incompatible force:
forming metric and force from the same centered derivative samples,
$\widehat{\mathcal M}=\mathcal O^\dagger\mathcal O/N_s$ and
$\widehat{\boldsymbol F}=\mathcal O^\dagger\mathcal h/N_s$
[Eq.~(8) of Ref.~\cite{wu2025peps}], gives
$\widehat{\boldsymbol F}\perp\ker\widehat{\mathcal M}$ in exact arithmetic, as also checked
with $64$ samples. Incompatibility requires separate approximations or
floating-point error. 
\section{A state-level \texorpdfstring{$XX$}{XX} obstruction}
\label{SM-sec:schmidtobstruction}
The tangent-image theorem concerns a specified tensor representation.
This section proves Corollary~\ref{cor:honeycombstate}
directly from the state. Throughout, $\ket G$ is the standard qubit
graph state with edge phase $\pi$, and the comparison PEPS use the
same fixed graph and assignment of physical sites. The exact Schmidt
spectrum below is not asserted for arbitrary WGS phases.

\begin{lemma}[Matching-cut obstruction]
\label{SM-lem:matchingcut}
Let $R|\bar R$ be a bipartition of a finite simple graph whose $c$
crossing edges form a matching. Suppose $uv$ is a crossing edge,
with $u\in R$, and each endpoint has a same-side neighbor incident to
no crossing edge. Put $r=2^c$. The Schmidt coefficients of
$\ket{\psi_\varepsilon}=e^{-i\varepsilon X_uX_v}\ket G$
are $|\cos\varepsilon|/\sqrt r$ and $|\sin\varepsilon|/\sqrt r$,
each with multiplicity $r$, omitting a group when its prefactor
vanishes. Its Schmidt rank is $2^{c+1}$ whenever
$\sin\varepsilon\cos\varepsilon\ne0$, exceeding the bound $2^c$
for any PEPS with all bond dimensions at most two on $G$.
For the normalized set $\mathcal P_2(G)$ of these PEPS states,
\begin{equation}
\max_{\phi\in\mathcal P_2(G)}
|\langle\phi|\psi_\varepsilon\rangle|^2
=\max\{\cos^2\varepsilon,\sin^2\varepsilon\}.
\label{SM-eq:matchingcutfidelity}
\end{equation}
\end{lemma}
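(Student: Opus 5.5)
The plan is to compute the Schmidt decomposition of $\ket{\psi_\varepsilon}$ across $R|\bar R$ exactly. Then I compare it with the Schmidt-rank bound that any bond-two PEPS inherits from the $c$ cut edges, and finally solve the best-approximation problem by Eckart--Young.

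\textbf{Decomposition of $\ket G$.} Write $\ket G=\prod_{e}CZ_e\ket{+}^{\otimes N}$. Split the gates into $U_R$ (edges inside $R$), $U_{\bar R}$ (edges inside $\bar R$), and the $c$ crossing gates. Since $U_R\otimes U_{\bar R}$ is a local unitary across the cut, it does not change Schmidt coefficients. The crossing gates form a matching, so the remaining state is a product of $c$ two-qubit graph states $CZ\ket{++}$, one per crossing edge, tensored with $\ket+$ on every site untouched by a crossing edge. Each $CZ\ket{++}$ is maximally entangled, so $\ket G$ has $r=2^c$ Schmidt coefficients, all equal to $1/\sqrt r$.

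\textbf{Acting with $X_uX_v$.} Next I conjugate $X_uX_v$ through $U=U_RU_{\bar R}\prod CZ_{\rm cross}$, using $X_aCZ_{ab}=CZ_{ab}X_aZ_b$ repeatedly:
\[
U^\dagger X_uX_vU=\pm\, X_uX_v\prod_{w\in N(u)}Z_w\prod_{w\in N(v)}Z_w .
\]
The $Z_u$ and $Z_v$ factors produced by the edge $uv$ are exactly what this product contains. Let $u'\in R$ be the same-side neighbor of $u$ that touches no crossing edge. It sits in a free $\ket+$, and it receives a factor $Z_{u'}$, because $u'$ is adjacent to $u$ and, if $u'$ is also adjacent to $v$, that would be a second crossing edge at $v$, contradicting the matching. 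So $Z_{u'}$ flips it to $\ket-$; the same holds for $v'$ on the other side.

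Every other factor acts locally on $R$ or on $\bar R$, possibly also on the paired qubits. Hence the $X_uX_v$ component can be written in Schmidt form with $R$-side states orthogonal to those of $\ket G$, since they carry $\ket-$ rather than $\ket+$ at $u'$, and likewise on the $\bar R$ side. A local unitary preserves the flat spectrum, so
\[
\ket{\psi_\varepsilon}=\cos\varepsilon\ket G-i\sin\varepsilon\,X_uX_v\ket G
\]
is a sum of two components with mutually orthogonal left supports and mutually orthogonal right supports. Its Schmidt coefficients are therefore the union $|\cos\varepsilon|/\sqrt r$ and $|\sin\varepsilon|/\sqrt r$, each with multiplicity $r$. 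The main care point is this orthogonality check: I must verify that the conjugated operator really has a nontrivial $Z$ on a free $\ket+$ on each side, which is where the hypotheses (matching, and same-side neighbors free of crossing edges) enter.

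\textbf{Rank bound and fidelity.} For any PEPS on $G$ with all bond dimensions at most two, contracting each side yields a matrix factorization through the $c$ cut bonds, so the Schmidt rank is at most $2^c$. This proves the exclusion whenever $\sin\varepsilon\cos\varepsilon\ne0$. For the fidelity, the overlap of $\ket{\psi_\varepsilon}$ with any state of Schmidt rank at most $r$ is at most the sum of the $r$ largest squared Schmidt coefficients (Eckart--Young/Ky Fan), which equals $\max\{\cos^2\varepsilon,\sin^2\varepsilon\}$. The bound is attained within $\mathcal P_2(G)$ by $\ket G$ or by $X_uX_v\ket G$: the latter is a local unitary applied to the graph-state PEPS and has the same bond dimensions.
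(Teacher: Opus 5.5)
Your proposal is correct and follows essentially the same route as the paper: strip the internal controlled-$Z$ gates with a bipartition-local unitary to get $2^c$ equal Schmidt coefficients, use the free same-side neighbors to make the two components' Schmidt supports orthogonal on both sides, then combine the $2^c$ cut-rank bound with Schmidt truncation, attained by $\ket{G}$ or $X_uX_v\ket{G}$. The only difference is bookkeeping. The paper conjugates $X_uX_v$ through $U_RU_{\bar R}$ alone and gets the cut-local Pauli $P_R\otimes P_{\bar R}$ directly, whereas your conjugation through the full circuit lands in the product frame, so you should state explicitly that the resulting $Z$ string commutes with the crossing gates (equivalently $X_uX_v\ket{G}=-Z_{N(u)\triangle N(v)}\ket{G}$ with a cut-local $Z$ string) before reading off the $\ket{-}$ at $u'$ in the frame where $u'$ is a free $\ket{+}$.
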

\begin{proof}
Let $U_R$ and $U_{\bar R}$ be the products of controlled-$Z$ gates on
edges internal to their respective sides. Their product $U$ is a
bipartition-local unitary and satisfies $U^2=I$. Applying it to
$\ket G$ removes all internal graph-state gates, leaving $c$ disjoint
two-qubit graph states on the crossing edges, and $\ket+$ on every
vertex incident to no crossing edge. Each pair is maximally
entangled, so the $r$ Schmidt coefficients are all $1/\sqrt r$.
Denote the two Schmidt support spaces by $\mathcal S_R$ and
$\mathcal S_{\bar R}$.

Conjugation of the generator gives
\begin{equation}
\begin{aligned}
U X_uX_v U&=P_R\otimes P_{\bar R},\\
P_R&=X_u\prod_{w\in N(u)\cap R}Z_w,\\
P_{\bar R}&=X_v\prod_{w\in N(v)\cap\bar R}Z_w.
\end{aligned}
\label{SM-eq:matchingcutpauli}
\end{equation}
Choose free same-side neighbors $a\in R$, $b\in\bar R$ as in the
hypothesis. The factor $Z_a$ changes the fixed $\ket+_a$ to
$\ket-_a$, so $P_R\mathcal S_R\perp\mathcal S_R$.
Similarly $P_{\bar R}\mathcal S_{\bar R}\perp\mathcal S_{\bar R}$.
Both Pauli factors are unitary. The coefficient matrix of the
evolved state is therefore the direct sum of two maximally entangled
blocks weighted by $\cos\varepsilon$ and $-i\sin\varepsilon$.
This proves the spectrum without assuming a smooth tensor lift.

Cutting $c$ virtual indices expresses any bond-two PEPS as a sum of
at most $2^c$ product vectors across this bipartition, proving the
rank bound for every representation. The maximum squared overlap
of a normalized rank-at-most-$r$ state with a target is the sum of
the target's largest $r$ squared Schmidt coefficients, by Schmidt
truncation. Here that sum is the right-hand side of
Eq.~\eqref{SM-eq:matchingcutfidelity}. It is attained by $\ket G$ or
$X_uX_v\ket G$, both belonging to $\mathcal P_2(G)$.
\end{proof}

\subsection{A local cut for every honeycomb edge}
Use the unwrapped coordinates of Appendix~\ref{SM-sec:coordinates}.
For the edge $A_{0,0}B_{0,0}$ choose the ten-vertex region $R_0$
with
\begin{equation}
\begin{aligned}
R_0\cap A={}&\{A_{-2,1},A_{-1,0},A_{-1,1},\\
             &\hspace{8mm}A_{0,-1},A_{0,0}\},\\
R_0\cap B={}&\{B_{-2,0},B_{-2,1},B_{-1,-1},\\
             &\hspace{8mm}B_{-1,0},B_{0,-1}\}.
\end{aligned}
\label{SM-eq:honeycombcutregion}
\end{equation}
These vertices form two adjacent hexagons sharing
$A_{-1,0}B_{-1,0}$. Its endpoints have internal degree three; the
other eight vertices have internal degree two
(Fig.~\ref{SM-fig:honeycombcut}). The complete cut is
\begin{center}
\begin{tabular}{cc}
\toprule
Inside endpoint & Outside endpoint\\
\midrule
$B_{-2,0}$ & $A_{-2,0}$\\
$A_{-2,1}$ & $B_{-3,1}$\\
$B_{-2,1}$ & $A_{-2,2}$\\
$B_{-1,-1}$ & $A_{-1,-1}$\\
$A_{-1,1}$ & $B_{-1,1}$\\
$A_{0,-1}$ & $B_{0,-2}$\\
$B_{0,-1}$ & $A_{1,-1}$\\
$A_{0,0}$ & $B_{0,0}$\\
\bottomrule
\end{tabular}
\end{center}
All sixteen endpoints are distinct, so the cut is a matching.
The target's inside free neighbor is $B_{-1,0}$ and one outside
free neighbor is $A_{1,0}$.

\begin{figure}[!tbp]
\centering
\includegraphics[width=\columnwidth]{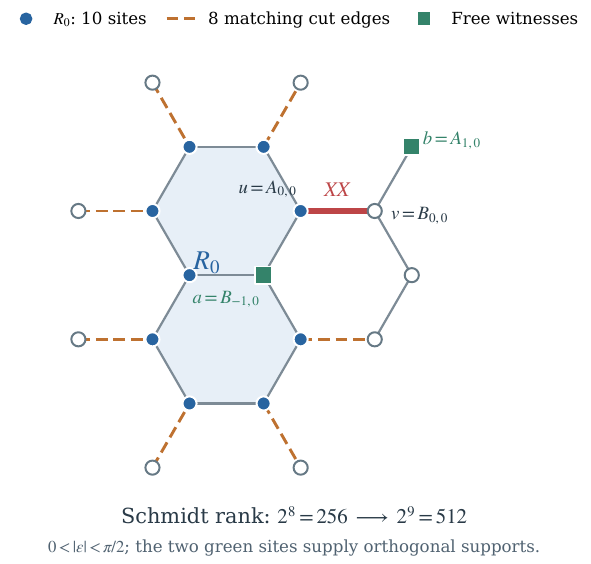}
\caption{A matching cut for the target honeycomb edge $uv$.
The blue region $R_0$ contains two adjacent hexagons. Its cut consists
of the seven orange dashed edges and the red target edge, with
sixteen distinct endpoints. The green sites $a=B_{-1,0}$ and
$b=A_{1,0}$ are same-side neighbors of $u$ and $v$, respectively,
and neither is incident on a cut edge. Their orthogonal flips create
the second Schmidt block. Only selected edges wholly outside $R_0$
are shown; every edge incident on $R_0$ is included.}
\label{SM-fig:honeycombcut}
\end{figure}

Writing $A_{xy}=(x,y,0)$ and $B_{xy}=(x,y,1)$, the map
\begin{equation}
\tau(x,y,s)=(-x-y-s,x,s)
\label{SM-eq:honeycombcutrotation}
\end{equation}
is an order-three automorphism of the infinite honeycomb graph.
It fixes $A_{0,0}$ and cycles its three neighboring $B$ vertices.
Thus $R_k=\tau^kR_0$, $k=0,1,2$, supplies all three edge directions.
The map acts on the finite coordinate templates before periodic
reduction; no rotational symmetry of a rectangular torus is assumed.

Set $v_k=\tau^kB_{0,0}$ and
$\mathcal Q_k=R_k\cup N(R_k)\cup N(v_k)$.
Here $N(R)=\bigcup_{w\in R}N(w)$ uses open neighborhoods.
These sets contain every vertex needed to determine the complete
cuts and their free neighbors. Their coordinate bounds are
\begin{center}
\begin{tabular}{cccc}
\toprule
$k$ & $x$ interval & $y$ interval & $|\mathcal Q_k|$\\
\midrule
0 & $[-3,1]$ & $[-2,2]$ & 20\\
1 & $[-1,2]$ & $[-3,1]$ & 20\\
2 & $[-2,2]$ & $[-1,2]$ & 20\\
\bottomrule
\end{tabular}
\end{center}
Each coordinate diameter is at most four. Reduction modulo any
$L_x,L_y\ge5$ is therefore injective on each $\mathcal Q_k$.
It cannot merge cut endpoints, create extra edges incident on
$R_k$, or identify an outside free neighbor with a cut endpoint.
Translating the three templates covers every nearest-neighbor edge.
Lemma~\ref{SM-lem:matchingcut} with $c=8$ proves the honeycomb
state-level exclusion and exact fidelity in the main text.

\subsection{A cylinder cut on the square torus}
\label{SM-sec:squarecut}
Lemma~\ref{SM-lem:matchingcut} does not require the region $R$ to be
finite in the infinite lattice or to be injective; a bipartition of
the torus into two cylinders suffices. On an $L_x\times L_y$ square
torus with $L_x\ge6$, take a horizontal edge $uv$ with
$u=(x_0,y_0)$, $v=(x_0+1,y_0)$, and let $R$ be the
$m=\lfloor L_x/2\rfloor$ columns $x_0-m+1,\ldots,x_0$. The crossing
edges are the $L_y$ horizontal edges between columns $x_0$ and $x_0+1$
and the $L_y$ horizontal edges between columns $x_0-m+1$ and $x_0-m$.
Because $m\ge3$ and $L_x-m\ge3$, these four columns are pairwise
distinct, so each vertex meets at most one crossing edge and the cut
is a matching containing $uv$. Vertical edges never cross. The
same-side neighbors $a=(x_0-1,y_0)\in R$ and $b=(x_0+2,y_0)\notin R$
lie in columns $x_0-1$ and $x_0+2$, which carry no crossing edge
because $m\ge3$ and $L_x-m\ge3$ respectively. Lemma~\ref{SM-lem:matchingcut}
with $c=2L_y$ gives Schmidt coefficients $|\cos\varepsilon|/2^{L_y}$
and $|\sin\varepsilon|/2^{L_y}$, each of multiplicity $2^{2L_y}$,
rank $2^{2L_y+1}$ for $\sin\varepsilon\cos\varepsilon\ne0$, and the
fidelity~\eqref{SM-eq:matchingcutfidelity} against every bond-two PEPS
on the torus. Vertical edges use rows in place of columns and need
$L_y\ge6$; hence $L_x,L_y\ge6$ covers every edge, proving Corollary~\ref{cor:squarestate}.

A dense check on the $6\times3$ torus ($18$ qubits, horizontal target
edge, $R$ the three columns ending at $u$) gives rank $64$ at
$\varepsilon=0$ and $\pi/2$ and rank $128$ at
$\varepsilon=0.1$, $\pi/4$, $1$, with the two predicted coefficient
values $\cos\varepsilon/8$ and $\sin\varepsilon/8$ at numerical
threshold $10^{-10}$; the script is
\nolinkurl{experiments/xx_schmidt_square_torus.py}.

\subsection{Scope and independent checks}
The Schmidt proof applies to any graph and edge meeting its cut
conditions, independently of normality. The absence of a non-bond
kernel at the honeycomb representation is not used, and square-torus
WGS representations are covered by Appendix~\ref{SM-sec:squarecut}. The cut conditions can fail: on kagome and
on the triangular lattice every edge lies in a triangle, and a
bipartition separating one vertex of a triangle from the other two
gives that vertex two crossing edges, so no nontrivial matching cut
exists on either graph and the lemma does not apply. There the
computed tangent image still excludes a differentiable tensor lift
through the WGS representation, but a state-level exclusion would
need a different argument.

The script \nolinkurl{experiments/xx_schmidt_obstruction.py} records
the three unwrapped coordinate certificates and checks all $2358$
edges of eighteen rectangular tori: $5\le L_x,L_y\le8$, together
with $(L_x,L_y)=(5,11),(11,5)$. The all-period conclusion follows
from the coordinate-diameter argument, not extrapolation of the
finite enumeration.
An independent dense check uses a sixteen-qubit $4\times2$
honeycomb torus and a width-two strip with four matching cut edges.
This torus is outside the bulk assumption, but its explicitly checked
cut satisfies Lemma~\ref{SM-lem:matchingcut}. Direct controlled-$Z$
amplitudes agree with a separately constructed, named-leg bond-two
PEPS contraction to vector error $1.8\times10^{-15}$.
The full Schmidt spectrum agrees with the two predicted blocks to
$2.0\times10^{-15}$ at
$\varepsilon=0,0.03,0.1,\pi/4,\pi/2$, with respective ranks
$16,32,32,32,16$ at numerical threshold $10^{-12}$.
The solvable endpoints, optimal fidelity, and rejection of cuts
missing either hypothesis are also tested. These checks and their
versions and source hashes are stored in
\nolinkurl{manuscript/quantum/data/xx_schmidt_obstruction}; they are separate
from the analytic proof and do not assert Lean formalization.

\section{Phase-dependent \texorpdfstring{$XX$}{XX} residual on the eleven tilings}
\label{SM-sec:xxscan}
\paragraph{Evaluation of the Fourier coefficients.}
For a qubit Pauli term, let $F$ contain its $X$ or $Y$ sites and let
$Z_P$ contain its $Z$ or $Y$ sites. Its amplitude ratio depends only on
$U_P=Z_P\cup\bigcup_{v\in F}N[v]$, so Eq.~\eqref{eq:energywalsh} can be
evaluated by a Walsh transform on the $2^{|U_P|}$ local configurations,
followed by aggregation of equal global supports before applying
Eq.~\eqref{eq:residual}. For an extensive sum of bounded-range terms at
fixed degree, this uses $O(N)$ bounded-size local tables. 
Equation~\eqref{eq:xxphi} gives the residual of a \emph{single}
$X_uX_v$ term as a function of the uniform edge phase $\phi$ on the
honeycomb lattice. With $c=\cos\phi$, the corresponding closed forms on
the square and triangular lattices are
$R^2_{\rm TDVP}=(1-c)^2p_{\rm sq}(c)/64$ and $(1-c)^2p_{\rm tri}(c)/64$ with
\begin{equation}
\begin{split}
p_{\rm sq}&=32c^{12}+128c^{11}+128c^{10}-128c^9\\
&\quad-304c^8-64c^7+208c^6+128c^5\\
&\quad-47c^4-56c^3+14c^2+56c+49,\\
p_{\rm tri}&=-(c^6+4c^5+6c^4-23c^2-52c-48),
\end{split}
\label{SM-eq:xxpoly}
\end{equation}
obtained by exact symbolic Walsh transforms and confirmed numerically
(Appendix~\ref{SM-sec:reproducibility}).
In all three cases
$R^2=O(\phi^4)$ and $\Var_\psi(X_uX_v)=O(\phi^2)$ as $\phi\to0$, so the
relative residual $\epsilon_{\rm TDVP}$ vanishes linearly at the
product state, increases monotonically on $[0,\pi]$ (verified
numerically), and reaches one only at $\phi=\pi$;
at $\phi=\pi/2$ it equals $2\sqrt7/7\approx0.76$, $\sqrt7/3\approx0.88$,
and $8\sqrt{3/255}\approx0.87$ on honeycomb, square, and triangular,
respectively. The graph-state point is thus the extreme case of a
phase-dependent partial blockade rather than an isolated accident.
Figure~\ref{SM-fig:tdvpscan} shows the corresponding scan for the full
Hamiltonian $H_{XX}=\sum_{uv}X_uX_v$ on all eleven Archimedean
tilings, evaluated by the local Walsh transform of
Corollary~\ref{cor:tdvp}: the coefficients of distinct edges on a
common character are summed before squaring, so this extensive
residual is not the sum of single-edge residuals, and interference
between edges makes the relative residual $\epsilon^2_{\rm TDVP}$
smaller than its single-edge value away from $\phi=\pi$ (on honeycomb
at $\phi=\pi/2$, $\epsilon^2\approx0.33$ for $H_{XX}$ against $4/7$ for
one edge). At $\phi=\pi$ every edge support $B_{uv}$ is distinct and
unsupported, so $\epsilon^2_{\rm TDVP}=1$ and $R^2_{\rm TDVP}/N=E/N$ on
every tiling; the Ising Hamiltonian has zero residual throughout. The
curves are smooth in $\phi$ and the blockade is complete only at the
graph-state point, in agreement with the closed forms.

\begin{figure*}[!tbp]
\centering
\includegraphics[width=\textwidth]{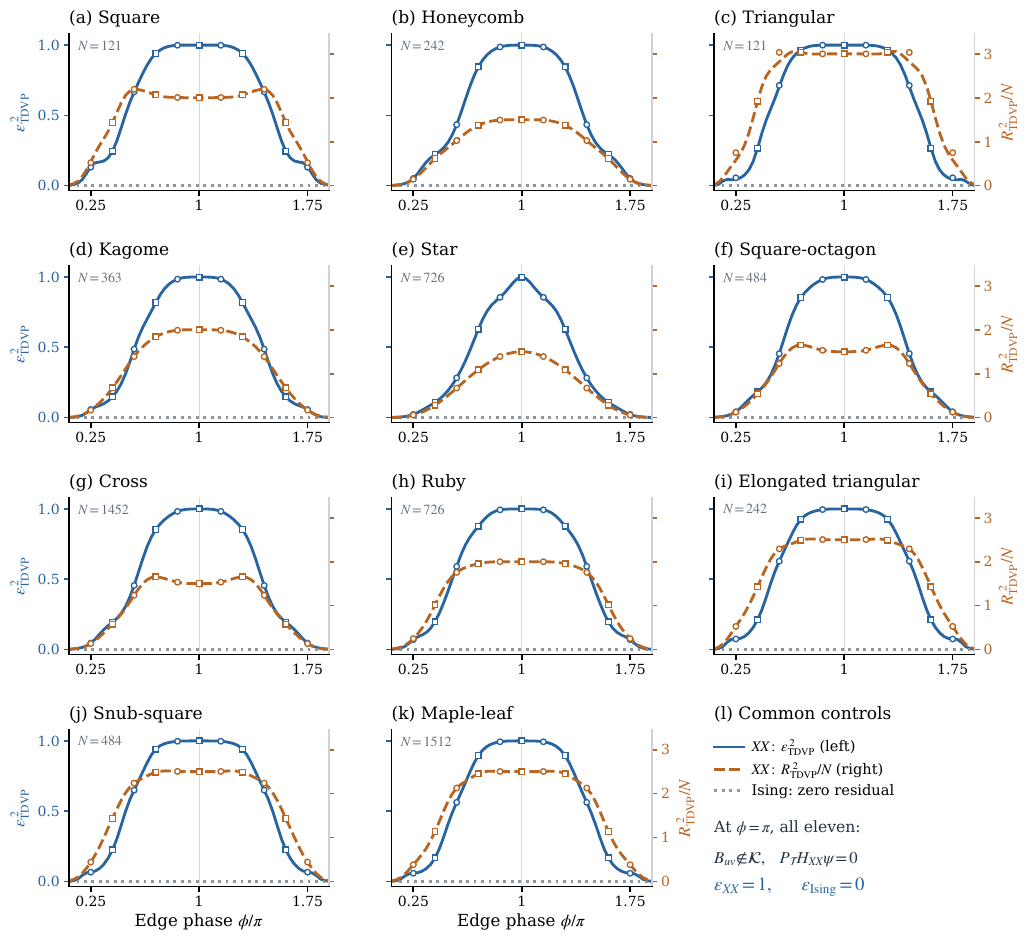}
\caption{Instantaneous TDVP residual on all eleven Archimedean
lattices at $q=2$ and uniform edge phase $\phi$. (a)--(k) For
unit-coupling $H_{XX}$, blue solid curves show the relative residual
squared $\epsilon^2_{\rm TDVP}$ (left axis) and orange dashed curves the
residual squared per site $R^2_{\rm TDVP}/N$ (right axis); axis ranges
are common to all panels. Curves use construction size $11$ with the
vertex count $N$ indicated; circles and squares use sizes $5,8$ for the
first four lattices and $7,9$ for the other seven, with the period
conventions of Appendix~\ref{SM-sec:coordinates}. For maple-leaf,
construction size $s$ means square periods $7(s-5)$ in each coordinate,
so $N=42(s-5)^2$ and the $s=11$ curve has $N=1512$.
Coefficients of distinct edges on a common support are added before
squaring. (l) The Ising Hamiltonian has zero residual throughout; at
$\phi=\pi$ every adjacent $XX$ support lies outside $\K$ on all eleven
lattices, giving zero tangent projection and unit relative residual.
The scans use integer Fourier polynomials checked against direct
local-energy tables (\nolinkurl{experiments/gauge_tdvp.py},
\nolinkurl{manuscript/quantum/data/gauge_tdvp}).}
\label{SM-fig:tdvpscan}
\end{figure*}

\section{A missing observable response and its measurement-based realization}
\label{SM-sec:mbqcprotocol}
\subsection{A missing direction witnessed by an observable}
\label{SM-sec:mbqccase}
A blocked direction is not only a residual: it is a first-order change
of a local observable that no tensor variation can produce. Consider
the six-qubit graph state on a diamond with two leaves,
\begin{equation}
\begin{gathered}
V=\{a,b,c,d,x,y\},\\
E_G=\{ab,ac,ad,bc,bd,ax,by\},\\
\ket{G_\diamond}=\prod_{uv\in E_G}{\rm CZ}_{uv}\ket+^{\otimes6},
\end{gathered}
\label{SM-eq:diamondresource}
\end{equation}
with its nondegenerate $q=2$ WGS representation $A_0$. The two largest
closed neighborhoods are $N[a]=V\setminus\{y\}$ and
$N[b]=V\setminus\{x\}$, so
\begin{equation}
\K(G)=\{S\subseteq V:\{x,y\}\not\subseteq S\}.
\label{SM-eq:diamondcoverage}
\end{equation}
Define the physical curve and a local Pauli-product observable by
\begin{equation}
\begin{gathered}
\ket{G_{\diamond,\epsilon}}=e^{-i\epsilon X_aX_b}\ket{G_\diamond},\\
Q=Z_aZ_bZ_xZ_y,\qquad M=-Y_bZ_cZ_dZ_x.
\end{gathered}
\label{SM-eq:mbqcerror}
\end{equation}

Corollary~\ref{cor:observableresponse} gives the vanishing tensor response
and the exact signal $\sin(2\epsilon)$ for this resource. Here its
missing velocity is $\ket t=iQ\ket{G_\diamond}$, with $Q$ as defined in
Eq.~\eqref{SM-eq:mbqcerror}.
Removing coordinate null directions leaves this obstruction unchanged.
The curve $\ket{G_{\diamond,\epsilon}}$ is produced by a coherent preparation error
on the shared edge of the measurement-based Hadamard primitive
described below~\cite{raussendorf2001oneway,gross2007novel}.
Figure~\ref{SM-fig:kernelimage} contrasts this response with the
triangle's full physical tangent.

\begin{figure}[!tbp]
\centering
\includegraphics[width=\columnwidth]{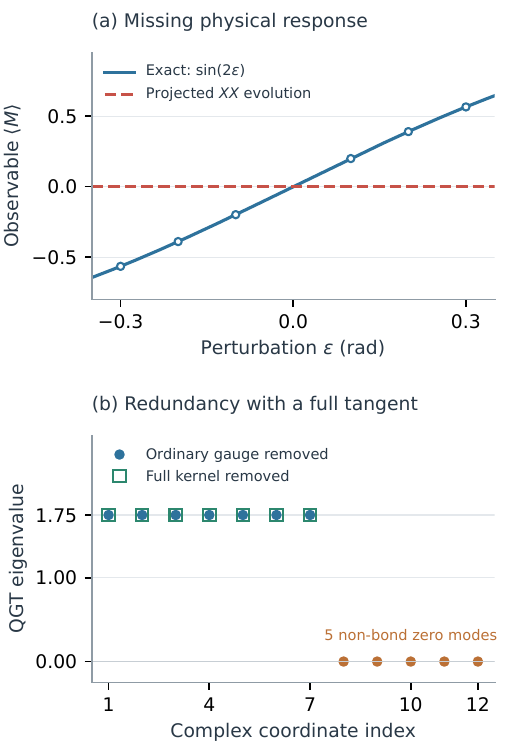}
\caption{Image versus kernel on two small graphs.
(a) Diamond with two leaves: exact response $\langle M\rangle=\sin(2\epsilon)$ of the
observable $M$ to the curve $e^{-i\epsilon X_aX_b}\ket{G_\diamond}$ of Eq.~\eqref{SM-eq:mbqcerror}, against the vanishing response of every
tensor variation at the WGS point (circles: independent dense-state
evaluations).
(b) Triangle: spectrum of the projective QGT after removing ordinary
gauge and the ray, showing seven positive modes and five non-bond zero
modes although the physical tangent is full; complete reduction retains
the seven positive modes (value $7/4$ in the normalization of
Eq.~\eqref{SM-eq:triangleqgt}).}
\label{SM-fig:kernelimage}
\end{figure}

\subsection{Measurement-based realization}
This appendix gives a measurement-based Hadamard primitive on the graph
of Eq.~\eqref{SM-eq:diamondresource}
\cite{raussendorf2001oneway,raussendorf2003cluster}. It realizes the
observable response in Corollary~\ref{cor:observableresponse}; its
calibration and logical-channel formulas require the additional circuit
calculation below. They are not needed for the image or kernel results.
All operations except the specified coherent preparation error are
ideal. This is a theoretical single-primitive example, not a hardware
demonstration or a universal resource by itself.

\begin{figure}[!tbp]
\centering
\includegraphics[width=\columnwidth]{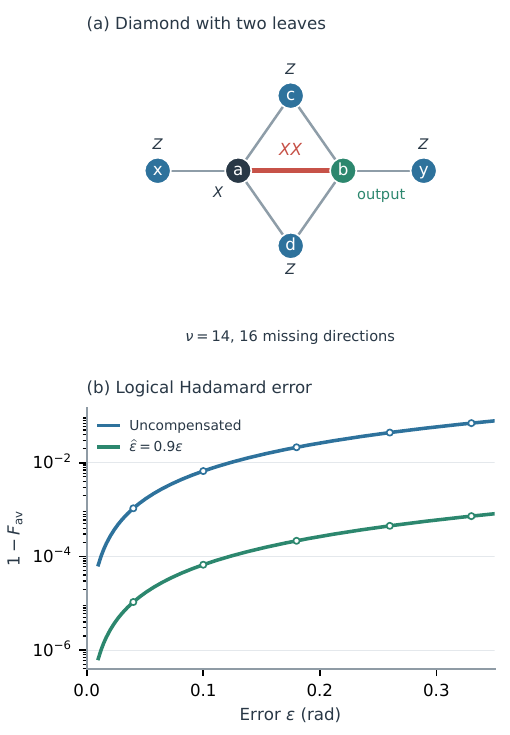}
\caption{Optional measurement-based realization of the observable
response. (a) Six-qubit resource, with the unwanted
$e^{-i\epsilon X_aX_b}$ on the red shared edge after ideal preparation.
Measure the four outer qubits in $Z$, then $a$ in $X$, and retain $b$
as output; the text specifies the Pauli corrections. (b) Average
Hadamard infidelity before compensation and with the illustrative
estimate $\widehat\epsilon=0.9\epsilon$. Lines give
Eq.~\eqref{SM-eq:mbqccompfidelity}; circles independently average the channel
over six input states. All 32 measurement branches are retained,
and all other operations are ideal.}
\label{SM-fig:mbqcprotocol}
\end{figure}

\subsection{Resource counts and finite preparation error}
Write $U_\epsilon=e^{-i\epsilon X_aX_b}$ for the perturbation in
Eq.~\eqref{SM-eq:mbqcerror}.
The shared edge $ab$ belongs to triangles $abc$ and $abd$, with leaves
$x$ on $a$ and $y$ on $b$ [Fig.~\ref{SM-fig:mbqcprotocol}(a)].
Equation~\eqref{SM-eq:diamondcoverage} has $|\K(G)|=48$ supports. The image theorem and the bond-gauge rank formula give
\begin{equation}
\begin{gathered}
P=88,\quad g_0=26,\quad \rank J_{A_0}=48,\\
\dim\T_{A_0}=47,\quad \nu=88-48-26=14.
\end{gathered}
\label{SM-eq:diamondranks}
\end{equation}
The projective parameter quotient by ordinary gauge and the state ray
still has $88-27=61$ coordinates, including 14 null directions.
Independently, the physical tangent misses 16 of the 63 nonempty Fourier
characters: precisely $Z_S\ket{G_\diamond}$ with $\{x,y\}\subseteq S$.
These are different defects. Removing the leaves gives an isolated
diamond with the same $\nu=14$ but a full physical tangent, because a
central vertex then has the entire graph as its closed neighborhood.
The leaves are essential to the present missing-direction example.

The identities $(X_aX_b)^2=Q^2=I$ give the exact evolution
\begin{equation}
\begin{gathered}
\ket{G_{\diamond,\epsilon}}=\cos\epsilon\ket{G_\diamond}+i\sin\epsilon\,Q\ket{G_\diamond},\\
|\langle G_\diamond|G_{\diamond,\epsilon}\rangle|^2=\cos^2\epsilon.
\end{gathered}
\label{SM-eq:diamondfinite}
\end{equation}
\subsection{Measurement protocol and exact output probabilities}
Assume $|\epsilon|<\pi/4$. For computation, replace the initial $\ket+_a$ by an arbitrary input
$\ket\varphi_a$, initialize the other five qubits in $\ket+$, and apply
the same ideal entanglers followed by $U_\epsilon$. First measure
$c,d,x,y$ in the $Z$ basis, obtaining bits
$z=(z_c,z_d,z_x,z_y)$. On the remaining qubits apply
$Z_a^{r_a}Z_b^{r_b}$, where
\begin{equation}
r_a=z_c+z_d+z_x,\qquad r_b=z_c+z_d+z_y\pmod2.
\end{equation}
Then measure $a$ in the $X$ basis with eigenvalue $(-1)^s$ and apply
$X_b^s$. The ideal logical gate is the Hadamard $U_H$. Retain the full
classical record and define $\sigma_{z,s}=(-1)^{z_x+z_y+s}$.
The corrected Kraus operator from input $a$ to output $b$ is
\begin{equation}
L_{z,s}(\epsilon)=\frac1{\sqrt{32}}
 e^{-i\sigma_{z,s}\epsilon X}U_H.
\label{SM-eq:mbqckraus}
\end{equation}
To see this, the four $Z$ measurements leave
$Z_a^{r_a}Z_b^{r_b}{\rm CZ}_{ab}\ket{\varphi,+}/4$ before the error.
The error commutes with those measurements, and conjugating it by the
Pauli corrections changes its sign to $(-1)^{z_x+z_y}$.
The subsequent $X$ measurement replaces $X_a$ by $(-1)^s$, giving
Eq.~\eqref{SM-eq:mbqckraus}. Every one of the 32 branches has probability
$1/32$, for every input and every $\epsilon$; no branch is postselected.

For calibration set $\ket\varphi=\ket+$, so the ideal output is $\ket0$.
Measure $Y_b$ on one ensemble and $Z_b$ on another, with outcomes
$m=\pm1$. The full joint probabilities are
\begin{align}
p_Y(z,s,m)&=\frac{1-m\sigma_{z,s}\sin(2\epsilon)}{64},\nonumber\\
p_Z(z,s,m)&=\frac{1+m\cos(2\epsilon)}{64}.
\label{SM-eq:mbqcstatistics}
\end{align}
Consequently the signed signal and companion quadrature are
\begin{equation}
\begin{gathered}
S=-\mathbb E[\sigma_{z,s}m_Y]=\sin(2\epsilon),\\
C=\mathbb E[m_Z]=\cos(2\epsilon).
\end{gathered}
\label{SM-eq:diamondsignals}
\end{equation}
Discarding the record before forming the signed signal gives
$\mathbb E[m_Y]=0$ and loses its error-sign information.

\subsection{The observable and information in the measurement record}
The signed signal in Eq.~\eqref{SM-eq:diamondsignals} equals the resource
expectation of $M=-Y_bZ_cZ_dZ_x$ in Eq.~\eqref{SM-eq:mbqcerror}.
Indeed the $s$ sign cancels the conjugation of $Y_b$ by $X_b^s$;
the remaining $Z_b^{r_b}$ correction gives the factor
$-(-1)^{z_c+z_d+z_x}$. Summing all records produces $M$.
The record-resolved $Y$ experiment has
\begin{equation}
\mathcal I_Y(\epsilon)=
\sum_{z,s,m}\frac{(\partial_\epsilon p_Y)^2}{p_Y}=4,
\qquad |\epsilon|<\pi/4.
\label{SM-eq:diamondfisher}
\end{equation}
This equals the pure-resource quantum Fisher information
$4\Var_{G_\diamond}(X_aX_b)=4$.

\subsection{One physical model parameter and calibrated control}
A minimal repair for this specified error is to append its physical gate:
\begin{equation}
\ket{\Psi_{\rm aug}(A,\eta)}
=e^{-i\eta X_aX_b}\ket{\Psi(A)}.
\label{SM-eq:diamondaugment}
\end{equation}
At $(A_0,0)$ the added derivative is $t$, orthogonal to the entire old
tangent. It supplies the one direction required for this error and
reproduces its finite evolution exactly at $A=A_0$; the same gate also
models arbitrary computational inputs. It does not supply all 16 missing
directions or eliminate the old parameter kernel. The identity
$\ket{G_{\diamond,\epsilon}}=e^{i\epsilon Q}\ket{G_\diamond}$ is useful at the calibration
state, but is not assumed for arbitrary inputs. The physical two-qubit
gate in Eq.~\eqref{SM-eq:diamondaugment} avoids that restriction.

Estimate the error from the measured quadratures using
\begin{equation}
\widehat\epsilon=\tfrac12\operatorname{atan2}(S,C).
\label{SM-eq:mbqcestimate}
\end{equation}
For subsequent computational runs, after the specified Pauli corrections,
apply $R_x(-2\sigma_{z,s}\widehat\epsilon)$ to the output, where
$R_x(\theta)=e^{-i\theta X/2}$. Equation~\eqref{SM-eq:mbqckraus} then holds
with $\epsilon$ replaced by $\Delta=\epsilon-\widehat\epsilon$.
Averaging over the retained branches gives
\begin{equation}
\begin{gathered}
\mathcal E_\Delta(\rho)=\cos^2\Delta\,\rho_H+
\sin^2\Delta\,X\rho_H X,\\
\rho_H=U_H\rho U_H^\dagger.
\end{gathered}
\end{equation}
Using the average gate fidelity~\cite{nielsen2002average},
\begin{equation}
1-F_{\rm av}(\mathcal E_\Delta,U_H)
=\frac23\sin^2\Delta=\frac23\Delta^2+O(\Delta^4).
\label{SM-eq:mbqccompfidelity}
\end{equation}
Without compensation $\Delta=\epsilon$. Exact calibration restores the
ideal gate within this error model. For $\epsilon=0.1$ the uncompensated
infidelity is $6.644\times10^{-3}$, whereas a residual
$|\Delta|=0.01$ gives $6.666\times10^{-5}$
[Fig.~\ref{SM-fig:mbqcprotocol}(b)]. A per-primitive infidelity budget $f\le2/3$ is
met on the small-residual branch if
$|\Delta|\le\arcsin\sqrt{3f/2}$.

The concrete prescription is therefore to include the missing physical
error parameter in the model, retain the measurement record, estimate
its signed quadratures, and implement the conditional output rotation.
Only the modeling repair changes the tangent image; ordinary gauge and
non-bond-kernel removal address parameter redundancy. Loss, stochastic
noise, readout bias, other coherent terms, and imperfect feedforward
require additional modeling. The stated fidelities concern this single
primitive and do not establish a fault-tolerance threshold.

\section{Triangle identifiability: coordinates and synthetic reconstruction}
\label{SM-sec:triangleexperiment}
The three-qubit triangle has a full physical tangent together with
five non-bond complex null directions. We give its exact counts,
finite constant-state family, and synthetic reconstruction below.
Figure~\ref{SM-fig:kernelimage} contrasts it with the missing physical
direction of Corollary~\ref{cor:observableresponse}.

\subsection{Redundant parameters with a full physical tangent}
\label{SM-sec:triangleidentifiability}
For $G_\triangle=K_3$, every closed neighborhood is the whole graph,
so Theorem~\ref{thm:star} gives $\rank J=8$ and
$\dim\T_A=7$. With $P=24$ and $g_0=11$,
\begin{equation}
\nu=24-8-11=5=P-g_0-q^N.
\label{SM-eq:trianglecounts}
\end{equation}
This saturates the dimension-counting lower bound of
Proposition~\ref{SM-prop:genericbounds}; the five null modes are not a
WGS rank drop. The implicit-function construction below integrates
them into a local family transverse to ordinary gauge and the ray.
Complete kernel removal selects a representative while leaving the
physical predictions of a pseudoinverse fit unchanged. Singular QGT
methods likewise regularize or invert the metric on its range without
enlarging the physical image
\cite{stokes2020qng,chen2024minsr,rende2024minsr}.
The finite-shot records below are synthetic binomial samples.

\subsection{Five residual complex zero modes and no missing tangent}
Let $G_\triangle=K_3$ and
$\ket{G_\triangle}={\rm CZ}_{ab}{\rm CZ}_{ac}{\rm CZ}_{bc}\ket{+++}$.
Use virtual bits $\alpha,\beta,\gamma$ on edges $ab,ac,bc$, respectively,
and the explicit tensors
\begin{align}
 A_a^s(\alpha,\beta)&=2^{-1/2}\delta_{s,\alpha}\delta_{s,\beta},\nonumber\\
 A_b^t(\alpha,\gamma)&=2^{-1/2}(-1)^{t\alpha}\delta_{t,\gamma},\nonumber\\
 A_c^u(\beta,\gamma)&=2^{-1/2}(-1)^{u(\beta+\gamma)}.
\label{SM-eq:triangletensors}
\end{align}
All 24 complex entries, including initially zero entries, are independently
variable. The ranks and non-bond nullity are those of
Eq.~\eqref{SM-eq:trianglecounts}; the physical tangent is full.

The Euclidean tensor-entry metric fixes a reproducible coordinate
convention. Choose orthonormal columns $(C,N)$ spanning the complement
of ordinary gauge and the state ray, with $C$ spanning its seven physical
complex directions and $N$ its five non-bond-kernel directions. Direct
contraction of Eq.~\eqref{SM-eq:triangletensors} gives
$4JJ^\dagger=7I_8$. Since gauge and ray directions are annihilated by
$J_\perp$, the QGT in this basis is
\begin{equation}
\bigl[J_\perp(C,N)\bigr]^\dagger J_\perp(C,N)
=\operatorname{diag}\bigl(\tfrac74 I_7,0_5\bigr).
\label{SM-eq:triangleqgt}
\end{equation}
Removing the complete kernel retains the seven positive modes
[Fig.~\ref{SM-fig:kernelimage}(b)]. Their common value depends
on this tensor normalization; the nullity and physical rank do not.
An integer certificate for $2J$ verifies the stated identity exactly.

\subsection{Finite parameter ambiguity at the same physical state}
Write a local affine slice in parameter space as
\begin{equation}
 A(\xi,\zeta)=A_0+C\xi+N\zeta,\qquad
 \xi\in\mathbb C^7,\quad\zeta\in\mathbb C^5.
\end{equation}
The equation
\begin{equation}
 (I-\ket{G_\triangle}\bra{G_\triangle})\Psi(A(\xi,\zeta))=0
\label{SM-eq:trianglefiber}
\end{equation}
has seven independent complex components. Its derivative with respect
to $\xi$ at zero is invertible onto the physical tangent, while its
$\zeta$ derivative is zero. The implicit-function theorem therefore gives
$\xi=\xi(\zeta)=O(\|\zeta\|^2)$ for all sufficiently small $\zeta$.
This is a five-complex-dimensional local family of parameters representing
exactly the same normalized state up to phase. Its initial tangents $N\zeta$
are transverse to ordinary gauge and the state ray. Thus the ambiguity
is not merely an absence of first-order measurement sensitivity.

We verify this family along a fixed, seeded complex unit vector
$w\in\mathbb C^5$, solving Eq.~\eqref{SM-eq:trianglefiber} at 17 values
$\zeta=uw$, $-0.08\le u\le0.08$. The largest aligned normalized-state
distance is below $7\times10^{-15}$, and the largest change in any of the
63 readout-adjusted Pauli means is below $9\times10^{-15}$.
At $u=0.08$ the parameter displacement is $0.0800011$, with physical
coordinate compensation $\|\xi\|=4.20\times10^{-4}$.
These values use the normalization in Eq.~\eqref{SM-eq:triangletensors}.

\begin{figure}[!tbp]
\centering
\includegraphics[width=\columnwidth]{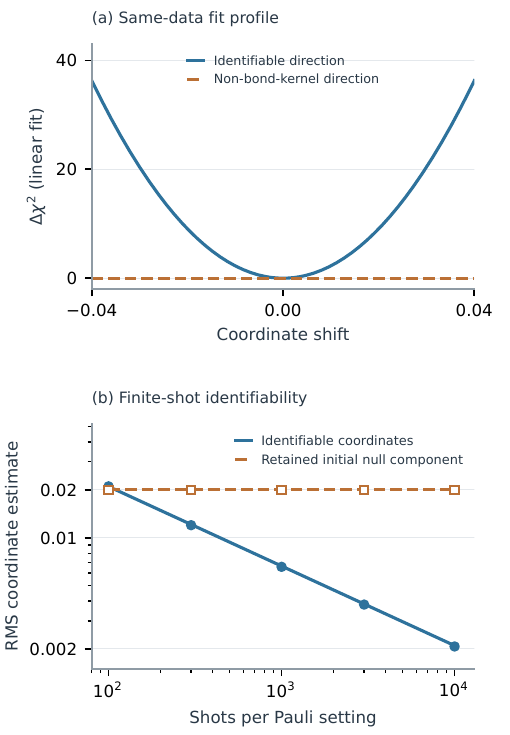}
\caption{Synthetic local reconstructions on the three-qubit triangle.
(a) The local least-squares profile from the same finite-shot record
($n=1000$ per Pauli word) is flat along a non-bond-kernel direction and
curved along an identifiable direction; $\Delta\chi^2$ includes the
factor $n$. (b) Across 512 synthetic reconstructions per shot count,
identifiable-coordinate RMS follows $1/\sqrt{28\kappa^2n}$, while the
retained initial null component has RMS 0.02. This plateau is an imposed
initialization scale, not a physical uncertainty or a statistical error floor.
Here $\kappa=0.9$, and total copies are $63n$. Both parameterizations
produce identical linear measurement predictions.}
\label{SM-fig:triangleidentifiability}
\end{figure}

\subsection{Complete Pauli measurements and finite-shot fitting}
Measure each of the 63 nonidentity three-qubit Pauli words $P_k$ on
$n$ independent copies. For a strictly positive likelihood at the
stabilizer state, independently flip each reported Pauli outcome with
known probability $0.05$, corresponding to contrast $\kappa=0.9$. The measured
mean and outcome probability are
\begin{equation}
\begin{gathered}
 \mu_k=\kappa\langle G_\triangle|P_k|G_\triangle\rangle,\\
 p_k(m)=\frac{1+m\mu_k}{2},\qquad m=\pm1.
\end{gathered}
\label{SM-eq:trianglemeasurement}
\end{equation}
There are $63n$ resource copies per reconstruction. The known contrast
preserves informational completeness and is included in the fit.

Use real coordinates $x=(\operatorname{Re}\xi,\operatorname{Im}\xi)$
and $z=(\operatorname{Re}\zeta,\operatorname{Im}\zeta)$, of dimensions
14 and 10. Let $\overline m_k$ be the sampled mean, and define the
whitened local linear fit by
\begin{equation}
\begin{gathered}
 y_k=\frac{\overline m_k-\mu_k}{\sqrt{1-\mu_k^2}},\qquad
 R_{ki}=\frac{\partial_i\mu_k}{\sqrt{1-\mu_k^2}},\\
 y\simeq R(x,z).
\end{gathered}
\label{SM-eq:trianglefit}
\end{equation}
The non-bond-kernel columns vanish. Complete Pauli orthogonality and
Eq.~\eqref{SM-eq:triangleqgt} give the stronger identity
\begin{equation}
\begin{gathered}
 R=(D,0),\qquad D^{\mathsf T}D=28\kappa^2I_{14},\\
 \mathcal I_n=nR^{\mathsf T}R.
\end{gathered}
\label{SM-eq:triangleinformation}
\end{equation}
To obtain the factor 28, the Pauli response obeys
$\sum_k(\delta\mu_k)^2=16\kappa^2\|\delta\psi\|^2$.
The seven nonzero Pauli expectations belong to stabilizers and have zero
tangent response, so whitening does not change this quadratic form.
The norm on the physical coordinates is $\|\delta\psi\|^2=7\|x\|^2/4$.
Consequently the measurement information has ten real zero modes after
ordinary gauge removal, and none after complete kernel removal.

For a common data record, least squares gives
\begin{equation}
 \widehat x=D^+y,\qquad \widehat z=z_0,
 \label{SM-eq:triangleestimator}
\end{equation}
where $z_0$ is the initial null component for ordinary Euclidean gradient
descent. The fully reduced fit gives the identical $\widehat x$ and the
identical predicted measurements. The loss is exactly flat in $z$ in
this local model [Fig.~\ref{SM-fig:triangleidentifiability}(a)]. Selecting a
minimum-norm pseudoinverse with zero initialization also sets $z_0=0$;
this is already a choice of representative. A quadratic prior can choose
a different null component, but the data supply no curvature in it.

At the reference state, independent binomial sampling gives the exact
covariance of this linear estimator,
\begin{equation}
 \operatorname{Cov}(\widehat x)=\frac{I_{14}}{28\kappa^2 n}.
 \label{SM-eq:trianglesampling}
\end{equation}
We generated 512 independent records at each of
$n=100,300,1000,3000,10000$, with fixed seed 61027. The initial null
coordinates were drawn independently with standard deviation 0.02 and
held fixed across the shot-count comparison. The RMS identifiable
coordinate estimate decreases from 0.02104 to 0.002074 as $n$ increases
by a factor of 100, consistent with the predicted values 0.02100 and
0.002100. The RMS null coordinate remains 0.02002
[Fig.~\ref{SM-fig:triangleidentifiability}(b)]. The two parameterizations'
linear measurement predictions agree to $2.1\times10^{-16}$.
The null spread measures dependence on initialization relative to the
chosen reference representation; it is not a physical uncertainty or
shot-noise amplification. These are local linear reconstructions, not
claims about a global nonlinear maximum-likelihood algorithm.

\section{Proof certificates and reproducibility}
\label{SM-sec:reproducibility}
\paragraph{What is proved analytically and what is machine-verified.}
The WGS image, kernel, and motif classification (Secs.~\ref{sec:star}--\ref{sec:periodic}), the injectivity lemma, the support criterion, and the
closed-form residuals follow from the analytic arguments in the main text
and Appendix~\ref{SM-sec:xxscan}.
The generic-completeness results of Theorem~\ref{thm:generic} combine
analytic all-$q$ witnesses with finite combinatorial certificates:
routing certificates that exhibit, for each required region, a path
system proving a nonzero injectivity minor for every integer $q\ge2$,
and, for the six qubit-only families, exact integer minors. Together
these establish all thirty-nine families under the stated period
conditions, with thirty-three valid for every $q\ge2$. These
certificates were produced with computer assistance and are verified by
independent checkers that read only the frozen certificate files and
re-derive every input, path, capacity, endpoint, translation type, and
exterior cover (Appendix~\ref{SM-sec:certificates}). Language-model
assistance was used in drafting proofs and text; every analytic step
was subsequently checked by the author, and no step of any proof relies
on an unverified model output. Floating-point rank samples and
dynamical benchmarks, including the trajectories of Fig.~\ref{fig:dynamics} (\nolinkurl{manuscript/quantum/data/gauge_dynamics},
\nolinkurl{experiments/gauge_dynamics.py}), are used only as
illustrations, never as proofs.

\paragraph{Independent checks.}
The common entry point of the frozen review bundle is
\nolinkurl{reproducibility/README.md}; its manifest and archive
checksum identify the reviewed source and data, and
\texttt{python reproducibility/reproduce.py --suite all} replays all
finite checks in a temporary workspace. The \texttt{core} suite checks
the WGS differential and physical consequences, \texttt{routes} the
accepted partitions, routing, covers, and geometry, and \texttt{minors}
the exact qubit minors. In addition: the cell counts of Theorem~\ref{thm:cell} are compared with independently enumerated
support unions on two finite tori for each of 28 periodic graphs at
$q=2,3,4,5,8$ (280 exact comparisons); the minimal-region polynomial
of Theorem~\ref{thm:minimalgeneral} is checked against sector quotients
on all 996 connected simple graphs with at most seven vertices; the
triangular-torus certificates of Sec.~\ref{sec:generic} use exact
modular rank computations at random integer points
(\nolinkurl{experiments/triangular_generic_rank.py}: $4\times3$ and
$4\times4$ tori at $q=2$, prime $8191$, ranks $1417$ and $1889$); the
$XX$ residuals of Eq.~\eqref{eq:xxphi} were obtained by exact
symbolic Walsh transforms (\nolinkurl{experiments/xx_residual_phase_symbolic.py})
and confirmed numerically; the second-order obstruction of Proposition~\ref{prop:obstruction} is certified by exact rational
elimination and the unobstructed cases are checked in floating point
(\nolinkurl{experiments/second_order_obstruction.py}); the exact
equivalences of Proposition~\ref{prop:finite} are verified to machine
precision at $q=2,3$, the straight lines of Eq.~\eqref{eq:squareline}
are checked against $(1-t^2)$, the Gauss--Newton continuations of
Appendix~\ref{SM-sec:obstructionchecks} are reproduced, and the $YY$ supports of
Eq.~\eqref{eq:yysupport} are checked to lie outside $\K$ for every
edge orbit of the eleven tilings
(\nolinkurl{experiments/nongauge_integrability.py},
\nolinkurl{tests/unit/test_nongauge_integrability.py}); the six-qubit and
triangle examples are checked by dense contraction, full Jacobians, and
direct matrix evolution; and the constructive response, local Pauli
evaluation, bond expansion, and local-frame spectrum have unit tests
comparing tensor derivatives with the analytic formulas. Plots are
generated from the same connectivity specifications.

\paragraph{Nearby spectra and the state-level obstruction.}
The perturbation result in Appendix~\ref{SM-sec:nearbygramproof} and the
matching-cut lemma in Appendix~\ref{SM-sec:schmidtobstruction} have analytic
proofs. Their numerical and finite combinatorial checks are stored in
\nolinkurl{manuscript/quantum/data/wgs_nearby_spectrum},
\nolinkurl{manuscript/quantum/data/wgs_physical_paths}, and
\nolinkurl{manuscript/quantum/data/xx_schmidt_obstruction}, with source hashes.
The replay report records runtime versions. The separate entry point
\texttt{python }\nolinkurl{reproducibility/reproduce_m1_m2.py} reruns their producers
and focused tests without altering the earlier frozen review bundle.
The finite scans support the stated examples; the general claims
follow from the proofs, and no Lean formalization is asserted here.

\paragraph{Force compatibility and imposed right-hand-side error.}
The dense experiment accompanying Fig.~\ref{fig:projectiongap} is reproduced with
\texttt{PYTHONPATH=.:python python -m}
\nolinkurl{research.wgs_gauge_projection_gap.condition_number_impact}
from the repository root. It uses $H=\sum_vZ_v$, checks its exact
physical velocity against dense least squares, and separately inserts
a seeded error of norm $10^{-6}\|\boldsymbol F\|$ in the residual non-bond kernel.
It also verifies the compatibility of a force and Gram matrix formed
from the same $64$ centered derivative samples. The checks in
\nolinkurl{tests/unit/test_projection_gap_compatibility.py} compare
ordinary and complete kernel reductions, the pseudoinverse limit,
the $1/\lambda$ amplification of the imposed error, and its vanishing
instantaneous physical image. These are fixed-representation tests,
not finite-time integration benchmarks or a sampling-noise model.

\end{document}